\renewcommand{\mathbf}{\boldsymbol}
\renewcommand{\thepage}{}
\newcommand{\E}{\mathbb{E}}
\newcommand{\e}{\mathbf{e}}
\newcommand{\Var}{\mathrm{Var}}
\newcommand{\R}{\mathbb{R}}
\renewcommand{\P}{\mathbb{P}}
\newcommand{\td}{\tilde}
\newcommand{\la}{\left\langle}
\newcommand{\ra}{\right\rangle}
\newcommand{\lb}{\left(}
\newcommand{\rb}{\right)}
\newcommand{\eps}{\epsilon}
\newcommand{\cW}{\mathcal{W}}
\newcommand{\cU}{\mathcal{U}}
\newcommand{\w}{\mathbf{w}}
\renewcommand{\H}{\mathbf{H}}
\newcommand{\Y}{\mathbf{Y}}
\newcommand{\X}{\mathbf{X}}
\newcommand{\W}{\mathbf{W}}
\newcommand{\G}{\mathbf{G}}
\newcommand{\Z}{\mathcal{Z}}
\renewcommand{\L}{\mathbf{L}}
\newcommand{\cD}{\mathcal{D}}
\renewcommand{\S}{\mathbb{S}}
\newcommand{\J}{J}
\newcommand{\I}{\mathcal{I}}
\newcommand{\V}{\mathcal{V}}
\renewcommand{\J}{J}
\newcommand{\bfm}{\mathbf{m}}
\newcommand{\bfp}{\bm{p}}
\newcommand{\bfb}{\bm{b}}
\newcommand{\bfmu}{\bm{\mu}}
\newcommand{\bfpi}{\bm{\pi}}
\newcommand{\bfPi}{\bm{\Pi}}
\newcommand{\bfgamma}{\bm{\gamma}}
\newcommand{\bfpsi}{\bm{\psi}}
\newcommand{\bfphi}{\bm{\phi}}
\newcommand{\bftau}{\bm{\tau}}
\newcommand{\bfeta}{\bm{\eta}}
\newcommand{\bfomega}{\bm{\omega}}
\newcommand{\bftheta}{\bm{\theta}}
\newcommand{\bfkappa}{\bm{\kappa}}
\newcommand{\one}{\textbf{1}}
\newcommand{\zero}{\textbf{0}}
\newcommand{\tran}{\top}
\newcommand{\op}{\mathrm{op}}
\newcommand{\Unif}{\mathrm{Unif}}
\newcommand{\bias}{\mathrm{Bias}_{\tau}}
\newcommand{\EPi}{\E_{\W\sim \bfPi}}
\newcommand{\DATEeq}{DATE equation}
\newcommand{\hbftau}{\hat{\bm{\nu}}}
\newcommand{\htau}{\hat{\nu}}
\newcommand{\nhbftau}{\bm{\nu}}
\newcommand{\nhtau}{\nu}
\newcommand{\bfA}{\bm{A}}
\newcommand{\bfB}{\bm{B}}
\newcommand{\bfC}{\bm{C}}
\newcommand{\sww}{\Gamma_{ww}}
\newcommand{\sw}{\mathbf{\Gamma}_{w}}
\newcommand{\swy}{\Gamma_{wy}}
\newcommand{\sy}{\mathbf{\Gamma}_{y}}
\newcommand{\stheta}{\Gamma_{\theta}}
\newcommand{\swwex}{\mathbf{\Gamma}_{ww, \ex}}
\newcommand{\swex}{\mathbf{\Gamma}_{w, \ex}}
\newcommand{\swyex}{\mathbf{\Gamma}_{wy, \ex}}
\newcommand{\numer}{N_{*}}
\newcommand{\hV}{\hat{\V}}
\newcommand{\hatdenom}{\mathcal{D}}
\newcommand{\hatnumer}{\mathcal{N}}
\newcommand{\ex}{\text{ex}}
\newcommand{\indep}{\perp\!\!\!\perp}
\newcommand{\q}{q}
\renewcommand{\r}{r}
\newcommand{\dcv}{\stackrel{d}{\rightarrow}}
\newcommand{\bfa}{\mathbf{a}}
\newtheorem{assumption}{Assumption}[section]
\newtheorem{theorem}{Theorem}[section]
\newtheorem{lemma}{Lemma}[section]
\newtheorem{proposition}{Proposition}[section]
\newtheorem{corollary}{Corollary}[section]
\theoremstyle{remark}
\newtheorem{remark}{Remark}[section]
\def\monthname{\ifcase\month\or
  January\or February\or March\or April\or May\or June\or July\or
  August\or September\or October\or November\or December\fi}
\numberwithin{equation}{section}
\DeclareMathOperator*{\argmin}{arg\,min}
\DeclareMathOperator*{\diag}{diag}
\DeclareMathOperator*{\Cov}{Cov}
\def\monthname{\ifcase\month\or
January\or February\or March\or April\or May\or June\or
July\or August\or September\or October\or November\or December\fi}
\def\blfootnote{\gdef\@thefnmark{}\@footnotetext}
\begin{document}

\title{\textbf{Design-Robust Two-Way-Fixed-Effects Regression For Panel
Data
}\blfootnote{{\small Generous support from the Office of Naval Research through ONR grants N00014-17-1-2131 and 
 N00014-19-1-2468
is gratefully acknowledged.}}}
\author{Dmitry  Arkhangelsky \thanks{{\small  Associate Professor, CEMFI, darkhangel@cemfi.es. }} \and Guido W. Imbens\thanks{{\small Professor of
Economics,
Graduate School of Business and Department of Economics, Stanford University, SIEPR, and NBER,
imbens@stanford.edu.}} 
\and Lihua Lei\thanks{{\small Postdoctoral Fellow,
 Department of Statistics, Stanford University, 
lihualei@stanford.edu.}}
\and Xiaoman Luo\thanks{{\small Ph.D. Candidate, 
~Department of Agricultural and Resource Economics,~University of California, Davis,~xmluo@ucdavis.edu}}}
\date{\ifcase\month\or
January\or February\or March\or April\or May\or June\or
July\or August\or September\or October\or November\or December\fi \ \number%
\year\ \ }
\maketitle\thispagestyle{empty}

\begin{abstract}
\singlespacing
\noindent
We propose a new estimator for average causal effects of a binary treatment with panel data in settings with general treatment patterns. Our approach augments the popular two-way-fixed-effects specification with unit-specific weights that arise from a model for the assignment mechanism. We show how to construct these weights in various settings, including the staggered adoption setting, where units opt into the treatment sequentially but permanently. The resulting estimator converges to an average (over units and time) treatment effect under the correct specification of the assignment model, even if the fixed effect model is misspecified. We show that our estimator is more robust than the conventional two-way estimator: it remains consistent if either the assignment mechanism or the two-way regression model is correctly specified. In addition, the proposed estimator performs better than the two-way-fixed-effect estimator if the outcome model and assignment mechanism are locally misspecified. This strong robustness property underlines and quantifies the benefits of modeling the assignment process and motivates using our estimator in practice. We also discuss an extension of our estimator to handle dynamic treatment effects.
\end{abstract}

\noindent \textbf{Keywords}: fixed effects, panel data, causal effects, treatment effects, double robustness, staggered adoption.

\begin{center}
\end{center}



\baselineskip=20pt
\setcounter{page}{1}
\renewcommand{\thepage}{\arabic{page}}
\renewcommand{\theequation}{\arabic{section}.\arabic{equation}}


\section{Introduction}\label{sec:intro}
Difference-in-difference (DiD) methods ({\it e.g.}, \citet*{ashenfelter1985using, angrist1999empirical}) are commonly used in empirical economics to establish causal relationships (see \citet*{currie2020technology} for some evidence regarding the usage in the empirical literature). In particular, researchers estimate regression functions of the form
\begin{equation}\label{eq:non_weight}
    Y_{it} = \mu + \alpha_i+ \lambda_t +\beta^\top X_{it} +\tau W_{it} + \epsilon_{it}
\end{equation}
using ordinary least squares (OLS), treating $\alpha_i$ and $\lambda_t$ as fixed parameters -- the fixed effects, leading to the two-way fixed effect (TWFE) estimator. Here $Y_{it}$ is the outcome variable of interest, $W_{it}$ is a binary treatment, $X_{it}$ are observed exogenous characteristics, and $\tau$ is the main object of interest. Practitioners routinely justify regression (\ref{eq:non_weight}) by appealing to ``quasi-experimental'' variation in treatment paths $\W_i = (W_{i1},\dots, W_{iT})$. Formal and informal arguments are invoked to make a case that this variation is not associated with unobserved unit and time-specific components $\epsilon_{it}$. In other words, to motivate  (\ref{eq:non_weight}), researchers reason about the underlying model for $\W_i$. This model, however, does not explicitly enter the estimation process. Moreover, econometric assumptions that justify the OLS estimation apply conditionally on $\W_i$ and do not appeal to randomness in the treatment paths (e.g., \cite{arellano2003panel}). 

In this paper, we develop new methods for estimating causal effects that explicitly incorporates design assumptions on the assignment process without abandoning the transparency and simplicity of the two-way model. We incorporate assumptions about the assignment mechanism by augmenting the specification (\ref{eq:non_weight}) with unit-specific weights $\gamma_i$, leading to
\begin{equation}\label{eq:weight_reg}
    \hat \tau(\gamma) = \argmin_{\tau,\mu, \alpha_i,\lambda_t,\beta}\sum_{it}(Y_{it} - \mu - \alpha_i- \lambda_t - \beta^\top X_{it} -\tau W_{it})^2 \gamma_i.
\end{equation}
We compute the weights $\gamma_i$ using the assignment model for $\W_i$.

We start our analysis by assuming that the assignment process for $\W_i$ is known. In Section \ref{sec:design_based}, we show how to use this knowledge to construct oracle weights $\gamma^{\star}$ and conduct design-based inference. Under the correct specification of the assignment model, our inference procedure is valid regardless of the underlying model for potential outcomes, and in particular, we do not rely on the validity of the equation (\ref{eq:non_weight}). Our results substantially generalize the properties established in \cite{athey2018design}, allowing for an arbitrary assignment process (subject to overlap restrictions).

To construct $\gamma^{\star}$, we need to solve a nonlinear equation that depends on the support of $\W_i$. Practically, this means that the construction and the values of the weights vary across different types of assignment processes. In Appendix \ref{app:DATE_equation} we provide solutions for several prominent examples, including staggered adoption, {\it i.e.}, a situation where units opt into treatment sequentially. Another input we need for $\gamma^{\star}$ is the probability distribution of $\W_i$ (generalized propensity score, \citealp{imbens2000}).

After establishing design-based properties of the oracle estimator $\hat \tau(\gamma^{\star})$ based on knowledge of the assignment process, we turn to the robustness -- the behavior of the estimator in settings where the postulated assignment model can be incorrect. At this point, we use the structure of the regression problem (\ref{eq:weight_reg}) to demonstrate that $\hat \tau(\gamma^{\star})$ has a strong double-robustness property (\citet*{robins1994estimation,kang2007demystifying, bang2005doubly, chernozhukov2018double}): it has a small bias whenever either the assignment or the regression model is approximately correct. We view these results as the primary motivation for using our estimator in practice, where we cannot expect either the TWFE model or the assignment model to be fully correct.

In practice, the assignment model is rarely completely known -- unless  $\W_i$-s are assigned in the controlled experiment (\textit{e.g.}, \citet*{attanasio2012education,broda2014economic, colonnelli2022corruption})-- and has to be estimated. 
We use the insights from the known assignment setting as a building block in Section \ref{sec:dri}, where the assignment process is unknown but can be estimated consistently from the data.
In Section \ref{sec:experiments} we use an empirical example to show how to estimate this distribution for the staggered adoption design using duration models. This approach is connected to \cite{shaikh2019randomization} that uses a duration model to test a sharp null hypothesis that specifies no treatment effects. Our general strategy of explicitly using the assignment model for estimation is directly connected to the recent literature on quasi-experimental designs (\textit{e.g.}, \cite{borusyak2022nonrandom}). Our results on robustness are especially appealing in such contexts because in quasi-experimental settings researchers cannot rule out the misspecification of the assignment model.

Our focus on TWFE regression (\ref{eq:weight_reg}) is motivated by its increased popularity in economics \citep*{currie2020technology}. In applications, this model provides an effective and parsimonious approximation for the baseline outcomes, allowing researchers to capture unobserved confounders and to improve the efficiency of the resulting estimator by reducing noise. At the same time, recent research shows that regression estimators for average treatment effects based on TWFE models might have undesirable properties, particularly  negative weights for unit-time specific treatment effects.
These concerns are particularly salient in settings with heterogeneity in treatment effects and general assignment patterns (e.g., \citet*{de2020two,goodman2018difference,sun2021estimating,callaway2018difference,borusyak2017revisiting}). Our results show that the concerns raised in this literature regarding negative weights lose some of their force under random assignment, or more generally once we properly reweight the observations.  

Our main analysis assumes that the treatment affects only contemporaneous outcomes, ruling out dynamic effects.  We make this choice to crystallize the connection between the TWFE regression model (\ref{eq:weight_reg}) and the assignment process. We do not restrict heterogeneity in contemporaneous treatment effects that can vary over units and periods. To test for, or estimate, dynamic treatment effects, one has to compare units that receive treatment at different times. Such comparisons are justified only if we restrict individual heterogeneity in treatment effects or treat the assignment as random. Consequently, and this is of course a key insight from the causal inference literature in cross-section settings since \cite{rosenbaum1983central}, it is imperative to model both the assignment mechanism and the outcome model. In \citet*{bojinov2020panel} the authors show how to use the assignment process to estimate dynamic treatment effects (see also \cite{blackwell2021adjusting} for the related analysis in large-$T$ setup).  Our results suggest that a fruitful approach may be to construct robust estimators by combining \cite{bojinov2020panel} approach to estimation with conventional dynamic panel regression models using the weighting methods derived in the current paper for the static case. We discuss a particular realization of this in Section \ref{sec:dynamic_effects}.

Our results are related to recent literature on doubly robust estimators with panel data. Conceptually the closest paper to us is \cite{arkhangelsky2019double} that also emphasizes the role of the assignment process in the same setting and shows double robustness.  
In \cite{arkhangelsky2019double} the focus is on a class of estimators defined as a linear function of realized outcomes, with the coefficients in that linear representation chosen to lead to consistent estimators for average treatment effects under either assumptions on the outcome model or on the assignment mechanism. Here, we start with a different class of estimators, restricted to weighted versions of the TWFE estimator in  (\ref{eq:weight_reg}). We also show how to estimate a flexible class of average treatment effects with user-specified weights over units and time. The robustness property in our paper is distinct from the double robustness analyzed recently in the difference-in-difference literature (\textit{e.g.}, \cite{sant2020doubly}): our estimator is robust to arbitrary violations of parallel trends assumptions, as long as the assignment model is correctly specified. At the same time, our estimator is not necessarily semiparametrically efficient in environments where, as in \cite{sant2020doubly}, the conditional parallel trends assumption holds.

We also connect to recent work on causal panel model with experimental data (\textit{e.g.}, \cite{athey2018design,bojinov2020panel,roth2023efficient}). Similar to these papers, we establish properties of regression estimators under design assumptions. Importantly, we consider a general setting without restricting our attention to staggered adoption design. Our contribution to this literature is the characterization of the behavior of $\hat \tau(\gamma)$ for a large class of weighting functions and general designs. By establishing a connection between weighting functions and limiting estimands, we allow users to construct consistent estimators for a pre-specified weighted average treatment effect of interest. 

Finally, the form of our estimator (\ref{eq:weight_reg}) connects it to the Synthetic Difference in Differences (SDID) estimator introduced in \citet*{arkhangelsky2019synthetic}. The difference between these two procedures is in the way they construct the weights $\gamma^{\star}$. The SDID estimator uses pretreatment outcomes to build a synthetic control unit that follows the path of the average treated unit as closely as possible (up to an additive shift). This strategy is infeasible if $W_{it}$ varies over time. However, precisely in situations with enough variation in $\W_i$, we can estimate the assignment process and use it to construct the weights $\gamma^{\star}$. As a result, the two estimators are complementary and can be used in applications with different assignment patterns.

Throughout the paper, we adopt the standard probability notation $O(\cdot), o(\cdot), O_{\P}(\cdot),$ $o_{\P}(\cdot)$. For any vector $v$, denote by $v^\tran$ the transpose of $v$, $\|v\|_{2}$ the $L_2$ norm of $v$, and by $\diag(v)$ the diagonal matrix with the coordinates of $v$ being the diagonal elements. For a pair of vectors $v_1, v_2$, we write $\la v_1, v_2\ra$ for their inner product $v_1^\tran v_2$. Furthermore, let $[m]$ denote the set $\{1, \ldots, m\}$, $I_{m}$ the $m\times m$ identity matrix, and $\one_{m}$ the $m$-dimensional vector with all entries $1$. Finally, the support of a discrete distribution $F$ is the set of elements with positive probabilities under $F$. 

\section{Reshaped IPW Estimator With Known Assignment Mechanisms}\label{sec:design_based}

We consider a setting with $n$ units and each unit is characterized by potential outcomes $\Y_i(1) = (Y_{i1}(1), \ldots, Y_{iT}(1)), \Y_i(0) = (Y_{i1}(0), \ldots, Y_{iT}(0))$ and a set of covariates $\X_i = (X_{i1}, \ldots, X_{iT})$.\footnote{Time-invariant covariates can be handled by letting $X_{i1} = \ldots = X_{iT}$.}  By writing the potential outcomes in this form, we assume away any dynamic effects of past treatments on current outcomes, thus focusing on static models. Analysis of such models is useful both theoretically and practically. First, they constitute a building block for more general environments, which we consider in Section \ref{sec:dynamic_effects}. Second, when the treatment is irreversible, as in staggered adoption designs, we are likely interested in its average (over time) effect on the outcome rather than the transitory dynamics. This makes the static model a reasonable approximation for a more complicated dynamic model. Finally, if we observe the data at a lower frequency than the one that is relevant for dynamics (e.g., days vs. months), then the static model is the only available option. 

Given the realized treatment assignment $W_{it}$, the observed outcomes are defined in the usual way:
\begin{equation}
Y_{it} = Y_{it}(1)W_{it} + Y_{it}(0)(1 - W_{it}).
\end{equation}
Throughout the paper, we treat covariates as fixed and consider $\{(\Y_i(1), \Y_i(0), \W_i): i\in [n]\}$ as a random vector (jointly) drawn from a distribution (conditional on $\{\X_i: i\in [n]\}$). We let $\P$ denote the joint distribution of the entire random vector $\{(\Y_i(1), \Y_i(0), \W_i): i\in [n]\}$ (conditional on $\{\X_i: i\in [n]\}$) and $\E$ denote the expectation over this distribution. We consider the asymptotic regime with $n$ going to infinity and fixed $T\ge 2$. 

This structure nests the conventional sampling-based framework, which is common in panel data analysis, going back to  \citep{chamberlain1984panel}, and which was used to establish statistical results in the recent DiD literature \citep[e.g.][]{abadie2005semiparametric, callaway2018difference}. It also extends the standard fixed effects framework, where the distribution for each unit is characterized by unit-specific parameters, but units themselves are usually assumed independent \citep[e.g.][]{neyman1948consistent,lancaster2000incidental}. Even in the absence of any covariates we do not assume that unit-level observations $(\Y_i(1), \Y_i(0), \W_i)$ are independent or exchangeable, which  brings two practical advantages. First, it allows us to accommodate correlated potential outcomes among units, which is natural in applications involving networks or multilevel structures. Second, it allows the assignments to be correlated across units, which is natural for many commonly used experimental designs. We elaborate on this point in the next section. 

In this section we study a special case where the assignment mechanism is known. This assumption is natural for experimental settings \citep{brown2006stepped, attanasio2012education, broda2014economic, hemming2015stepped,  chandar2019drivers, chandar2019design, colonnelli2022corruption}, but it has also been used to analyze the quasi-experimental settings \citep[e.g.][]{borusyak2023non}. It allows us to derive inferential results under mild assumptions. We will consider the case of unknown designs in Section \ref{sec:dri} at the cost of stronger (yet standard) assumptions.

\subsection{A design-based causal framework}
We assume that, for any $i \in \{1, \ldots, n\}$ and $\w\in \{0, 1\}^T$, 
\begin{equation}\label{eq:design_framework}
\P\lb \W_i = \w \mid \Y_i(1), \Y_i(0)\rb = \bfpi_i(\w),
\end{equation}
where $\bfpi_i$ is a distribution known to the analyst. We call it the generalized propensity score (\cite{imbens2000, athey2018design, bojinov2020panel, bojinov2020design}) -- the marginal probability of the treatment path. 

This structure allows for covariate-adaptive designs, where the probability of $W_{it}$ depends on past covariates. However, we rule out sequentially-adaptive designs where the assignment can depend on past outcomes, even if the randomization protocol is known.\footnote{Even if units are i.i.d. and $\P(W_{it}\mid Y_{i1}, \ldots, Y_{i(t-1)})$ is known, $\P(W_{it}\mid Y_{i1}, \ldots, Y_{iT})$ would depend on the unknown conditional distribution of $(Y_{it}, \ldots, Y_{iT})$ given $W_{it}$} Furthermore, our framework places no restriction on the support of $\W_i$ and substantially generalizes the previous works that focus on simple random sampling for non-staggered difference-in-differences \citep{rambachan2020design} and staggered adoption \citep{athey2018design,roth2023efficient}. 

If the treatment paths $\{\W_i, i \in [n]\}$ are independent across units, then the marginal distributions $\{\bfpi_i(\w), i \in [n]\}$ characterize the joint distribution of $\{\W_i, i \in [n]\}$. However, as discussed above, we allow the assignments to be correlated across units.  In practice, this correlation can range from being very mild, as in the case of completely randomized experiments with a fixed share of treated units \citep{neyman23}, to being sizable, as in cases of cluster-level randomization such as cluster randomized design \citep{abadie2023should} and two-stage randomization. We impose technical restrictions on the dependence across units in Section \ref{subsec:assumptions}. 

\subsection{Causal estimands}\label{subsec:estimand}

We define the unit and time-specific treatment effect as:
\begin{equation}
  \label{eq:tauit_finite_population}
  \tau_{it}\triangleq \E[Y_{it}(1) - Y_{it}(0)].
\end{equation}
Note that $\tau_{it}$ can vary with both $i$ and $t$ since we assume neither identically distributed units nor time-homogeneous treatment effects. For time period $t$, we define the time-specific ATE as:
\begin{equation}
  \label{eq:taut}
  \tau_{t}\triangleq \frac{1}{n}\sum_{i=1}^{n}\tau_{it},
\end{equation}
and consider a broad class of weighted average of time-specific ATE:
\begin{equation}
  \label{eq:taustar}
  \tau^{*}(\xi)\triangleq \sum_{t=1}^{T}\xi_{t}\tau_{t}
\end{equation}
for some user-specified deterministic weights $\xi = (\xi_{1}, \ldots, \xi_{T})^{\tran}$ such that
\begin{equation}
  \label{eq:xi_sum}
  \sum_{t=1}^{T}\xi_{t} = 1, \quad \xi_{t} \ge 0.
\end{equation}
We refer to \eqref{eq:taustar} as a doubly average treatment effect (DATE). For example, the weights $\xi_{t}=1 / T$ yield the usual ATE over units and time periods.
In the difference-in-differences setting with two time periods, $\xi_{t} = \mathbf{1}_{t = 2}$. In a particular application, one might also be interested in an effect with time discounting factor that puts more weight on initial periods, i.e. $\xi_{t} \propto \beta^{t}$ for some $\beta < 1$. 

\begin{remark}
  We can further generalize DATE by allowing for unequal unit weights:
  \begin{equation}\label{eq:generalized_DATE}
    \tau^{*}(\xi; \zeta) = \sum_{i=1}^{n}\sum_{t=1}^{T}(\zeta_{i}\xi_{t})\tau_{it},
  \end{equation}
  where $\zeta = (\zeta_1, \ldots, \zeta_{n})$, $\sum_{i=1}^{n}\zeta_{i} = 1$ and $\zeta_{i}\ge 0$. In particular, $\zeta_i$ can be $i$-specific, e.g., a function of the $i$-th covariates, but cannot depend on outcomes and treatment assignments. Using appropriate propensity-based weights $\zeta$ one can build estimands that target a given subpopulation. 
\end{remark}

\subsection{Technical assumptions}\label{subsec:assumptions}
We allow $\W_i$ to be dependent across units to capture different assignment processes. Such dependence arises in applications, sometimes for technical reasons (e.g., in case of sampling without replacement as in \cite{athey2018design}), and sometimes by the nature of the assignment process (spatial experiments). To quantify this dependence as well as the dependence among the potential outcomes, we follow \cite{renyi1959measures} and define the maximal correlation:
\begin{equation}\label{eq:rhoij}
    \rho_{ij}\triangleq \sup_{f,g}\left\{\text{corr}\left(f(\Y_i(1), \Y_i(0), \W_i),g(\Y_j(1), \Y_j(0), \W_j)\right)\right\},
\end{equation}
where the supremum is taken over all real-valued measurable functions $f, g$.

In the standard design-based framework where potential outcomes are assumed fixed, it reduces to the $\rho$-mixing coefficient between $\W_i$ and $\W_j$. In the main text, we maintain a simplified restriction on $\{\rho_{ij}\}_{ij}$ leaving a more general one to Appendix \ref{app:proofs}. The assumption is stated as follows:
\begin{assumption}\label{as:limit_dep}
There exists $q \in (0, 1]$ such that  as $n$ approaches infinity the following holds:
\begin{equation}\label{eq:weak_dependence}
\frac{1}{n^2}\sum_{i,j=1}^{n}\rho_{ij} = O(n^{-q}).
\end{equation}
\end{assumption}
By definition $\frac{1}{n}\le (1/n^2)\sum_{i,j=1}^{n}\rho_{ij}\le 1$ with lower bound being attained if the observations are independent, and the upper bound being attained if they are perfectly dependent. As a result, one can view $q$ as measuring the strength of the correlation. When $(\Y_i(1), \Y_i(0), \W_i)$ are independent across units, \eqref{eq:weak_dependence} holds with $q = 1$. More generally, when $\{(\Y_i(1), \Y_i(0), \W_i): i\in [n]\}$ have a network dependency with $\rho_{ij} = 0$ if there is no edge between $i$ and $j$, \eqref{eq:weak_dependence} is satisfied if the number of edges is $O(n^{2(1 - q)})$. Note that it imposes no constraint on the maximum degree of the dependency graph. Even if the network is fully connected, it can still hold if the pairwise dependence is weak, e.g., sampling without replacement; see Appendix \ref{subapp:design_based_dependent}. On the other hand, \eqref{eq:weak_dependence} excludes the case where all units are perfectly correlated or equicorrelated with a positive maximal correlation that is bounded away from $0$.

We also impose minimal overlap  restrictions on each $\bfpi_i$:
\begin{assumption}\label{as:overlap}
There exists a universal constant $c > 0$ and a non-stochastic subset $\S^{*}\subset \{0, 1\}^{T}$ with at least two elements and at least one element not in $\{\zero_{T}, \one_{T}\}$, such that
  \begin{equation}
    \label{eq:overlap}
    \bfpi_{i}(\w) > c, \quad \forall \w \in \S^{*}, i\in [n].
  \end{equation} 
\end{assumption}

Our final assumption restricts the second moment of outcomes:
\begin{assumption}\label{as:bounded_outcomes}
There exists $M<\infty$ such that $\max_{i,t,w}\E[Y_{it}^2(w)] <M$.
\end{assumption}
It is presented here only for simplicity. We relax it substantially in Appendix \ref{app:proofs}.

\subsection{Reshaped IPW estimator}

We consider a class of weighted TWFE regression estimators without covariates. We refer to them as reshaped inverse propensity weighted (RIPW) estimators, and formally define them as follows:
\begin{equation}
  \label{eq:weighted_fe}
  \hat{\tau}(\bfPi) \triangleq \argmin_{\tau, \mu, \sum_{i}\alpha_{i} = \sum_{t}\lambda_{t} = 0} \sum_{i=1}^{n}\sum_{t=1}^{T}(Y_{it} - \mu - \alpha_{i} - \lambda_{t} - W_{it}\tau)^{2}\frac{\bfPi(\W_{i})}{\bfpi_{i}(\W_{i})},
\end{equation}
where $\bfPi(\w)$ is a density function on $\{0, 1\}^{T}$, i.e.,
\begin{equation}
  \label{eq:bfPi_density}
  \sum_{\w\in \{0, 1\}^{T}}\bfPi(\w) = 1.
\end{equation}
We refer to the distribution $\bfPi$ as a reshaped distribution, and the weight $\bfPi(\W_{i}) / \bfpi_{i}(\W_{i})$ as a RIP weight. To ensure that the RIPW estimator is well-defined, we require $\bfPi$ to be absolutely continuous with respect to each $\bfpi_{i}$, i.e.
\begin{equation}
  \label{eq:common_support}
  \bfPi(\w) = 0\,\,\text{ if }\,\, \bfpi_{i}(\w) = 0\text{ for some }i\in [n].
\end{equation}
The estimator (\ref{eq:weighted_fe}) is feasible for any such $\bfPi$ because $\bfpi_i$ is assumed to be known.

Adding covariates to the objective function (\ref{eq:weighted_fe}) is relatively straightforward. However, it considerably complicates the notation without contributing substantially to the primary narrative. We will explicitly incorporate covariates in the objective function in Section \ref{sec:dri}. Note that the covariates still play a role in the RIPW estimator through $\bfpi_i$ for covariate-adaptive designs.

The reshaped distribution $\bfPi$ can be interpreted as an experimental design. If $\W_{i}\sim \bfPi$, then $\bfpi_{i} = \bfPi$ and \eqref{eq:weighted_fe} reduces to the standard unweighted TWFE regression. If this is not the case, then  $\bfPi(\W_i) / \bfpi_i(\W_i)$ acts like a likelihood ratio that changes the original design to one provided by $\bfPi$. For cross-sectional data, we would like to shift the distribution to uniform $\{0, 1\}$, making the weights equal to $1 / 2\bfpi_i(\W_i)$ if the fixed effects are not included. This would yield the standard IPW estimator. However, as we alluded to in the introduction, the situation is more complicated with panel data, and shifting towards the uniform design might not deliver consistent estimators for the DATE of interest. We explore this formally in the next section, where we characterize the set of $\bfPi$ that one can use. This interpretation of $\bfPi$ has one caveat: RIP weights only shift the marginal distribution of $\W_i$ to $\bfPi$, but they do not say anything about the joint distribution of  $\{\W_i,i\in[n]\}$ which can remain complicated.

\subsection{\DATEeq ~and consistency of RIPW estimators}

We now derive sufficient conditions under which the RIPW estimator is a consistent estimator for a given DATE of interest. The following theorem presents a precise condition for consistency of $\hat{\tau}(\bfPi)$ for $\tau^{*}(\xi)$: 
\begin{theorem}\label{thm:bias}
Let $\J = I_{T} - \one_{T}\one_{T}^{\tran} / T$ and $\bftau_{i} = (\tau_{i1}, \ldots, \tau_{iT})^{\tran}$; fix $\xi$ that satisfies (\ref{eq:xi_sum}). Under Assumptions \ref{as:limit_dep}-\ref{as:bounded_outcomes}, for any reshaped distribution $\bfPi$ with support $\S^{*}$ defined in Assumption \ref{as:overlap}, as $n$ tends to infinity,
  \[\hat{\tau}(\bfPi) - \tau^{*}(\xi) = O_{\P}\lb \bias(\xi)\rb + o_\P(1),\]
  where 
  \[\bias(\xi) = \la\EPi\left[(\diag(\W) - \xi\W^{\tran})\J(\W - \EPi[\W])\right], \frac{1}{n}\sum_{i=1}^{n}\lb \bftau_{i} - \tau^{*}(\xi)\one_{T}\rb\ra,\]
  and $\la v_1, v_2\ra$ denotes their inner product $v_1^\tran v_2$. 
\end{theorem}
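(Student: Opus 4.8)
The plan is to reduce $\hat\tau(\bfPi)$ to a ratio of two cross-sectional averages by partialling out the fixed effects, identify the probability limits of the numerator and denominator using the change-of-measure property of the RIP weight together with a law of large numbers valid under weak dependence, and then match the deterministic limit to $\tau^{*}(\xi)+\bias(\xi)$ by a short identity that uses $\sum_t\xi_t=1$. For the first step, write $\gamma_i=\bfPi(\W_i)/\bfpi_i(\W_i)$ for the RIP weight; Assumption~\ref{as:overlap} and \eqref{eq:common_support} give $0\le\gamma_i\le 1/c$ almost surely. Because $\gamma_i$ depends only on $i$, the $\gamma$-weighted projection removing $\mu,\alpha_i,\lambda_t$ sends any array $Z_{it}$ to $\ddot Z_i=\J(Z_i-\hat Z_\cdot)$, where $Z_i=(Z_{i1},\dots,Z_{iT})^\tran$, $\hat Z_\cdot=(\sum_j\gamma_j Z_j)/(\sum_j\gamma_j)$, and $\J=I_T-\one_T\one_T^\tran/T$: the within-unit demeaning is unweighted (constant weight in $t$), and what remains is a $\gamma$-weighted removal of the time means, which lands in the same subspace. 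Frisch--Waugh--Lovell then gives
\[\hat\tau(\bfPi)=\frac{\frac1n\sum_i\gamma_i(W_i-\hat W_\cdot)^\tran\J(Y_i-\hat Y_\cdot)}{\frac1n\sum_i\gamma_i(W_i-\hat W_\cdot)^\tran\J(W_i-\hat W_\cdot)}\deff\frac{\hatnumer}{\hatdenom},\]
and substituting $Y_i=Y_i(0)+\diag(\bftau_i)W_i$ splits $\hatnumer=\hatnumer_0+\hatnumer_\tau$, with $\hatnumer_0$ built from $Y_i(0)$ (and its $\gamma$-weighted cross-sectional mean) and $\hatnumer_\tau$ from $\diag(\bftau_i)W_i$.

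Next I take limits. The workhorse is $\E[\gamma_i h(\W_i)]=\EPi[h(\W)]$ for every $h$ (importance reweighting from $\bfpi_i$ to $\bfPi$), which yields $\E[\gamma_i]=1$, $\E[\gamma_i W_i]=\EPi[\W]$ for all $i$, and hence $\E[\gamma_i(W_i-\EPi[\W])]=0$. Every average above has bounded summands that are functions of a single $\W_i$, so by Assumption~\ref{as:bounded_outcomes} and the maximal-correlation bound of Assumption~\ref{as:limit_dep} its variance is $O(n^{-q})$ and it concentrates at its mean; thus $\frac1n\sum_j\gamma_j\to 1$ and $\hat W_\cdot\to\EPi[\W]$ in probability, and $\hatdenom\to\denom:=\EPi[(\W-\EPi[\W])^\tran\J(\W-\EPi[\W])]>0$ (positivity because $\S^{*}$ has at least two elements, not both of $\zero_T,\one_T$, so $\J\W$ is non-degenerate under $\bfPi$). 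Replacing the $\gamma$-weighted cross-sectional means in $\hatnumer$ by their limits (the cross terms are products of an $o_\P(1)$ factor with an $O_\P(1)$ average, using repeatedly that $\E[\gamma_i(W_i-\EPi[\W])]=0$), the mean of $\hatnumer_0$ becomes $\frac1n\sum_i\E[\gamma_i(W_i-\EPi[\W])]^\tran\J\big(Y_i(0)-\tfrac1n\sum_j Y_j(0)\big)=0$, so $\hatnumer_0=o_\P(1)$ --- this is exactly where arbitrary violations of parallel trends become harmless, since $\{Y_{it}(0)\}$ enters only through this vanishing term. The same computation, using $\diag(\bftau_i)W_i=\diag(W_i)\bftau_i$ and symmetry of $\J$, gives $\hatnumer_\tau=\la\EPi[\diag(\W)\J(\W-\EPi[\W])],\,\tfrac1n\sum_i\bftau_i\ra+o_\P(1)$, the centering term in $\hatnumer_\tau$ again dropping out because its coefficient is $\E[\gamma_i(W_i-\EPi[\W])]=0$.

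Since $\denom$ is a fixed positive constant and the limit of $\hatnumer_\tau$ is $O(1)$, the ratio yields $\hat\tau(\bfPi)=\denom^{-1}\la\EPi[\diag(\W)\J(\W-\EPi[\W])],\tfrac1n\sum_i\bftau_i\ra+o_\P(1)$, so it remains to verify the deterministic identity $\la\EPi[\diag(\W)\J(\W-\EPi[\W])],\tfrac1n\sum_i\bftau_i\ra-\denom\,\tau^{*}(\xi)=\bias(\xi)$. Two facts suffice. First, expanding $\W=(\W-\EPi[\W])+\EPi[\W]$ and using $\EPi[\W-\EPi[\W]]=0$ gives $\EPi[\W^\tran\J(\W-\EPi[\W])]=\denom$, hence $\EPi[(\diag(\W)-\xi\W^\tran)\J(\W-\EPi[\W])]=\EPi[\diag(\W)\J(\W-\EPi[\W])]-\denom\,\xi$. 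Second, $\one_T^\tran\diag(\W)=\W^\tran$ and $\sum_t\xi_t=1$ give $\la\EPi[(\diag(\W)-\xi\W^\tran)\J(\W-\EPi[\W])],\one_T\ra=\denom-\denom=0$. Combining these (the first rewrites the vector in $\bias(\xi)$, the second kills the $\tau^{*}(\xi)\one_T$ contribution), $\bias(\xi)=\la\EPi[\diag(\W)\J(\W-\EPi[\W])]-\denom\,\xi,\,\tfrac1n\sum_i\bftau_i\ra$, which is exactly the limit of $\hatnumer_\tau-\denom\,\tau^{*}(\xi)$; dividing by $\denom$ gives $\hat\tau(\bfPi)-\tau^{*}(\xi)=\denom^{-1}\bias(\xi)+o_\P(1)=O_\P(\bias(\xi))+o_\P(1)$.

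The conceptual heart is the vanishing of $\hatnumer_0$: RIP weighting equalizes $\E[\gamma_i\W_i]=\EPi[\W]$ across units, so the control potential outcomes appear only through a mean-zero cross product and no assumption on $\{Y_{it}(0)\}$ (in particular no parallel trends) is required. The technical obstacle is making the plug-in rigorous --- bounding every cross term created by replacing $\hat W_\cdot,\hat Y_\cdot$ by their limits, and showing $\hatdenom$ stays bounded away from $0$ with high probability --- which all reduce to the $O(n^{-q})$ variance bound that Assumption~\ref{as:limit_dep} supplies through the maximal correlation $\rho_{ij}$ (the appendix version, which drops boundedness of the outcomes and weakens the dependence condition, additionally needs a truncation argument).
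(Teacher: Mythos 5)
Your proposal is correct and follows essentially the same route as the paper's proof: partial out the fixed effects to get the ratio $\hatnumer/\hatdenom$ (the paper's Theorem \ref{theorem:estimator}), concentrate the weighted averages via the maximal-correlation variance bound (Lemmas \ref{lem:rho_mixing}--\ref{lem:denom}), use the exact change of measure $\E[\Theta_i h(\W_i)]=\EPi[h(\W)]$ to make the $Y_{it}(0)$ contribution vanish and reduce the numerator's limit to the $\bftau_i$ term, and finish with the same algebraic identity matching it to $\bias(\xi)$. The only differences are organizational: the paper proves a more general expansion with estimated $(\hat{\bfpi}_i,\hat{\bfm}_i)$ and an explicit influence-function term $\V_i$ and then specializes, and it lower-bounds the empirical denominator $\hatdenom$ directly rather than passing to its population limit $\denom>0$ as you do, but both devices are valid here.
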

This result has two user-specified parameters: time weights $\xi$, and the reshaped distribution $\bfPi$. They are naturally connected: to guarantee consistency for $\tau^{*}(\xi)$ we can select $\bfPi$ such that the following holds:
\begin{equation}
  \label{eq:DATE_equation}
  \EPi\left[(\diag(\W) - \xi\W^{\tran})\J(\W - \EPi[\W])\right] = 0.
\end{equation}
Alternatively, for a given $\bfPi$, we can look for $\xi$ such that (\ref{eq:DATE_equation}) is satisfied. We call \eqref{eq:DATE_equation} the DATE equation hereafter. For a fixed $\xi$, it is a quadratic system with $\{\bfPi(\w): \w\in \{0, 1\}^{T}\}$ being the variables. Together with the density constraint \eqref{eq:bfPi_density} and the support constraint in Theorem \ref{thm:bias} that $\bfPi(\w) = 0$ for $\w\not\in \S^{*}$, there are $T + 1 + 2^{T} - |\S^{*}|$ equality constraints and $|\S^{*}|$ inequality constraints that impose the positivity of $\bfPi(\w)$ for each $\w\in \S^{*}$. We will show in Appendix \ref{app:DATE_equation} that the \DATEeq ~have closed-form solutions in various examples and provide a generic solver based on nonlinear programming in Appendix \ref{subapp:solver}.

Without further restrictions on $\bftau_i$, we can show that the \DATEeq ~is also a necessary condition for consistency of $\hat{\tau}(\bfPi)$ for $\tau^{*}(\xi)$. To see this assume that  \begin{equation}
  \label{eq:DATE_alt}
  \EPi\left[(\diag(\W) - \xi\W^{\tran})\J(\W - \EPi[\W])\right] = z.
\end{equation}
for some vector $z$ that is not proportional to $\xi$. Because we can vary individual treatment effects without changing the average one, we can find a set $\{\bftau_{i}: i\in [n]\}$ that yields the same DATE but $\la z, (1/n)\sum_{i=1}^{n}\lb \bftau_{i} - \tau^{*}(\xi)\one_{T}\rb\ra \not= 0$, leading to inconsistency. For $z=b \xi$ we get that the inner product of the LHS of (\ref{eq:DATE_alt}) and $\one_{T}$ is $0$ because $\one_{T}^\tran (\diag(\W) - \xi \W^\tran) = W^\tran (1 - \one_T^\tran \xi) = 0$, while that of the right-hand side and $\one_{T}$ is equal to $b$. This implies that $z$ has to be equal to zero, thus proving the necessity of \DATEeq. 

Notably, when the \DATEeq ~has a solution, our estimator is consistent without any restrictions on the potential outcomes, except Assumption \ref{as:bounded_outcomes}. This is in sharp contrast to usual results about TWFE estimators, which typically require the trends to be parallel among units, at least conditionally on observed covariates  \citep[e.g.][]{callaway2018difference, sant2020doubly}. Theorem \ref{thm:bias} shows that if the assignment process is known and the DATE equation has a solution, we can correct the potentially misspecified TWFE regression model by simply reweighting the objective function. We want to stress that this result relies on the knowledge of the assignment process, whereas the analysis based on conditional parallel trends does not require such knowledge.

To further parse the \DATEeq, we discuss two alternative interpretations. First, fix $\xi$ and let $\bfPi$ be the solution of the \DATEeq. Then consider a class of complete randomized experiments where all propensity scores $\bfpi_i$ are identical and are equal to $\bfPi$. Then, by definition, the RIPW estimator with reshaped distribution $\bfPi$ reduces to the standard (unweighted) TWFE estimator. Theorem \ref{thm:bias} guarantees that this estimator converges to $\tau^{*}(\xi)$. Since the \DATEeq is a necessary condition, all experimental designs that do not satisfy this restriction cannot lead to a consistent estimator for $\tau^{*}(\xi)$. As a result, \DATEeq~characterizes all complete randomized experiments under which the unweighted two-way estimator converges to a given estimand. This can be interpreted as a general converse of the results established in \cite{athey2018design}. 

As an alternative interpretation, consider a fixed $\bfPi$ instead. For any such $\bfPi$ the equation \eqref{eq:DATE_equation} can be rewritten as
\begin{equation}
  \label{eq:Pi_xi}
  \lb\EPi[\W^{\tran}\J (\W - \EPi[\W])]\rb\xi = \E[\diag(\W)\J(\W - \EPi[\W])].
\end{equation}
It is easy to see that
\begin{multline}\label{eq:denominator}
  \EPi[\W^{\tran}\J (\W - \EPi[\W])] = \EPi[(\W - \EPi[\W])^{\tran}\J (\W - \EPi[\W])]\\
   = \EPi\left[\left\|\td{\W} - \EPi[\td{\W}]\right\|_{2}^2\right],
\end{multline}
where $\td{\W} = \J\W$. Since the support of $\bfPi$ involves a point $\w\not\in \{\zero_{T}, \one_{T}\}$, for which $\w' \not= 0$ the quantity in \eqref{eq:denominator} is strictly positive. Therefore, \eqref{eq:Pi_xi} implies that
\begin{equation}
  \label{eq:effective_xi}
  \xi = \frac{\EPi[\diag(\W)\J(\W - \EPi[\W])]}{\EPi\left[\left\|\td{\W} - \EPi[\td{\W}]\right\|_{2}^2\right]}.
\end{equation}
By Theorem \ref{thm:bias}, in a randomized experiment with $\bfpi_{i} \triangleq \bfPi$ \citep{athey2018design, roth2023efficient}, the effective estimand of the unweighted TWFE regression is the DATE with weight vector $\xi$. 

\begin{remark}
To illustrate this result, we consider the experiment conducted by Uber in 2017 to test the effect of in-app tipping on labor supply \citep{chandar2019drivers, chandar2019design}. They introduced the in-app tipping feature in a staggered fashion across 209 operational cities in the United States and Canada to avoid bugs in the product. Three cities were randomized to launch this feature on June 20, 2017, followed by 103 cities on July 6, 2017, and the remaining 103 cities on July 17, 2017. We can treat it as a two-period experiment, with June 20 - July 5 being the first period and July 6 - July 16 being the second period. The possible assignments include $\{1, 1\}, \{0, 1\}, \{0, 0\}$ and $\bfpi_i(\{1, 1\}) = 3/209, \bfpi_i(\{0, 1\}) = \bfpi_i(\{0, 0\}) = 103/209$. By \eqref{eq:DATE_equation_T=2} in Appendix \ref{app:DATE_equation}, \eqref{eq:effective_xi} implies that $\xi_1 = 3/106, \xi_2 = 103/106$ for the unweighted TWFE regression, which they applied to estimate the treatment effect. Thus, their analysis is essentially focused on the second period.
\end{remark}

The following result shows that the induced weights are guaranteed to be non-negative for arbitrary design.
\begin{proposition}\label{prop:effective_xi}
Let $\xi$ be defined in \eqref{eq:effective_xi}. Then for any $\bfPi$ on $\{0, 1\}^{T}$, $\xi_t \ge 0$ for all $t$.
\end{proposition}
This result generalizes the conventional cross-sectional logic that says that in randomized experiments, regression estimators are consistent for average effects \citep[e.g.][]{lin2013agnostic}. However, in the case of the TWFE regression, the situation is more nuanced. While the resulting estimand always corresponds to a weighted average effect with non-negative weights, it still depends on the experimental design. As a result, if two analysts were to split a given population into two random subpopulations and conduct two experiments with different designs on each part, the resulting estimands would have been different.

There are two reasons for this unusual behavior. First, in the cross-sectional case, $\W_i$ has two points of support, while in the panel case the support of $\W_i$ ranges from $2$ to $2^T$ points (as long as Assumption \ref{as:overlap} is satisfied). For example, if none of the units is treated in the first period, it is impossible to identify any DATE that puts positive weight on the first period. Second, fixed effects lead to a familiar incidental parameter problem \citep{neyman1948consistent}, albeit in a mild form. To see this, consider $ \bfpi_i= \bfPi \propto 1$, in which case the RIPW estimator corresponds to the conventional TWFE regression. The effective estimand for this regression is equal to the solution of (\ref{eq:effective_xi}) and is different from the effective estimand for the regression without the unit fixed effects.

\begin{remark}\label{rem:generalized_DATE}
  To estimate the generalized DATE defined in \eqref{eq:generalized_DATE}, we only need to mildly adjust the RIPW estimator:
  \begin{equation}
    \label{eq:generalized_RIPW}
  \hat{\tau}(\bfPi; \zeta) \triangleq \argmin_{\tau, \mu, \sum_{i}\alpha_{i} = \sum_{t}\lambda_{t} = 0} \sum_{i=1}^{n}\sum_{t=1}^{T}(Y_{it} - \mu - \alpha_{i} - \lambda_{t} - W_{it}\tau)^{2}\frac{\zeta_i\bfPi(\W_{i})}{\bfpi_{i}(\W_{i})}.
  \end{equation}
  In Appendix \ref{subapp:generalized_DATE} we prove that the adjusted RIPW estimator $\hat{\tau}(\bfPi; \zeta)$ consistently estimates $\tau^{*}(\xi; \zeta)$ under the same set of assumptions as in Theorem \ref{thm:bias}, provided that $n\|\zeta\|_{\infty} = O(1)$, namely that all entries of $\zeta$ are on the same scale. 
\end{remark}

\subsection{Inference on RIPW estimators}
To enable statistical inference of DATE, we first present an asymptotic expansion showing the asymptotic linearity of RIPW estimators.

\begin{theorem}\label{thm:design_linear_expansion}
  Let $\Y_{i}$ be the vector $(Y_{i1}, \ldots, Y_{iT})$. Further let $\Theta_{i} = \bfPi(\W_i) / \bfpi_{i}(\W_i)$, and
\begin{equation*}
  \stheta\triangleq \frac{1}{n}\sum_{i=1}^{n}\Theta_{i}, \quad \sww \triangleq \frac{1}{n}\sum_{i=1}^{n}\Theta_{i}\W_{i}^{\tran}\J\W_{i}, \quad \swy \triangleq \frac{1}{n}\sum_{i=1}^{n}\Theta_{i}\W_{i}^{\tran}\J\Y_i,
\end{equation*}
and
\begin{equation*}
\sw \triangleq \frac{1}{n}\sum_{i=1}^{n}\Theta_{i}\J\W_{i}, \quad \sy \triangleq \frac{1}{n}\sum_{i=1}^{n}\Theta_{i}\J\Y_i.
\end{equation*}
Under the same settings as Theorem \ref{thm:bias},
  \[\hatdenom\cdot \sqrt{n}(\hat{\tau}(\bfPi) - \tau^{*}(\xi)) = \frac{1}{\sqrt{n}}\sum_{i=1}^{n}(\V_{i} - \E[\V_{i}]) + O_\P\lb n^{1/2-2q} \rb,\]
  where $\hatdenom = \sww\stheta - \sw^\tran \sw$, and 
  \begin{align*}
    \V_i &= \Theta_i\Bigg\{\lb\E[\swy] - \tau^{*}(\xi)\E[\sww]\rb - \lb \E[\sy] - \tau^{*}(\xi)\E[\sw]\rb ^\tran\J \W_i \\
    & \qquad\quad + \E[\stheta]\W_i^\tran \J \lb\Y_i - \tau^{*}(\xi)\W_i\rb - \E[\sw]^\tran\J\lb \Y_{i} - \tau^{*}(\xi)\W_i\rb \Bigg\}
  \end{align*}
\end{theorem}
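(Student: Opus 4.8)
The plan is to derive the asymptotic linear expansion by writing the RIPW estimator explicitly through the normal equations of the weighted least squares problem \eqref{eq:weighted_fe}, profiling out the fixed effects, and then performing a first-order expansion of the resulting ratio around its population analogue. Concretely, concentrating out $\mu$, $\alpha_i$, and $\lambda_t$ in \eqref{eq:weighted_fe} leaves a one-dimensional weighted regression of the partialled-out outcome on the partialled-out treatment; using the projection $\J = I_T - \one_T\one_T^\tran/T$ to remove time effects and a weighted demeaning across units to remove unit effects, one obtains a closed-form expression $\hat{\tau}(\bfPi) = \hatnumer / \hatdenom$ where the denominator is exactly $\hatdenom = \sww\stheta - \sw^\tran\sw$ and the numerator is the analogous bilinear form in $\W_i$ and $\Y_i^\obs$. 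This step is bookkeeping with the Frisch–Waugh–Lovell theorem adapted to weights; the key is to track that the cross-unit demeaning contributes the $\sw^\tran\sw$ and $\sw^\tran\sy$ terms.

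Next I would center the ratio: write $\hatdenom(\hat{\tau}(\bfPi) - \tau^*(\xi)) = \hatnumer - \tau^*(\xi)\hatdenom$, which is a sum over $i$ of terms quadratic in $(\W_i, \Y_i^\obs)$ coupled through the empirical averages $\stheta, \sw, \sy$ that appear inside each summand. The strategy is to replace each occurrence of an empirical average by its expectation plus a fluctuation; by Assumptions \ref{as:overlap}–\ref{as:bounded_outcomes} each of $\stheta, \sww, \swy, \sw, \sy$ concentrates around its mean at rate $O_\P(n^{-q/2})$ (this is where Assumption \ref{as:limit_dep} enters, controlling $\Var$ of these averages via $\frac{1}{n^2}\sum_{ij}\rho_{ij}$ and the maximal-correlation bound on covariances of bounded functions). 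Substituting $\text{(average)} = \E[\text{average}] + O_\P(n^{-q/2})$ and multiplying out, the cross terms between two distinct fluctuations are $O_\P(n^{-q})$, but one must check that the products of a fluctuation with a bounded average that is itself $O_\P(1)$ but centered collapse correctly; carefully collecting the terms that survive linearization yields precisely $\frac{1}{n}\sum_i \V_i$ with $\V_i$ as stated, and the remainder is the square of the fluctuation scale, namely $O_\P(n^{-q})$; squaring once more in the bilinear structure (two independent averages each contributing $n^{-q/2}$, but the leading correction already extracted) gives the claimed $O_\P(n^{-2q})$. I would verify the form of $\V_i$ by checking that $\E[\V_i]$ is subtracted so the expansion is genuinely mean-zero to leading order, and that the four bracketed terms correspond to: the ``naive'' numerator-minus-estimand piece, the unit-demeaning correction for treatment, the per-unit bilinear term, and its unit-demeaning correction.

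The main obstacle is the bilinear (quadratic) nature of both $\hatnumer$ and $\hatdenom$ in the random objects combined with cross-unit dependence: unlike a sum of i.i.d. terms, here each summand depends on global averages, so a clean Hájek-type projection is not immediate, and one must argue that the second-order interactions between the empirical-average fluctuations are negligible at the $n^{-2q}$ scale rather than merely $o_\P(1)$. Handling this requires a uniform bound on the WLS solution (so that $\hatdenom$ is bounded away from zero with high probability, using the strict positivity of $\EPi[\|\td\W - \EPi[\td\W]\|^2]$ established after \eqref{eq:Pi_xi} and a concentration argument) and then a careful Taylor expansion of the ratio, keeping track of which error terms are genuinely $O_\P(n^{-2q})$ versus those that need the exact algebraic cancellation built into the definition of $\V_i$. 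A secondary technical point is that Assumption \ref{as:limit_dep} only controls averaged maximal correlations, so the variance bounds for the empirical averages must be obtained by summing $|\Cov(\cdot,\cdot)| \le \rho_{ij}\cdot(\text{bounds})$ over pairs and invoking \eqref{eq:weak_dependence}; this is routine but is the place where the rate $n^{-q}$ (and hence $n^{-2q}$ in the remainder) is pinned down.
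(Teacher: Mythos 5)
Your route is the one the paper actually takes: profile out $\mu$, the unit effects, and the time effects to get the closed form $\hat{\tau}(\bfPi)=\hatnumer/\hatdenom$ with $\hatnumer=\swy\stheta-\sw^\tran\sy$ and $\hatdenom=\sww\stheta-\sw^\tran\sw$ (Theorem \ref{theorem:estimator}), show each of $\stheta,\sww,\swy,\sw,\sy$ concentrates around its mean at rate $O_\P(n^{-q/2})$ using the maximal-correlation covariance bound and Assumption \ref{as:limit_dep} (Lemmas \ref{lem:rho_mixing} and \ref{lem:mean_var}), and then expand the bilinear form $\hatnumer-\tau^{*}(\xi)\hatdenom$ so that the terms linear in the fluctuations assemble into $\frac{1}{n}\sum_{i}(\V_i-\E[\V_i])$ (Theorem \ref{thm:asymptotic_expansion_simple}). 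Two bookkeeping points you gloss over: first, a lower bound on $\hatdenom$ is not needed for this statement, since the claim concerns $\hatdenom(\hat{\tau}(\bfPi)-\tau^{*}(\xi))$ rather than the ratio; second, ``checking that the expansion is mean-zero'' is not automatic. After linearization a purely deterministic term $\numer=\E[\swy]\E[\stheta]-\E[\sw]^\tran\E[\sy]-\tau^{*}(\xi)\lb\E[\sww]\E[\stheta]-\E[\sw]^\tran\E[\sw]\rb$ survives, and the identity $\frac{1}{n}\sum_{i}\E[\V_i]=2\numer$ is what relates it to the centered sum; this term is exactly the population bias of Theorem \ref{thm:bias} and does not vanish by centering alone (in the paper it is carried separately and shown to be zero only under conditions such as known propensities together with the \DATEeq ~\eqref{eq:DATE_equation}), so your proof must either carry it explicitly or state when it is zero.

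The genuine gap is your justification of the remainder rate. Once the linear corrections are extracted, what remains are products of two fluctuations, e.g.\ $(\swy-\E[\swy])(\stheta-\E[\stheta])$; by Cauchy--Schwarz and the variance bounds of Lemma \ref{lem:mean_var} each such product is $O_\P(n^{-q})$, and since $\hatnumer$ and $\hatdenom$ are only bilinear in the five empirical averages there is no additional structure to ``square once more''---your step passing from $n^{-q}$ to $n^{-2q}$ does not follow. Obtaining an $O_\P(n^{-2q})$ remainder would require the fluctuations themselves to be $O_\P(n^{-q})$, i.e.\ $\frac{1}{n^{2}}\sum_{i,j}\rho_{ij}=O(n^{-2q})$, which is stronger than what Assumption \ref{as:limit_dep} provides. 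Note that the paper's own argument (Theorem \ref{thm:asymptotic_expansion_simple}) likewise establishes only an $O_\P(n^{-q})$ remainder, which is all that is used downstream (independent assignments give $q=1$, so the remainder is $o_\P(n^{-1/2})$ as needed for Theorem \ref{thm:design_inference}); the correct fix is therefore to prove the $O_\P(n^{-q})$ bound carefully rather than to claim the displayed sharper rate.
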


Note that the asymptotic linear expansion holds under a fairly general dependency structure in the treatment assignments. Below, we derive a valid confidence intervals for $\tau^*(\xi)$ when $\{(\Y_i(1), \Y_i(0),\W_i): i\in [n]\}$ are independent. The general case is discussed in Appendix \ref{subapp:design_based_dependent}. If $\{\V_i: i\in [n]\}$ are well-behaved Theorem \ref{thm:design_linear_expansion} implies that
\[\frac{\hatdenom\cdot \sqrt{n}(\hat{\tau}(\bfPi) - \tau^{*}(\xi))}{\sigma_{n}^{*}} \approx N(0, 1), \quad \text{where }\sigma_{n}^{*2} = (1/n)\sum_{i=1}^{n}\Var(\V_i),\]
where $\hatdenom$ is known by design.\footnote{By well-behaved $\V_i$ we mean that they are sufficiently regular for the appropriate version of the Central Limit Theorem to hold. In the simplest case, when data is i.i.d., this reduces to standard moment restrictions.} If $\{\V_i: i\in [n]\}$ were known, a natural estimator for $\sigma_{n}^{*2}$ would be the empirical variance:
\[\hat{\sigma}_{n}^{*2} = \frac{1}{n - 1}\sum_{i=1}^{n}(\V_i - \bar{\V})^2, \quad \text{where }\bar{\V} = \frac{1}{n}\sum_{i=1}^{n}\V_i.\]
We should not expect the difference between $\hat{\sigma}_{n}^{*}$ and $\sigma_{n}^{*}$ to converge to zero since $\E[\V_i]$ in general varies over $i$. Nonetheless, $\hat{\sigma}_{n}^{*}$ is an asymptotically conservative estimate of $\sigma_{n}^{*}$ since
\begin{equation}
  \label{eq:conservative_variance}
  \E[\hat{\sigma}_{n}^{*2}] \approx \frac{1}{n}\sum_{i=1}^{n}\E\left[\lb \V_i - \frac{1}{n}\sum_{i=1}^{n}\E[\V_i]\rb^2\right]\approx \sigma_{n}^{*2} + \underbrace{\frac{1}{n-1}\sum_{i=1}^{n}\lb \E[\V_i] - \frac{1}{n}\sum_{i=1}^{n}\E[\V_i]\rb^2}_{\text{empirical variance of }\E[\V_i]},
\end{equation}
where the second term measures the heterogeneity of $\E[\V_i]$ and is always non-negative, implying that $\hat{\sigma}_{n}^{*2}$ is a conservative estimator for $\sigma_{n}^{*2}$. This is unsurprising because even in the cross-section case, the asymptotic design-based variance is only partially identifiable due to the unknown correlation structure between two potential outcomes; see, e.g., Neyman's variance formula \citep{neyman23, rubin74}.

In general, $\V_i$ is unknown due to $\tau^{*}(\xi)$ and the expectation terms. Nonetheless, we can estimate $\V_i$ by replacing each expectation with the corresponding plug-in estimate, i.e.
\begin{align}
    \hV_i &= \Theta_i\Bigg\{\lb\swy - \hat{\tau}\sww\rb - \lb \sy - \hat{\tau}\sw\rb ^\tran\J \W_i \nonumber\\
    & \qquad\quad + \stheta\W_i^\tran \J \lb\Y_i - \hat{\tau}\W_i\rb - \sw^\tran\J\lb \Y_{i} - \hat{\tau}\W_i\rb \Bigg\},\label{eq:hatVi}
\end{align}
and use them to compute the variance:
\begin{equation}
  \label{eq:var_estimate}
  \hat{\sigma}^2 = \frac{1}{n - 1}\sum_{i=1}^{n}(\hV_i - \bar{\hV})^2, \quad \text{where }\bar{\hV} = \frac{1}{n}\sum_{i=1}^{n}\hV_i.
\end{equation}
This yields a Wald-type confidence interval for $\tau^{*}(\xi)$ as
\begin{equation}
  \label{eq:confidence_interval}
  \hat{C}_{1 - \alpha} = [\hat{\tau}(\bfPi) - z_{1 - \alpha / 2}\hat{\sigma} / (\sqrt{n}\hatdenom), \hat{\tau}(\bfPi) + z_{1 - \alpha / 2}\hat{\sigma} / (\sqrt{n}\hatdenom)],
\end{equation}
where $z_{\eta}$ is the $\eta$-th quantile of the standard normal distribution. Properties of this confidence interval are established in the next theorem.
\begin{theorem}\label{thm:design_inference}
  Assume that $\{(\Y_i(1), \Y_i(0), \W_i): i \in [n]\}$ are independent with
  \begin{equation}\label{eq:variance_lower}
  \frac{1}{n}\sum_{i=1}^{n}\Var(\V_i)\ge v_0, \quad \text{for some constant } v_0 > 0.
  \end{equation}
    Then under Assumptions \ref{as:overlap} and \ref{as:bounded_outcomes}, for any $\alpha\in (0, 1)$, 
  \[\liminf_{n\rightarrow \infty}\P\lb \tau^{*}(\xi)\in \hat{C}_{1 - \alpha}\rb\ge 1 - \alpha.\]
\end{theorem}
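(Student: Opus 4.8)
The plan is to promote the asymptotic linear expansion of Theorem~\ref{thm:design_linear_expansion} to a studentized central limit theorem, and then show that the plug-in variance $\hat\sigma^2$ asymptotically over-estimates the true asymptotic variance $\sigma_n^{*2}$, so that the Wald interval $\hat{C}_{1-\alpha}$ is conservative. Two reductions at the outset. Independence of $\{\W_i\}$ is the $q=1$ case of Assumption~\ref{as:limit_dep}, so the remainder in Theorem~\ref{thm:design_linear_expansion} is $O_\P(n^{-2})$. And on the support $\S^*$, Assumption~\ref{as:overlap} gives $\Theta_i=\bfPi(\W_i)/\bfpi_i(\W_i)\le 1/c$; together with $\W_i\in\{0,1\}^T$, $\J$ fixed, and Assumption~\ref{as:bounded_outcomes}, every quantity below --- $\V_i$, $\hV_i$, the quantities $\stheta,\sww,\swy,\sw,\sy$ and their population analogues --- is bounded by a universal constant, so all concentration statements below follow from boundedness plus independence of the $\W_i$.

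First I would establish the numerator CLT: the centered variables $\V_i-\E[\V_i]$ are independent, mean zero and uniformly bounded, and since $(1/n)\sum_i\Var(\V_i)=\sigma_n^{*2}\ge\nu_0>0$, the Lyapunov ratio $(n\sigma_n^{*2})^{-3/2}\sum_i\E|\V_i-\E\V_i|^3$ is $O(n^{-1/2})$, whence $(\sqrt n\,\sigma_n^*)^{-1}\sum_i(\V_i-\E[\V_i])\dcv N(0,1)$. Next the denominator: $\hatdenom=\sww\stheta-\sw^\tran\sw$ depends only on $\{\W_i\}$, so $\hatdenom-\E[\hatdenom]=O_\P(n^{-1/2})$; using $\E[\Theta_i g(\W_i)]=\EPi[g(\W)]$ and discarding the $O(1/n)$ diagonal terms, $\E[\hatdenom]=\EPi[\W^\tran\J\W]-\EPi[\W]^\tran\J\EPi[\W]+O(n^{-1})=\EPi[\|\td{\W}-\EPi[\td{\W}]\|^2]+O(n^{-1})$, which tends to the fixed positive constant $D_*:=\EPi[\|\td{\W}-\EPi[\td{\W}]\|^2]>0$ (positivity as in the discussion after~\eqref{eq:effective_xi}, since $\S^*$ contains a point outside $\{\zero_T,\one_T\}$ and $\bfPi$ does not vary with $n$). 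Hence $\hatdenom\to_\P D_*>0$, and combined with Theorem~\ref{thm:design_linear_expansion} this also yields $\hat\tau(\bfPi)-\tau^*(\xi)=\hatdenom^{-1}\lb(1/n)\sum_i(\V_i-\E\V_i)+O_\P(n^{-2})\rb=O_\P(n^{-1/2})$.

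Then I would analyze $\hat\sigma^2$. One first shows $\max_i|\hV_i-\V_i|=O_\P(n^{-1/2})$: $\hV_i$ is obtained from $\V_i$ by substituting $\hat\tau$ for $\tau^*(\xi)$ and the sample quantities for $\E[\sww],\E[\swy],\E[\sw],\E[\sy],\E[\stheta]$, each substitution contributing an $O_\P(n^{-1/2})$ error against bounded multipliers, uniformly in $i$. Therefore $\hat\sigma^2=\td\sigma^2+o_\P(1)$ with $\td\sigma^2:=(n-1)^{-1}\sum_i(\V_i-\bar\V)^2$ the oracle empirical variance. Expanding and using that $(1/n)\sum_i(\V_i^2-\E\V_i^2)$ and $\bar\V-(1/n)\sum_i\E\V_i$ are $O_\P(n^{-1/2})$ gives $\td\sigma^2=(1/n)\sum_i\Var(\V_i)+\Delta_n+o_\P(1)$, where $\Delta_n:=(1/n)\sum_i(\E\V_i)^2-\lb(1/n)\sum_i\E\V_i\rb^2\ge 0$ is the empirical variance of $\{\E[\V_i]\}$ --- exactly the conservativeness gap of~\eqref{eq:conservative_variance}. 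Thus $\hat\sigma^2=\sigma_n^{*2}+\Delta_n+o_\P(1)$, which with $\sigma_n^{*2}\ge\nu_0$ gives $\hat\sigma^2\ge\nu_0/2$ with probability tending to one and $\sigma_n^*/\hat\sigma\le 1+o_\P(1)$.

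Finally I would assemble these pieces. Write the studentized statistic as $\sqrt n\,\hatdenom(\hat\tau(\bfPi)-\tau^*(\xi))/\hat\sigma=\lb(\sqrt n\,\sigma_n^*)^{-1}\sum_i(\V_i-\E\V_i)\rb(\sigma_n^*/\hat\sigma)+O_\P(n^{-3/2})/\hat\sigma$; the bracket $\dcv N(0,1)$, the last term is $o_\P(1)$, and $\hatdenom>0$ with probability tending to one, so $\P(\tau^*(\xi)\in\hat{C}_{1-\alpha})=\P\lb\,|\sqrt n\,\hatdenom(\hat\tau-\tau^*)/\hat\sigma|\le z_{1-\alpha/2}\,\rb+o(1)$. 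Fix $\eps>0$: on the event $\{\sigma_n^*/\hat\sigma\le 1+\eps\}$, non-coverage forces $|(\sqrt n\,\sigma_n^*)^{-1}\sum_i(\V_i-\E\V_i)|\ge z_{1-\alpha/2}/(1+\eps)-o_\P(1)$, an event whose probability converges to $2\Phi\lb-z_{1-\alpha/2}/(1+\eps)\rb$; letting $\eps\downarrow 0$ this bound tends to $\alpha$, so $\limsup_n\P(\tau^*(\xi)\notin\hat{C}_{1-\alpha})\le\alpha$, as claimed. The main obstacle is the variance step: controlling $\hV_i-\V_i$ uniformly in $i$ and, above all, pinning down that the plug-in variance \emph{overshoots} $\sigma_n^{*2}$ by the (unobservable, $i$-varying) empirical variance of $\E[\V_i]$ --- the Neyman partial-identification effect --- together with the fact that neither $\sigma_n^*$ nor $\hat\sigma$ converges, so Slutsky must be applied through the one-sided bound $\sigma_n^*/\hat\sigma\le 1+o_\P(1)$.
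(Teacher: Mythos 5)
Your proposal is correct and follows essentially the same route as the paper's proof (given in Appendix A via Theorems \ref{thm:asymptotic_expansion_simple} and \ref{thm:doubly_robust_inference_deterministic} with $\hat{\bfpi}_i=\bfpi_i$, $\hat{\bfm}_i=0$): asymptotic linearity, a CLT for $\sum_i(\V_i-\E[\V_i])$, consistency of the plug-in $\hat{\sigma}^2$ for the conservative target $\sigma_+^2=\sigma_n^{*2}+\Delta_n$ with $\Delta_n$ the empirical variance of $\E[\V_i]$, and hence conservative Wald coverage. The only differences are technical conveniences afforded by the bounded-outcome and independence assumptions of the main text --- Lyapunov in place of the paper's Berry--Esseen bound, a uniform $\max_i|\hV_i-\V_i|=O_\P(n^{-1/2})$ in place of the mean-square bound of Lemma \ref{lem:hVi_Vi}, and your one-sided $\eps$-Slutsky step, which handles the non-convergence of $\sigma_n^{*}$ and $\hat{\sigma}$ somewhat more carefully than the paper's continuous-mapping argument.
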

In Appendix \ref{subapp:design_based_dependent}, we discuss a generic result for general dependent assignments (Theorem \ref{thm:design_based_generic_inference}), which covers completely randomized experiments, blocked and matched pair experiments, two-stage randomized experiments, and so on. We present a detailed result (Theorem \ref{thm:design_based_without_replacement}) for completely randomized experiments where potential outcomes are fixed and $\W_i$'s are sampled without replacement from a user-specified subset of $\{0, 1\}^T$.\footnote{Specifically, given any support $\S^{*}$ and a pre-specified vector $\{n_{\w}: \w\in \S^*\}$ with $\sum_{\w\in \S^*}n_{\w} = n$, the experimenter sample assignments $(\W_1, \ldots, \W_n)$ with probability $\prod_{\w\in \S^{*}}n_{\w} ! / n!$.} This substantially generalizes the setting of \cite{athey2018design} and \cite{roth2023efficient}, where the assignments are sampled without replacement from the set of $T+1$ staggered assignments. At the same time, compared to \cite{roth2023efficient}, we cannot provide efficiency guarantees for our estimator.

\subsection{Discussion}\label{subsec:disc_nonng_w}

Theorem \ref{thm:bias} and Proposition \ref{prop:effective_xi} might appear counter-intuitive given well-understood problems of TWFE estimators (e.g., \cite{de2020two,goodman2018difference,sun2021estimating}). To put our result in context, we emphasize two important features of the setup. First, we restrict attention to static models, and second, we use the randomness that is coming from $\W_i$. Both of these restrictions play a key role in Theorem \ref{thm:bias}.  The absence of dynamic effects implies that we can meaningfully average units with different histories of past treatments. A version of this assumption is inescapable if we want the method to work for general designs where controlling for past history is practically infeasible. As we explain below, the randomness of assignments helps to resolve the issue that TWFE estimators put negative weights on some individual treatment effects.

In \cite{de2020two,goodman2018difference,sun2021estimating} the authors show that treated units are averaged with potentially negative weights, but these results
are conditional on the assignments $\W = (\W_1, \ldots, \W_n)$ being fixed. Let $\xi_{it}(\gamma; \W)$ be these weights for the general weighted least squares estimator $\hat{\tau}(\gamma)$ defined in \eqref{eq:weight_reg} such that
\[\E[\hat{\tau}(\gamma)\mid \W] = \sum_{i=1}^{n}\sum_{t=1}^{T}\xi_{it}(\gamma; \W)\tau_{it},\]
where we now explicitly allow the weights to depend on $\W$. When the assignments are treated as random, the large sample limit of $\hat{\tau}(\gamma)$ is 
\[\E[\hat{\tau}(\gamma)] = \sum_{i=1}^{n}\sum_{t=1}^{T}\xi_{it}(\gamma)\tau_{it},\]
where $\xi_{it}(\gamma) = \E_{\W}[\xi_{it}(\gamma; \W)]$. While $\{(i, t): \xi_{it}(\gamma; \W) < 0\}$ is non-empty almost surely for every realization of $\W$, it is still possible that all $\xi_{it}(\gamma)$ are positive due to the averaging over $\W$. For illustration, we consider a simulation study with $n = 100, T = 4$ and other details specified in Section \ref{subsec:synthetic}. We consider the conditional and unconditional weights induced by the unweighted and RIP-weighted TWFE estimator in Figure \ref{fig:unweighted_weights} and Figure \ref{fig:ripw_weights}, respectively. We plot the histograms of $\{(nT)\cdot\xi_{it}(\gamma; \W): i\in [n], t \in [T]\}$ for three realizations of $\W$ and the histogram of $\{(nT)\cdot \xi_{it}(\gamma): i\in [n], t\in [T]\}$, approximately by averaging over a million realizations of $\W$, where the multiplicative factor $nT$ is chosen to normalize the weights into a more interpretable scale. Clearly, despite the large fraction of negative weights in each realization, their averages do not have any negatives. Therefore, the criticism on TWFE estimators does not apply in this case. Indeed, it never applies to the RIPW estimator by Proposition \ref{prop:effective_xi}. In this study, all weights are designed to be $1/nT > 0$ when $\bfPi$ is a solution of the \DATEeq ~with $\xi = \one_{T}/T$, as shown in Figure \ref{fig:ripw_weights}(b), regardless of the data generating process.

\begin{figure}
    \centering
    \begin{subfigure}{0.49\textwidth}
        \centering
        \includegraphics[width = 0.8\textwidth]{{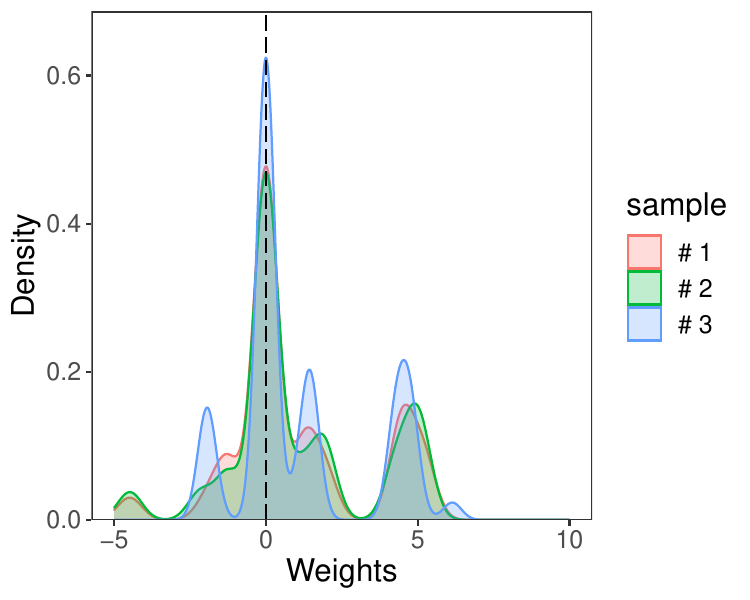}}
        \caption{Histograms of $(nT)\cdot\xi_{it}(\gamma; \W)$'s}
    \end{subfigure}
    \begin{subfigure}{0.49\textwidth}
        \centering
        \includegraphics[width = 0.8\textwidth]{{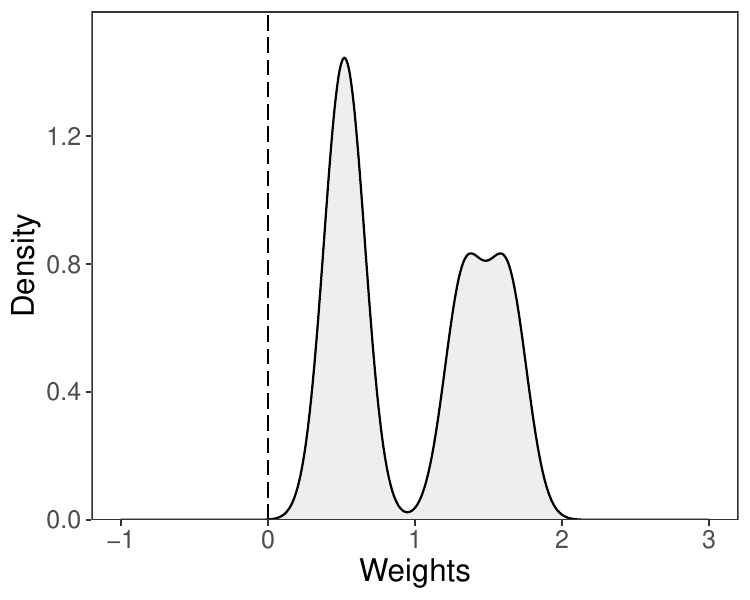}}
        \caption{Histogram of $(nT)\cdot\xi_{it}(\gamma)$'s}
    \end{subfigure}
    \caption{Effect weights for the unweighted TWFE estimator.}
    \label{fig:unweighted_weights}
\end{figure}

The discussion above demonstrates that while for each cell $(i,t)$, a particular realization of weights can be negative, this fact is not systematic. If we use the RIPW estimator designed for the equally weighted DATE, then all cells will receive the same weight on average. An alternative description of the same phenomenon is that once correctly weighted, the realized treatment paths $\W_i$ are independent of potential outcomes. This independence implies that there cannot be systematic differences in treatment effects among units with distinct assignment paths, and thus negative weights do not create complications for the interpretation of the estimates. As we illustrate in Section \ref{sec:dynamic_effects}, this interpretation remains valid even when certain dynamic effects are present. 

\begin{remark}
One might ask if Figure \ref{fig:unweighted_weights} (b) presents a general feature of unweighted TWFE estimators with random assignments. For completely randomized experiments where $\bfpi_i \equiv \bfPi$, the standard TWFE estimator is equivalent to the RIPW estimator with reshaped distribution $\bfPi$. By Proposition \ref{prop:effective_xi}, all weights are guaranteed to be non-negative. When $\bfpi_i$ varies across units, the weights are not guaranteed to be non-negative. Consider the extreme case where $\bfpi_i$ assigns $1-\eps$ mass on one assignment pass and $\eps$ mass on all others. As $\eps \rightarrow 0$, this approaches the case of fixed treatment assignments, for which the unconditional weights are almost the same as the conditional weights, which always include negative ones. 
\end{remark}

\begin{figure}
    \centering
    \begin{subfigure}{0.49\textwidth}
    \centering
        \includegraphics[width = 0.8\textwidth]{{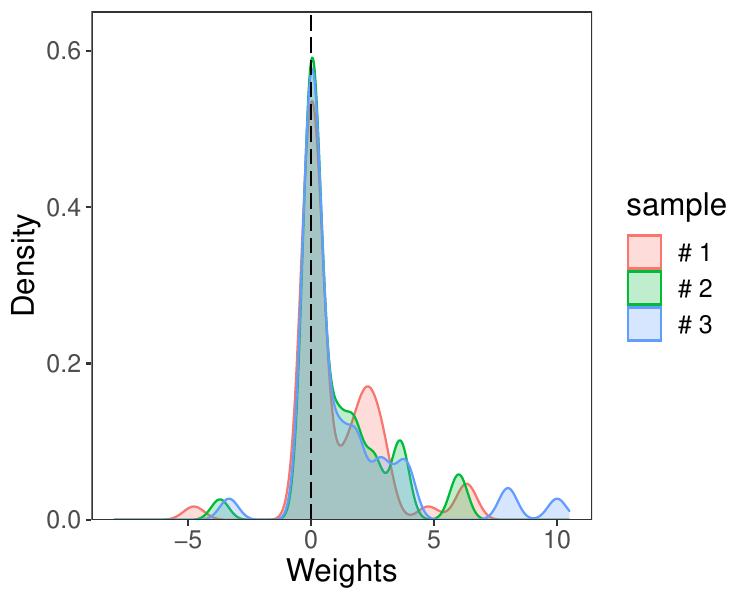}}
        \caption{Histograms of $(nT)\cdot\xi_{it}(\gamma; \W)$'s}
    \end{subfigure}
    \begin{subfigure}{0.49\textwidth}
    \centering
        \includegraphics[width = 0.8\textwidth]{{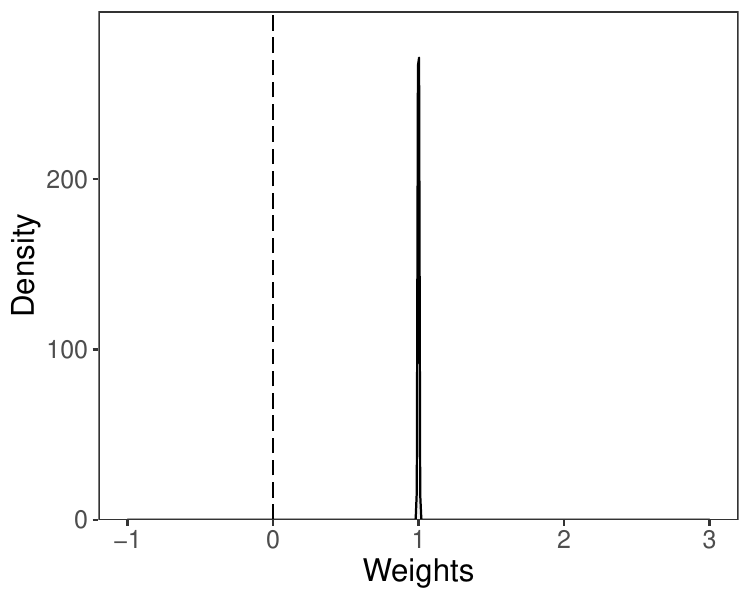}}
        \caption{Histogram of $(nT)\cdot\xi_{it}(\gamma)$'s}
    \end{subfigure}
    \caption{Effect weights for our RIPW estimator.}
    \label{fig:ripw_weights}
\end{figure}

\section{Reshaped IPW Estimator With Unknown Assignment Mechanisms}\label{sec:dri}

In this section, we move to non-experimental settings where the assignment mechanism is not controlled by the researcher and is unknown. We assume that researchers constructed unit-level estimates $\{\hat{\bfpi}_i, i \in [n]\}$. In addition, we assume that the researchers have access to a set of estimates $\{(\hat{\bfmu}_{i}(0), \hat{\bfmu}_{i}(1)): i\in [n]\}$ of $\{(\E[\Y_{i}(0)], \E[\Y_{i}(1)]): i\in [n]\}$. Further, let $\hat{m}_{it}$ be the double-centered version of $\hat{\mu}_{it}(0)$ and $\htau_{it}$ be a shifted version of $\hat{\mu}_{it}(1) - \hat{\mu}_{it}(0)$:
 \begin{align}
&\hat{m}_{it} \triangleq  \hat{\mu}_{it}(0) - \frac{1}{n}\sum_{i=1}^{n}\hat{\mu}_{it}(0) - \frac{1}{T}\sum_{t=1}^{T}\hat{\mu}_{it}(0) + \frac{1}{nT}\sum_{i=1}^{n}\sum_{t=1}^{T}\hat{\mu}_{it}(0), \label{eq:hatm} \\
&\htau_{it}  \triangleq (\hat{\mu}_{it}(1) - \hat{\mu}_{it}(0)) - \sum_{t=1}^{T}\frac{\xi_{t}}{n}\sum_{i=1}^{n}(\hat{\mu}_{it}(1) - \hat{\mu}_{it}(0)).\label{eq:hattau}
\end{align}
For notational convenience, we write $\hat{\bfm}_i$ for the vector $(\hat{m}_{i1}, \ldots, \hat{m}_{iT})$ and $\hbftau_i$ for the vector $(\htau_{i1}, \ldots, \htau_{iT})$. Given a set of estimates $\{(\hat{\bfpi}_i, \hat{\bfm}_{i}, \hbftau_i): i\in [n]\}$, we define the RIPW estimator as
\begin{equation}
  \label{eq:two_stage}
  \hat{\tau}(\bfPi) \triangleq \argmin_{\tau, \mu, \sum_{i}\alpha_{i} = \sum_{t}\lambda_{t} = 0} \sum_{i=1}^{n}\sum_{t=1}^{T}((Y_{it} - \hat{m}_{it} - \htau_{it}W_{it}) - \mu - \alpha_{i} - \lambda_{t} - W_{it}\tau)^{2}\frac{\bfPi(\W_{i})}{\hat{\bfpi}_{i}(\W_{i})}.
\end{equation}
The above estimator generalizes \eqref{eq:weighted_fe} by allowing for regression adjustment. Throughout the rest of the paper, we will abuse the notation by denoting it as $\hat{\tau}(\bfPi)$. This two-stage formulation replaces the regression with covariates by regression on the modified outcome $(Y_{it} - \hat{m}_{it} - \htau_{it}W_{it})$ without covariates, yielding a simplified structure which allows us to use previously established results. In the rest of this section, we discuss 
formal properties they need to satisfy to guarantee consistency and asymptotic normality of $\hat{\tau}(\bfPi)$.


In the previous section, we assumed that the researcher controlled the assignment process, which led to the restriction (\ref{eq:design_framework}). In observational studies, the assignment process is unknown, and we must substitute this restriction with a different assumption.
Throughout this section, we impose a high-level restriction on the relationship between unit-specific potential outcomes and assignment paths.
\begin{assumption}\label{as:latent_ign}\sc{(unit-specific mean ignorability)}
\begin{equation}\label{eq:latent_ign}
   \E[(\Y_i(1), \Y_i(0))\mid \W_i] = \E[(\Y_i(1), \Y_i(0))], \quad i = 1,\ldots, n.
\end{equation}
\end{assumption}
Recall that we do not assume that $(\Y_i(1), \Y_i(0),\W_i)$ are identically distributed across units. As a result, Assumption \ref{as:latent_ign} imposes $n$ separate restrictions, one for each unit.  It follows the tradition of the part of the panel data literature that treats unit-specific unobservables as fixed parameters \citep[e.g.][]{lancaster2000incidental,hahn2004jackknife}, rather than random variables as in \citep{chamberlain1984panel}. It is trivially satisfied in an extreme case where $(\Y_i(1), \Y_i(0))$ has a degenerate distribution for each $i \in [n]$, which corresponds to the finite population analysis \citep[e.g.][]{abadie2020sampling}. In applications where $(\Y_i(1), \Y_i(0))$ is random, this assumption imposes a strict exogeneity restriction. It describes the average behavior of the outcomes conditional on the whole treatment path and does not allow the current treatment to depend on past outcomes. To illustrate this connection, consider the classical linear TWFE model where 
\begin{equation}\label{eq:classical_TWFE}
Y_{it} = \mu + \alpha_i + \lambda_t + X_{it}^\tran \beta + \tau W_{it} + \eps_{it}, \quad \text{where }\sum_{i=1}^{n}\alpha_i = \sum_{t=1}^{T}\lambda_t = 0.
\end{equation}
Assumption \ref{as:latent_ign} is equivalent to $\E[\eps_{it}\mid \W_i] = 0$ for $t = 1, \ldots, T$, which is a strict exogeneity restriction. In contrast, if $\{\eps_{it}: i\in [n], t\in [T]\}$ only satisfies contemporenous restrictions $\E[\eps_{it}\mid W_{it}] =   0$, Assumption \ref{as:latent_ign} does not necessarily hold. We want to note that in the DiD literature it is common to impose restrictions only on $\Y_i(0)$, while Assumption \ref{as:latent_ign} restricts both potential outcomes. This is necessary given our focus on the ATE, defined in Section \ref{subsec:estimand}.

Assumption \ref{as:latent_ign} is also related to the recent cross-sectional literature on quasi-experimental designs \citep[e.g.][]{borusyak2023non}. A typical restriction in that literature is that while the distribution of the treatment of interest varies over units in a complicated way it still can be estimated and then used to construct counterfactuals. For this approach to be valid, one needs to impose a version of Assumption \ref{as:latent_ign}. In the panel data literature, this type of quasi-experimental variation was also exploited. For example, \cite{wojtaszek2022sensitivity} studied the effect of military bonuses on charitable giving and found that the timing of receiving the bonus is (nearly) as-if random. Depending on the choice of outcome variable, the bonus can be viewed as a staggered or one-off treatment with a uniform generalized propensity score.

To construct estimators $\{(\hat{\bfpi}_i, \hat{\bfm}_{i}, \hbftau_i): i\in [n]\}$ we use the observed covariates $\{\X_i, i \in [n]\}$. Our assumptions implicitly restrict the set of feasible covariates. In particular to respect Assumption \ref{as:latent_ign}, we do not allow any parts of the observed outcomes $\Y_i$ to be used as covariates. The situation is more delicate for $\W_i$, and we allow functions of $\W_i$ to be part of $\X_i$ as long as Assumption \ref{as:overlap} holds. We elaborate on this in the next two sections. 

\subsection{Assignment model estimation}

In strictly exogenous panel models, the distribution of $\W_i$ is commonly left unspecified and the analysis is based on the outcome model alone. In particular, the distribution of $\W_i$ can be degenerate for each $i \in [n]$, which is another extreme case where Assumption \ref{as:latent_ign} trivially holds. However, researchers often informally appeal to random or quasi-random variation in $\W_i$ as a source of identification, even though they continue using outcome-based methods, such as the TWFE regression. We interpret these informal statements as statistical restrictions on $\{\bfpi_i, i \in [n]\}$ that go beyond Assumption \ref{as:latent_ign}.

Precisely because the arguments used in the applied work are often informal, we cannot offer and analyze a general methodology of how to use them to construct $\{\hat{\bfpi}_i, i \in [n]\}$. Instead, we discuss several strategies that are potentially relevant for a large class of applications. Our goal is to demonstrate how to utilize the information used to construct the outcome-based estimators and thus is readily available. In practice, researchers can have other sources of information that we do not incorporate in our analysis. After this discussion, we continue our formal analysis under high-level assumptions on $\{\hat{\bfpi}_i, i \in [n]\}$.

We use $\{(\W_i, \X_i): i\in [n]\}$  to estimate $\bfpi_i$. At first glance, it might appear to be challenging to estimate the distribution of the whole vector. Nevertheless, treatment paths often have restricted support with a size much smaller than $2^T$, such as staggered adoption and/or special structures that reduce the complexity of the distribution, such as the Markov structure. We present a few examples below for illustration.

In the staggered adoption designs $\W_i$ is equivalent to an adoption time $A_i \in \{1, \ldots, T, \infty\}$, where $A_i = \infty$ for never-treated units and $A_i = t$ for units initially treated at time $t$. Then $A_i$ can be viewed as an event or during outcome, and one can apply any survival or duration model, such as the Cox proportional hazard model and accelerated failure time model, to estimate its distribution which yields $\bfpi_i$ by taking the difference between the consecutive points; see Section \ref{subsec:covid} for an empirical illustration that uses this strategy and additional discussion.  For transient treatments that occur at most once during the study period, $\W_i$ can be expressed by the adoption time $A_i \in \{1, \ldots, T, \infty\}$ as above. The propensity score $\bfpi_i$ can then be estimated via a discrete choice model. 

For general designs where the treatment can be alternated on and off, $\bfpi_i$ can be reparametrized as a sequence of conditional distributions $\P(W_{it}\mid W_{i(t-1)}, \ldots, W_{i1}, \X_{i})$ and estimated by a Markov model. In particular, \cite{arkhangelsky2019double} show that if $\X_i$ incorporates appropriate sufficient statistics, then conditioning on $\X_i$ eliminates the ex-ante present unobserved heterogeneity from the distribution of $\W_i$. \cite{aguirregabiria2021sufficient} show that these assumptions are satisfied by a large class of models that are widely used in economic applications, including structural models with forward-looking agents as well as myopic (backward-looking) dynamic logit models. They also provide explicit characterizations for sufficient statistics in such models. These results can be directly applied in our setting.

Given an estimate $\hat{\bfpi}_i$, we say that it estimates the assignment model well if $\hat{\bfpi}_{i}$ is close to $\bfpi_{i}$ in $L^2$ distance. Specifically, for each unit $i$ we define the accuracy of $\hat{\bfpi}_{i}$ as 
\begin{equation}
  \label{eq:deltapii}
  \delta_{\pi i} \triangleq  \sqrt{\E[(\hat{\bfpi}_{i}(\W_i) - \bfpi_{i}(\W_i))^2]}.
\end{equation}
Here, the expectation is taken over both $\W_i$ and $\hat{\bfpi}_i$ (conditional on $\{\X_i: i\in [n]\}$). In the setting of Section \ref{sec:design_based}, $\delta_{\pi i} = 0$ because $\hat{\bfpi}_i = \bfpi_{i}$.

\subsection{Outcome model estimation}
In this section, we discuss the construction of the terms $\{(\hat{\bfm}_{i}, \hbftau_i): i\in [n]\}$ which we use to build the estimator (\ref{eq:two_stage}). We start with unit specific quantities $(\hat{\bfmu}_{i}(0), \hat{\bfmu}_i(1))$, which we view as estimators for $(\E[\Y_i(0)], \E[\Y_i(1)])$. There are many ways of constructing such estimators, and our results require only high-level restrictions on these objects. For example, one can consider a generalization of the linear TWFE model \eqref{eq:classical_TWFE}: 
\begin{equation}\label{eq:interacted_TWFE}
\E[Y_{it}(w)] = \mu + \alpha_i + \lambda_t + X_{it}^\tran\beta + (\tau + X_{it}^\tran\phi)w, \quad \text{where }\sum_{i=1}^{n}\alpha_i = \sum_{t=1}^{T}\lambda_t = 0.
\end{equation}
Then $(\hat{\mu}_{it}(0), \hat{\mu}_{it}(1))$ can be chosen as 
\begin{equation}\label{eq:interacted_TWFE_estimate}
\hat{\mu}_{it}(w) = \hat{\mu} + \hat{\alpha}_i + \hat{\lambda}_t + X_{it}^\tran \hat{\beta} + (\hat{\tau}  +X_{it}^\tran \hat{\phi})w,
\end{equation}
where the parameters are estimated by regressing $Y_{it}$ on $X_{it}, W_{it}$, the covariate-treatment interaction $X_{it}W_{it}$, and a set of fixed effects. When we estimate $\hat{\mu}_{it}(w)$ for a new unit whose unit fixed effect is not estimated, we can simply set $\hat{\mu}_{it}(w) = \hat{\mu} + \hat{\lambda}_t + X_{it}^\tran \hat{\beta} + (\hat{\tau}  +X_{it}^\tran \hat{\phi})w$.

In the cross-sectional case, an estimate $(\hat{\bfmu}_{i}(0), \hat{\bfmu}_i(1))$ is considered an accurate estimate of $(\E[\Y_{i}(0)], \E[\Y_i(1)])$ if $\{\|\hat{\bfmu}_{i}(0) - \E[\Y_{i}(0)]\|_{2} + \|\hat{\bfmu}_i(1) - \E[\Y_i(1)]\|_{2}: i\in [n]\}$ is small on average \citep[e.g.][]{robins1994estimation, kang2007demystifying}. Constructing such estimators for panel models with fixed effects and a finite number of periods is impossible. Thus, the standard approach of measuring accuracy does not apply in our setting, and we need to consider alternative measures.

We start by defining the estimands $(m_{it}, \nu_{it})$ that $(\hat{m}_{it}, \hat{\nu}_{it})$ attempt to estimate:
 \begin{align}
& m_{it} = \E[Y_{it}(0)] - \frac{1}{n}\sum_{i=1}^{n}\E[Y_{it}(0)] - \frac{1}{T}\sum_{t=1}^{T}\E[Y_{it}(0)] + \frac{1}{nT}\sum_{i=1}^{n}\sum_{t=1}^{T}\E[Y_{it}(0)] \label{eq:m} \\
&\nhtau_{it}  \triangleq \tau_{it} - \tau^{*}(\xi).\label{eq:nu}
\end{align}
To measure the degree of mispecification of the outcome model we introduce the following quantity: 
\begin{equation}\label{eq:deltayi}
\delta_{yi} \triangleq \sqrt{\E[\|\hat{\bfm}_{i} - \bfm_{i}\|_{2}^2] + \E[\|\hbftau_{i} - \nhbftau_i\|_{2}^2]}.
\end{equation}
The first term captures the estimation accuracy of $\bfm_i$, and the second term captures the estimation accuracy of $\bftau_i$. By definition, $\delta_{yi}$ is invariant if we replace $\E[Y_{it}(0)]$ by $\E[Y_{it}(0)] + \mu' + \alpha_i' + \lambda_t'$ and $\tau_{it}$ by $\tau_{it} + \tau'$ for any $\mu', \tau', \{\alpha_i': i\in [n]\}$, and $\{\lambda_t': t\in [T]\}$. Thus, requiring $\delta_{yi}$ to be small is strictly less stringent than requiring the standard measure of outcome model accuracy for cross-sectional data to be small. 

In the simplest TWFE model \eqref{eq:classical_TWFE} without covariates, $\delta_{yi} = 0$ if we choose $\hat{\mu}_{it}(0) = \hat{\mu}_{it}(1) = 0$. For the more general TWFE model \eqref{eq:interacted_TWFE}, regardless whether unit $i$ is used for fitting the TWFE regression,
\[\delta_{yi} = \sqrt{\E\left[\sum_{t=1}^{T}\{(X_{it} - \bar{X}_{i\cdot} - \bar{X}_{\cdot t} + \bar{X}_{\cdot \cdot})^\tran (\hat{\beta} - \beta)\}^2 + \{(X_{it} - \sum_{t=1}^{T}\xi_{t}\bar{X}_{\cdot t})^\tran (\hat{\phi} - \phi)\}^2\right]},\]
Standard assumptions \citep[e.g.][]{arellano2003panel, wooldridge2010econometric} guarantee that $(\hat{\beta}, \hat{\phi})$ are consistent for $(\beta, \phi)$ even with a finite number of periods. We can further generalize the model by replacing $X_{it}^\tran\beta$ and $X_{it}^\tran \phi$ with nonlinear functions $g(X_{it})$ and $\tau(X_{it})$ and estimate them by nonparametric TWFE regressions \citep{boneva2015semiparametric}.

The requirement that $\delta_{yi} \approx 0$, at least on average, puts restrictions on the treatment effects. These requirements, however, can be redundant, depending on the structure of $\X_i$. For example, \cite{wooldridge2021two} shows that the problems with heterogeneous treatment effects can be solved, under conditional parallel trends and linearity, by including a sufficiently rich set of controls, which includes functions of $\W_i$. In the staggered adoption case, one needs to include interactions with all the adoption dates. Unfortunately, including such interactions into $\X_i$ violates the overlap assumption \ref{as:overlap}. 


\subsection{Consistency of RIPW estimators}

In this and the next subsection, we consider a simplified case where the estimates $\{(\hat{\bfpi}_i, \hat{\bfm}_{i}, \hbftau_i): i\in [n]\}$ are independent of the data (e.g., obtained from external data). While this is not always possible in practice, the theory of consistency and asymptotic normality can be stated without much mathematical complication. Moreover, these results are building blocks for the theory of cross-fitting estimator described at length in Appendix \ref{app:cross-fitting}. To ease implementation, we provide a self-contained description of the (derandomized) cross-fitting RIPW estimator in Algorithm \ref{algo:RIPW_derandomized_cf} at the end of the next subsection.
\begin{assumption}\label{as:overlap_dr}
There exists $c > 0$ such that, for the same $\S^{*}$ defined in Assumption \ref{as:overlap},
  \begin{equation*}
    \hat{\bfpi}_{i}(\w) \ge c,\quad  \forall \w \in \S^{*}, i\in [n], \quad \text{almost surely}.
  \end{equation*}
\end{assumption}

\begin{assumption}\label{as:bounded_outcomes_dr}
There exists $M < \infty$ such that $\max_{i, t, w}\E[\hat{m}_{it}^2 + \htau_{it}^2] \le M$.
\end{assumption}

Theorem \ref{thm:bias} implies that the RIPW estimator with $\bfPi$ being a solution of the \DATEeq, if any, is a consistent estimator of DATE without any outcome model when $\hat{\bfpi}_i = \bfpi_{i}$ is known. On the other hand, when the outcome model is correctly specified, $Y_{it} - \hat{m}_{it} - \htau_{it}W_{it}\approx Y_{it} - m_{it} - (\tau_{it} - \tau^{*}(\xi))W_{it}$ 
is a linear model with two-way fixed effects and a single predictor $W_{it}$ and $\hat{\tau}$ is approximately a weighted least squares estimator which is consistent under mild conditions on the weights \citep[e.g.,][]{wooldridge2010econometric}. This shows a weak double robustness property that $\hat{\tau}(\bfPi)$ is consistent if either the outcome model or the assignment model is exactly correct.

For cross-sectional data, the augmented IPW estimator enjoys a strong double robustness property, which states that the asymptotic bias is the product of estimation errors of the outcome and assignment models \citep[e.g.,][]{robins1994estimation, kang2007demystifying, chernozhukov2017double, chernozhukov2018double}. Clearly, this implies the weak double robustness. It further implies the estimator has higher asymptotic precision than estimators based on merely the outcome or assignment modeling when both models are estimated well. The next result provides a sufficient condition for strong double robustness of $\hat{\tau}(\bfPi)$ when the estimated treatment and outcome models are independent of the data.

\begin{theorem}\label{thm:doubly_robust_expansion_maintext}
 Assume that $\{(\hat{\bfpi}_i, \hat{\bfm}_{i}, \hbftau_i): i\in [n]\}$ are independent of the data. Under Assumptions \ref{as:limit_dep}-\ref{as:bounded_outcomes} and \ref{as:latent_ign} - \ref{as:bounded_outcomes_dr}, conditional on the estimates,

 \[\hat{\tau}(\bfPi) = \tau^{*}(\xi) + O_\P(\bar{\delta}_{\pi}\bar{\delta}_{y}), \quad \text{where }\bar{\delta}_{\pi} = \sqrt{\frac{1}{n}\sum_{i=1}^{n}\delta_{\pi i}^2}, \quad \bar{\delta}_{y} = \sqrt{\frac{1}{n}\sum_{i=1}^{n}\delta_{y i}^2}.\]
 In particular, $\hat{\tau}(\bfPi)$ is a consistent estimator of $\tau^{*}(\xi)$ if 
  $\bar{\delta}_{\pi}\bar{\delta}_{y} = o(1)$.
\end{theorem}
Assumptions \ref{as:bounded_outcomes} and \ref{as:bounded_outcomes_dr} guarantee that $\bar{\delta}_y$ is bounded. Thus, the RIPW estimator is consistent whenever $\bfpi_i$ is consistently estimated without any requirement on the rate of convergence. On the other hand, under the TWFE model  \eqref{eq:interacted_TWFE} or nonparametric TWFE models discussed in the last subsection, $\bar{\delta}_{y} = o(1)$ and the estimator is consistent even if the assignment model is globally misspecified.

\subsection{Inference with independent model estimates}

Similar to Theorem \ref{thm:design_linear_expansion}, we can derive an asymptotic linear expansion for $\hatdenom\cdot \sqrt{n}(\hat{\tau}(\bfPi) - \tau^{*}(\xi))$. 

\begin{theorem}\label{thm:dr_linear_expansion}
  Assume that $\{(\hat{\bfpi}_i, \hat{\bfm}_i, \hbftau_i): i \in [n]\}$ are independent of the data. Let $\stheta, \sww, \sw$, and $\hatdenom$ be defined as in Theorem \ref{thm:design_linear_expansion} with $\hat{\bfpi}_i$ in the definition of $\Theta_i$. Redefine $\swy, \sy$, and $\V_i$ by replacing $(\Y_i(0), \Y_i(1))$ with $(\td{\Y}_{i}(0), \td{\Y}_{i}(1)) = (\Y_{i}(0) - \hat{\bfm}_{i}, \Y_{i}(1) - \hat{\bfm}_{i} - \hbftau_i)$. Under Assumptions \ref{as:limit_dep} - \ref{as:bounded_outcomes} and \ref{as:latent_ign} - \ref{as:bounded_outcomes_dr},
  \begin{equation}\label{eq:expansion_independent_estimates}
      \hatdenom\cdot \sqrt{n}(\hat{\tau}(\bfPi) - \tau^{*}(\xi)) = \frac{1}{\sqrt{n}}\sum_{i=1}^{n}(\V_{i} - \E[\V_{i}]) + O_\P(n^{1/2-2q} + \bar{\delta}_{\pi}\bar{\delta}_{y}).
      \end{equation}
 In particular, the last term is $o_\P(1/\sqrt{n})$ if $q > 1/2$ and $\bar{\delta}_{\pi}\bar{\delta}_{y} = o(1/\sqrt{n})$.
\end{theorem}

Similar to Section \ref{sec:design_based}, we can estimate $\hat{\V}_i$ and the asymptotic variance via \eqref{eq:var_estimate} and construct the Wald-type confidence interval as \eqref{eq:confidence_interval} when units are independent. This is a special case of Theorem \ref{thm:design_based_generic_inference} in Appendix \ref{subapp:design_based_dependent} for general dependent designs.
\begin{theorem}\label{thm:dr_coverage}
  Assume that $\{(\Y_i(1), \Y_i(0), \W_i): i \in [n]\}$ are independent. Under the same settings as in Theorem \ref{thm:dr_linear_expansion},
    \[\liminf_{n\rightarrow \infty}\P\lb \tau^{*}(\xi)\in \hat{C}_{1 - \alpha}\rb\ge 1 - \alpha,\]
    if, further, \eqref{eq:variance_lower} holds. 
\end{theorem}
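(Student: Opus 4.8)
The plan is to follow the same three-step recipe used for the design-based coverage statement (Theorem~\ref{thm:design_inference}): (i) invoke the asymptotic linear expansion of Theorem~\ref{thm:dr_linear_expansion}; (ii) establish a triangular-array CLT for the leading term $\frac{1}{\sqrt n}\sum_i(\V_i-\E[\V_i])$; and (iii) show the plug-in variance $\hat\sigma^2$ of \eqref{eq:var_estimate} is asymptotically no smaller than $\sigma_n^{*2}=\frac1n\sum_i\Var(\V_i)$. Slutsky's lemma combined with the floor \eqref{eq:variance_lower} then yields the conservative coverage. Throughout, all quantities are conditional on $\{(\hat\bfpi_i,\hat\bfm_i)\}$, which are treated as fixed since they are independent of the data.

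\textbf{Step 1 (CLT for the leading term).} Because $\{(\hat\bfpi_i,\hat\bfm_i)\}$ are independent of the data and the $\Z_i$ are i.i.d.\ (Assumption~\ref{as:independent}), the summands $\V_i-\E[\V_i]$ in Theorem~\ref{thm:dr_linear_expansion} are independent and mean zero, though not identically distributed. Assumption~\ref{as:overlap_dr} gives $\Theta_i=\bfPi(\W_i)/\hat\bfpi_i(\W_i)\in[0,1/c]$, and Assumption~\ref{as:bounded_outcomes_dr} gives $\|\td\Y_i^{\obs}\|_\infty<M$; inspecting the definition of $\V_i$ (with $\Y_i^{\obs}$ replaced by $\td\Y_i^{\obs}$) shows that $\stheta,\sww,\swy,\sw,\sy$ and each $\V_i$ are bounded by a constant depending only on $(c,M,T)$. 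Hence $\E|\V_i-\E[\V_i]|^3$ is uniformly bounded, and with $\sigma_n^{*2}\ge\nu_0>0$ Lyapunov's condition holds, so $(\sqrt n\,\sigma_n^*)^{-1}\sum_i(\V_i-\E[\V_i])\dcv N(0,1)$. Dividing the expansion of Theorem~\ref{thm:dr_linear_expansion} by $\sigma_n^*\ge\sqrt{\nu_0}$ leaves the remainder $o_\P(1)$, so $\sqrt n\,\hatdenom(\hat\tau(\bfPi)-\tau^*(\xi))/\sigma_n^*\dcv N(0,1)$; note $\hatdenom>0$ with probability tending to one (overlap forces at least two distinct realized paths among units with positive $\bfPi$-mass), so the interval \eqref{eq:confidence_interval} is well defined.

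\textbf{Step 2 (the plug-in variance is asymptotically conservative).} First, $\frac1n\sum_i(\hV_i-\V_i)^2=o_\P(1)$: the gap between $\hV_i$ in \eqref{eq:hatVi} and $\V_i$ stems only from replacing $\tau^*(\xi)$ by $\hat\tau(\bfPi)$ — consistent by Theorem~\ref{thm:doubly_robust_expansion_maintext} — and from replacing the population averages $\E[\stheta],\E[\sww],\dots$ by their sample counterparts, which converge by a weak law for bounded independent arrays; since each $\hV_i$ depends linearly on these and is multiplied by the bounded $\Theta_i$, the averaged squared discrepancy vanishes. A Cauchy--Schwarz argument on the cross terms, using $\frac1n\sum_i\V_i^2=O_\P(1)$, gives $\hat\sigma^2=\frac{1}{n-1}\sum_i(\V_i-\bar\V)^2+o_\P(1)$. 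Finally, by the decomposition \eqref{eq:conservative_variance}, $\E[\frac{1}{n-1}\sum_i(\V_i-\bar\V)^2]=\sigma_n^{*2}+\frac{1}{n-1}\sum_i\big(\E[\V_i]-\frac1n\sum_j\E[\V_j]\big)^2+o(1)\ge\sigma_n^{*2}+o(1)$, and boundedness plus independence of the $\V_i$ make the empirical variance concentrate around its mean at rate $O_\P(n^{-1/2})$. Hence $\hat\sigma^2\ge\sigma_n^{*2}-o_\P(1)$, i.e.\ $\hat\sigma^2/\sigma_n^{*2}\ge 1-o_\P(1)$ using \eqref{eq:variance_lower}.

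\textbf{Step 3 (conclusion) and the main obstacle.} Put $R_n=\hat\sigma/\sigma_n^*$, so $\P(R_n\ge 1-\eps)\to1$ for every $\eps>0$. By \eqref{eq:confidence_interval}, $\tau^*(\xi)\in\hat C_{1-\alpha}$ iff $|\sqrt n\,\hatdenom(\hat\tau(\bfPi)-\tau^*(\xi))|/\hat\sigma\le z_{1-\alpha/2}$, and on $\{R_n\ge 1-\eps\}$ this event contains $\{|\sqrt n\,\hatdenom(\hat\tau(\bfPi)-\tau^*(\xi))|/\sigma_n^*\le z_{1-\alpha/2}(1-\eps)\}$, whose probability tends to $2\Phi(z_{1-\alpha/2}(1-\eps))-1$ by Step~1. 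Letting $n\to\infty$ and then $\eps\downarrow0$ gives $\liminf_n\P(\tau^*(\xi)\in\hat C_{1-\alpha})\ge 2\Phi(z_{1-\alpha/2})-1=1-\alpha$. The delicate part is Step~2: one must check that every plug-in substitution inside $\hV_i$ — the data-dependent $\hat\tau(\bfPi)$ and each of the five sample averages — contributes only $o_\P(1)$ after squaring and averaging, and that the centering $\E[\V_i]$ used here matches the one governing the expansion, which is exactly where the hypothesis $\bar\delta_\pi\bar\delta_y=o(1/\sqrt n)$ enters (it forces the residual double-robustness bias to be $o_\P(1)$ rather than merely $O_\P(1)$). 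Steps~1 and~3 are routine given the uniform boundedness from Assumptions~\ref{as:overlap_dr}--\ref{as:bounded_outcomes_dr} and the variance floor \eqref{eq:variance_lower}.
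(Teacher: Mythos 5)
Your proposal is correct and follows essentially the same route as the paper's proof (asymptotic linear expansion, a Berry--Esseen/Lyapunov CLT for the independent but non-identically distributed $\V_i$, the approximation $\frac1n\sum_i(\hV_i-\V_i)^2=o_\P(1)$ driven by consistency of $\hat\tau(\bfPi)$ and concentration of the five sample averages, and a Slutsky-type conservative-coverage step). The only cosmetic difference is that you stop at the asymptotic lower bound $\hat\sigma^2\ge\sigma_n^{*2}-o_\P(1)$ with an $\eps$-argument, whereas the paper shows $\hat\sigma^2$ converges to the conservative limit $\sigma_{+}^2\ge\sigma_n^{*2}$ and also works under a weaker $(2+\omega)$-moment condition in place of the boundedness you exploit.
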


Under Assumption \ref{as:latent_ign}, Theorem \ref{thm:dr_linear_expansion} and Theorem \ref{thm:dr_coverage} strictly generalize Theorem \ref{thm:design_linear_expansion} and Theorem \ref{thm:design_inference} -- when $\bfpi_i$ is known, $\bar{\delta}_{\pi} = 0$ and hence $\bar{\delta}_{\pi}\bar{\delta}_{y} = 0 = o(1/\sqrt{n})$ regardless of the accuracy of the outcome model estimates. When $\bfpi_i$ is unknown, $\bar{\delta}_{\pi}$ and $\bar{\delta}_{y}$ are typically no less than $O(1/\sqrt{n})$ without external data. As a result, both models should be consistently estimated to achieve $\bar{\delta}_{\pi}\bar{\delta}_{y} = o(1/\sqrt{n})$ though the estimates can have a slower convergence rate than $O(1/\sqrt{n})$. For example, it would be satisfied if $\bar{\delta}_{\pi}, \bar{\delta}_{y} = o(n^{-1/4})$. We emphasize that this rate requirement is standard for inference with cross-sectional data \citep{chernozhukov2017double, chernozhukov2018double}. Under this rate condition, by virtue of the asymptotic linear expansion in Theorem \ref{thm:dr_linear_expansion}, the researcher can safely ignore the variability of the model estimates and use them in the variance calculation as if they are the truth. 

Even when condition $\bar{\delta}_{\pi}\bar{\delta}_{y} = o(1/\sqrt{n})$ is violated, the asymptotically valid inference is still possible at the cost of more involved variance estimation. 
Doubly robust inference in this regime is generally hard \citep[e.g.][]{benkeser2017doubly}. We consider the setting where parametric models are used to fit the generalized propensity score and regression adjustment. This setting has been studied in the literature for cross-sectional data \citep[e.g.][]{cao2009improving}. Our formal results are deferred in Appendix \ref{subapp:global_misspecification} due to the mathematical complication.
Roughly speaking, if the estimators $\{(\hat{\bfpi}_i, \hat{\bfm}_i, \hbftau_i): i \in [n]\}$ come from a smooth parametric model, then one can use their asymptotic expansion (around their limits, which do not necessarily correspond to the true parameters) to compute the asymptotic variance. Similar to the case discussed Section \ref{sec:design_based} and this section, we can obtain an asymptotically conservative variance estimator without knowing which model is misspecified apriori.

In practice, it is uncommon to obtain estimates of $(\hat{\bfpi}_i, \hat{\bfm}_i, \hbftau_i)$ that are independent of the data, except in the design-based inference where $\hat{\bfpi}_i = \bfpi_i$ and $\hat{\bfm}_i = \hbftau_i = \zero_{T}$, or when external data is available. Usually, these parameters need to be estimated from the data. The resulting dependence invalidates the assumptions of Theorem \ref{thm:dr_linear_expansion} and \ref{thm:dr_coverage}. However, as we show in Appendix \ref{app:cross-fitting} similar results hold if we use a particular version of cross-fitting. Note that this implies that $\hat{\bfpi}_i, \hat{\bfm}_i, \hbftau_i$ cannot contain unit-specific fixed effects. Moreover, we propose a simple approach to mitigate the randomness introduced by sample splitting. We describe the estimator in Algorithm \ref{algo:RIPW_derandomized_cf}. More details can be found in Appendix \ref{app:cross-fitting}. We implemented this method in an \texttt{R} package \texttt{ripw} that is available at \url{https://github.com/lihualei71/ripw}.

\begin{algorithm}[h]
\caption{RIPW estimator with derandomized cross-fitting}
\label{algo:RIPW_derandomized_cf}
\begin{algorithmic}
\State \textbf{Input: }data $\{(\X_i, \W_i, \Y_i): i\in [n]\}$, number of folds $K$, number of data splits $B$,\\
 \qquad \quad \,\, reshaped distribution $\bfPi$
\State 
\Procedure{}{}
\For{$b = 1, \ldots, B$}
\State Randomly split $[n]$ into $K$ folds $\mathcal{I}_1, \ldots, \mathcal{I}_K$ with $|\mathcal{I}_j|\in \{\lfloor n/K\rfloor, \lceil n/K\rceil\}$
    \For{$k = 1, \ldots, K$}
        \State Fit the assignment model $\hat{\bfpi}(\w; \mathbf{x})$ using data in $\cup_{j\neq k}\mathcal{I}_j$
        \State Fit the outcome model $ (\hat{\bfm}_t(\mathbf{x}), \hbftau_t(\mathbf{x}))$ using data in $\cup_{j\neq k}\mathcal{I}_j$
        \For{$i \in \mathcal{I}_k$}
            \State $\hat{\bfpi}_i(\w) \gets \hat{\bfpi}(\w; \X_i)$
            \State $(\hat{\bfm}_{it}, \hbftau_{it}) \gets (\hat{\bfm}_t(\X_i), \hbftau_t(\X_i))$ for each $t\in [T]$
        \EndFor 
    \EndFor
    \State Compute $\hat{\tau}^{(b)}(\bfPi)$ via \eqref{eq:two_stage}
    \State Compute $\hatdenom^{(b)}$ defined in  Theorem \ref{thm:dr_linear_expansion}
    \State Compute $\{\hat{\V}_i^{(b)}: i \in [n]\}$ based on \eqref{eq:hatVi}
\EndFor
\State $\hat{\tau}(\bfPi)\gets \sum_{b=1}^{B}\mathcal{D}^{(b)}\hat{\tau}^{(b)}(\bfPi)/\sum_{b=1}^{B}\mathcal{D}^{(b)}$
\State $\bar{\hat{\V}}_i \gets \sum_{b=1}^{B}\hat{\V}_i^{(b)}/\sum_{b=1}^{B}\mathcal{D}^{(b)}$ for each $i\in [n]$
\State $\hat{\sigma}^2\gets$ sample variance of $\{\bar{\hat{\V}}_i: i\in [n]\}$
\State $\hat{C}_{1 - \alpha} \gets [\hat{\tau}(\bfPi) - z_{1 - \alpha / 2}\hat{\sigma} / \sqrt{n}, \hat{\tau}(\bfPi) + z_{1 - \alpha / 2}\hat{\sigma} / \sqrt{n}],$
\EndProcedure
\State
\State \textbf{Output: } the derandomized cross-fitting estimator $\hat{\tau}(\bfPi)$ and confidence interval $\hat{C}_{1 - \alpha}$
\end{algorithmic}
\end{algorithm}

\section{Design-robust event study specifications}\label{sec:dynamic_effects}

A key limitation of our analysis in previous sections is the focus on static models. This is important both theoretically and practically. Theoretically, some policies of interest are transient in nature, e.g., a large infrastructure investment,  but policymakers expect them to have a lasting impact, which requires a dynamic model. Practically, a large part of applied work in economics uses regression models that explicitly incorporate lags of treatment variables. 

We consider a relatively simple class of linear potential outcome modes to address these concerns. For every $i$ and $t$, we specify the potential outcomes as a function of the current treatment $w$ and its $p$ lags:
\begin{equation}\label{as:dynamic_effect}
   \E[Y_{it}(w_0,w_{-1},\dots, w_{-p})] = \mu_{it} + \sum_{l=0}^p\tau_{i,-l}w_{-l}.
\end{equation}
As in Section \ref{sec:dri}, the expectation is conditional on covariates, and we do not require the units to be independent or identically distributed. This model does not restrict the baseline outcomes but puts structure on the dynamic effects of the treatment. First, the effect of the treatment is present only for $p$ periods after it is implemented. Second, the effect is linear, i.e., the causal effect of being treated one period ago, $w_{-1}$, does not depend on whether the unit was treated two periods ago $w_{-2}$. Finally, the effects are homogenous over time, meaning that $\tau_{i,l}$ do not depend on calendar time $t$. These restrictions are important: the first eliminates the possibility of long-term effects, while the other two eliminate state dependence. Still, we think this model is flexible enough to be useful for a large class of empirical applications. 
 
Interestingly, if the treatment timing is fixed and common across units, and $p$ is large enough, then \eqref{as:dynamic_effect} is a parametrization of all realizable potential outcomes and thus does not impose any testable restrictions. To see this, let $q+1$ denote the adoption time and set $p = T-q+1$. Then each unit $i$ has $T + (T - q)$ potential outcomes $\{Y_{it}(\zero_{T}): t\in [T]\}$ and $\{Y_{it}(\zero_{q}, \one_{T-q}): t \in \{q+1, \ldots, T\}\}$. It is easy to see that \eqref{as:dynamic_effect} holds with 
$\mu_{it} = \E[Y_{it}(\zero_{T})]$, $\tau_{i, 0} = \E[Y_{i(q+1)}(\zero_{q}, \one_{T-q})] - \E[Y_{i(q+1)}(\zero_{T})]$, and 
\[\tau_{i, -\ell} = \E[Y_{i(q+\ell+1)}(\zero_{q}, \one_{T-q})] - \E[Y_{i(q+\ell+1)}(\zero_{T})] - (\E[Y_{i(q+\ell)}(\zero_{q}, \one_{T-q})] - \E[Y_{i(q+\ell)}(\zero_{T})]), \quad \ell \in [p].\]
Similar logic extends to staggered adoption designs as long as we treat the assignment as fixed. However, it breaks if we assume that the adoption time is randomly assigned. In this case, we can test the static model from Section \ref{sec:design_based} and the dynamic model (\ref{as:dynamic_effect}) by comparing outcomes across units that were previously treated at different periods. This emphasizes the importance of the assignment model for the analysis of dynamic effects.  

In this case, it is natural to consider the RIPW estimator coupled with an event-study regression model, i.e.,
\begin{align}
&(\hat{\tau}_{0}, \hat{\tau}_{-1}, \ldots, \hat{\tau}_{-p})\nonumber\\
& = \argmin_{\tau_0, \tau_{-1}, \ldots, \tau_{-p}, \mu, \sum_{i}\alpha_i = \sum_{t}\lambda_t = 0}\sum_{i=1}^{n}\sum_{t=1}^{T}\lb Y_{it} - \mu - \alpha_i - \lambda_t - W_{it}\tau_{0} - \sum_{j=1}^{p}W_{i(t-j)}\tau_{-j}\rb^2\frac{\bfPi(\W_i)}{\bfpi_i(\W_i)}\label{eq:RIPW_event}
\end{align}
where $W_{it}$ is defined as $0$ whenever $t \le 0$. Our next result describes the probability limit of $(\hat{\tau}_{0}, \hat{\tau}_{-1}, \ldots, \hat{\tau}_{-p})$. The proof is presented in Appendix \ref{subapp:dynamic}.

\begin{theorem}\label{thm:dynamic_consistency}
Assume that $Y_{it}(w_{0}, w_{-1}, \ldots, w_{-p})$ satisfies \eqref{as:dynamic_effect} and the generalized propensity score $\bfpi_i(\w) \triangleq \mathbb{P}(\W_i = \w\mid \{Y_{it}(\td{\w}): t\in [T], \td{\w}\in \{0, 1\}^{T}\})$ is known. Further assume that $\EPi[(\W_{\ex} - \EPi[\W_{\ex}])\J (\W_{\ex} - \EPi[\W_{\ex}])]$ is positive definite, where 
\[\W_{\ex} = (\W, \W_{-1}, \ldots, \W_{-p})\in \{0, 1\}^{T\times (p+1)}, \quad \W_{-k} = (0, \ldots, 0, W_1, \ldots, W_{T-k})^\tran,\]
and $\W = (W_1, \ldots, W_{T})$ denote a generic random vector drawn from the distribution $\bfPi$. Then, under Assumptions \ref{as:limit_dep}-\ref{as:bounded_outcomes} (with $Y_{it}(w)$ replaced by $Y_{it}(w_0, w_{-1}, \ldots, w_{-p})$), 
\[\hat{\tau}_{-k} = \frac{1}{n}\sum_{i=1}^{n}\tau_{i, -k} + o_\P(1), \quad k = 0, 1, \ldots, p.\]
\end{theorem}

This result justifies using the RIPW estimator in a large class of applications. If $\bfpi_i$-s are unknown, then one can estimate them using one of the strategies discussed in the previous section. Similarly, one can introduce covariates in this model in the same way as before. Also, applied researchers often consider leads in addition to lags in their regressions, especially in the context of staggered adoption designs. To incorporate this practice into our framework, one simply needs to shift the treatment path $\W_{i}$ appropriately. The resulting estimators for the leads can then be used to test for the validity of the underlying model.

We do not establish analogs of Theorems \ref{thm:doubly_robust_expansion_maintext} - \ref{thm:dr_coverage} for this estimator, but we expect them to hold under appropriate technical conditions. In particular, under \eqref{as:dynamic_effect}, if the TWFE model holds for the baseline potential outcomes such that $\mu_{it} = \mu + \alpha_{i} + \lambda_{t}$ and $\tau_{i, -\ell} = \tau_{-\ell}$, then \eqref{eq:RIPW_event} is consistent for $(\tau_0, \tau_{-1}, \ldots, \tau_{-p})$ since it is a weighted least squares estimator for a correctly specified linear model.
Compared to our analysis in previous sections, the reshaping distribution $\bfPi$ does not play a major role in these results. The reason for this behavior is that the model for treatment effects is time-homogeneous. If we relax this assumption and allow for time-varying dynamic effects $\tau_{i,-l,t}$, then the distribution $\bfPi$ becomes important again. The corresponding DATE equation for this problem is more complicated than the one presented in Section \ref{sec:design_based}, and its analysis is beyond the scope of this paper.

\section{Numerical Studies}\label{sec:experiments}

In this section, we investigate the properties of our estimator in simulations and show how to apply it to real datasets. The \texttt{R} programs to replicate all results in this section is available at \url{https://github.com/xiaomanluo/ripwPaper}.

\subsection{Synthetic data}\label{subsec:synthetic}
 To highlight the central role of the reshaping function in eliminating the bias, we focus on inference with known assignment mechanisms. Put another way, in such settings, the bias of the unweighted or IPW estimators is purely driven by the wrong reshaping function rather than other sources of variability. We consider the DATE with $\xi = \one_{T} / T$ for simplicity. We also design a simulation study with unknown assignment mechanisms and present the results in Appendix \ref{app:AIPW}, which involves all $2$-by-$2$ settings with correct/incorrect assignment/outcome model and a detailed comparison between the RIPW estimator and several other competing estimators. 

We consider a short panel with $T = 4$ and sample size $n = 1000$. We generate a single time-invariant covariate $X_{it} = X_{i}$ with $P(X_i = 1) = 0.7$ and $P(X_i = 2) = 0.3$ and a single time-invariant unobserved confounder $U_{it} = U_{i}$ with $U_i\sim \mathrm{Unif}(\{1, \ldots, 10\})$. Within each experiment, the covariates and unobserved confounders are only generated once and then fixed to ensure a fixed design. For treatment assignments, we consider a staggered adoption design, i.e., $\W_i\in \cW^{\mathrm{sta}}$. We assume that $\W_i$ is less likely to be treated when $X_i = 1$. In particular, 
\[\big(\bfpi_i(\w_{(0)}), \bfpi_i(\w_{(1)}), \bfpi_i(\w_{(2)}), \bfpi_i(\w_{(3)}), \bfpi_i(\w_{(4)})\big) = \left\{
    \begin{array}{ll} 
      (0.8, 0.05, 0.05, 0.05, 0.05) & (X_i = 1)\\
      (0.1, 0.1, 0.2, 0.3, 0.3) & (X_i = 2)
    \end{array}
\right..\]

The potential outcome $Y_{it}(0)$ and the treatment effect $\tau_{it}$ are generated as follows:
\[Y_{it}(0) = \mu + \alpha_i + \lambda_t + m_{it} + \epsilon_{it}, \quad m_{it} = \sigma_{m}X_i\beta_{t}, \quad \tau_{it} = \sigma_{\tau} a_{i}b_{t},\]
where $\mu = 0$, $\beta_t = t - 1$, $\alpha_i = 0.5U_i$, $\lambda_t\stackrel{i.i.d.}{\sim} \mathcal{N}(0,1)$, $b_t \stackrel{i.i.d.}{\sim} \mathcal{N}(0,1)$, and $\epsilon_{it} \stackrel{i.i.d.}{\sim} N(0, 1)$. For $a_i$, we consider two settings: we either set $a_i = 1$ thus making $\tau_{it}$ unit-invariant; or  $a_{i}\stackrel{i.i.d.}{\sim}\mathrm{Unif}([0, 1])$, in which case $\tau_{it}$ varies over units and periods. As with the covariates $X_i$, the time fixed effects $\lambda_t$ and factors $a_{i}, b_t$ are generated once for each setting and then fixed over runs. In contrast, $\epsilon_{it}$ will be resampled in every run as the stochastic errors. Note that both $m_{it}$ and $\tau_{it}$ are generated from rank-one factor models. 

The parameters $\sigma_{m}$ and $\sigma_{\tau}$ measures two types of deviations from the TWFE model: $\sigma_{m}$ measures the violation of parallel trend because we will not adjust for $X_i$ in the design-based inference, and $\sigma_{\tau}$ measures the violation of constant treatment effects. We consider two settings: we either set $\sigma_{m} = 1, \sigma_{\tau} = 0$ --- a model without parallel trends, but constant treatment effects; alternatively, we set  $\sigma_{m} = 0, \sigma_{\tau} = 1$ ---  a TWFE model with heterogeneous effects, but parallel trends. In the first setting $\tau_{it} = 0$ regardless of the model for $a_i$,  thus, we have $3$ different scenarios in total.

We consider three estimators: the unweighted TWFE estimator, the IPW estimator, and the RIPW estimator with $\bfPi$ given by \eqref{eq:midpoint}. For each of the three experiments, we resample $W_{it}$'s and $\eps_{it}$'s, while keeping other quantities fixed, for $1000$ times and collect the estimates and the confidence intervals. Figure \ref{fig:bias} presents the boxplots of the bias $\hat{\tau}(\bfPi) - \tau^{*}(\xi)$. In all settings, the unweighted estimator is clearly biased, demonstrating that both the parallel trend and treatment effect homogeneity are indispensible for classical TWFE regression. In contrast, the IPW estimator is biased when the treatment effects are heterogeneous, but unbiased otherwise even if the parallel trend assumption is violated. This is by no means a coincidence; in this case, $\bftau_i = \tau^{*}(\xi)\one_{T}$ for all $i$ and, by Theorem \ref{thm:bias}, the asymptotic bias $\Delta_{\tau}(\xi) = 0$ for RIPW estimators with any reshaped function including the IPW estimator. Finally, as implied by our theory, the RIPW estimator is unbiased in all settings. Moreover, the coverage of confidence intervals for the RIPW estimator is $94.6\%, 95.2\%$, and $94.6\%$ in these three settings, respectively, confirming the inferential validity stated in Theorem \ref{thm:design_inference}.

\begin{figure}
  \centering
  \includegraphics[width = 0.32\textwidth]{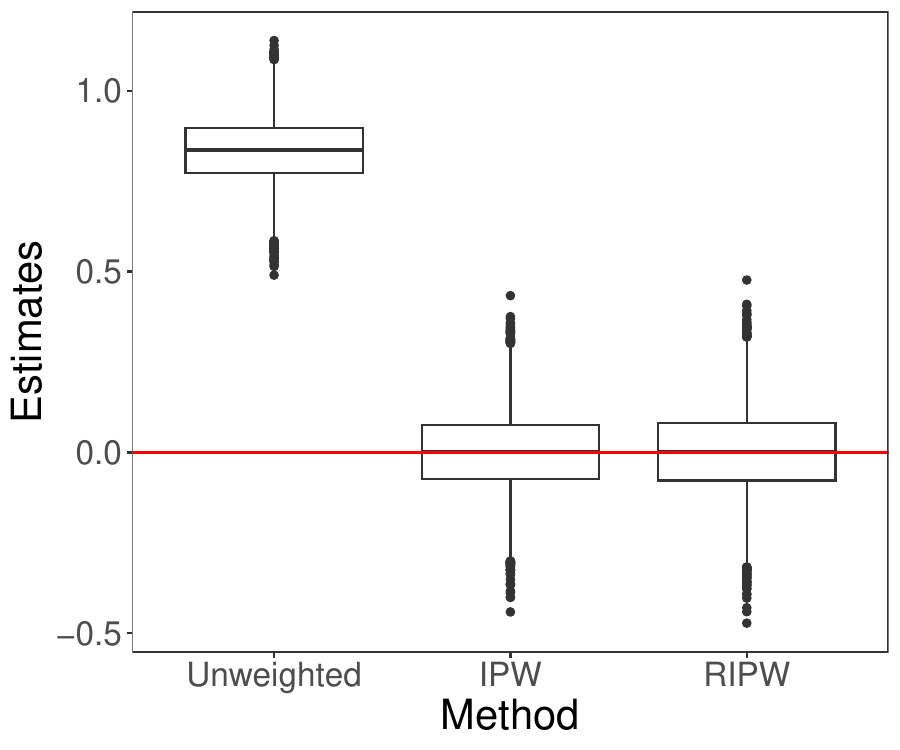}
  \includegraphics[width = 0.32\textwidth]{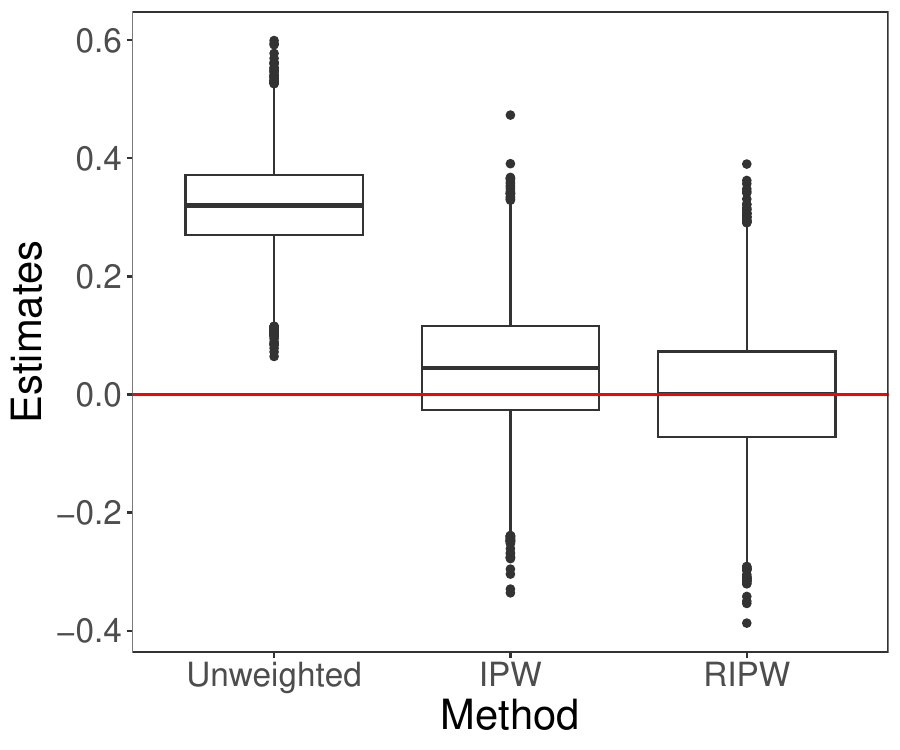}
  \includegraphics[width = 0.32\textwidth]{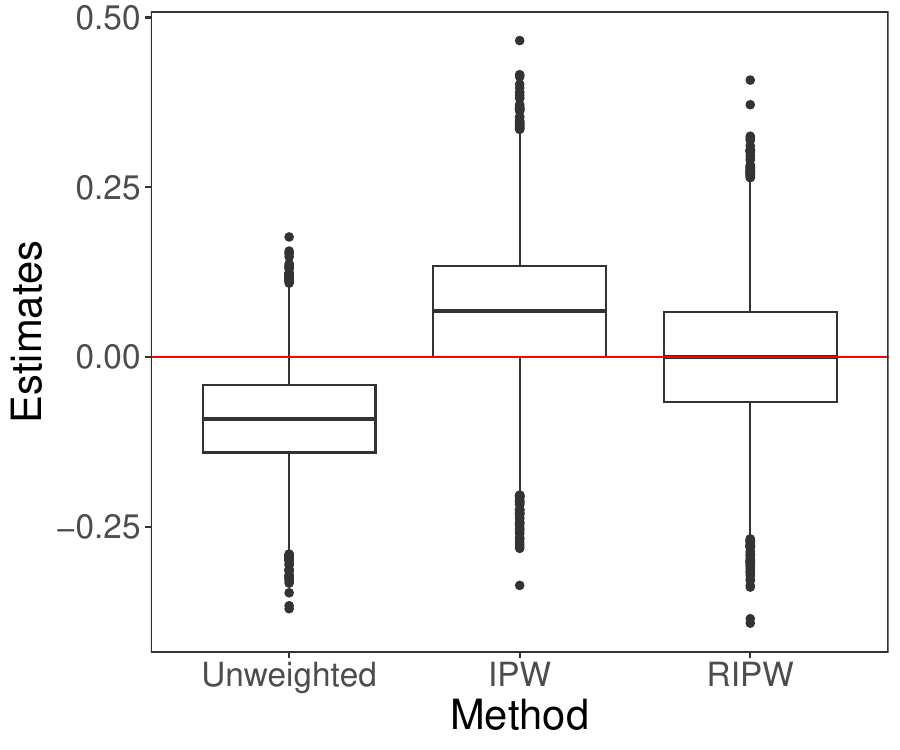}
  \caption{Boxplots of bias across $10,000$ replicates for the unweighted, IPW, and RIPW estimators under (left) violation of parallel trend ($\sigma_{m} = 1, \sigma_{\tau} = 0$), (middle) heterogeneous treatment effect with limited heterogeneity ($\sigma_{m} = 0, \sigma_{\tau} = 1, a_{i} = 1$), and (right) heterogeneous treatment effect with full heterogeneity ($\sigma_{m} = 0, \sigma_{\tau} = 1, a_{i}\sim \Unif([0, 1])$).}\label{fig:bias}
\end{figure}

\subsection{Analysis of OpenTable data in the early COVID-19 pandemic}\label{subsec:covid}

On February 29th, 2020, Washington declared a state of emergency in response to the COVID-19 pandemic. A state of emergency is a situation in which a government is empowered to perform actions or impose policies that it would normally not be permitted to undertake.
It alerts citizens to change their behaviors and urges government agencies to implement emergency plans. As the pandemic has swept across the country, more states declared a state of emergency in response to the COVID-19 outbreak.

The state of emergency restricts various human activities. It would be valuable for governments and policymakers to get a sense of the short-term effect of this urgent action. Since mid-February 2020, OpenTable has been releasing daily data of year-over-year seated diners for a sample of restaurants on the OpenTable network through online reservations, phone reservations, and walk-ins. This provides an opportunity to study how the state of emergency affects the restaurant industry in a short time. The data covers 36 states in the United States, which we will focus our analysis on. 
Policy evaluation in the pandemic is extremely challenging due to the complex confounding and endogeneity issues \citep[e.g.,][]{chetty2020did, chinazzi2020effect, goodman2020using, holtz2020interdependence, kraemer2020effect, abouk2021immediate}. Fortunately, compared to the policies later in the pandemic, the state of emergency suffered from less confounding since it was the first policy that affected the vast majority of the public in the US. On the other hand, the restaurant industry is responding to the policy swiftly because the restaurants are forced to limit and change operations, thereby eliminating some confounders that cannot take effect in a few days.

Despite being more approachable, the problem remains challenging due to the effect heterogeneity and the difficulty of building a reliable model for the dine-in rates in a short time window. In contrast, the declaration time of the state of emergency is arguably less complex to model because it is mainly driven by the progress of the pandemic and the authority's attitude towards the pandemic.

We demonstrate our RIPW estimator on this data. The summary statistics and data sources can be found in Appendix \ref{app:experiment}. The outcome variable is the daily state-level year-over-year percentage change in seated diners provided by OpenTable. The treatment variable is the indicator of whether the state of emergency has been declared. We also include the state-level accumulated confirmed cases to measure the progress of the pandemic, the vote share of Democrats based on the 2016 presidential election data to measure the political attitude towards COVID-19, 
 and the number of hospital beds as a proxy for the amount of regular medical resources. For demonstration purposes, we restrict the analysis to February 29th -- March 13th, the first 14 days since the first declaration by Washington. As of March 13th, 34 out of 36 states have declared a state of emergency; thus, the declaration times are right-censored. The treatment paths are plotted in Figure \ref{fig:treatment_paths}.

 \begin{figure}
 \centering
 \includegraphics[width = 0.7\textwidth]{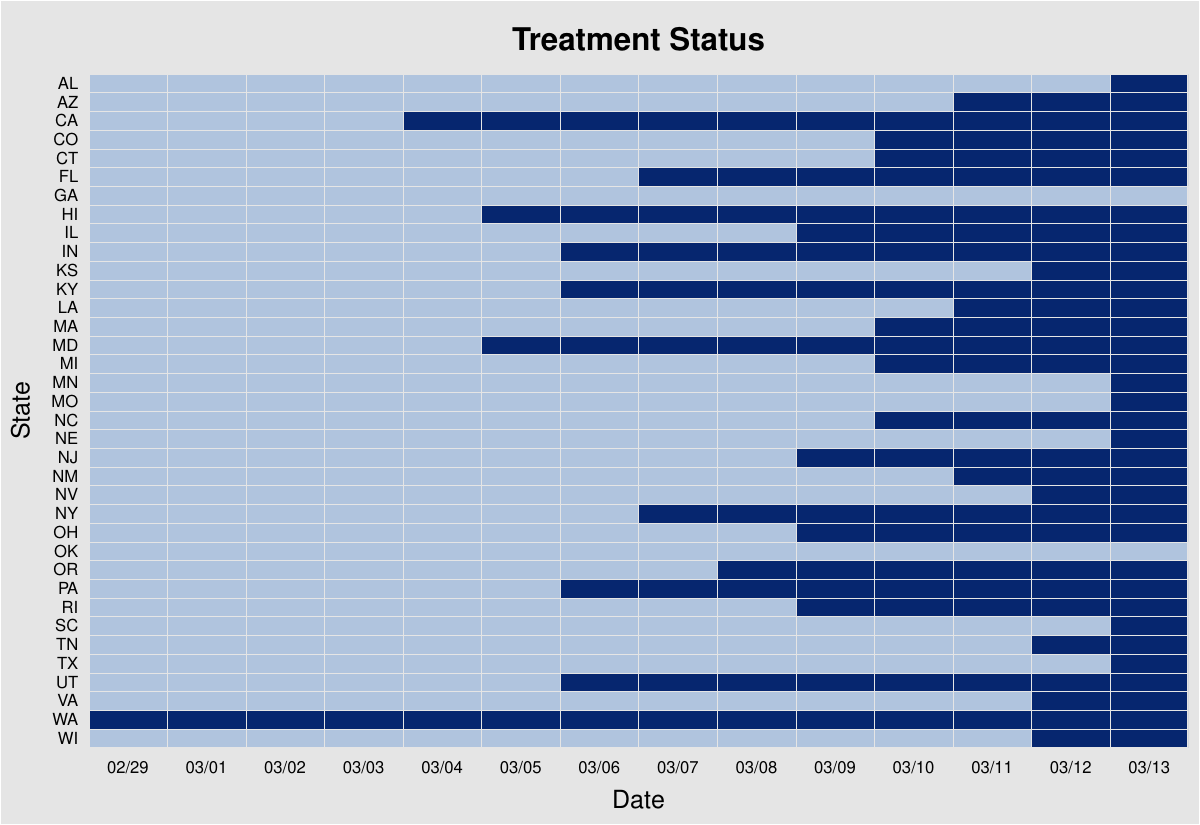}
 \caption{Treatment paths of each state. The darker color marks the treated days. }\label{fig:treatment_paths}
 \end{figure}

For the treatment model, we fit a Cox proportional hazard model on the declaration date to derive an estimate of the generalized propensity scores. Specifically, letting $T_i$ be declaration time of state $i$, a Cox proportional hazard model with time-varying covariates $X_{it}$ assumes that
\[h_i(t\mid X_{it}) = h_0(t)\exp\{X_{it}^\tran \beta\}\]
where $h_i(t\mid \cdot)$ denotes the hazard function for state $i$, and $h_0(t)$ denotes a nonparametric baseline hazard function. The estimates $\hat{h}_0$ and $\hat{\beta}$ yield an estimate $\hat{F}_i(t)$ of the survival function $\P(T_i \ge t)$ for state $i$, differencing which yields an estimate of the generalized propensity score \footnote{For discrete event times, an alternative is the discrete Cox model introduced in Section 6 of \cite{cox1972regression}. Here, we stick with the standard Cox model for simplicity.}
\[\hat{\bfpi}_i(\W_i) = \left\{
    \begin{array}{ll}
      \hat{F}_i(T_i) - \hat{F}_i(T_i + 1) & (\text{State }i \text{ declared state of emergency no later than 03/13})\\
      1 - \hat{F}_i(03/13) & (\text{otherwise})
    \end{array}
\right..\]
Here, we include as the time-varying covariates the logarithms of the accumulated confirmed cases and as the time-invariant covariates the logarithms of the number of hospital beds and the vote share. Note that fixed effects cannot be added into the Cox model because each state has only one outcome. To address unobserved heterogeneity, we include region fixed effects (Northeast, North Central, South, and West). While we will cross-fit the Cox model for the RIPW estimator, we fit the model on the entire data to illustrate the effect of covariates on the adoption time. Table \ref{tab:cox} summarizes the exponentiated parameter estimates along with their standard errors with and without region fixed effects. It also reports the p-value of the joint significance test for the null hypothesis that all coefficients are zero. While most of the coefficients are not significant individually, they are jointly significant, suggesting that the generalized propensity score is non-constant. 


\begin{table}
\centering
\begin{tabular}{lcc}
\toprule
\toprule
 & w/o Region FE & w/ Region FE \\
\midrule
 $\log(\text{confirmed cases})$ & 0.225 & 0.166\\
 & (0.257) & (0.245) \\
 vote share & 0.071$^{**}$ & 0.050 \\
 & (0.029) & (0.036) \\
 $\log(\text{beds})$ & -0.162 & 0.193\\
 & (0.282) & (0.342)\\
 region (South) & & -0.884\\
 & & (0.810)\\
 region (North Central) & & -0.389 \\
 & & (0.731)\\
 region (West) & & 0.396 \\
 & & (0.613) \\ 
\bottomrule
Logrank test p-value & 0.002$^{***}$ & 0.006$^{***}$ \\
\bottomrule
\end{tabular}
\caption{Parameter estimates and standard errors of parameter estimates (in parentheses) for the Cox regression with and without region fixed effects. The bottom row reports the p-value of the joint significant test.}\label{tab:cox}
\end{table}

\begin{figure}[h]
\centering
 \includegraphics[width = 0.6\textwidth]{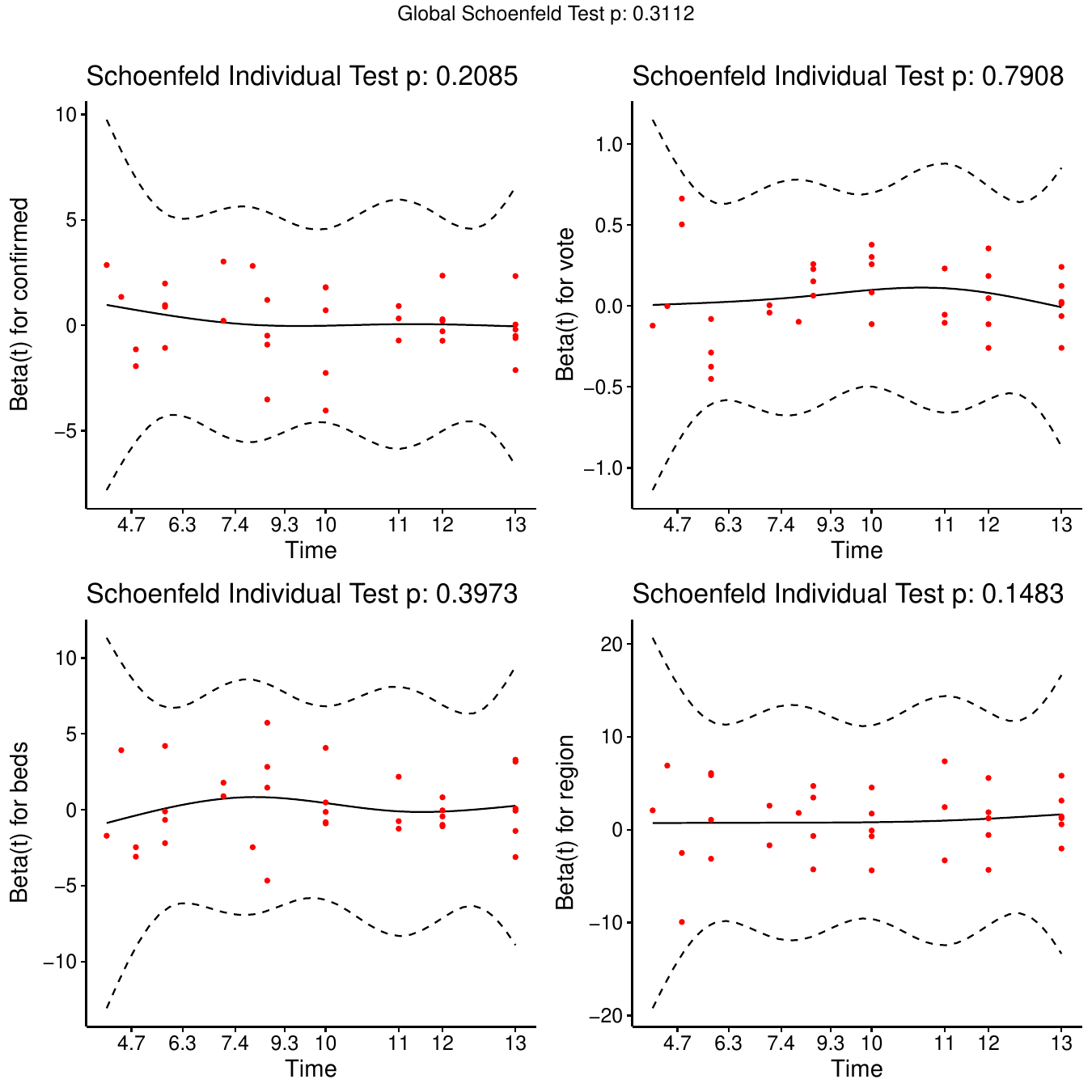}
 \caption{Diagnostics for the Cox proportional hazard model on adoption times.}\label{fig:schoenfeld}
 \end{figure}
 
The proportional hazard assumption imposed by the Cox model is often controversial. Here, we apply the standard statistical tests based on Schoenfeld residuals \citep{schoenfeld1980chi} as a specification test for the Cox model. Figure \ref{fig:schoenfeld} presents the p-values yielded by Schoenfeld's test. Clearly, none of them show evidence against the proportional hazard assumption. The p-value of Schoenfeld's test is $0.311$, suggesting no evidence against the specification.

For the outcome model, we fit an interacted TWFE regression in the form of \eqref{eq:interacted_TWFE} with the same set of covariates. Since unit fixed effects are included, no time-invariant covariate can be added to the main effects due to perfect collinearity. Thus, we add log confirmed cases, treatment, and the interactions between treatment and all variables, including region fixed effects, into the TWFE regression. Table \ref{tab:outcome_model} summarizes the results. The first row gives the treatment effect estimates by the TWFE regressions, though these estimates are irrelevant in the regression adjustment for our RIPW estimator, which only depends on the other rows. Table \ref{tab:cox} reports the p-value of the joint significance test for the null hypothesis that all coefficients other than the two-way fixed effects are zero. Again, the null hypothesis that $m_{it} = 0$ for all $(i, t)$ is rejected in both settings.

\begin{table}[H]
\centering
\begin{tabular}{lcc}
\toprule
\toprule
 & w/o Region FE $ \times$ treat & w/ Region FE $ \times$ treat\\
\midrule
 treat & -0.641 & -0.619 \\
 & (1.640) & (1.640)\\
 $\log(\text{confirmed cases})$ & -3.022$^{**}$ & -2.896$^{**}$\\
 & (1.230) & (1.232) \\
  $\log(\text{confirmed cases})\times \text{treat}$ & 0.580 & 0.662\\
 & (2.466) & (2.491) \\
 $\text{vote share}\times \text{treat}$ & -0.251$^{**}$ & -0.217 \\
 & (0.115) & (0.134) \\
 $\log(\text{beds})\times \text{treat}$ & -0.925 & -0.125\\
 & (1.288) & (1.440)\\
  $\text{region (South)}\times \text{treat}$ & & 4.813 \\
 & & (3.477)\\
  $\text{region (North Central)}\times \text{treat}$ & & 2.681\\
 & & (3.617)\\
  $\text{region (West)}\times \text{treat}$ & & 7.316$^{**}$\\
 & & (3.219)\\
\bottomrule
F-test p-value & 0.002$^{***}$ & 0.001$^{***}$ \\
\bottomrule
\end{tabular}
\caption{Parameter estimates and standard errors (in parentheses) for the unweighted TWFE regression with and without region fixed effects. The bottom row reports the p-value of the joint significant test of all coefficients other than unit- and time-fixed effects.}\label{tab:outcome_model}
\end{table}

Finally, we compute the RIPW estimator for equally-weighted DATE with the reshaped distribution \eqref{eq:midpoint} in Appendix \ref{app:DATE_equation} for staggered adoption and $10$-fold cross-fitting that is described in Algorithm \ref{algo:RIPW_derandomized_cf} and discussed at length in Appendix \ref{app:cross-fitting}. Since this problem has a small sample size, the estimate exhibits large variation across different data splits. We thus apply the de-randomization procedure discussed in Appendix \ref{subapp:derandomization} with $10,000$ splits (i.e., $B = 10,000$ in Algorithm \ref{algo:RIPW_derandomized_cf}). Our de-randomized cross-fitted RIPW estimate is reported in Table \ref{tab:results_summary}, together with the estimates obtained using the TWFE regressions reported in Table \ref{tab:outcome_model}. 
It is significant at the $10\%$ level and the magnitude is larger than that given by the unweighted TWFE regressions shown in Table \ref{tab:outcome_model}. Recall that the joint F-test p-value for the assignment model presents strong evidence of selection and hence the difference between the RIPW estimator and the unweighted TWFE regression are likely due to the bias of the latter. 

\begin{table}
\centering
\begin{tabular}{lccc}
\toprule
\toprule
 & TWFE (w/o Region FE) & TWFE (w/ Region FE) & RIPW \\
\midrule
Estimate & -0.641 & -0.619 & -3.403$^{*}$\\
 & (1.640) & (1.640) & (2.047)\\
90\% CI & [-3.34, 2.06] & [-3.32, 2.08] & [-6.77, -0.04]\\
95\% CI & [-3.86, 2.57] & [-3.83, 2.60] & [-7.42, 0.61]\\
\bottomrule
\bottomrule
\end{tabular}
\caption{Treatment effect estimates, standard errors  (in parentheses) and confidence intervals.}\label{tab:results_summary}
\end{table}


\section{Conclusion}
We demonstrate both theoretically and empirically that the unit-specific reweighting of the OLS objective function improves the robustness of the resulting treatment effects estimator in applications with panel data.  The proposed weights are constructed using the assignment process (either known or estimated) and thus appropriate in situations with substantial cross-sectional variation in the treatment paths. Practically, our results allow applied researchers to exploit domain knowledge about outcomes and assignments, thus resulting in a more balanced approach to identification and estimation.

\newpage
\bibliographystyle{plainnat}
\bibliography{dr_panel}

\newpage
\appendix
\newcommand{\bV}{\mathbf{\mathcal{V}}}
\newcommand{\bA}{\mathbf{A}}  
\renewcommand{\theequation}{\thesection.\arabic{equation}}

\section{Statistical Properties of RIPW Estimators}\label{app:proofs}
\subsection{Setup and preliminaries}\label{subapp:notation}
We will consider the more general setting in Section \ref{sec:dri} since it nests the setting in Section \ref{sec:design_based}. For ease of reference, we state the framework here, along with the list of assumptions some of which are weaker than those stated in the main text.

Suppose the $i$-th unit is characterized by potential outcomes $\Y_i(1) = (Y_{i1}(1), \ldots, Y_{iT}(1)), \Y_i(0) = (Y_{i1}(0), \ldots, Y_{iT}(0))$, the treatment path $\W_i = (W_{i1}, \ldots, W_{iT})$, and a set of covariates $\X_i = (X_{i1}, \ldots, X_{iT})$. The vector of time-varying treatment effects for unit $i$ is denoted by $\bftau_{i} = (\tau_{i1}, \ldots, \tau_{iT}) = \Y_i(1) - \Y_i(0)$. We treat covariates as fixed and consider $\{(\Y_i(1), \Y_i(0), \W_i): i\in [n]\}$ as a random vector (jointly) drawn from a distribution (conditional on $\{\X_i: i\in [n]\}$). We let $\P$ denote the joint distribution of the entire random vector $\{(\Y_i(1), \Y_i(0), \W_i): i\in [n]\}$ (conditional on $\{\X_i: i\in [n]\}$) and $\E$ denote the expectation over this distribution. 

The assignment model is characterized by the generalized propensity score defined as 
\[\bfpi_i(\w) = \P(\W_i = \w).\] 
The outcome model is characterized by $\{(\bfm_{i}, \nhbftau_{i}): i\in [n]\}$ where $\bfm_{i} = (m_{i1}, \ldots, m_{iT}), \nhbftau_{i} = (\nhtau_{i1}, \ldots, \nhtau_{iT})$,
\begin{align*}
m_{it} &= \E[Y_{it}(0)] - \frac{1}{n}\sum_{i=1}^{n}\E[Y_{it}(0)] - \frac{1}{T}\sum_{t=1}^{T}\E[Y_{it}(0)] + \frac{1}{nT}\sum_{i=1}^{n}\sum_{t=1}^{T}\E[Y_{it}(0)],\\
\tau_{it} & = \E[Y_{it}(1)] - \E[Y_{it}(0)],\\
\nhtau_{it} & = \tau_{it} - \tau^{*}(\xi).
\end{align*}
Let $\{(\hat{\bfpi}_i, \hat{\bfmu}_i(0), \hat{\bfmu}_i(1)): i\in [n]\}$ be an estimate of $\{(\bfpi_i, \bfmu_i(0), \bfmu_i(1))\}$. Further let $\hat{\bfm}_i = (\hat{m}_{i1}, \ldots, \hat{m}_{iT})$ and $\hbftau_{i} = (\htau_{i1}, \ldots, \htau_{iT})$, where 
 \begin{align*}
\hat{m}_{it} &\triangleq  \hat{\mu}_{it}(0) - \frac{1}{n}\sum_{i=1}^{n}\hat{\mu}_{it}(0) - \frac{1}{T}\sum_{t=1}^{T}\hat{\mu}_{it}(0) + \frac{1}{nT}\sum_{i=1}^{n}\sum_{t=1}^{T}\hat{\mu}_{it}(0),   \\
\hat{\tau}_{it} & \triangleq \hat{\mu}_{it}(1) - \hat{\mu}_{it}(0),\\
\htau_{it} & \triangleq \hat{\tau}_{it} - \sum_{t=1}^{T}\frac{\xi_{t}}{n}\sum_{i=1}^{n}\hat{\tau}_{it}.
\end{align*}
The results in Section \ref{sec:design_based} are given by the special case where $\hat{\bfpi}_i = \bfpi_i, \hat{\bfm}_i = \hbftau_i = \zero_T$.

Define the modified potential outcomes as 
\[\td{\Y}_i(0) = \Y_i(0) - \hat{\bfm}_i, \quad \td{\Y}_i(1) = \Y_i(1) - \hat{\bfm}_i - \hbftau_i,\]
and the modified treatment effects as 
\begin{equation}\label{eq:tdbftau_i}
\td{\bftau}_i = \E[\td{\Y}_i(1) - \td{\Y}_i(0)] = \bftau_i - \E[\hbftau_i].
\end{equation}
By definition, 
\begin{equation}\label{eq:tau*_tdtau}
\tau^{*}(\xi) = \sum_{t=1}^{T}\frac{\xi_t}{n}\sum_{i=1}^{n}\tau_{it} = \sum_{t=1}^{T}\frac{\xi_t}{n}\sum_{i=1}^{n}\td{\tau}_{it}.
\end{equation}
Then the modified observed outcome is $\td{\Y}_i = (\td{Y}_{i1}, \ldots, \td{Y}_{iT})$ where
\[\td{Y}_{it} = \td{Y}_{it}(1)W_{it} + \td{Y}_{it}(0)(1 - W_{it}) = Y_{it} - \hat{m}_{it} - \htau_{it}W_{it}.\]
With a reshaped distribution $\bfPi$ on $\{0, 1\}^{T}$, the RIPW estimator is defined as 
\[  \hat{\tau}(\bfPi) \triangleq \argmin_{\tau, \mu, \sum_{i}\alpha_{i} = \sum_{t}\lambda_{t} = 0} \sum_{i=1}^{n}\sum_{t=1}^{T}(\td{Y}_{it} - \mu - \alpha_{i} - \lambda_{t} - W_{it}\tau)^{2}\frac{\bfPi(\W_{i})}{\hat{\bfpi}_{i}(\W_{i})}.\]
We will suppress $\xi$ from $\tau^{*}(\xi)$ and $\bfPi$ from $\hat{\tau}(\bfPi)$ throughout the section.

Since $\hat{\tau}(\bfPi)$ remains invariant if we replace $Y_{it}$ by $Y_{it} - \mu' - \alpha_i' - \lambda_t'$, we assume that 
\begin{equation}\label{eq:Yi0_WLOG}
\Y_i(0) = \bfm_i \Longleftrightarrow \td{\Y}_i(0) = \bfm_i - \hat{\bfm}_i,
\end{equation}
by setting $\mu' = (1/nT)\sum_{i=1}^{n}\sum_{t=1}^{T}\E[Y_{it}(0)], \alpha_i' = (1/T)\sum_{t=1}^{T}(\E[Y_{it}(0)] - \mu')$, and \\$\lambda_t'= (1/n)\sum_{i=1}^{n}(\E[Y_{it}(0)] - \mu')$.

The accuracy of the assignment model and the outcome model for unit $i$ are defined as 
\[\delta_{\pi i} = \sqrt{\E[|\hat{\bfpi}_i(\W_i) - \bfpi_i(\W_i)|^2]}, \quad \delta_{y i} = \sqrt{\E[\|\hat{\bfm}_i - \bfm_i\|_{2}^2 + \|\hbftau_i - \nhbftau_i\|_{2}^2]}.\]
In the proofs, we need the conditional version of these measures 
\[\Delta_{\pi i} = \sqrt{\E\left[|\hat{\bfpi}_i(\W_i) - \bfpi_i(\W_i)|^2\mid \hat{\bfpi}_i\right]^2}, \quad \Delta_{y i} = \sqrt{\E[\|\hat{\bfm}_i - \bfm_i\|_{2}^{2}\mid \hat{\bfm}_i] + \E[\|\hbftau_i - \nhbftau_i\|_{2}^2\mid \hbftau_i]}.\]
We then define the unconditional and conditional average accuracy measures
\[\bar{\delta}_{\pi} = \sqrt{\frac{1}{n}\sum_{i=1}^{n}\delta_{\pi i}^2}, \quad \bar{\delta}_{y} = \sqrt{\frac{1}{n}\sum_{i=1}^{n}\delta_{y i}^2},\]
and
\[\bar{\Delta}_{\pi} = \sqrt{\frac{1}{n}\sum_{i=1}^{n}\Delta_{\pi i}^2}, \quad \bar{\Delta}_{y} = \sqrt{\frac{1}{n}\sum_{i=1}^{n}\Delta_{y i}^2}.\]
By law of iterated expectations, 
\[\E[\bar{\Delta}_{\pi}^2] = \bar{\delta}_{\pi}^2, \quad \E[\bar{\Delta}_{y}^2] = \bar{\delta}_{y}^2.\]
By Markov inequality, 
\begin{equation}\label{eq:Delta_delta}
\bar{\Delta}_{\pi} = O_\P(\bar{\delta}_{\pi}), \quad \bar{\Delta}_{y} = O_\P(\bar{\delta}_{y}).
\end{equation}
Therefore, if we can prove the result only assuming $\bar{\Delta}_{\pi}\bar{\Delta}_{y} = o(1)$ conditional on $(\hat{\bfpi}_i, \hat{\bfm}_i, \hbftau_i)_{i=1}^{n}$, we can prove it assuming that $\bar{\delta}_{\pi}\bar{\delta}_{y} = o(1)$ as in Section \ref{sec:dri}.

To be self-contained, we list all quantities involved in the ~\DATEeq and the asymptotically linear expansion of the RIPW estimator. Let $J = I_{T} - \one_{T}\one_{T}^\tran / T$,
\[\Theta_i = \bfPi(\W_i) / \hat{\bfpi}_i(\W_i), \]
\begin{equation*}
  \stheta\triangleq \frac{1}{n}\sum_{i=1}^{n}\Theta_{i}, \quad \sww \triangleq \frac{1}{n}\sum_{i=1}^{n}\Theta_{i}\W_{i}^{\tran}\J\W_{i}, \quad \swy \triangleq \frac{1}{n}\sum_{i=1}^{n}\Theta_{i}\W_{i}^{\tran}\J\td{\Y}_i,
\end{equation*}
\begin{equation*}
\sw \triangleq \frac{1}{n}\sum_{i=1}^{n}\Theta_{i}\J\W_{i}, \quad \sy \triangleq \frac{1}{n}\sum_{i=1}^{n}\Theta_{i}\J\td{\Y}_i,
\end{equation*}
and
\begin{align*}
  \V_i &= \Theta_i\Bigg\{\lb\E[\swy] - \tau^{*}\E[\sww]\rb - \lb \E[\sy] - \tau^{*}\E[\sw]\rb ^\tran\J \W_i \\
       & \,\, + \E[\stheta]\W_i^\tran \J \lb\td{\Y}_i - \tau^{*}\W_i\rb - \E[\sw]^\tran\J\lb \td{\Y}_{i} - \tau^{*}\W_i\rb \Bigg\}.
\end{align*}
This coincides with the definition in Theorem \ref{thm:design_linear_expansion} when $\hat{\bfpi}_i = \bfpi_i$ and $\hat{\bfm}_i = \hbftau_i = \mathbf{0}_{T}$.

Finally, we state the core assumptions, some of which are repeated and combined for ease of reference and the rest of which are weakened. We start by restating the unit-specific mean ignorability assumption.
\begin{assumption}\label{as:latent_ign_app} For each $i \in [n]$,
\begin{equation}
   \E[(\Y_i(1), \Y_i(0))\mid \W_i] = \E[(\Y_i(1), \Y_i(0))].
\end{equation}
\end{assumption}
Next, we combine the overlap condition for the true propensity scores (Assumption \ref{as:overlap}) 
 and that for the estimated propensity scores (Assumption \ref{as:overlap_dr}) with the constant $c$ replaced by $c_{\pi}$ to be more informative in the proofs.
\begin{assumption}\label{as:overlap_general}
  There exists a universal constant $c > 0$ and a non-stochastic subset $\S^{*}\subset \{0, 1\}^{T}$ with at least two elements and at least one element not in $\{\zero_{T}, \one_{T}\}$, such that
  \begin{equation}
    \label{eq:empirical_overlap}
    \hat{\bfpi}_{i}(\w) > c_{\pi}, \bfpi_{i}(\w) > c_{\pi}, \quad \forall \w \in \S^{*}, i\in [n], \quad \text{almost surely}.
  \end{equation} 
\end{assumption}
Lastly, we state the following assumption that unifies and weakens Assumptions \ref{as:limit_dep}, \ref{as:bounded_outcomes}, and \ref{as:bounded_outcomes_dr}.

\begin{assumption}\label{as:regularity_general}
There exists $\q \in (0, 1]$,
  \[\frac{1}{n^2}\sum_{i=1}^{n}\rho_{i}\left\{\E\|\td{\Y}_i(1)\|_{2}^2 + \E\|\td{\Y}_i(0)\|_{2}^2 + 1\right\} = O(n^{-\q}),\]
  and
  \[\frac{1}{n}\sum_{i=1}^{n}\bigg\{ \E\|\td{\Y}_i(1)\|_{2}^2 + \E\|\td{\Y}_i(0)\|_{2}^2\bigg\} = O(1).\]  
\end{assumption}
We close this section by a basic property of the maximal correlation.
\begin{lemma}\label{lem:rho_mixing}
Let $\Z_i = (\Y_i(1), \Y_i(0), \X_i)$ and $f_i$ be any deterministic function on the domain of $\Z_i$. Then
\[\Var\left[\sum_{i=1}^{n}f_i(\Z_{i})\right] \le \frac{1}{2}\sum_{i=1}^{n}\Var[f_i(\Z_{i})]\rho_{i}.\] 
\end{lemma}
\begin{proof}
  By definition of $\rho_{ij}$,
\[\Cov\lb f_i(\Z_{i}), f_j(\Z_{j})\rb\le \rho_{ij}\sqrt{\Var[f_i(\Z_{i})]\Var[f_j(\Z_{j})]}\le \frac{\rho_{ij}}{2}\left\{\Var[f_i(\Z_{i})] + \Var[f_j(\Z_{j})]\right\}.\]
Thus, 
\begin{align*}
  &\Var\left[\sum_{i=1}^{n}f_i(\Z_{i})\right]  = \sum_{i,j=1}^{n}\Cov(f_i(\Z_{i}), f_j(\Z_{j}))\\
& \le \sum_{i,j=1}^{n}\frac{\rho_{ij}}{2}\left\{\Var[f_i(\Z_{i})] + \Var[f_j(\Z_{j})]\right\} = \sum_{i=1}^{n}\Var[f_i(\Z_{i})]\rho_{i}.
\end{align*}
\end{proof}

\subsection{A non-stochastic formula of RIPW estimators}
\begin{theorem}\label{theorem:estimator}
With the same notation as Theorem \ref{thm:design_linear_expansion}, $\hat{\tau} = \hatnumer / \hatdenom$, where
\begin{equation}
  \label{eq:tauhat}
  \hatnumer = \swy\stheta - \sw^{\tran}\sy, \quad \hatdenom = \sww\stheta - \sw^{\tran}\sw.
\end{equation}
\end{theorem}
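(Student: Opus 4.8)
The plan is to concentrate out the fixed effects from the weighted least-squares objective defining $\hat{\tau}$ by sequential profiling (the Frisch--Waugh--Lovell logic), reducing it to a scalar quadratic in $\tau$ whose minimizer is $\hatnumer/\hatdenom$. Write $\td{Y}_{it} = Y_{it}^{\obs} - \hat{m}_{it}$, $\Theta_i = \bfPi(\W_i)/\hat{\bfpi}_i(\W_i)$, and $\bfgamma = (\gamma_1, \ldots, \gamma_T)^\tran$, so that the objective reads $\sum_{i,t}\Theta_i(\td{Y}_{it} - \mu - \alpha_i - \gamma_t - W_{it}\tau)^2$. \emph{Step 1 (unit effects).} For fixed $\tau$ and $\bfgamma$, I minimize over $\mu$ and $\{\alpha_i\}$ subject to $\sum_i\alpha_i = 0$. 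Since $\{\mu + \alpha_i\}_{i\in[n]}$ ranges over all of $\R^n$ as $(\mu, \{\alpha_i\})$ ranges over the constraint set, and since $\Theta_i$ is constant in $t$, the minimization decouples across $i$ and yields $\mu + \alpha_i = \frac1T\sum_t(\td{Y}_{it} - \gamma_t - W_{it}\tau)$. Substituting, the residual vector of unit $i$ becomes $\J(\td{\Y}_i^{\obs} - \bfgamma - \tau\W_i)$; using $\sum_t\gamma_t = 0$, i.e. $\J\bfgamma = \bfgamma$, the concentrated objective equals $\sum_i\Theta_i\|\J\td{\Y}_i^{\obs} - \tau\J\W_i - \bfgamma\|_2^2$, to be minimized over $\bfgamma$ in the range of $\J$.

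\emph{Step 2 (time effects).} Both $\J\td{\Y}_i^{\obs} - \tau\J\W_i$ and $\bfgamma$ lie in the range of $\J$, so the minimizer over $\bfgamma$ is the $\Theta$-weighted cross-unit average $\bfgamma^{\star} = (\sy - \tau\sw)/\stheta$, which indeed lies in the range of $\J$ (equivalently $\sum_t\gamma_t^{\star} = 0$, so the original constraint is met). Plugging $\bfgamma^{\star}$ back in, expanding the square, and using $\sum_i\Theta_i(\J\td{\Y}_i^{\obs} - \tau\J\W_i) = n(\sy - \tau\sw) = n\stheta\,\bfgamma^{\star}$, the doubly concentrated objective collapses to
\[
Q(\tau) = \sum_{i=1}^{n}\Theta_i\big\|\J\td{\Y}_i^{\obs} - \tau\J\W_i\big\|_2^2 - \frac{n}{\stheta}\big\|\sy - \tau\sw\big\|_2^2 .
\]

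\emph{Step 3 (scalar minimization).} Since $\J^\tran\J = \J$, I have $\|\J\W_i\|_2^2 = \W_i^\tran\J\W_i$ and $(\J\W_i)^\tran(\J\td{\Y}_i^{\obs}) = \W_i^\tran\J\td{\Y}_i^{\obs}$, so $\sum_i\Theta_i\|\J\td{\Y}_i^{\obs} - \tau\J\W_i\|_2^2 = n(\tau^2\sww - 2\tau\swy) + c_1$ and $\|\sy - \tau\sw\|_2^2 = \tau^2\sw^\tran\sw - 2\tau\sw^\tran\sy + c_2$ for constants $c_1, c_2$ free of $\tau$. Hence $Q$ is quadratic in $\tau$ with $Q'(\tau) = 2n(\tau\sww - \swy) - \tfrac{2n}{\stheta}(\tau\sw^\tran\sw - \sw^\tran\sy)$; setting $Q'(\tau) = 0$ and multiplying through by $\stheta/(2n)$ gives $\tau(\sww\stheta - \sw^\tran\sw) = \swy\stheta - \sw^\tran\sy$, i.e. $\hat{\tau} = \hatnumer/\hatdenom$. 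The second-order condition holds because $Q''(\tau) = (2n/\stheta)\,(\sww\stheta - \sw^\tran\sw) = (2n/\stheta)\hatdenom$ and $\hatdenom \ge 0$ by the weighted Cauchy--Schwarz inequality $\|\sw\|_2^2 = \|\tfrac1n\sum_i\Theta_i\J\W_i\|_2^2 \le \stheta\sww$; the formula is understood on the event $\hatdenom > 0$, where the least-squares solution for $\tau$ is unique.

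\emph{Main obstacle.} There is no deep difficulty here; the only care needed is in the bookkeeping with the two nested sum-to-zero constraints during profiling---checking that $\{\mu+\alpha_i\}$ is unconstrained over $\R^n$, that the profiled time-effect optimizer automatically satisfies $\sum_t\gamma_t^\star = 0$, and that it is the $\Theta$-weighted (not the unweighted) cross-unit average---together with the degeneracy remark $\hatdenom > 0$ under which the ratio is well defined.
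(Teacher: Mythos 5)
Your proposal is correct and follows essentially the same route as the paper's proof: concentrate out $\mu,\alpha_i$ unit by unit (giving the $\J$-projected residuals), then profile out $\bfgamma$ as the $\Theta$-weighted cross-unit average $(\sy-\tau\sw)/\stheta$ (noting it automatically satisfies the sum-to-zero constraint), and finally solve the first-order condition of the resulting convex quadratic in $\tau$, which yields $\hat{\tau}=\hatnumer/\hatdenom$. The only cosmetic difference is that you expand the doubly concentrated objective explicitly and add the Cauchy--Schwarz remark $\hatdenom\ge 0$, whereas the paper differentiates the profile loss directly and establishes nonnegativity of $\hatdenom$ separately (Lemma A.3); the substance is identical.
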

\begin{proof}
  Let $\bfgamma = (\lambda_{1}, \ldots, \lambda_{t})$ be any vector with $\bfgamma^\tran \one_{T} = 0$. First we derive the optimum $\hat{\mu}(\bfgamma, \tau), \hat{\alpha}_{i}(\bfgamma, \tau)$ given any values of $\bfgamma$ and $\tau$. Recall that
  \[(\hat{\mu}(\bfgamma, \tau), \hat{\alpha}_i(\bfgamma, \tau)) = \argmin_{\sum_{i}\alpha_{i} = 0} \sum_{i=1}^{n}\lb\sum_{t=1}^{T}(\td{Y}_{it} - \mu - \alpha_{i} - \lambda_{t} - W_{it}\tau)^{2}\rb\Theta_{i}.\]
  Since the weight $\Theta_{i}$ only depends on $i$, it is easy to see that
  \[\hat{\mu}(\bfgamma, \tau) + \hat{\alpha}_{i}(\bfgamma, \tau) = \frac{1}{T}\sum_{t=1}^{T}(\td{Y}_{it} - \lambda_{t} - W_{it}\tau), \quad \hat{\mu}(\bfgamma, \tau) = \frac{1}{nT}\sum_{i=1}^{n}\sum_{t=1}^{T}(\td{Y}_{it} - \lambda_{t} - W_{it}\tau).\]
  As a result,
  \begin{align*}
    & \sum_{t=1}^{T}(\td{Y}_{it} - \hat{\mu}(\bfgamma, \mu) - \hat{\alpha}_{i}(\bfgamma, \mu) - \lambda_{t} - W_{it}\tau)^{2}\\
    = & \bigg\|\lb\td{\Y}_{i} - \bfgamma - \W_i \tau\rb - \frac{\one_{T}\one_{T}^\tran}{T}\lb\td{\Y}_{i} - \bfgamma - \W_i \tau\rb\bigg\|_{2}^2\\
    = & \bigg\|\J\lb\td{\Y}_{i} - \bfgamma - \W_i \tau\rb\bigg\|_{2}^2.
  \end{align*}
  This yields a profile loss function for $\bfgamma$ and $\tau$:
  \[(\hat{\bfgamma}, \hat{\tau}) = \argmin_{\bfgamma^\tran\one_{T} = 0} \sum_{i=1}^{n}\bigg\|\J\lb\td{\Y}_{i} - \bfgamma - \W_i \tau\rb\bigg\|_{2}^2\Theta_{i} = \argmin_{\bfgamma^\tran\one_{T} = 0} \sum_{i=1}^{n}\bigg\|\J\lb\td{\Y}_{i} - \W_i \tau\rb - \bfgamma\bigg\|_{2}^2\Theta_{i},\]
  where the last equality uses the fact that $\J \bfgamma = \bfgamma$. Given $\tau$, the optimizer $\hat{\bfgamma}(\tau)$ is simply the weighted average of $\{\J(\td{\Y}_i - \W_i\tau)\}_{i=1}^{n}$ in absence of the constraint $\bfgamma^\tran \one_T = 0$, i.e.
  \[\hat{\bfgamma}(\tau) = \frac{\sum_{i=1}^{n}\Theta_i\J(\td{\Y}_i - \W_i\tau)}{\sum_{i=1}^{n}\Theta_{i}} = \frac{\sy}{\stheta} - \frac{\sw}{\stheta}\tau.\]
  Noting that $\hat{\bfgamma}(\tau)^\tran\one_T = 0$ since $\J \one_T = 0$, $\hat{\bfgamma}(\tau)$ is also the minimizer of the constrained problem, i.e.
  \[\hat{\bfgamma}(\tau) = \argmin_{\bfgamma^\tran\one_{T} = 0} \sum_{i=1}^{n}\bigg\|\J\lb\td{\Y}_{i} - \W_i \tau\rb - \bfgamma\bigg\|_{2}^2\Theta_{i}.\]
  Plugging in $\hat{\bfgamma}(\tau)$ yields a profile loss function for $\tau$
  \[\hat{\tau} = \argmin \sum_{i=1}^{n}\bigg\|\J\lb\td{\Y}_{i} - \W_i \tau\rb - \hat{\bfgamma}(\tau)\bigg\|_{2}^2\Theta_{i}\triangleq L(\tau).\]
  A direct calculation shows that
  \begin{align*}
    \frac{L'(\tau)}{2n}= &\frac{1}{n}\sum_{i=1}^{n}\Theta_{i}\lb -J\W_i + \frac{\sw}{\stheta}\rb^\tran\lb \J\lb\td{\Y}_{i} - \W_i \tau\rb - \frac{\sy}{\stheta} + \frac{\sw}{\stheta} \tau\rb\\
    = & \frac{1}{n}\left\{\sum_{i=1}^{n}\Theta_{i}\lb J\W_i - \frac{\sw}{\stheta}\rb^\tran \lb J\W_i - \frac{\sw}{\stheta}\rb\right\}\tau - \frac{1}{n}\left\{\sum_{i=1}^{n}\Theta_{i}\lb J\W_i - \frac{\sw}{\stheta}\rb^\tran \lb J\td{\Y}_i - \frac{\sy}{\stheta}\rb\right\}\\
    = & \left\{\sww - \frac{\sw^\tran\sw}{\stheta}\right\}\tau - \left\{\swy - \frac{\sw^\tran \sy}{\stheta}\right\}
  \end{align*}
  Since $L(\tau)$ is a convex quadratic function of $\tau$, the first-order condition is sufficient and necessary to determine the optimality. The proof is then completed by solving $L'(\hat{\tau}) = 0$.
\end{proof}

\subsection{Statistical properties of RIPW estimators with  deterministic $(\hat{\bfpi}_i, \hat{\bfm}_{i}, \hbftau_i)$}

\subsubsection{Asymptotic linear expansion of RIPW estimators}\label{subsubapp:asymptotic_linear}
As a warm-up, we assume that $(\hat{\bfpi}_i, \hat{\bfm}_{i}, \hbftau_i)_{i=1}^{n}$ are deterministic. This, for example, includes the pure design-based inference where $\hat{\bfpi}_i = \bfpi_i$ and $\hat{\bfm}_{i} = \hbftau_i = 0$. In this case, the measures of accuracy can be simplified as
\begin{equation}
  \label{eq:delta_deterministic}
  \Delta_{\pi i} = \sqrt{\E\left[\hat{\bfpi}_i(\W_i) - \bfpi_i(\W_i)\right]^2}, \quad \Delta_{y i} = \sqrt{\|\hat{\bfm}_i - \bfm_i\|_{2}^{2} + \|\hbftau_i - \nhbftau_i\|_{2}^2}.
\end{equation}
As a result, $(\Delta_{\pi i}, \Delta_{y i})$ are deterministic.

We start by a lemma showing that $\stheta, \swy, \sww, \sw, \sy$ concentrate around their means. For notational convenience, we let $\Var(Z)$ denote $\E\|Z - \E[Z]\|_{2}^2$ for a random vector $Z$.
\begin{lemma}\label{lem:mean_var}
  Under Assumptions \ref{as:overlap_general} and \ref{as:regularity_general},
  \[|\E[\stheta]|  +|\E[\swy]| + |\E[\sww]| + \|\E[\sw]\|_{2} + \|\E[\sy]\|_{2} = O(1),\]
  and
  \[\Var(\stheta) + \Var(\swy) + \Var(\sww) + \Var(\sw) + \Var(\sy) = O(n^{-\q}).\]
  As a consequence,
  \[\big|\stheta - \E[\stheta]\big| + \big|\swy - \E[\swy]\big| + \big|\sww - \E[\sww]\big| + \big\|\sw - \E[\sw]\big\|_{2} + \big\|\sy - \E[\sy]\big\|_{2} = O_{\P}\lb n^{-\q/2}\rb.\]
\end{lemma}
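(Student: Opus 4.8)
The plan is to treat all five averages with one template. Each of $\stheta,\sww,\swy,\sw,\sy$ has the form $\frac1n\sum_{i=1}^n g_i(\Z_i)$ where $\Z_i=(\Y_i(1),\Y_i(0),\W_i)$, and since $(\hat\bfpi_i,\hat\bfm_i)_{i=1}^n$ are deterministic in this subsection, each $g_i$ is a deterministic (though $i$-dependent) function of $\Z_i$ alone, namely $\Theta_i$, $\Theta_i\W_i^\tran\J\W_i$, $\Theta_i\W_i^\tran\J\td\Y_i^{\obs}$, $\Theta_i\J\W_i$, $\Theta_i\J\td\Y_i^{\obs}$, respectively. I would first record pointwise bounds on $g_i$, then read off the $O(1)$ bounds on the means via the triangle inequality together with Jensen and Cauchy--Schwarz, then get the $O(n^{-\q})$ bounds on the variances by combining Lemma~\ref{lem:rho_mixing} with the weak-dependence hypothesis of Assumption~\ref{as:regularity_general}, and finally deduce the last display by Chebyshev/Markov.

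The pointwise ingredients are: $0\le\Theta_i\le 1/c_\pi$ almost surely (because $\bfPi$ is supported on $\S^{*}$ and Assumption~\ref{as:overlap_general} gives $\hat\bfpi_i(\w)>c_\pi$ on $\S^{*}$); $\J$ is an orthogonal projection, so $\|\J v\|_2\le\|v\|_2$; $\|\W_i\|_2^2\le T$ since $\W_i\in\{0,1\}^T$; and $\|\td\Y_i^{\obs}\|_2^2\le\|\td\Y_i(1)\|_2^2+\|\td\Y_i(0)\|_2^2$, because each coordinate of $\td\Y_i^{\obs}$ equals either $\td Y_{it}(1)$ or $\td Y_{it}(0)$. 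These give $|\Theta_i|\le 1/c_\pi$, $|\Theta_i\W_i^\tran\J\W_i|\le T/c_\pi$, $\|\Theta_i\J\W_i\|_2\le\sqrt T/c_\pi$ (all uniform $O(1)$), and $|\Theta_i\W_i^\tran\J\td\Y_i^{\obs}|\le(\sqrt T/c_\pi)\|\td\Y_i^{\obs}\|_2$, $\|\Theta_i\J\td\Y_i^{\obs}\|_2\le(1/c_\pi)\|\td\Y_i^{\obs}\|_2$. The first three yield $|\E[\stheta]|,|\E[\sww]|,\|\E[\sw]\|_2=O(1)$ at once; for $\swy$ and $\sy$, chaining Jensen with Cauchy--Schwarz gives
\[
\frac1n\sum_{i=1}^n\E\|\td\Y_i^{\obs}\|_2\le\sqrt{\frac1n\sum_{i=1}^n\Big(\E\|\td\Y_i(1)\|_2^2+\E\|\td\Y_i(0)\|_2^2\Big)}=O(1)
\]
by the second bound in Assumption~\ref{as:regularity_general}, hence $|\E[\swy]|,\|\E[\sy]\|_2=O(1)$.

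For the variances, apply Lemma~\ref{lem:rho_mixing} (coordinatewise for the $\R^T$-valued $\sw,\sy$) to obtain $\Var(\cdot)\le\frac{1}{2n^2}\sum_{i=1}^n\Var[g_i(\Z_i)]\rho_i\le\frac{1}{2n^2}\sum_{i=1}^n\E\|g_i(\Z_i)\|_2^2\,\rho_i$, and insert the pointwise bounds. For $\stheta,\sww,\sw$, $\E\|g_i\|_2^2$ is $O(1)$ uniformly in $i$, so
\[
\Var(\stheta)+\Var(\sww)+\Var(\sw)\le\frac{C}{n^2}\sum_{i=1}^n\rho_i\le\frac{C}{n^2}\sum_{i=1}^n\rho_i\Big\{\E\|\td\Y_i(1)\|_2^2+\E\|\td\Y_i(0)\|_2^2+1\Big\}=O(n^{-\q})
\]
(the braced quantity being $\ge 1$), while for $\swy$ and $\sy$, $\E\|g_i\|_2^2\le(T/c_\pi^2)(\E\|\td\Y_i(1)\|_2^2+\E\|\td\Y_i(0)\|_2^2)$, so
\[
\Var(\swy)+\Var(\sy)\le\frac{C}{n^2}\sum_{i=1}^n\rho_i\big(\E\|\td\Y_i(1)\|_2^2+\E\|\td\Y_i(0)\|_2^2\big)=O(n^{-\q});
\]
both invoke the first bound of Assumption~\ref{as:regularity_general}. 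This proves the variance claim. For the final display: for each scalar quantity $Z$, Chebyshev gives $Z-\E[Z]=O_{\P}(\sqrt{\Var(Z)})=O_{\P}(n^{-\q/2})$; for the $\R^T$-valued $\sw,\sy$, Markov applied to $\|\,\cdot-\E[\,\cdot\,]\,\|_2^2$ (whose mean is $\Var(\cdot)=O(n^{-\q})$, $T=O(1)$) gives the same rate; adding the five $O_{\P}(n^{-\q/2})$ terms concludes.

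The argument is essentially bookkeeping; the only two points that need care are (a) the estimate $\|\td\Y_i^{\obs}\|_2^2\le\|\td\Y_i(1)\|_2^2+\|\td\Y_i(0)\|_2^2$, which is what lets one replace observed by potential outcomes so the moment hypotheses of Assumption~\ref{as:regularity_general} become applicable, and (b) recognizing that the $\rho_i$-weighted second moments produced by Lemma~\ref{lem:rho_mixing} are bounded by exactly the weak-dependence functional $\frac{1}{n^2}\sum_i\rho_i\{\E\|\td\Y_i(1)\|_2^2+\E\|\td\Y_i(0)\|_2^2+1\}$ postulated in Assumption~\ref{as:regularity_general} — that assumption is designed for precisely this step, so there is no genuine obstacle beyond matching the pieces.
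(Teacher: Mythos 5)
Your proposal is correct and follows essentially the same route as the paper's proof: uniform bounds $\Theta_i\le 1/c_\pi$, $\|\W_i\|_2^2\le T$ for $\stheta,\sww,\sw$, the pointwise replacement $\|\td\Y_i^{\obs}\|_2^2\le\|\td\Y_i(1)\|_2^2+\|\td\Y_i(0)\|_2^2$ combined with Jensen/Cauchy--Schwarz for $\swy,\sy$, Lemma~\ref{lem:rho_mixing} plus the first display of Assumption~\ref{as:regularity_general} (with the bracketed term $\ge 1$ absorbing the outcome-free cases) for the variances, and Markov/Chebyshev for the final rate. No gaps.
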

\begin{proof}
  By Assumption \ref{as:overlap_general}, $\Theta_i\le 1 / c_{\pi}$ almost surely. Moreover, $\|\W_i\|_{2}\le \sqrt{T}$ since $W_{it}\in \{0, 1\}$. Thus,
  \[\|\sw\|_{2} \le \frac{\sqrt{T}}{c_{\pi}}, \quad |\sww|\le \frac{T}{c_{\pi}}, \quad |\stheta|\le \frac{1}{c_{\pi}} \Longrightarrow \E\|\sw\|_{2} + \E|\sww| + \E|\stheta| = O(1).\]
Next, we derive bounds for $(\E[\swy])^{2}$ and $\|\E[\sy]\|_{2}^{2}$ separately. For $(\E[\swy])^2$,
  \begin{align*}
    (\E[\swy])^2
    & \le \lb\frac{1}{n}\sum_{i=1}^{n}\E[\Theta_{i}\W_i^\tran \J \td{\Y}_{i}]\rb^2\le \frac{1}{n}\sum_{i=1}^{n}\E[\Theta_{i}\W_i^\tran \J \td{\Y}_{i}]^{2}\\
    & \le \frac{1}{nc_{\pi}^{2}}\sum_{i=1}^{n}\E[\W_i^\tran \J \td{\Y}_{i}]^{2}\le \frac{T}{nc_{\pi}^{2}}\sum_{i=1}^{n}\E\|\td{\Y}_{i}\|_{2}^{2}\\
    & \le \frac{T}{nc_{\pi}^{2}}\sum_{i=1}^{n}\left\{\E\|\td{\Y}_{i}(0)\|_{2}^{2} + \E\|\td{\Y}_{i}(1)\|_{2}^{2}\right\}\\
    & = O(1),
  \end{align*}
  where the last step follows from the Assumption \ref{as:regularity_general}. For $\|\E[\sy]\|_{2}^{2}$,
  \begin{align*}
    \|\E[\sy]\|_{2}^{2} & \le \frac{1}{n}\sum_{i=1}^{n}(\E[\Theta_{i}\J\td{\Y}_{i}])^2\le \frac{1}{nc_{\pi}^2}\sum_{i=1}^{n}\|\td{\Y}_{i}\|_{2}^{2}\\
                        & \le \frac{1}{nc_{\pi}^2}\sum_{i=1}^{n}\left\{\E\|\td{\Y}_{i}(0)\|_{2}^{2} + \E\|\td{\Y}_{i}(1)\|_{2}^{2}\right\}\\
    & = O(1),
  \end{align*}
  where the last step follows from the Assumption \ref{as:regularity_general}. Putting the pieces together, the bound on the sum of expectations is proved.

  ~\\
  Next, we turn to the bound on the variances. By Lemma \ref{lem:rho_mixing},
  \[\Var(\stheta)\le \frac{1}{n^2}\sum_{i=1}^{n}\Var(\Theta_{i})\rho_i \le \frac{1}{n^2c_{\pi}^2}\sum_{i=1}^{n}\rho_i.\]
  The Assumption \ref{as:overlap_general} implies that
  \[\frac{1}{n^2}\sum_{i=1}^{n}\rho_{i} = O(n^{-\q}).\]
  Therefore, $\Var(\stheta) = O(n^{-q})$. For $\sww$,
  \begin{align*}
    \Var(\sww)
    & \le \frac{1}{n^2}\sum_{i=1}^{n}\Var(\Theta_i \W_i^{\tran}\J \W_i)\rho_i\le \frac{1}{n^2}\sum_{i=1}^{n}\E(\Theta_i \W_i^{\tran}\J \W_i)^2\rho_i\\
    & \le \frac{1}{n^2c_{\pi}^2}\sum_{i=1}^{n}\E\|\W_i\|_{2}^{2}\rho_i \le \frac{T}{n^2c_{\pi}^2}\sum_{i=1}^{n}\rho_i = O(n^{-\q}),
  \end{align*}
  where the last equality uses the fact that $\|\W_i\|_{2}\le \sqrt{T}$. For $\swy$, 
  \begin{align*}
    \Var(\swy)
    & \le \frac{1}{n^2}\sum_{i=1}^{n}\Var(\Theta_i \W_i^{\tran}\J \td{\Y}_i)\rho_i\le \frac{1}{n^2}\sum_{i=1}^{n}\E(\Theta_i \W_i^{\tran}\J \td{\Y}_i)^2\rho_i\\
    & \stackrel{(i)}{\le} \frac{1}{n^2c_{\pi}^2}\sum_{i=1}^{n}\E\left[\|\W_i\|_{2}^{2}\cdot \|\td{\Y}_i\|_{2}^2\right]\rho_i\\
    & \stackrel{(ii)}{\le}\frac{T}{n^2c_{\pi}^2}\sum_{i=1}^{n}\lb\E\|\td{\Y}_i(1)\|_{2}^2 + \E\|\td{\Y}_i(0)\|_{2}^2\rb\rho_i\\
    & \stackrel{(iii)}{=} O(n^{-\q}),
  \end{align*}
  where (i) follows from the Cauchy-Schwarz inequality and that $\|\J\|_{\mathrm{op}} = 1$, (ii) is obtained from the fact that $\|\W_i\|_{2}^2\le T$ and $\td{\Y}_{i}\in \{\td{\Y}_i(1), \td{\Y}_i(0)\}$, and (iii) follows from the Assumption \ref{as:regularity_general}.

  ~\\
  For $\sw$, recall that $\Var(\sw)$ is the sum of the variance of each coordinate of $\sw$. By Lemma \ref{lem:rho_mixing},
  \begin{align*}
    \Var(\sw)
    & \le \frac{1}{n^2}\sum_{i=1}^{n}\Var(\Theta_{i}\J \W_{i})\rho_{i}\le \frac{1}{n^2}\sum_{i=1}^{n}\E\|\Theta_{i}\J \W_{i}\|_{2}^2\rho_{i}\\
    & \le \frac{1}{n^2c_{\pi}^2}\sum_{i=1}^{n}\E\|\W_i\|_{2}^2\rho_{i} \le \frac{T}{n^2c_{\pi}^2}\sum_{i=1}^{n}\rho_{i} = O(n^{-\q}).
  \end{align*}
  For $\sy$, analogues to inequalities (i) - (iii) for $\swy$, we obtain that
  \begin{align*}
    \Var(\sy)
    & \le \frac{1}{n^2}\sum_{i=1}^{n}\Var(\Theta_{i}\J \td{\Y}_{i})\rho_{i}\le \frac{1}{n^2}\sum_{i=1}^{n}\E\|\Theta_{i}\J \td{\Y}_{i}\|_{2}^2\rho_{i}\\
    & \le \frac{1}{n^2c_{\pi}^2}\sum_{i=1}^{n}\lb \E[\|\td{\Y}_i(1)\|_{2}^2] + \E[\|\td{\Y}_i(0)\|_{2}^2]\rb\rho_{i} = O(n^{-\q}),
  \end{align*}
  where the last step follows from the Assumption \ref{as:regularity_general}.

  ~\\
  Finally, by Markov's inequality,
  \begin{align*}
    &\big|\stheta - \E[\stheta]\big| + \big|\swy - \E[\swy]\big| + \big|\sww - \E[\sww]\big| + \big\|\sw - \E[\sw]\big\|_{2} + \big\|\sy - \E[\sy]\big\|_{2}\\
    =& O_\P\lb\sqrt{\Var(\stheta) + \Var(\swy) + \Var(\sww) + \Var(\sw) + \Var(\sy)}\rb = O_\P(n^{-\q/2}).
  \end{align*}
\end{proof}

The following lemma shows that the denominator of $\hat{\tau}$ is bounded away from $0$.
\begin{lemma}\label{lem:denom}
  Under Assumptions \ref{as:regularity_general}, regardless of the dependence between $(\hat{\bfpi}_i, \hat{\bfm}_{i}, \hbftau_i)$ and the data,
  \[\hatdenom\ge c_{\hatdenom}^{2}\lb\frac{1}{n}\sum_{i=1}^{n}I(\W_i = \w_1)\rb\lb\frac{1}{n}\sum_{i=1}^{n}I(\W_i = \w_2)\rb,\]
  for some constant $c_{\hatdenom}$ that only depends on $\bfPi$. As a result, $\hatdenom\ge 0$ almost surely. If Assumption \ref{as:overlap_general} also holds, \footnote{A more rigorous version of the second statement is $\max\{c_{\hatdenom}^2 c_{\pi}^2 - \hatdenom, 0\} = o_\P(1)$}
  \[\E[\hatdenom] \ge c_{\hatdenom}^2 \lb c_{\pi}^2 - \frac{1}{n^2}\sum_{i=1}^{n}\rho_{i}\rb, \quad \hatdenom\ge c_{\hatdenom}^2 c_{\pi}^2 - o_\P(1).\]
\end{lemma}
\begin{proof}
  By definition,
  \begin{align*}
    \hatdenom
    &= \lb\frac{1}{n}\sum_{i=1}^{n}\Theta_{i}\td{\W}_i^\tran\J\td{\W}_i\rb\lb\frac{1}{n}\sum_{i=1}^{n}\Theta_{i}\rb - \left\|\frac{1}{n}\sum_{i=1}^{n}\Theta_{i}\J\td{\W}_i\right\|_{2}^{2}\\
    & = \frac{1}{n^2}\sum_{i,j=1}^{n}\Theta_{i}\Theta_{j}\lb\td{\W}_{i}^\tran\J\td{\W}_{i} + \td{\W}_{j}^\tran\J \td{\W}_{j} - 2\td{\W}_{i}^\tran\J\td{\W}_{j}\rb\\
    & = \frac{1}{n^2}\sum_{i,j=1}^{n}\Theta_{i}\Theta_{j}\|\J(\W_{i} - \W_{j})\|_{2}^{2}.
  \end{align*}
  Let $\w_1, \w_2$ be two distinct elements from $\S^{*}$ with $\w_1 \not\in \{\zero_{T}, \one_{T}\}$ and
  \begin{equation}
    \label{eq:choice_w1w2}
    \frac{1}{n}\sum_{i=1}^{n}\bfpi_{i}(\w_k) > c_{\pi}, \quad k\in\{1, 2\}.
  \end{equation}
This is enabled by Assumption \ref{as:overlap_general}. Note that $\J(\w_1 - \w_2) = 0$ iff $\w_1 - \w_2 = a\one_{T}$ for some $a\in \R$, which is impossible since $\w_1\not\in \{\zero_{T}, \one_{T}\}$ and all entries of $\w_1$ and $\w_2$ are binary. In addition, since $\bfPi$ has support $\S^{*}$, $\bfPi(\w_1), \bfPi(\w_2) > 0$. Let
\[c_{\hatdenom} = \min\{\bfPi(\w_1), \bfPi(\w_2)\}\|\J(\w_{1} - \w_{2})\|_{2} > 0.\]
Then
  \begin{align}
    \hatdenom
    &\ge \frac{c_{\hatdenom}^2}{n^2}\sum_{i,j=1}^{n}\frac{1}{\hat{\bfpi}_i(\W_i)\hat{\bfpi}_j(\W_j)}I(\W_i = \w_1, \W_j = \w_2)\nonumber\\
    &\ge \frac{c_{\hatdenom}^2}{n^2}\sum_{i,j=1}^{n}I(\W_i = \w_1, \W_j = \w_2)\nonumber\\
    &= c_{\hatdenom}^{2}\lb\frac{1}{n}\sum_{i=1}^{n}I(\W_i = \w_1)\rb\lb\frac{1}{n}\sum_{i=1}^{n}I(\W_i = \w_2)\rb,\nonumber
  \end{align}
  where the second inequality follows from the fact that $\hat{\bfpi}_i(\w)\le 1$. By \eqref{eq:choice_w1w2},
  \begin{equation*}
    \E\left[\frac{1}{n}\sum_{i=1}^{n}I(\W_i = \w_k)\right] = \frac{1}{n}\sum_{i=1}^{n}\bfpi_{i}(\w_k) > c_{\pi}, \quad k\in\{1, 2\}.
  \end{equation*}
  Furthermore, by Lemma \ref{lem:rho_mixing}, 
  \begin{align}
    &\Bigg|\Cov\left[\frac{1}{n}\sum_{i=1}^{n}I(\W_i = \w_1), \frac{1}{n}\sum_{i=1}^{n}I(\W_i = \w_2) \right]\Bigg|\nonumber\\
    & = \frac{1}{n^2}\bigg|\sum_{i,j=1}^{n}\Cov(I(\W_i = \w_1), I(\W_j = \w_2))\bigg|\nonumber\\
    & \le \frac{1}{n^2}\sum_{i,j=1}^{n}\big|\Cov(I(\W_i = \w_1), I(\W_j = \w_2))\big|\nonumber\\
    & \le \frac{1}{n^2}\sum_{i,j=1}^{n}\rho_{ij}\sqrt{\Var(I(\W_i = \w_1))\Var(I(\W_j = \w_2))}\nonumber\\
    & \le \frac{1}{n^2}\sum_{i,j=1}^{n}\rho_{ij} = \frac{1}{n^2}\sum_{i=1}^{n}\rho_{i}.\nonumber
  \end{align}
 Putting pieces together, we obtain that
  \begin{align}
    \E[\hatdenom]
    &\ge c_{\hatdenom}^2 \E \left[\lb\frac{1}{n}\sum_{i=1}^{n}I(\W_i = \w_1)\rb\lb\frac{1}{n}\sum_{i=1}^{n}I(\W_i = \w_2)\rb\right]\nonumber\\
    & = c_{\hatdenom}^2\Bigg\{\E \left[\frac{1}{n}\sum_{i=1}^{n}I(\W_i = \w_1)\right]\E\left[\frac{1}{n}\sum_{i=1}^{n}I(\W_i = \w_2)\right]\nonumber\\
    & \quad\quad + \Cov\left[\frac{1}{n}\sum_{i=1}^{n}I(\W_i = \w_1), \frac{1}{n}\sum_{i=1}^{n}I(\W_i = \w_2)\right]\Bigg\}\nonumber\\
    & \ge c_{\hatdenom}^2 \lb c_{\pi}^2 - \frac{1}{n^2}\sum_{i=1}^{n}\rho_i\rb.\nonumber
  \end{align}
  On the other hand, by Lemma \ref{lem:rho_mixing}, for $k\in \{1, 2\}$, 
  \[\Var\lb\frac{1}{n}\sum_{i=1}^{n}I(\W_i = \w_k)\rb\le \frac{1}{n^2}\sum_{i=1}^{n}\rho_i = O(n^{-\q}) = o(1).\]
  By Markov's inequality, for $k \in \{1, 2\}$,
  \[\frac{1}{n}\sum_{i=1}^{n}I(\W_i = \w_k) = \frac{1}{n}\sum_{i=1}^{n}\P(\W_i = \w_1) - o_\P\lb 1\rb \ge c_{\pi} - o_\P(1).\]
  Therefore,
  \[\hatdenom \ge c_{\hatdenom}^2 (c_{\pi} - o_\P(1))(c_{\pi} - o_\P(1))\ge c_{\hatdenom}^2 c_{\pi}^2 - o_\P(1).\]
\end{proof}

Based on Lemma \ref{lem:mean_var} and \ref{lem:denom}, we can derive an asymptotic linear expansion for the RIPW estimator.
\begin{theorem}\label{thm:asymptotic_expansion_simple}
  Under Assumptions \ref{as:overlap_general} and \ref{as:regularity_general},
  \[\hatdenom(\hat{\tau} - \tau^{*}) = \numer + \frac{1}{n}\sum_{i=1}^{n}(\V_i - \E[\V_i]) + O_\P\lb n^{-\q}\rb,\]
  where
  \[\numer = \frac{1}{2n}\sum_{i=1}^{n}\E[\V_i] = \E[\swy]\E[\stheta] - \E[\sw]^\tran\E[\sy] - \tau^{*}\lb\E[\sww]\E[\stheta] - \E[\sw]^\tran\E[\sw]\rb.\]
  Furthermore,
  \[\hat{\tau} - \tau^{*} = O_\P(|\numer|) + O_\P(n^{-\q / 2}).\]
\end{theorem}
\begin{proof}
  Note that
  \[\hatdenom(\hat{\tau} - \tau^{*}) = \hatnumer - \tau^{*}\hatdenom.\]
  By Lemma \ref{lem:mean_var},
  \begin{align*}
    &\big|(\swy - \E[\swy])(\stheta - \E[\stheta])\big| + \big|(\sw - \E[\sw])^\tran(\sy - \E[\sy])\big|\\
    \le & \frac{1}{2}\bigg\{(\swy - \E[\swy])^2 + (\stheta - \E[\stheta])^2 + \|\sw - \E[\sw]\|_{2}^{2} + \|\sy - \E[\sy]\|_{2}^{2}\bigg\}\\
    = & O_\P\lb \Var(\swy) + \Var(\stheta) + \Var(\sw) + \Var(\sy)\rb = O_\P(n^{-\q}).
  \end{align*}
  Let
  \[\V_{i1} = \Theta_i\left\{\E[\swy] - \E[\sy]^\tran\J \W_i  + \E[\stheta]\W_i^\tran \J \td{\Y}_i - \E[\sw]^\tran\J \td{\Y}_{i} \right\}.\]
  Then, 
  \[\hatnumer = \E[\swy]\E[\stheta] - \E[\sw]^\tran\E[\sy] + \frac{1}{n}\sum_{i=1}^{n}(\V_{i1} - \E[\V_{i1}]) + O_\P(n^{-\q}),\]
  Similarly,
  \[\hatdenom = \E[\sww]\E[\stheta] - \E[\sw]^\tran\E[\sw] + \frac{1}{n}\sum_{i=1}^{n}(\V_{i2} - \E[\V_{i2}]) + O_\P(n^{-\q}),\]
  where
  \[\V_{i2} = \Theta_i\left\{\E[\sww] - \E[\sw]^\tran\J \W_i  + \E[\stheta]\W_i^\tran \J \W_i - \E[\sw]^\tran\J \W_{i} \right\}.\]
  Since $\V_i = \V_{i1} - \tau^{*}\V_{i2}$, 
  \[\hatdenom(\hat{\tau} - \tau^{*}) = \hatnumer - \tau^{*}\hatdenom = \numer + \frac{1}{n}\sum_{i=1}^{n}(\V_i - \E[\V_i]) + O_\P(n^{-\q}).\]
  This proves the first statement.

  ~\\
  Next, we prove the second statement on $\hat{\tau} - \tau^{*}$. By Lemma \ref{lem:denom}, $1 / \hatdenom = O_\P(1)$. It is left to show that
  \[\frac{1}{n}\sum_{i=1}^{n}(\V_i - \E[\V_i]) = O_\P(n^{-\q / 2}).\]
  Applying the inequality that $\Var(Z_1 + Z_2) = 2\Var(Z_1) + 2\Var(Z_2) - \Var(Z_1 - Z_2)\le 2(\Var(Z_1) + \Var(Z_2))$, we obtain that
  \begin{align*}
    &\frac{1}{4}\Var(\V_{i1})\\
    \le & \Var\lb \Theta_{i}\E[\swy]\rb + \Var\lb \Theta_{i}\E[\stheta]\W_i^\tran \J \td{\Y}_i\rb + \Var\lb\Theta_{i}\E[\sw]^\tran\J\td{\Y}_{i} \rb + \Var\lb \Theta_{i}\E[\sy]^\tran\J \W_i\rb\\
    \le & \E\lb \Theta_{i}\E[\swy]\rb^2 + \E\lb \Theta_{i}\E[\stheta]\W_i^\tran \J \td{\Y}_i\rb^2 + \E\lb\Theta_{i}\E[\sw]^\tran\J\td{\Y}_{i} \rb^2 + \E\lb \Theta_{i}\E[\sy]^\tran\J \W_i\rb^2\\
    \stackrel{(i)}{\le} & \frac{1}{c_{\pi}^2}\left\{ (\E[\swy])^2 + (\E[\stheta])^2\E(\W_i^\tran \J \td{\Y}_i)^2 + \E\lb\E[\sw]^\tran\J\td{\Y}_{i} \rb^2 + \E\lb \E[\sy]^\tran\J \W_i\rb^2\right\}\\
    \stackrel{(ii)}{\le} & \frac{1}{c_{\pi}^2}\left\{ (\E[\swy])^2 + (\E[\stheta])^2\E\|\W_i\|_{2}^{2}\E\|\td{\Y}_i\|_{2}^2 + \|\E[\sw]\|_{2}^{2}\E\|\td{\Y}_{i} \|_{2}^{2} + \|\E[\sy]\|_{2}^{2}\E\|\W_i\|_{2}^2\right\}\\
    \stackrel{(iii)}{\le} & \frac{1}{c_{\pi}^2}\left\{ (\E[\swy])^2 + T(\E[\stheta])^2\E\|\td{\Y}_i\|_{2}^2 + \|\E[\sw]\|_{2}^{2}\E\|\td{\Y}_{i} \|_{2}^{2} + T\|\E[\sy]\|_{2}^{2}\right\},
  \end{align*}
  where (i) follows from the Assumption \ref{as:overlap_general} that $\Theta_{i}\le 1 / c_{\pi}$ almost surely, (ii) follows from the Cauchy-Schwarz inequality and the fact that $\|\J\|_{\mathrm{op}} = 1$, and (iii) follows from the fact that $\|\W_i\|_{2}^{2}\le T$. By Lemma \ref{lem:mean_var}, we obtain that for all $i\in [n]$,
  \begin{equation}
    \label{eq:var_Vi}
    \Var(\V_{i1})\le C_1\lb 1 + \E\|\td{\Y}_{i}\|_{2}^{2}\rb\le C_1\lb 1 + \E\|\td{\Y}_{i}(0)\|_{2}^{2} + \E\|\td{\Y}_{i}(1)\|_{2}^{2}\rb,
  \end{equation}
  for some constant $C_1$ that only depends on $c_{\pi}$ and $T$. Similarly, we have that $\Var(\V_{i2})\le C_2$ for some constant $C_2$ that only depends on $c_{\pi}$ and $T$. By Assumption \ref{as:regularity_general}, 
  \[\tau^{*} = \sum_{t=1}^{T}\xi_{t}\left\{\frac{1}{n}\sum_{i=1}^{n}\lb\E[\td{Y}_{it}(1)] - \E[\td{Y}_{it}(0)]\rb\right\} = O(1).\]
  Therefore, 
  \[\Var(\V_i)\le 2\Var(\V_{i1}) + 2(\tau^{*})^2\Var(\V_{i2})\le C\lb 1 + \E\|\td{\Y}_{i}(0)\|_{2}^{2} + \E\|\td{\Y}_{i}(1)\|_{2}^{2}\rb.\]
  for some constant $C$ that only depends on $c_{\pi}$ and $T$. Since $\V_i$ is a function of $(\Y_i(1), \Y_i(0), \X_i)$, by Lemma \ref{lem:rho_mixing} and Assumption \ref{as:regularity_general},
  \[\Var\lb\frac{1}{n}\sum_{i=1}^{n}\V_{i}\rb\le \frac{1}{n^2}\sum_{i=1}^{n}\Var(\V_{i})\rho_{i} = O(n^{-q}).\]  
  By Chebyshev's inequality,
  \[\frac{1}{n}\sum_{i=1}^{n}(\V_{i} - \E[\V_i]) = O_\P(n^{-\q/2}).\]
  The proof is then completed.
\end{proof}

\subsubsection{\DATEeq ~and consistency}\label{subsubapp:consistency}
Theorem \ref{thm:asymptotic_expansion_simple} shows that the asymptotic limit of $\hatdenom(\hat{\tau} - \tau^*)$ is $\numer$. For consistency, it remains to prove that $\numer = o(1)$. We start by proving that the asymptotic bias is zero when either the treatment or the outcome model is perfectly estimated.
\begin{lemma}\label{lem:numer_weak_deterministic}
  Under Assumptions \ref{as:latent_ign_app}, \ref{as:overlap_general}, and \ref{as:regularity_general}, $\numer = 0$, if either (1) $\Delta_{y i} = 0$ for all $i \in [n]$, or (2) $\Delta_{\pi i} = 0$ for all $i \in [n]$, and $\bfPi$ satisfies the \DATEeq ~\eqref{eq:DATE_equation}. 
\end{lemma}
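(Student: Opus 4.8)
The plan is to substitute the outcome identity $Y_{it}^{\obs} = Y_{it}(0) + W_{it}\tau_{it}$, i.e. $\td{\Y}_i^{\obs} = (\Y_i(0) - \hat{\bfm}_i) + \diag(\W_i)\bftau_i$, into the closed form $\numer = \E[\swy]\E[\stheta] - \E[\sw]^\tran\E[\sy] - \tau^{*}(\E[\sww]\E[\stheta] - \E[\sw]^\tran\E[\sw])$ supplied by Theorem \ref{thm:asymptotic_expansion_simple}, compute the five expectations appearing there, and verify the cancellations. I will use two facts throughout: (a) $\Theta_i$, $\J\W_i$, $\W_i^\tran\J\W_i$ and $\diag(\W_i)$ are functions of $\W_i$ alone; (b) Assumption \ref{as:latent_ign_app} together with the preliminary normalization gives $\E[\Y_i(0)\mid\W_i] = \bfm_i$, so that $\E[g(\W_i)(\Y_i(0) - \hat{\bfm}_i)] = \E[g(\W_i)](\bfm_i - \hat{\bfm}_i)$ for any $\W_i$-measurable $g$. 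Assumption \ref{as:regularity_general} plus the bound $\Theta_i \le 1/c_\pi$ from Assumption \ref{as:overlap_general} make every expectation finite.

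For Case (1), where $\hat{\bfm}_i = \bfm_i$ and $\bftau_i = \tau^{*}\one_T$, I would write $\td{\Y}_i^{\obs} = \bfeps_i + \tau^{*}\W_i$ with $\bfeps_i = \Y_i(0) - \bfm_i$ and $\E[\bfeps_i\mid\W_i] = 0$. Then $\swy = \tfrac1n\sum_i\Theta_i\W_i^\tran\J\bfeps_i + \tau^{*}\sww$ and $\sy = \tfrac1n\sum_i\Theta_i\J\bfeps_i + \tau^{*}\sw$, so conditioning on $\W_i$ kills the $\bfeps_i$ terms and yields $\E[\swy] = \tau^{*}\E[\sww]$, $\E[\sy] = \tau^{*}\E[\sw]$; substituting into the formula for $\numer$ makes every term cancel. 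This case needs neither the DATE equation nor $\hat{\bfpi}_i = \bfpi_i$.

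For Case (2), where $\hat{\bfpi}_i = \bfpi_i$ on the support of $\W_i$, the key reduction is that $\E[\Theta_i f(\W_i)] = \sum_{\w\in\S^{*}}\bfPi(\w)f(\w) = \EPi[f(\W)]$ for every $f$ and every $i$, since $\bfPi$ is supported on $\S^{*}\subseteq\{\w:\bfpi_i(\w)>0\}$ by Assumption \ref{as:overlap_general}. This gives $\E[\stheta]=1$, $\E[\sw]=\J\EPi[\W]$, $\E[\sww]=\EPi[\W^\tran\J\W]$. For the outcome-involving terms I would split $\diag(\W_i)\bftau_i = \tau^{*}\W_i + \diag(\W_i)(\bftau_i - \tau^{*}\one_T)$ and apply (b) to the $\Y_i(0) - \hat{\bfm}_i$ piece; averaging over $i$ produces $\E[\swy] = \EPi[\W]^\tran\J\bar{\mathbf{d}} + \tau^{*}\EPi[\W^\tran\J\W] + \EPi[\W^\tran\J\diag(\W)]\bar{\bftau}$ and $\E[\sy] = \J\bar{\mathbf{d}} + \tau^{*}\J\EPi[\W] + \J\diag(\EPi[\W])\bar{\bftau}$, where $\bar{\mathbf{d}} = \tfrac1n\sum_i(\bfm_i-\hat{\bfm}_i)$ and $\bar{\bftau} = \tfrac1n\sum_i\bftau_i - \tau^{*}\one_T$. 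Plugging into $\numer$ and using $\J^{2}=\J$: the $\bar{\mathbf{d}}$-terms, the $\tau^{*}\EPi[\W^\tran\J\W]$-terms and the $\tau^{*}\EPi[\W]^\tran\J\EPi[\W]$-terms all cancel, leaving $\numer = \big(\EPi[\W^\tran\J\diag(\W)] - \EPi[\W]^\tran\J\diag(\EPi[\W])\big)\bar{\bftau}$. Transposing this scalar and using $\diag(\W)\J v = \diag(\J v)\W$ with $v=\EPi[\W]$ collapses the second term against the mean of the first, so $\numer = \bar{\bftau}^\tran\,\EPi[\diag(\W)\J(\W - \EPi[\W])]$.

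To conclude I would invoke the DATE equation \eqref{eq:DATE_equation}, which states $\EPi[\diag(\W)\J(\W - \EPi[\W])] = \xi\cdot\EPi[\W^\tran\J(\W-\EPi[\W])]$, a scalar multiple of $\xi$, and note $\xi^\tran\bar{\bftau} = \xi^\tran\big(\tfrac1n\sum_i\bftau_i\big) - \tau^{*}\,\xi^\tran\one_T = \tau^{*} - \tau^{*} = 0$ because $\tau^{*} = \sum_t\xi_t\tau_t$ and $\sum_t\xi_t = 1$. Hence $\numer = 0$ in Case (2) as well. The main obstacle is purely the bookkeeping in Case (2): tracking the four families of terms through the cancellations and correctly toggling between the readings $\diag(\W)\bftau$ and $\diag(\bftau)\W$; everything else is immediate from (a), (b), the identity $\J^2=\J$, and the DATE equation.
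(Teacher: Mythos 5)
Your proposal is correct and follows essentially the same route as the paper's proof: both reduce $\numer$ via latent mean ignorability (conditioning out $\Y_i(0)$ and the treatment-effect residual given $\W_i$), use $\E[\Theta_i f(\W_i)] = \EPi[f(\W)]$ when $\hat{\bfpi}_i = \bfpi_i$, and invoke the DATE equation together with $\xi^\tran\bigl(\tfrac{1}{n}\sum_i \bftau_i - \tau^{*}\one_T\bigr) = 0$ to kill the remaining term. The only differences are cosmetic: the paper first normalizes $\tau^{*}=0$ to shorten the bookkeeping, and your $\bfeps_i$ in Case (1) should also absorb the mean-zero deviation $\diag(\W_i)(\Y_i(1)-\Y_i(0)-\bftau_i)$, exactly as the paper implicitly does.
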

\begin{proof}
  Without loss of generality, we assume that $\tau^{*} = 0$; otherwise, we replace $Y_{it}(1)$ by $Y_{it}(1) - \tau^{*}$ and the resulting $\hat{\tau}$ becomes $\hat{\tau} - \tau^{*}$. Then
  \begin{align*}
    \numer & = \E[\swy]\E[\stheta] - \E[\sw]^\tran\E[\sy].
  \end{align*}
  It remains to prove that $\numer = 0$. Since $(\hat{\bfpi}_i, \hat{\bfm}_{i}, \hbftau_i)$ are deterministic, by Assumption \ref{as:latent_ign_app} and \eqref{eq:tdbftau_i}, 
  \begin{align*}
    \E[\swy]
    &= \frac{1}{n}\sum_{i=1}^{n}\E[\Theta_i \W_i^\tran\J\td{\Y}_i] = \frac{1}{n}\sum_{i=1}^{n}\E[\Theta_i \W_i^\tran\J\{\td{\Y}_i(0) + \diag(\W_i)(\td{\Y}_i(1) - \td{\Y}_i(0))\}]\\
    & = \frac{1}{n}\sum_{i=1}^{n}\E[\Theta_i \J\W_i]^\tran\E[\td{\Y}_i(0)] + \frac{1}{n}\sum_{i=1}^{n}\E[\Theta_i \W_i^\tran\J\diag(\W_i)]\td{\bftau}_i.
  \end{align*}
  Similarly,
  \begin{align*}
    \E[\sy]
    & = \frac{1}{n}\sum_{i=1}^{n}\E[\Theta_i]\J\E[\td{\Y}_i(0)] + \frac{1}{n}\sum_{i=1}^{n}\E[\Theta_i \diag(\W_i)]\td{\bftau}_i.
  \end{align*}
  As a result,
  \begin{align}
    \numer
    & = \frac{1}{n}\sum_{i=1}^{n}\left\{\E[\Theta_i\J\W_i]\E[\stheta] - \E[\Theta_i]\E[\sw]\right\}^\tran\E[\td{\Y}_i(0)]\nonumber\\
    & \quad + \frac{1}{n}\sum_{i=1}^{n}\left\{\E[\Theta_i \W_i^\tran\J\diag(\W_i)]\E[\stheta] - \E[\sw]^\tran\E[\Theta_i\diag(\W_i)]\right\}\td{\bftau}_i.\label{eq:numer_bias}
  \end{align}

  If $\Delta_{yi} = 0$, $\hat{\bfm}_i = \bfm_i$ and $\hbftau_i = \nhbftau_i$. Since we have assumed $\tau^* = 0$, $\td{\bftau}_i = \zero_{T}$. By \eqref{eq:Yi0_WLOG}, $\E[\td{\Y}_i(0)] = \zero_{T}$. It is then obvious from \eqref{eq:numer_bias} that $\numer = 0$.

  If $\Delta_{\pi i} = 0$, $\hat{\bfpi}_i = \bfpi_i$ and thus for any function $f(\cdot)$, 
  \begin{equation}\label{eq:change_of_measure}
  \E[\Theta_i f(\W_i)] = \sum_{\w\in \{0, 1\}^{T}} \frac{\bfPi(\w)}{\bfpi_i(\w)}f(\w)\bfpi_i(\w) = \E_{\W\sim \bfPi}[f(\W)].
  \end{equation}
  As a result,
  \[\E[\Theta_i \J\W_i] = \EPi[\J\W] = \E[\sw], \quad \E[\Theta_i] = 1 = \E[\stheta],\]
  and 
  \[\E[\Theta_i \W_i^\tran\J\diag(\W_i)] = \EPi[\W\J\diag(\W)], \quad \E[\Theta_i\diag(\W_i)] = \EPi[\diag(\W)].\]
  Then
  \[\E[\Theta_i\J\W_i]\E[\stheta] - \E[\Theta_i]\E[\sw] = \EPi[\J \W] - \EPi[\J \W] = 0,\]
  and by \DATEeq,
  \begin{align*}
    &\E[\Theta_i \W_i^\tran\J\diag(\W_i)]\E[\stheta] - \E[\sw]^\tran\E[\Theta_i\diag(\W_i)]\\
    &= \EPi[(\W - \EPi[\W])^\tran \J \diag(\W)]\\
    & = \EPi[(\W - \EPi[\W])^\tran \J \W]\xi^\tran.
  \end{align*}
  By \eqref{eq:numer_bias} and \eqref{eq:tau*_tdtau},
  \begin{align*}
    \numer
    & = \frac{1}{n}\sum_{i=1}^{n}\EPi[(\W - \EPi[\W])^\tran \J \W]\xi^\tran\td{\bftau}_i\\
    & = \EPi[(\W - \EPi[\W])^\tran \J \W]\lb\frac{1}{n}\sum_{i=1}^{n}\xi^\tran\td{\bftau}_i\rb\\
    & = \EPi[(\W - \EPi[\W])^\tran \J \W]\tau^* = 0.
  \end{align*}
\end{proof}

Next, we prove a general bound for the asymptotic bias $\numer$ as a function of $(\Delta_{yi}, \Delta_{\pi i})_{i=1}^{n}$.
\begin{theorem}\label{thm:numer_strong_deterministic}
  Let $\bfPi$ be an solution of the \DATEeq ~\eqref{eq:DATE_equation}. Under Assumptions \ref{as:latent_ign_app}, \ref{as:overlap_general}, and \ref{as:regularity_general},
  \[|\numer| = O\lb\bar{\Delta}_\pi\bar{\Delta}_y\rb.\]
\end{theorem}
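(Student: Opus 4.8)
The plan is to perturb around the ideal configuration $\hat{\bfpi}_i\equiv\bfpi_i$, reusing the algebra from the proof of Lemma~\ref{lem:numer_weak_deterministic}. As a first reduction, assume $\tau^{*}=0$ (replace $Y_{it}(1)$ by $Y_{it}(1)-\tau^{*}$, exactly as there), so that \eqref{eq:numer_bias} becomes $\numer = I + II$ with
\[
I = \frac1n\sum_{i=1}^{n}\left\{\E[\Theta_i\J\W_i]\E[\stheta] - \E[\Theta_i]\E[\sw]\right\}^{\tran}\E[\td{\Y}_i(0)],
\]
\[
II = \frac1n\sum_{i=1}^{n}\left\{\E[\Theta_i\W_i^{\tran}\J\diag(\W_i)]\E[\stheta] - \E[\sw]^{\tran}\E[\Theta_i\diag(\W_i)]\right\}\bftau_i.
\]
By \eqref{eq:Yi0_WLOG}, $\E[\td{\Y}_i(0)]=\bfm_i-\hat{\bfm}_i$, so $\|\E[\td{\Y}_i(0)]\|_2\le\Delta_{yi}$; and since $\tau^{*}=0$, $\|\bftau_i\|_2=\|\bftau_i-\tau^{*}\one_T\|_2\le\Delta_{yi}$. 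Thus each of $I$ and $II$ is an average of a propensity-dependent bracket against a vector of Euclidean norm at most $\Delta_{yi}$, and it suffices to show each bracket is $O(\Delta_{\pi i}+\bar{\Delta}_{\pi})$ up to a term that vanishes after averaging against $\bftau_i$.

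The key estimate I would establish first is a propensity-perturbation bound: for any $f$ on $\{0,1\}^{T}$ with $\sup_{\w\in\S^{*}}\|f(\w)\|\le F$,
\[
\E[\Theta_i f(\W_i)]-\EPi[f(\W)]=\sum_{\w\in\S^{*}}\bfPi(\w)f(\w)\,\frac{\bfpi_i(\w)-\hat{\bfpi}_i(\w)}{\hat{\bfpi}_i(\w)},
\]
since $\bfPi$ is supported on $\S^{*}$. Using $\hat{\bfpi}_i(\w)>c_{\pi}$, $\bfPi(\w)\le 1$, Cauchy--Schwarz over $\S^{*}$, and $\Delta_{\pi i}^{2}=\sum_{\w}\bfpi_i(\w)(\hat{\bfpi}_i(\w)-\bfpi_i(\w))^{2}\ge c_{\pi}\sum_{\w\in\S^{*}}(\hat{\bfpi}_i(\w)-\bfpi_i(\w))^{2}$, this difference is $O(\Delta_{\pi i})$ with a constant depending only on $F$, $c_{\pi}$, $|\S^{*}|$. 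Applying it to $f\equiv 1$, $f(\w)=\J\w$, $f(\w)=\w^{\tran}\J\diag(\w)$, and $f(\w)=\diag(\w)$ (all bounded because $T=O(1)$) and averaging over $i$ with $\frac1n\sum_i\Delta_{\pi i}\le\bar{\Delta}_{\pi}$, I obtain $\E[\Theta_i]=1+O(\Delta_{\pi i})$, $\E[\stheta]=1+O(\bar{\Delta}_{\pi})$, $\E[\sw]=\EPi[\J\W]+O(\bar{\Delta}_{\pi})$, $\E[\Theta_i\J\W_i]=\EPi[\J\W]+O(\Delta_{\pi i})$, $\E[\Theta_i\W_i^{\tran}\J\diag(\W_i)]=\EPi[\W^{\tran}\J\diag(\W)]+O(\Delta_{\pi i})$, and $\E[\Theta_i\diag(\W_i)]=\EPi[\diag(\W)]+O(\Delta_{\pi i})$.

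Substituting these expansions into the two brackets, I find that the first bracket has zeroth-order value $\EPi[\J\W]\cdot 1-1\cdot\EPi[\J\W]=0$ (the computation already done in Lemma~\ref{lem:numer_weak_deterministic}), hence equals $O(\Delta_{\pi i}+\bar{\Delta}_{\pi})$ and $|I|\le\frac1n\sum_i O(\Delta_{\pi i}+\bar{\Delta}_{\pi})\,\Delta_{yi}$. The second bracket has zeroth-order value $\EPi[(\W-\EPi[\W])^{\tran}\J\diag(\W)]$, which by the \DATEeq~\eqref{eq:DATE_equation} equals $s\,\xi^{\tran}$, where $s=\EPi[\|\td{\W}-\EPi[\td{\W}]\|^{2}]$ is a constant; so it equals $s\,\xi^{\tran}+O(\Delta_{\pi i}+\bar{\Delta}_{\pi})$, and
\[
II = s\,\xi^{\tran}\Big(\frac1n\sum_{i=1}^{n}\bftau_i\Big)+\frac1n\sum_{i=1}^{n}O(\Delta_{\pi i}+\bar{\Delta}_{\pi})\,\|\bftau_i\| = s\,\tau^{*}(\xi)+\frac1n\sum_{i=1}^{n}O(\Delta_{\pi i}+\bar{\Delta}_{\pi})\,\|\bftau_i\|,
\]
with $s\,\tau^{*}(\xi)=0$ since $\tau^{*}=0$. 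Finally, Cauchy--Schwarz gives $\frac1n\sum_i\Delta_{\pi i}\Delta_{yi}\le\bar{\Delta}_{\pi}\bar{\Delta}_{y}$ and $\bar{\Delta}_{\pi}\frac1n\sum_i\Delta_{yi}\le\bar{\Delta}_{\pi}\bar{\Delta}_{y}$, so $\frac1n\sum_i(\Delta_{\pi i}+\bar{\Delta}_{\pi})\Delta_{yi}\le 2\bar{\Delta}_{\pi}\bar{\Delta}_{y}$; combined with $\|\bftau_i\|\le\Delta_{yi}$ this yields $|\numer|\le|I|+|II|=O(\bar{\Delta}_{\pi}\bar{\Delta}_{y})$.

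The only delicate point I anticipate is the substitution step: one must check that, after perturbing, every leftover term still carries at least one outcome-error factor (a coordinate of $\bfm_i-\hat{\bfm}_i$ or of $\bftau_i$), so that the product structure survives and Cauchy--Schwarz produces $\bar{\Delta}_{\pi}\bar{\Delta}_{y}$ rather than a standalone $O(\bar{\Delta}_{\pi})$. This is precisely what \eqref{eq:numer_bias} and the \DATEeq~enforce: each bracket multiplies $\bfm_i-\hat{\bfm}_i$ or $\bftau_i$, the $O(1)$ part of the first bracket cancels identically, and the $O(1)$ part of the second bracket is forced into the $\xi$-direction, which annihilates against $\frac1n\sum_i\bftau_i$ because $\tau^{*}(\xi)=0$.
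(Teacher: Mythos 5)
Your proposal is correct and follows essentially the same route as the paper's proof: reduce to $\tau^{*}=0$, perturb each expectation around its ideal value under $\hat{\bfpi}_i=\bfpi_i$ with an $O(\Delta_{\pi i})$ bound (your support-sum expansion is the same estimate the paper obtains via Cauchy--Schwarz on $\E[(\Theta_i-\Theta_i^{*})f(\W_i)]$), note that the ideal first bracket vanishes and the ideal second bracket is forced into the $\xi$-direction by the DATE equation so it annihilates against $\tfrac1n\sum_i\bftau_i$, and then bound the remainder by $\tfrac1n\sum_i(\Delta_{\pi i}+\bar{\Delta}_{\pi})\Delta_{yi}\le 2\bar{\Delta}_{\pi}\bar{\Delta}_{y}$ via Cauchy--Schwarz. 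The only differences are presentational (the paper packages the ideal quantities as a separate $\numer'$ rather than expanding the brackets in place).
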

\begin{proof}
 As in the proof of Lemma \ref{lem:numer_weak_deterministic}, we assume that $\tau^{*} = 0$. Let
 \[\Theta_i^* = \frac{\bfPi(\W_i)}{\bfpi_i(\W_i)}, \quad \td{\Y}_i^{*} = \Y_i - \bfm_i - \diag(\W_i)\nhbftau_i.\]
 Further, let $\stheta^*$ and $\sw^*$ be the counterpart of $\stheta$ and $\sw$ with $(\Theta_i, \td{\Y}_i)$ replaced by $(\Theta_i^*, \td{\Y}_i^{*})$. For any function $f: \{0, 1\}^T \mapsto \R$ such that $\E[f^2(\W_i)]\le C_1$ for some constant $C_1 > 0$, by Cauchy-Schwarz inequality,
 \begin{align}
   &\E[\Theta_i f(\W_i) - \Theta_i^*f(\W_i)]
     = \E[(\Theta_i - \Theta_i^*)f(\W_i)]\le \sqrt{C_1}\sqrt{\E\lb\Theta_i - \Theta_i^*\rb^2}\nonumber\\
     & = \sqrt{C_1}\sqrt{\E\left[\frac{\bfPi(\W_i)^2}{\hat{\bfpi}_i(\W_i)^2\bfpi_i(\W_i)^2}\lb\hat{\bfpi}_i(\W_i) - \bfpi_i(\W_i)\rb^2\right]}\le \frac{\sqrt{C_1}}{c_{\pi}^2}\Delta_{\pi i}\label{eq:Thetai-Thetai*}.
 \end{align}
 Thus, there exists a constant $C_2$ that only depends on $c_{\pi}$ and $T$ such that
 \begin{align*}
   &|\E[\Theta_i] - \E[\Theta_i^{*}]| + \|\E[\Theta_i \J \W_i] - \E[\Theta_i^* \J \W_i]\|_{2} + \|\E[\Theta_i \W_i^\tran \J \diag(\W_i)] - \E[\Theta_i^* \W_i^\tran \J \diag(\W_i)]\|_{2}\\
   & \quad + \|\E[\Theta_i \diag(\W_i)] - \E[\Theta_i^* \diag(\W_i)]\|_{\op}\le C_2\Delta_{\pi i}.
 \end{align*}
 By triangle inequality and Cauchy-Schwarz inequality, we also have
 \[|\E[\stheta] - \E[\stheta^*]| + \|\E[\sw] - \E[\sw^*]\|_{2}\le \frac{C_2}{n}\sum_{i=1}^{n}\Delta_{\pi i} \le C_2\bar{\Delta}_{\pi}.\]
 On the other hand, by Lemma \ref{lem:mean_var}, there exists a constant $C_3$ that only depends on $c_{\pi}$ and $T$,
 \[|\E[\stheta]| + \|\E[\sw]\|_{2}\le C_3.\]
 Without loss of generality, we assume that
 \[C_3 \ge 1 + \sqrt{T} \ge 1 + \|\EPi[\J \W]\|_{2} = \E[\Theta_i^*] + \|\E[\Theta_i^*\J\W_i]\|_{2}.\]
 Putting pieces together,
 \begin{align*}
   &\bigg|\E[\Theta_i\J\W_i]\E[\stheta] - \E[\Theta_i]\E[\sw] - \lb \E[\Theta_i^*\J\W_i]\E[\stheta^*] - \E[\Theta_i^*]\E[\sw^*]\rb\bigg|\\
   & \le \big|\E[\Theta_i\J\W_i] - \E[\Theta_i^*\J\W_i]\big|\cdot \E[\stheta] + \big|\E[\Theta_i] - \E[\Theta_i^*]\big|\cdot \|\E[\sw]\|_{2} \\
   & \quad + \big|\E[\stheta] - \E[\stheta^*]\big|\cdot \|\E[\Theta_i^*\J\W_i]\|_{2} + \big\|\E[\sw] - \E[\sw^*]\big\|\cdot \E[\Theta_i^*]\\
   & \le 2C_3C_2(\Delta_{\pi i} + \bar{\Delta}_{\pi}).
 \end{align*}
 Similarly,
 \begin{align*}
   &\bigg|\E[\Theta_i \W_i^\tran\J\diag(\W_i)]\E[\stheta] - \E[\sw]^\tran\E[\Theta_i\diag(\W_i)]\\
   & - \lb \E[\Theta_i^* \W_i^\tran\J\diag(\W_i)]\E[\stheta^*] - \E[\sw^*]^\tran\E[\Theta_i^*\diag(\W_i)]\rb\bigg|\\
   & \le 2C_3C_2(\Delta_{\pi i} + \bar{\Delta}_{\pi}).
 \end{align*}
 Let
 \begin{align*}
   \numer'
   & = \frac{1}{n}\sum_{i=1}^{n}\left\{\E[\Theta_i^*\J\W_i]\E[\stheta^*] - \E[\Theta_i^*]\E[\sw^*]\right\}^\tran\E[\td{\Y}_i(0)]\nonumber\\
   & \quad + \frac{1}{n}\sum_{i=1}^{n}\left\{\E[\Theta_i^* \W_i^\tran\J\diag(\W_i)]\E[\stheta^*] - \E[\sw^*]^\tran\E[\Theta_i^*\diag(\W_i)]\right\}\td{\bftau}_i.
 \end{align*}
 Using the same arguments as in the proof of Lemma \ref{lem:numer_weak_deterministic},
 \[\E[\Theta_i^*\J\W_i]\E[\stheta^*] - \E[\Theta_i^*]\E[\sw^*] = 0,\]
 and
 \[\E[\Theta_i^* \W_i^\tran\J\diag(\W_i)]\E[\stheta^*] - \E[\sw^*]^\tran\E[\Theta_i^*\diag(\W_i)] = \EPi[(\W - \EPi[\W])^\tran \J \W]\xi^\tran.\]
 Then
 \[\numer' = \frac{1}{n}\sum_{i=1}^{n}\EPi[(\W - \EPi[\W])^\tran \J \W]\xi^\tran \td{\bftau}_i = \EPi[(\W - \EPi[\W])^\tran \J \W]\tau^* = 0.\]
 This entails that
 \begin{align*}
   |\numer| & = |\numer - \numer'|\le \frac{2C_3C_2}{n}\sum_{i=1}^{n}(\Delta_{\pi i} + \bar{\Delta}_{\pi})(\|\E[\td{\Y}_i(0)]\|_{2} + \|\td{\bftau}_i\|_{2}).
 \end{align*}
 By \eqref{eq:tdbftau_i}, \eqref{eq:tau*_tdtau}, and \eqref{eq:Yi0_WLOG}, 
 \[\|\E[\td{\Y}_i(0)]\|_{2} + \|\td{\bftau}_i\|_{2} = \|\hat{\bfm}_i - \bfm_i\|_{2} + \|\hbftau_i - \nhbftau_i\|_{2}\le 2\Delta_{yi}.\]
 Since $(1/n)\sum_{i=1}^{n}\Delta_{yi}\le \sqrt{(1/n)\sum_{i=1}^{n}\Delta_{yi}^2}$, 
 \begin{align*}
   |\numer|&\le \frac{4C_3 C_2}{n}\sum_{i=1}^{n}(\Delta_{\pi i} + \bar{\Delta}_{\pi})\Delta_{y i} = 4C_3C_2 \bar{\Delta}_{\pi}\bar{\Delta}_{y}.
 \end{align*}
 The proof is then completed.
\end{proof}

\subsubsection{Asymptotic inference under independence}
Theorem \ref{thm:asymptotic_expansion_simple} and Theorem \ref{thm:numer_strong_deterministic} imply the following properties of RIPW estimators. 

\begin{theorem}\label{thm:doubly_robust_expansion}
  Let $\bfPi$ be an solution of the \DATEeq ~\eqref{eq:DATE_equation}. Under Assumptions \ref{as:latent_ign_app}, \ref{as:overlap_general}, and \ref{as:regularity_general},
  \[\hat{\tau} - \tau^{*} = o_\P(1), \quad \text{if }\,\, \bar{\Delta}_{\pi}\bar{\Delta}_{y} = o(1).\]
  If, further, $q > 1/2$ in Assumption \ref{as:regularity_general} and $\bar{\Delta}_{\pi}\bar{\Delta}_{y} = o(1 / \sqrt{n})$,
  \[\hatdenom\cdot \sqrt{n}(\hat{\tau} - \tau^{*}) = \frac{1}{\sqrt{n}}\sum_{i=1}^{n}(\V_i - \E[\V_i]) + o_\P(1).\]
\end{theorem}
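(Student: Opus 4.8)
The plan is to obtain Theorem~\ref{thm:doubly_robust_expansion} as an immediate consequence of the asymptotic linear expansion in Theorem~\ref{thm:asymptotic_expansion_simple} combined with the strong double-robustness bound on the asymptotic bias in Theorem~\ref{thm:numer_strong_deterministic}; no new probabilistic estimates are needed. Both of those results apply verbatim in the present setting, since the hypotheses of Theorem~\ref{thm:doubly_robust_expansion} (Assumptions~\ref{as:latent_ign_app}, \ref{as:overlap_general}, \ref{as:regularity_general}, deterministic $(\hat{\bfpi}_i, \hat{\bfm}_i)$, and $\bfPi$ a solution of the \DATEeq) are exactly the pooled hypotheses of those two theorems. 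The two quantitative inputs I will use are: (i) from Theorem~\ref{thm:asymptotic_expansion_simple}, $\hatdenom(\hat{\tau} - \tau^{*}) = \numer + \frac{1}{n}\sum_{i=1}^{n}(\V_i - \E[\V_i]) + O_\P(n^{-\q})$, and $\hat{\tau} - \tau^{*} = O_\P(|\numer|) + O_\P(n^{-\q})$; and (ii) from Theorem~\ref{thm:numer_strong_deterministic}, $|\numer| = O(\bar{\Delta}_{\pi}\bar{\Delta}_{y})$, where this $O(\cdot)$ is a \emph{deterministic} bound because $\numer$ and the accuracies $(\Delta_{\pi i}, \Delta_{y i})$ are all non-random conditional on the variables we condition on.

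For the consistency claim, I would substitute (ii) into the second display of (i) to get $\hat{\tau} - \tau^{*} = O_\P(\bar{\Delta}_{\pi}\bar{\Delta}_{y}) + O_\P(n^{-\q})$. Under the stated hypothesis $\bar{\Delta}_{\pi}\bar{\Delta}_{y} = o(1)$ the first term is $o_\P(1)$, and $n^{-\q} = o(1)$ since $\q\in(0,1]$, so the sum is $o_\P(1)$, proving $\hat{\tau} - \tau^{*} = o_\P(1)$.

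For the asymptotic linearity claim, I would multiply the first display of (i) by $\sqrt{n}$ to obtain
\[
\hatdenom\cdot\sqrt{n}(\hat{\tau} - \tau^{*}) = \sqrt{n}\,\numer + \frac{1}{\sqrt{n}}\sum_{i=1}^{n}(\V_i - \E[\V_i]) + O_\P\!\lb n^{1/2 - \q}\rb.
\]
By (ii), $\sqrt{n}\,|\numer| = O(\sqrt{n}\,\bar{\Delta}_{\pi}\bar{\Delta}_{y}) = o(1)$ under the hypothesis $\bar{\Delta}_{\pi}\bar{\Delta}_{y} = o(1/\sqrt{n})$; and $n^{1/2-\q}\to 0$ since $\q > 1/2$, so the remainder term is $o_\P(1)$. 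Collecting the two vanishing terms yields the claimed expansion.

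Because the substantive work has already been carried out in Theorems~\ref{thm:asymptotic_expansion_simple} and~\ref{thm:numer_strong_deterministic}, there is no genuine obstacle here; the only points that require care are bookkeeping ones — confirming that it is the hypothesis ``$\bfPi$ solves the \DATEeq'' that licenses the bias bound (ii), that $\numer$ is deterministic so its bound can be treated as $o(1)$ rather than $o_\P(1)$, and that the weak-dependence exponent $\q$ governing the $O_\P(n^{-\q})$ remainder of Theorem~\ref{thm:asymptotic_expansion_simple} is assumed large enough ($\q > 1/2$) for the $\sqrt{n}$ rescaling in the second part to absorb it.
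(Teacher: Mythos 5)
Your proposal is correct and is exactly the paper's route: the paper states this theorem as an immediate consequence of Theorem \ref{thm:asymptotic_expansion_simple} and Theorem \ref{thm:numer_strong_deterministic}, which is precisely the substitution and $\sqrt{n}$-rescaling bookkeeping you spell out (including using $q>1/2$ to kill the $O_\P(n^{1/2-q})$ remainder and $\sqrt{n}\,\bar{\Delta}_{\pi}\bar{\Delta}_{y}=o(1)$ to kill the deterministic bias term).
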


Recalling \eqref{eq:delta_deterministic} that $(\Delta_{\pi i}, \Delta_{yi})$ are deterministic, $\bar{\Delta}_{\pi}\bar{\Delta}_{y} = \E[\bar{\Delta}_{\pi}\bar{\Delta}_{y}]$. Since Assumptions \ref{as:overlap_general} and \ref{as:regularity_general} generalize Assumptions \ref{as:limit_dep}-\ref{as:bounded_outcomes}, Theorem \ref{thm:doubly_robust_expansion} implies Theorem \ref{thm:bias} and \ref{thm:design_linear_expansion}. Similarly, Theorem \ref{thm:doubly_robust_expansion_maintext} is implied by Theorem \ref{thm:doubly_robust_expansion}.

Throughout the rest of the subsection, we focus on the special case where $\{(\Y_i(1), \Y_i(0), \X_i): i \in [n]\}$ are independent. In this case, Assumption \ref{as:regularity_general} holds with $q = 1 > 1 / 2$ and thus the asymptotically linear expansion in Theorem \ref{thm:doubly_robust_expansion} holds. To obtain the asymptotic normality and a consistent variance estimator, we modify Assumption \ref{as:regularity_general} as follows.
\begin{assumption}
  \label{as:independent_general}
  $\{(\Y_i(1), \Y_i(0), \X_i): i = 1,\ldots,n\}$ are independent (but not necessarily identically distributed), and there exists $\omega > 0$ such that
  \begin{equation*}
    \frac{1}{n}\sum_{i=1}^{n}\bigg\{ \E\|\td{\Y}_i(1)\|_{2}^{2 + \omega} + \E\|\td{\Y}_i(0)\|_{2}^{2 + \omega}\bigg\} = O(1).
  \end{equation*}
\end{assumption}
To derive the asymptotic normality of the RIPW estimator, we need the following assumption that prevents the variance from being too small.
\begin{assumption}\label{as:var_low}
  There exists $v_0 > 0$ such that
  \begin{equation*}
    \sigma^2\triangleq \frac{1}{n}\sum_{i=1}^{n}\Var(\V_i)\ge v_0.
  \end{equation*}
\end{assumption}

The following lemma shows the asymptotic normality of the term $\frac{1}{\sqrt{n}}\sum_{i=1}^{n}(\V_i - \E[\V_i])$.
\begin{lemma}\label{lem:CLT}
  Then under Assumptions \ref{as:overlap_general}, \ref{as:independent_general}, and \ref{as:var_low},
  \[d_{K}\lb\mathcal{L}\lb\frac{1}{\sqrt{n}\sigma}\sum_{i=1}^{n}(\V_i - \E[\V_i])\rb, N(0, 1)\rb \rightarrow 0,\]
  where $\mathcal{L}(\cdot)$ denotes the probability law, $d_{K}$ denotes the Kolmogorov-Smirnov distance (i.e., the $\ell_{\infty}$-norm of the difference of CDFs)
\end{lemma}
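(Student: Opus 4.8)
The plan is to verify Lyapunov's condition for the independent (but not identically distributed) array $\{\V_i - \E[\V_i]\}_{i=1}^{n}$ and then combine the Lindeberg--Feller central limit theorem with P\'olya's theorem, which upgrades convergence in distribution to convergence in Kolmogorov--Smirnov distance because the limit $N(0,1)$ has a continuous distribution function. Recall that $\V_i$ is a deterministic function of $\Z_i = (\Y_i(1), \Y_i(0), \W_i)$, so under Assumption \ref{as:independent_general} the summands $\V_i - \E[\V_i]$ are independent; writing $s_n^2 \triangleq \sum_{i=1}^{n}\Var(\V_i) = n\sigma^2$, the object in the lemma is exactly $\frac{1}{s_n}\sum_{i=1}^{n}(\V_i - \E[\V_i])$.

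The first and only substantive step is a pointwise moment bound on $\V_i$. Inspecting the definition of $\V_i$, the only random, potentially unbounded factor is $\td{\Y}_i^{\obs}$, which appears linearly; all the terms $\E[\swy], \E[\sww], \E[\stheta], \E[\sw], \E[\sy]$ are deterministic and $O(1)$ by Lemma \ref{lem:mean_var}, $|\tau^{*}| = O(1)$ by the same lemma, $0\le \Theta_i\le 1/c_\pi$ by Assumption \ref{as:overlap_general}, $\|\W_i\|_2\le \sqrt{T}$, and $\|\J\|_{\op} = 1$. Bounding each of the four summands by Cauchy--Schwarz exactly as in the derivation of \eqref{eq:var_Vi} produces a constant $C$ depending only on $c_\pi$ and $T$ such that $|\V_i|\le C(1 + \|\td{\Y}_i^{\obs}\|_2)$ almost surely. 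Since $\td{\Y}_i^{\obs}$ agrees coordinatewise with one of $\td{\Y}_i(0), \td{\Y}_i(1)$, we have $\|\td{\Y}_i^{\obs}\|_2^{2}\le \|\td{\Y}_i(0)\|_2^{2} + \|\td{\Y}_i(1)\|_2^{2}$, and hence, by convexity of $t\mapsto t^{1 + \omega/2}$, $\|\td{\Y}_i^{\obs}\|_2^{2+\omega}\le 2^{\omega/2}(\|\td{\Y}_i(0)\|_2^{2+\omega} + \|\td{\Y}_i(1)\|_2^{2+\omega})$. Together with $\E|\V_i - \E[\V_i]|^{2+\omega}\le 2^{2+\omega}\E|\V_i|^{2+\omega}$ this gives a constant $C'$ depending only on $c_\pi, T, \omega$ with
\[\E|\V_i - \E[\V_i]|^{2+\omega}\le C'\lb 1 + \E\|\td{\Y}_i(0)\|_2^{2+\omega} + \E\|\td{\Y}_i(1)\|_2^{2+\omega}\rb,\]
so averaging over $i$ and invoking Assumption \ref{as:independent_general} yields $\frac{1}{n}\sum_{i=1}^{n}\E|\V_i - \E[\V_i]|^{2+\omega} = O(1)$.

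It then remains to assemble the ratio. By Assumption \ref{as:var_low}, $s_n^2 = n\sigma^2 \ge n\nu_0$, so
\[\frac{1}{s_n^{2+\omega}}\sum_{i=1}^{n}\E|\V_i - \E[\V_i]|^{2+\omega} \le \frac{C'' n}{(n\nu_0)^{1 + \omega/2}} = O\lb n^{-\omega/2}\rb \longrightarrow 0 ,\]
which is Lyapunov's condition and hence implies the Lindeberg condition for the array. The Lindeberg--Feller CLT then gives $\frac{1}{\sqrt{n}\sigma}\sum_{i=1}^{n}(\V_i - \E[\V_i])\dcv N(0,1)$, and since $N(0,1)$ has a continuous distribution function, P\'olya's theorem upgrades this weak convergence to convergence in the Kolmogorov--Smirnov distance, which is the assertion of the lemma. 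The moment bound of the second paragraph is the only place requiring care, and it is essentially a rerun of the computation already carried out for \eqref{eq:var_Vi} with exponent $2$ replaced by $2+\omega$; everything else is off-the-shelf. (Alternatively one could invoke a fractional-moment Berry--Esseen inequality, which would even deliver the explicit rate $O(n^{-\omega/2})$ for $d_K$, but the rate is not needed here.)
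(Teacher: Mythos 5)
Your proposal is correct and follows essentially the same route as the paper: the paper also notes independence of the $\V_i$, derives the averaged $(2+\omega)$-moment bound $\frac{1}{n}\sum_{i=1}^{n}\E|\V_i|^{2+\omega}=O(1)$ from Assumption \ref{as:independent_general}, and uses the variance lower bound to get the Lyapunov-type ratio $O(n^{-\omega/2})$. The only difference is the final citation: the paper concludes with Petrov's generalized Berry--Esseen inequality (Proposition \ref{prop:Berry_Esseen} with $g(x)=x^{\omega}$), which directly yields the $O(n^{-\omega/2})$ rate for $d_K$ that you mention parenthetically, whereas you invoke Lindeberg--Feller plus P\'olya's theorem, which suffices for the stated qualitative conclusion.
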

\begin{proof}
  Since $(\hat{\bfpi}_i, \hat{\bfm}_{i}, \hbftau_i)$ are deterministic, by Assumption \ref{as:independent_general}, $\{\V_i: i\in [n]\}$ are independent. Recalling the definition of $\V_i$, it is easy to see that Assumption \ref{as:independent_general} implies
  \begin{equation}
    \label{eq:Vi_moments}
    \frac{1}{n}\sum_{i=1}^{n}\E|\V_i|^{2+\omega} = O(1).
  \end{equation}
  By Assumption \ref{as:independent_general},
  \[\sum_{i=1}^{n}\E\left|\frac{\V_i}{\sqrt{n}\sigma}\right|^{2+\omega} = O\lb n^{-\omega / 2}\rb = o(1).\]
  The proof is completed by the Berry-Esseen inequality (Proposition \ref{prop:Berry_Esseen}) with $g(x) = x^{\omega}$.
\end{proof}

Let $\hV_i$ denote the plug-in estimate of $\V_i$, i.e.,
  \begin{align}\label{eq:hVi}
    \hV_i &= \Theta_i\Bigg\{\lb \swy - \hat{\tau}\sww\rb - \lb \sy - \hat{\tau}\sw\rb ^\tran\J \W_i  + \stheta\W_i^\tran \J \lb\td{\Y}_i - \hat{\tau}\W_i\rb - \sw^\tran\J\lb \td{\Y}_{i} - \hat{\tau}\W_i\rb \Bigg\}.
  \end{align}
  We first prove that $\hV_i$ is an accurate approximation of $\V_i$ on average, even without the independence assumption.
  \begin{lemma}\label{lem:hVi_Vi}
    Let $\bfPi$ be a solution of the \DATEeq. Under Assumptions \ref{as:latent_ign_app}-\ref{as:regularity_general},
    \[\frac{1}{n}\sum_{i=1}^{n}(\hV_i - \V_i)^2 = o_\P(1), \quad \text{if }\,\, \bar{\Delta}_{\pi}\bar{\Delta}_{y} = o(1).\]
  \end{lemma}
  \begin{proof}
    Let
    \begin{align*}
      \hV'_i &= \Theta_i\Bigg\{\lb \swy - \tau^{*}\sww\rb - \lb \sy - \tau^{*}\sw\rb ^\tran\J \W_i  + \stheta\W_i^\tran \J \lb\td{\Y}_i - \tau^{*}\W_i\rb - \sw^\tran\J\lb \td{\Y}_{i} - \tau^{*}\W_i\rb \Bigg\}.
    \end{align*}
    Then
    \begin{align*}
      \hV_i - \hV'_i &= (\hat{\tau} - \tau^{*})\Theta_i\big\{-\sww + \sw^\tran\J\W_i - \stheta\W_i^\tran\J\W_i + \sw^\tran\J\W_i\big\}.
    \end{align*}
    Under Assumption \ref{as:overlap_general}, there exists a constant $C$ that only depends on $c_{\pi}$ and $T$ such that
    \begin{equation*}
      |\hV_i - \hV'_i|\le C|\hat{\tau} - \tau^{*}|.
    \end{equation*}
    By Theorem \ref{thm:doubly_robust_expansion}, 
    \begin{equation}
      \label{eq:Vi_1}      
      \frac{1}{n}\sum_{i=1}^{n}(\hV_i - \hV'_i)^2 = O((\hat{\tau} - \tau^{*})^2) = o_\P(1)
    \end{equation}
    Next, 
    \begin{align*}
      \hV'_i - \V_i &= \Theta_i\Bigg\{\lb (\swy - \E[\swy])- \tau^{*}(\sww - \E[\sww])\rb - \lb (\sy - \E[\sy]) - \tau^{*}(\sw - \E[\sw])\rb^\tran\J \W_i\\
                    &\qquad\qquad + (\stheta - \E[\stheta])\W_i^\tran \J \lb\td{\Y}_i - \tau^{*}\W_i\rb - (\sw - \E[\sw])^\tran\J\lb \td{\Y}_{i} - \tau^{*}\W_i\rb \Bigg\}.
    \end{align*}
    By Jensen's inequality and Assumption \ref{as:overlap_general}, 
    \begin{align*}
      &\frac{1}{n}\sum_{i=1}^{n}(\hV'_i - \V_i)^2\\
      & \le \frac{5}{nc_{\pi}^2}\sum_{i=1}^{n}\Bigg\{(\swy - \E[\swy])^2 + (\sww - \E[\sww])^2 \cdot\tau^{*2} + \|(\sy - \E[\sy])\|_{2}^2 \cdot \|\J \W_i\|_{2}^2\\
      & \qquad\qquad + \|(\sw - \E[\sw])\|_{2}^2 \cdot \|\J(\td{\Y}_i - 2\tau^{*}\W_i)\|_{2}^2 + (\stheta - \E[\stheta])^2\lb\W_i^\tran \J \lb\td{\Y}_i - \tau^*\W_i\rb\rb^2\Bigg\}\\
      & = \frac{5}{c_{\pi}^2}\Bigg\{(\swy - \E[\swy])^2 + (\sww - \E[\sww])^2 \cdot\tau^{*2} + \|(\sy - \E[\sy])\|_{2}^2 \cdot T\\
      & \qquad\quad + \|(\sw - \E[\sw])\|_{2}^2\cdot \frac{1}{n}\sum_{i=1}^{n}\|(\td{\Y}_i - 2\tau^{*}\W_i)\|_{2}^2\\
      & \qquad\quad + \|(\stheta - \E[\stheta])\|_{2}^2\cdot \frac{T}{n}\sum_{i=1}^{n}\|\td{\Y}_i - \tau^{*}\W_i\|_{2}^2\Bigg\}.
    \end{align*}
    By Lemma \ref{lem:mean_var}, 
    \[\E\left[\frac{1}{n}\sum_{i=1}^{n}(\hV'_i - \V_i)^2\right] = o(1).\]
    By Markov's inequality,
    \begin{equation}
      \label{eq:Vi_2}
      \frac{1}{n}\sum_{i=1}^{n}(\hV'_i - \V_i)^2 = o_\P(1).
    \end{equation}
    Putting \eqref{eq:Vi_1} and \eqref{eq:Vi_2} together, we obtain that
    \begin{equation*}
      \frac{1}{n}\sum_{i=1}^{n}(\hV_i - \V_i)^2\le \frac{2}{n}\sum_{i=1}^{n}\{(\hV_i - \hV'_i)^2 + (\hV'_i - \V_i)^2\} = o_\P(1).
    \end{equation*}
  \end{proof}
  
As in Section \ref{sec:design_based}, we estimate the (conservative) variance of the term $\frac{1}{\sqrt{n}}\sum_{i=1}^{n}(\V_i - \E[\V_i])$ as
  \begin{equation}
    \label{eq:hatsigma2}
    \hat{\sigma}^2 = \frac{1}{n-1}\sum_{i=1}^{n}\lb \hV_i - \frac{1}{n}\sum_{i=1}^{n}\hV_i\rb^2 = \frac{n}{n-1}\left\{\frac{1}{n}\sum_{i=1}^{n}\hV_i^2 - \lb\frac{1}{n}\sum_{i=1}^{n}\hV_i\rb^2\right\}.
  \end{equation}
This yields a Wald-type confidence interval for DATE,
\begin{equation}
  \label{eq:doubly_robust_ci}
  \hat{C}_{1 - \alpha} = [\hat{\tau} - z_{1 - \alpha / 2}\hat{\sigma} / \sqrt{n}\hatdenom, \hat{\tau} + z_{1 - \alpha / 2}\hat{\sigma} / \sqrt{n}\hatdenom],
\end{equation}
where $z_{\eta}$ is the $\eta$-th quantile of the standard normal distribution. 
\begin{theorem}\label{thm:doubly_robust_inference_deterministic}
  Assume that $\bar{\Delta}_{\pi}\bar{\Delta}_{y} = o(1 / \sqrt{n})$. Under Assumptions \ref{as:latent_ign_app}, \ref{as:overlap_general}, \ref{as:independent_general}, and \ref{as:var_low}, 
  \[\liminf_{n\rightarrow \infty}\P\lb \tau^{*}\in \hat{C}_{1 - \alpha}\rb\ge 1 - \alpha.\]
\end{theorem}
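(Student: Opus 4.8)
The plan is to reduce the coverage claim to the asymptotically linear expansion of Theorem \ref{thm:doubly_robust_expansion} and the central limit theorem of Lemma \ref{lem:CLT}, once I have shown that the variance estimator $\hat\sigma^2$ is bounded below by $\sigma^2$ up to a $o_\P(1)$ term. First I would note that under Assumption \ref{as:independent_general} the $\Z_i$ are independent, so Assumption \ref{as:regularity_general} holds with $\q = 1 > 1/2$; since moreover $\bar{\Delta}_{\pi}\bar{\Delta}_{y} = o(1/\sqrt n)$, Theorem \ref{thm:doubly_robust_expansion} gives
\[\hatdenom\cdot\sqrt n(\hat\tau - \tau^*) = \frac{1}{\sqrt n}\sum_{i=1}^n(\V_i - \E[\V_i]) + o_\P(1).\]
Writing $\sigma^2 = \frac1n\sum_i\Var(\V_i)$, we have $\sigma^2 = O(1)$ from \eqref{eq:Vi_moments} (valid here since the $\Z_i$ are independent) and $\sigma^2 \ge \nu_0 > 0$ by Assumption \ref{as:var_low}, so $\sigma$ is bounded above and below. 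Lemma \ref{lem:CLT} then gives $S_n \triangleq \frac{1}{\sqrt n\,\sigma}\sum_i(\V_i - \E[\V_i]) \dcv N(0,1)$ uniformly (in Kolmogorov distance), and in particular $S_n = O_\P(1)$.

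The key step is the lower bound on $\hat\sigma^2$. By Lemma \ref{lem:hVi_Vi}, $\frac1n\sum_i(\hV_i - \V_i)^2 = o_\P(1)$; combined with $\frac1n\sum_i\E[\V_i^2] = O(1)$ from \eqref{eq:Vi_moments} and Cauchy--Schwarz this yields $\frac1n\sum_i\hV_i^2 = \frac1n\sum_i\V_i^2 + o_\P(1)$ and $\bar{\hV} = \bar\V + o_\P(1)$ (with $\bar\V = \frac1n\sum_i\V_i$), so from \eqref{eq:hatsigma2},
\[\hat\sigma^2 = \frac1n\sum_{i=1}^n\V_i^2 - \Big(\frac1n\sum_{i=1}^n\V_i\Big)^2 + o_\P(1).\]
I would then invoke a weak law of large numbers: $\bar\V = \frac1n\sum_i\E[\V_i] + o_\P(1)$ since $\Var(\bar\V) = O(1/n)$, and $\frac1n\sum_i(\V_i^2 - \E[\V_i^2]) = o_\P(1)$ by a truncation argument using the averaged $L^{2+\omega}$ control of Assumption \ref{as:independent_general} (so the $\V_i^2$ have averaged $L^{1+\omega/2}$ norms bounded, with $1+\omega/2 > 1$). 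Hence $\hat\sigma^2 = \sigma^2 + b_n^2 + o_\P(1)$, where $b_n^2 = \frac1n\sum_i\E[\V_i]^2 - \big(\frac1n\sum_i\E[\V_i]\big)^2 \ge 0$ is the empirical variance of the heterogeneous means $\E[\V_i]$, matching the heuristic \eqref{eq:conservative_variance}. Since $b_n^2 \ge 0$ and $\sigma^2 \ge \nu_0$, this gives $\hat\sigma^2 \ge \nu_0 - o_\P(1)$, so $\hat\sigma$ is bounded away from $0$ with probability tending to $1$, and $\sigma^2/\hat\sigma^2 \le 1 + o_\P(1)$, i.e. $\P(\sigma/\hat\sigma > 1+\delta) \to 0$ for every $\delta > 0$.

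Finally I would assemble the pieces. By Lemma \ref{lem:denom}, $\hatdenom > 0$ with probability tending to $1$, so by \eqref{eq:doubly_robust_ci}, $\tau^*\in\hat{C}_{1-\alpha}$ iff $|T_n|\le z_{1-\alpha/2}$, where $T_n = \sqrt n\,\hatdenom(\hat\tau - \tau^*)/\hat\sigma = S_n\cdot(\sigma/\hat\sigma) + o_\P(1)$ (the remainder being $o_\P(1)$ because $\hat\sigma^{-1} = O_\P(1)$). Fixing $\delta > 0$, on the event $\{\sigma/\hat\sigma \le 1+\delta\}\cap\{|T_n - S_n\sigma/\hat\sigma|\le\delta\}$ one has $|T_n|\le (1+\delta)|S_n| + \delta$, hence
\[\P\big(|T_n|\le z_{1-\alpha/2}\big) \ge \P\Big(|S_n|\le \tfrac{z_{1-\alpha/2}-\delta}{1+\delta}\Big) - o(1) \longrightarrow 2\Phi\Big(\tfrac{z_{1-\alpha/2}-\delta}{1+\delta}\Big) - 1,\]
using the uniform convergence $S_n\dcv N(0,1)$. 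Letting $\delta\downarrow 0$ gives $\liminf_n\P(\tau^*\in\hat{C}_{1-\alpha}) \ge 2\Phi(z_{1-\alpha/2}) - 1 = 1-\alpha$.

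I expect the main obstacle to be the second paragraph. Because $\hat\sigma^2$ is only \emph{conservative} --- it converges to $\sigma^2$ plus the $n$-dependent, non-vanishing heterogeneity term $b_n^2$ of the means $\E[\V_i]$ --- the studentizing factor $\sigma/\hat\sigma$ has no fixed limit, which rules out a clean Slutsky/continuous-mapping argument and is precisely why only the one-sided coverage bound can be obtained, via the $\delta$-perturbation in the last step. A secondary technical nuisance is that Assumption \ref{as:independent_general} provides only $2+\omega$ moments, so the law of large numbers for $\V_i^2$ requires truncation rather than a Chebyshev bound.
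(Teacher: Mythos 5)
Your proposal is correct and follows essentially the same route as the paper's proof: the linear expansion plus the CLT of Lemma \ref{lem:CLT}, then showing via Lemma \ref{lem:hVi_Vi} and a law of large numbers that $\hat{\sigma}^2$ converges to the conservative quantity $\sigma^2$ plus the empirical variance of the means $\E[\V_i]$ (the paper's $\sigma_+^2$), which yields the one-sided coverage bound. The only differences are cosmetic: the paper controls $\frac{1}{n}\sum_i(\V_i^2-\E[\V_i^2])$ with the von Bahr--Esseen inequality rather than truncation, and handles the random studentization factor $\hat{\sigma}/\sigma$ by a Slutsky/continuous-mapping step where you use an explicit $\delta$-perturbation (arguably a more careful writeup of the same idea).
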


\begin{proof}
  By Theorem \ref{thm:asymptotic_expansion_simple}, Theorem \ref{thm:numer_strong_deterministic}, Lemma \ref{lem:CLT}, and Assumption \ref{as:var_low},
  \[\frac{\hatdenom\cdot \sqrt{n}(\hat{\tau} - \tau^{*})}{\sigma} = \frac{1}{\sqrt{n}\sigma}\sum_{i=1}^{n}(\V_i - \E[\V_i]) + o_\P(1)\dcv N(0, 1) \text{ in Kolmogorov-Smirnov distance},\]
  As a result,
  \begin{equation}
    \label{eq:coverage_1}
    \bigg|\P\lb \bigg|\frac{\hatdenom\cdot \sqrt{n}(\hat{\tau} - \tau^{*})}{\sigma}\bigg|\le z_{1 - \alpha / 2}\cdot\frac{\hat{\sigma}}{\sigma}\rb - \left\{2\Phi\lb z_{1 - \alpha / 2}\cdot\frac{\hat{\sigma}}{\sigma}\rb - 1\right\}\bigg| = o(1),
  \end{equation}
  where $\Phi$ is the cumulative distribution function of the standard normal distribution. Let
  \begin{equation}
    \label{eq:sigma2+}
    \sigma^2_{+} = \frac{1}{n}\sum_{i=1}^{n}\E\lb \V_i - \frac{1}{n}\sum_{i=1}^{n}\E[\V_i]\rb^2 = \frac{1}{n}\sum_{i=1}^{n}\E[\V_i^2] - \lb\frac{1}{n}\sum_{i=1}^{n}\E[\V_i]\rb^2.
  \end{equation}
  Clearly, $\sigma^2_{+}$ is deterministic and
  \[\sigma^2_{+} = \sigma^2 + \frac{1}{n}\sum_{i=1}^{n}\lb\E[\V_i] - \frac{1}{n}\sum_{i=1}^{n}\E[\V_i]\rb^2\ge \sigma^2.\]
  It remains to show that
  \begin{equation}
    \label{eq:goal_sigma2}
    \left|\frac{n-1}{n}\hat{\sigma}^2 - \sigma^2_{+}\right| = o_\P(1).
  \end{equation}
  In fact, by Assumption \ref{as:var_low}, \eqref{eq:goal_sigma2} implies that
  \[\sqrt{\frac{n-1}{n}}\frac{\hat{\sigma}}{\sigma} \stackrel{p}{\rightarrow} \frac{\sigma_{+}}{\sigma} \ge 1\Longrightarrow \frac{\hat{\sigma}}{\sigma} \stackrel{p}{\rightarrow} \frac{\sigma_{+}}{\sigma} \ge 1.\]
  By continuous mapping theorem,
  \[2\Phi\lb z_{1 - \alpha / 2}\cdot\frac{\hat{\sigma}}{\sigma}\rb - 1\stackrel{p}{\rightarrow} 2\Phi\lb z_{1 - \alpha / 2}\cdot\frac{\sigma_{+}}{\sigma}\rb - 1 \ge 1 - \alpha,\]
  which completes the proof.

  Now we prove \eqref{eq:goal_sigma2}. By Proposition \ref{prop:vonbahresseen} and Jensen's inequality,
    \begin{align*}
      &\E\left|\frac{1}{n}\sum_{i=1}^{n}(\V_i^2 - \E[\V_i^2])\right|^{1+\omega/2} \le \frac{2}{n^{1 + \omega / 2}}\sum_{i=1}^{n}\E|\V_i^2 - \E[\V_i^2]|^{1 + \omega / 2}\\
      & \le \frac{2^{1 + \omega / 2}}{n^{1 + \omega / 2}}\sum_{i=1}^{n}\lb \E[|\V_i|^{2 + \omega}] + \E[\V_i^2]^{1 + \omega / 2}\rb \le \frac{2^{2 + \omega / 2}}{n^{1 + \omega / 2}}\sum_{i=1}^{n}\E\left[|\V_i|^{2 + \omega}\right].
    \end{align*}
    By \eqref{eq:Vi_moments},
    \[\E\left|\frac{1}{n}\sum_{i=1}^{n}(\V_i^2 - \E[\V_i^2])\right|^{1+\omega/2} = o(1).\]
    By Markov's inequality,
    \begin{equation}
      \label{eq:Vi_4}
      \frac{1}{n}\sum_{i=1}^{n}(\V_i^2 - \E[\V_i^2]) = o_\P(1).
    \end{equation}
    Similarly, we have that
    \begin{equation}
      \label{eq:Vi_5}
      \frac{1}{n}\sum_{i=1}^{n}(\V_i - \E[\V_i]) = o_\P(1).
    \end{equation}
    In addition, \eqref{eq:Vi_moments} and H\"{o}lder's inequality imply that
    \[\frac{1}{n}\sum_{i=1}^{n}\E[\V_i^2] = O(1), \quad \frac{1}{n}\sum_{i=1}^{n}\E[\V_i] = O(1).\]
    As a result,
    \begin{equation}
      \label{eq:Vi_6}
      \frac{1}{n}\sum_{i=1}^{n}\V_i^2 = O_\P(1), \quad \frac{1}{n}\sum_{i=1}^{n}\V_i = O_\P(1).
    \end{equation}
    By Lemma \ref{lem:hVi_Vi}, \eqref{eq:Vi_6}, and Cauchy-Schwarz inequality,
    \begin{align}
      &\left|\frac{1}{n}\sum_{i=1}^{n}\hV_i^2 - \frac{1}{n}\sum_{i=1}^{n}\V_i^2\right| \le \frac{2}{n}\sum_{i=1}^{n}\V_i|\hV_i - \V_i| + \frac{1}{n}\sum_{i=1}^{n}(\hV_i - \V_i)^2\nonumber\\
      & \le 2\sqrt{\frac{1}{n}\sum_{i=1}^{n}\V_i^2}\sqrt{\frac{1}{n}\sum_{i=1}^{n}(\hV_i - \V_i)^2} + \frac{1}{n}\sum_{i=1}^{n}(\hV_i - \V_i)^2 = o_\P(1).\label{eq:Vi_7}
    \end{align}
    Similarly,
    \begin{equation}
      \label{eq:Vi_8}
      \left|\lb\frac{1}{n}\sum_{i=1}^{n}\hV_i\rb^2 - \lb\frac{1}{n}\sum_{i=1}^{n}\V_i\rb^2\right| = o_\P(1).
    \end{equation}
    By \eqref{eq:Vi_4}, \eqref{eq:Vi_5}, and \eqref{eq:Vi_6},
    \begin{align}
      \left|\frac{1}{n}\sum_{i=1}^{n}\V_i^2 - \lb\frac{1}{n}\sum_{i=1}^{n}\V_i\rb^2 - \sigma^2_{+}\right| = o_\P(1).\label{eq:Vi_9}
    \end{align}

  Putting \eqref{eq:Vi_7} - \eqref{eq:Vi_9} together, we complete the proof of \eqref{eq:goal_sigma2}.
\end{proof}

\subsection{Inference with deterministic $(\hat{\bfpi}_i, \hat{\bfm}_{i}, \hbftau_i)$ and dependent assignments across units}\label{subapp:design_based_dependent}
Recall Theorem \ref{thm:doubly_robust_expansion} that
  \[\hatdenom\cdot \sqrt{n}(\hat{\tau} - \tau^{*}) = \frac{1}{\sqrt{n}}\sum_{i=1}^{n}(\V_i - \E[\V_i]) + o_\P(1).\]
  This is true even when $(\Y_i(1), \Y_i(0), \X_i)$ are dependent as long as Assumption \ref{as:regularity_general} holds. If $\V_i$'s are observable, a valid confidence interval for $\tau^{*}$ can be derived if the distribution of $(1/\sqrt{n})\sum_{i=1}^{n}(\V_i - \E[\V_i])$ can be approximated. Specifically, assume that
  \begin{equation}
    \label{eq:CLT_general}
    \frac{(1/\sqrt{n})\sum_{i=1}^{n}(\V_i - \E[\V_i])}{\sqrt{(1/n)\Var[\sum_{i=1}^{n}\V_i]}}\stackrel{d}{\rightarrow} N(0, 1),
  \end{equation}
  and there exists a conservative oracle variance estimator $\hat{\sigma}^{*2}$ based on $(\V_1, \ldots, \V_n)$ in the sense that
  \begin{equation}
    \label{eq:hatsigma2_general}
    \frac{(1/n)\Var[\sum_{i=1}^{n}\V_i]}{\hat{\sigma}^{*2}}\le 1 + o_\P(1).
  \end{equation}
  Then, $[\hat{\tau} - z_{1-\alpha/2}\hat{\sigma}^{*} / \sqrt{n}\hatdenom, \hat{\tau} + z_{1-\alpha/2}\hat{\sigma}^{*} / \sqrt{n}\hatdenom]$ is an asymptotically valid confidence interval for $\tau^{*}$. Of course, this interval cannot be computed in practice because $\V_i$ is unobserved due to the unknown quantities including $\E[\stheta], \E[\sw], \E[\sy], \E[\sww]$, $\E[\swy]$, and $\tau^{*}$. A natural variance estimator can be obtained by replacing $\bV\triangleq (\V_1, \ldots, \V_n)$ with $\hat{\bV}\triangleq (\hV_1, \ldots, \hV_n)$ in $\hat{\sigma}^{*2}$. The following theorem makes this intuition rigorous for generic quadratic oracle variance estimators.
  \begin{theorem}\label{thm:design_based_generic_inference}
    Suppose there exists an oracle variance estimator $\hat{\sigma}^{*2}$ such that
    \begin{enumerate}[(i)]
    \item $\hat{\sigma}^{*2} = \bV^\tran \bA_{n}\bV / n$ for some positive semidefinite (and potentially random) matrix $\bA_{n}$ with $\|\bA_{n}\|_{\op} = O_\P(1)$;
    \item $\hat{\sigma}^{*2}$ is conservative in the sense that, for every $\eta$ in a neighborhood of $\alpha$,
      \[\lim_{n\rightarrow\infty}\P\lb \bigg|\frac{(1/\sqrt{n})\sum_{i=1}^{n}(\V_i - \E[\V_i])}{\hat{\sigma}^{*}}\bigg|\ge z_{1 - \eta / 2}\rb\le \eta;\]
    \item $1 / \hat{\sigma}^{*2} = O_\P(1)$.
  \end{enumerate}
  Let $\hat{\sigma}^2 = \hat{\bV}^\tran \bA_n \hat{\bV} / n$ and
  \[\hat{C}_{1 - \alpha} = [\hat{\tau} - z_{1 - \alpha / 2}\hat{\sigma} / \sqrt{n}\hatdenom, \hat{\tau} + z_{1 - \alpha / 2}\hat{\sigma} / \sqrt{n}\hatdenom].\]
  Under Assumptions \ref{as:latent_ign_app}, \ref{as:overlap_general}, and \ref{as:regularity_general} with $q > 1/2$, if $\bfPi$ be an solution of the \DATEeq ~\eqref{eq:DATE_equation} and $\bar{\Delta}_{\pi}\bar{\Delta}_{y} = o(1 / \sqrt{n})$,
  \[\liminf_{n\rightarrow \infty}\P\lb \tau^{*}\in \hat{C}_{1 - \alpha}\rb\ge 1 - \alpha.\]
  \end{theorem}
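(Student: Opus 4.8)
The plan is to reduce the statement to the oracle conservativeness property (ii) in three moves: (a) import the asymptotic linear expansion already established; (b) show that the feasible variance $\hat{\sigma}^2$ differs from the oracle $\hat{\sigma}^{*2}$ by $o_\P(1)$; and (c) a Slutsky-type $\eps$-argument that absorbs both the expansion remainder and the multiplicative slack $\hat{\sigma}/\hat{\sigma}^{*}$ into the normal quantile so that only (ii) is ultimately invoked. For (a): since $\bfPi$ solves the \DATEeq, $q > 1/2$, and $\bar{\Delta}_{\pi}\bar{\Delta}_{y} = o(1/\sqrt n)$, Theorem \ref{thm:doubly_robust_expansion} gives
\[\hatdenom\cdot\sqrt n(\hat{\tau} - \tau^{*}) = S_n + R_n, \quad S_n \triangleq \frac{1}{\sqrt n}\sum_{i=1}^{n}(\V_i - \E[\V_i]), \quad R_n = o_\P(1).\]
By Lemma \ref{lem:denom}, $\hatdenom$ is bounded away from $0$ with probability tending to one, so $\hat{C}_{1-\alpha}$ is well defined and the coverage event $\{\tau^{*}\in\hat{C}_{1-\alpha}\}$ coincides with $\{|S_n + R_n|\le z_{1-\alpha/2}\hat{\sigma}\}$.

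For (b), write $\bm{r} = \hat{\bV} - \bV$. Lemma \ref{lem:hVi_Vi} (applicable since $\bar{\Delta}_{\pi}\bar{\Delta}_{y} = o(1)$) gives $\|\bm{r}\|_2^2 = \sum_{i}(\hV_i - \V_i)^2 = o_\P(n)$. The variance bound \eqref{eq:var_Vi} together with Lemma \ref{lem:mean_var} and Assumption \ref{as:regularity_general} yields $\frac1n\sum_{i}\E[\V_i^2] = O(1)$, and hence, applying Markov's inequality to the non-negative average $\frac1n\sum_i\V_i^2$, $\|\bV\|_2^2 = O_\P(n)$. Expanding the quadratic form,
\[n(\hat{\sigma}^2 - \hat{\sigma}^{*2}) = 2\bV^{\tran}\bA_n\bm{r} + \bm{r}^{\tran}\bA_n\bm{r},\]
and $\|\bA_n\|_{\op} = O_\P(1)$ controls both summands by Cauchy--Schwarz: $|\bm{r}^{\tran}\bA_n\bm{r}|\le\|\bA_n\|_{\op}\|\bm{r}\|_2^2 = o_\P(n)$ and $|\bV^{\tran}\bA_n\bm{r}|\le\|\bA_n\|_{\op}\|\bV\|_2\|\bm{r}\|_2 = O_\P(1)\,O_\P(\sqrt n)\,o_\P(\sqrt n) = o_\P(n)$. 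Dividing by $n$ gives $\hat{\sigma}^2 = \hat{\sigma}^{*2} + o_\P(1)$; combined with (iii) and the continuous mapping theorem, $\hat{\sigma}/\hat{\sigma}^{*}\stackrel{p}{\rightarrow}1$.

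For (c), fix $\eps\in(0,1)$ and work on the event $\{|\hat{\sigma}/\hat{\sigma}^{*}-1|\le\eps\}\cap\{|R_n|/\hat{\sigma}^{*}\le\eps\}$, whose probability tends to one (the second factor by $R_n = o_\P(1)$ and (iii)). On this event the coverage event is implied by $|S_n|/\hat{\sigma}^{*}\le z_{1-\alpha/2}(1-\eps)-\eps$. Choose $\eta(\eps)$ with $z_{1-\eta(\eps)/2} = z_{1-\alpha/2}(1-\eps)-\eps$; for $\eps$ small this $\eta(\eps)$ lies in the neighborhood of $\alpha$ on which (ii) is posited, so (ii) gives $\liminf_{n\rightarrow\infty}\P(|S_n|/\hat{\sigma}^{*}\le z_{1-\eta(\eps)/2})\ge 1-\eta(\eps)$, hence $\liminf_{n\rightarrow\infty}\P(\tau^{*}\in\hat{C}_{1-\alpha})\ge 1-\eta(\eps)$. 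Letting $\eps\downarrow0$ and using continuity of the standard normal quantile function ($\eta(\eps)\to\alpha$) completes the proof.

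The main obstacle is step (c): one must carefully route the expansion remainder $R_n$ and the ratio $\hat{\sigma}/\hat{\sigma}^{*}$ through a quantile perturbation so that the argument relies only on (ii), which is merely an asymptotic upper bound valid on a neighborhood of $\alpha$ rather than an exact distributional statement; a secondary subtlety inside step (b) is establishing $\frac1n\sum_i\V_i^2 = O_\P(1)$ under the general cross-unit dependence rather than the independent case, which is precisely where Assumption \ref{as:regularity_general} and Lemma \ref{lem:mean_var} are needed.
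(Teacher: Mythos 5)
Your proposal is correct and follows essentially the same route as the paper's proof: the asymptotic linear expansion from Theorem \ref{thm:doubly_robust_expansion}, Lemma \ref{lem:hVi_Vi} combined with $\|\bA_n\|_{\op} = O_\P(1)$ and condition (iii) to obtain $\hat{\sigma}/\hat{\sigma}^{*}\stackrel{p}{\rightarrow}1$, and the quantile-perturbation argument with $\eta(\eps)\rightarrow\alpha$. The only cosmetic differences are that you compare $\hat{\sigma}^2$ and $\hat{\sigma}^{*2}$ additively via Cauchy--Schwarz (which requires the extra, but readily available, bound $\frac{1}{n}\sum_{i}\E[\V_i^2] = O(1)$), whereas the paper uses a multiplicative sandwich exploiting positive semidefiniteness of $\bA_n$ and thereby avoids that moment bound, and that you absorb the expansion remainder $R_n$ explicitly into the quantile slack, a step the paper leaves implicit.
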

  \begin{proof}
    By Lemma \ref{lem:hVi_Vi},
    \[\frac{1}{n}\|\hat{\bV} - \bV\|_{2}^2 = \frac{1}{n}\sum_{i=1}^{n}(\hV_i - \V_i)^2 = o_\P(1).\]
    Since $\bA_{n}$ is positive semidefinite, for any $\eps \in (0, 1)$,
    \[(1 - \eps)\hat{\sigma}^{*2} - \lb \frac{1}{\eps} - 1\rb \frac{1}{n}(\hat{\bV} - \bV)^\tran \bA_n (\hat{\bV} - \bV)\le \hat{\sigma}^{2} \le (1 + \eps)\hat{\sigma}^{*2} + \lb \frac{1}{\eps} + 1\rb \frac{1}{n}(\hat{\bV} - \bV)^\tran \bA_n (\hat{\bV} - \bV)\]
    Thus, for any $\eps \in (0, 1)$,
    \[\P\lb \hat{\sigma}^{2}\not\in [(1 - \eps)\hat{\sigma}^{*2}, (1 + \eps)\hat{\sigma}^{*2}]\rb = o(1).\]
    By condition (iii), the above result implies that
    \begin{equation}
      \label{eq:sigma_ratio}
      \left|\frac{\hat{\sigma}}{\hat{\sigma}^{*}} - 1\right| = o_\P(1).
    \end{equation}
    By Theorem \ref{thm:doubly_robust_expansion},
    \[\hatdenom\cdot \sqrt{n}(\hat{\tau} - \tau^{*}) = \frac{1}{\sqrt{n}}\sum_{i=1}^{n}(\V_i - \E[\V_i]) + o_\P(1).\]
    It remains to show that
    \[\lim_{n\rightarrow\infty}\P\lb \bigg|\frac{(1/\sqrt{n})\sum_{i=1}^{n}(\V_i - \E[\V_i])}{\hat{\sigma}}\bigg|\ge z_{1 - \alpha / 2}\rb\le \alpha.\]
    Let $\eta(\eps)$ be the quantity such that $z_{1 - \eta(\eps) / 2} = z_{1 - \alpha / 2}\cdot(1 - \eps)$. For any sufficiently small $\eps$ such that $\eta(\eps)$ lies in the neighborhood of $\alpha$ in condition (ii), 
    \begin{align*}
      &\P\lb \bigg|\frac{(1/\sqrt{n})\sum_{i=1}^{n}(\V_i - \E[\V_i])}{\hat{\sigma}}\bigg|\ge z_{1 - \alpha / 2}\rb\\
      & = \P\lb \bigg|\frac{(1/\sqrt{n})\sum_{i=1}^{n}(\V_i - \E[\V_i])}{\hat{\sigma}^{*}}\bigg|\ge z_{1 - \alpha / 2}\cdot\frac{\hat{\sigma}}{\hat{\sigma}^{*}}\rb\\
      & \le \P\lb \bigg|\frac{(1/\sqrt{n})\sum_{i=1}^{n}(\V_i - \E[\V_i])}{\hat{\sigma}^{*}}\bigg|\ge z_{1 - \eta(\eps) / 2}\rb + \P\lb \frac{\hat{\sigma}}{\hat{\sigma}^{*}}\le 1 - \eps\rb.
    \end{align*}
    By \eqref{eq:sigma_ratio}, when $n$ tends to infinity,
    \[\lim_{n\rightarrow\infty}\P\lb \bigg|\frac{(1/\sqrt{n})\sum_{i=1}^{n}(\V_i - \E[\V_i])}{\hat{\sigma}}\bigg|\ge z_{1 - \alpha / 2}\rb\le \eta(\eps).\]
    The proof is completed by letting $\eps\rightarrow 0$ and noting that $\lim_{\eps\rightarrow 0}\eta(\eps) = \alpha$.
  \end{proof}

  When $\W_i$'s are independent,
  \[\hat{\sigma}^{*2} = \frac{1}{n-1}\sum_{i=1}^{n}(\V_i - \bar{\V})^2.\]
  Thus, $\bA_{n} = (n/(n-1))(I_{n} - \one_{n}\one_{n}^{T} / n)$. Clearly, the condition (i) is satisfied because $\|\bA_{n}\|_{\op} = n / (n - 1)$. Under the assumptions in Theorem \ref{thm:doubly_robust_inference_deterministic}, the condition (ii) is satisfied. Moreover, we have shown that $\hat{\sigma}^{*2}$ converges to $\sigma_{+}^2\ge \sigma^2 > 0$, and thus the condition (iii) is satisfied. Therefore, Theorem \ref{thm:doubly_robust_inference_deterministic} can be implied by Theorem \ref{thm:design_based_generic_inference}.

  When $\V_i$'s are observed, the variance estimators are quadratic under nearly all types of dependent assignment mechanisms. With fixed potential outcomes, Theorem \ref{thm:design_based_generic_inference} applies to completely randomized experiments \citep{hoeffding1951combinatorial, li2017general}, blocked and matched experiments \citep{pashley2021insights}, two-stage randomized experiments \citep{ohlsson1989asymptotic}, and so on. Below, we prove the results for completely randomized experiments with fixed potential outcomes to illustrate how to apply Theorem \ref{thm:design_based_generic_inference}. The notation is chosen to mimic Theorem 5 and Proposition 3 in \cite{li2017general}. 

  \begin{theorem}\label{thm:design_based_without_replacement}
 Assume that $(Y_{it}(1), Y_{it}(0))$ are fixed, and $\hat{\bfpi}_i = \bfpi_i$ as in Section \ref{sec:design_based} (while $(\hat{m}_{it}, \htau_{it})$ are allowed to be non-zero). Consider a completely randomized experiments where the treatment assignments are sampled without replacement from $Q$ possible assignments $\{\w_{[1]}, \ldots, \w_{[Q]}\}$ with $n_{q}$ units assigned $\w_{[q]}$. Let $\bfPi$ be a solution of the \DATEeq ~\eqref{eq:DATE_equation} with support $\{\w_{[1]}, \ldots, \w_{[Q]}\}$, and $\V_i(q)$ be the ``potential outcome'' for $\V_i$ where $(Y_{it}, W_{it})$ is replaced by $(Y_{it}(\w_{[q], t}), \w_{[q], t})$, i.e.,
    \begin{align*}
      \V_i(q) &= \frac{\bfPi(\w_{[q]})}{\hat{\bfpi}_i(\w_{[q]})}\Bigg\{\lb\E[\swy] - \tau^{*}\E[\sww]\rb - \lb \E[\sy] - \tau^{*}\E[\sw]\rb ^\tran\J \w_{[q]} \\
           & \,\, + \E[\stheta]\w_{[q]}^\tran \J \lb\td{\Y}_i(q) - \tau^{*}\w_{[q]}\rb - \E[\sw]^\tran\J\lb \td{\Y}_{i}(q) - \tau^{*}\w_{[q]}\rb \Bigg\},
    \end{align*}
    and $\td{\Y}_i(q) = \big(Y_{i1}(\w_{[q], 1}) - \hat{m}_{i1} - \w_{[q], 1}\htau_{i1}, \ldots, Y_{iT}(\w_{[q], T}) - \hat{m}_{iT} - \w_{[q], 1}\htau_{iT}\big)$. Further, for any $q, r = 1, \ldots, Q$, let
    \[S_{q}^2 = \frac{1}{n-1}\sum_{i=1}^{n}(\V_i(q) - \bar{\V}(q))^2, \quad S_{qr} = \frac{1}{n-1}\sum_{i=1}^{n}(\V_i(q) - \bar{\V}(q))(\V_i(r) - \bar{\V}(r)),\]
    where $\bar{\V}(q) = (1/n)\sum_{i=1}^{n}\V_i(q)$. Define the variance estimate $\hat{\sigma}^2$ as
    \[\hat{\sigma}^2 = \sum_{q=1}^{Q}\frac{n_{q}}{n}s_{q}^2, \quad \text{where }s_{q}^2 = \frac{1}{n_{q} - 1}\sum_{i: \w_{i} = \w_{[q]}}(\hat{\V}_i - \hat{\bar{\V}}(q))^2, \,\, \hat{\bar{\V}}(q) = \frac{1}{n_{q}}\sum_{i: \w_{i} = \w_{[q]}}\hat{\V}_i.\]
    Further, define the confidence interval as
    \[  \hat{C}_{1 - \alpha} = [\hat{\tau} - z_{1 - \alpha / 2}\hat{\sigma} / \sqrt{n}\hatdenom, \hat{\tau} + z_{1 - \alpha / 2}\hat{\sigma} / \sqrt{n}\hatdenom].\]
    Assume that
    \begin{enumerate}[(a)]
    \item $Q = O(1)$ and $n_{q} / n\rightarrow \pi_{q}$ for some constant $\pi_{q} > 0$;
    \item for any $q, r = 1, \ldots, Q$, $S_{q}^2$ and $S_{qr}$ have limiting values $S_{q}^{*2}, S_{qr}^{*}$;
    \item there exists a constant $c_{\tau} > 0$ such that $\sum_{q=1}^{Q}\pi_{q}S_{q}^{*2} > c_{\tau}$;
    \item there exists a constant $M<\infty$ such that $\max_{i,q}\{\|\td{\Y}_{i}(q)\|_{2}\} <M$.
    \end{enumerate}
    Then,
  \[\liminf_{n\rightarrow \infty}\P\lb \tau^{*}\in \hat{C}_{1 - \alpha}\rb\ge 1 - \alpha.\]
\end{theorem}

\begin{proof}
  By definition, for any $i\not= j \in [n]$ and $q\not = r \in [Q]$,
  \[\P(\W_i = \w_{[q]}) = \frac{n_{q}}{n}, \quad \P(\W_i = \W_{j} = \w_{[q]}) = \frac{n_{q}(n_{q} - 1)}{n(n - 1)}, \quad \P(\W_i = \w_{[q]}, \W_{j} = \w_{[r]}) = \frac{n_{q}n_{r}}{n(n - 1)}.\]
  For any functions $f$ and $g$ on $[0, 1]^{T}$,
  \[\E[f(\W_i)] = \sum_{q=1}^{Q}\frac{n_{q}}{n}f(\w_{[q]}), \quad \E[g(\W_j)] = \sum_{q=1}^{Q}\frac{n_{q}}{n}g(\w_{[q]}),\]
  \[\E[f^2(\W_i)] = \sum_{q=1}^{Q}\frac{n_{q}}{n}f^2(\w_{[q]}), \quad \E[g^2(\W_j)] = \sum_{q=1}^{Q}\frac{n_{q}}{n}g^2(\w_{[q]}),\]
  and
  \[\E[f(\W_i)g(\W_i)] = \sum_{q=1}^{Q}\frac{n_{q}}{n}f(\w_{[q]})g(\w_{[q]}),\]
  \[\E[f(\W_i)g(\W_j)] = \sum_{q=1}^{Q}\frac{n_{q}(n_{q} - 1)}{n(n - 1)}f(\w_{[q]})g(\w_{[q]}) + \sum_{q\not=r}\frac{n_{q}n_{r}}{n(n - 1)}f(\w_{[q]})g(\w_{[r]}).\]
  As a result, for any $i\not= j$
  \begin{align*}
    &\Cov(f(\W_i), g(\W_j)) = \E[f(\W_i)g(\W_j)] - \E[f(\W_i)]\E[g(\W_j)]\\
    & = \sum_{q=1}^{Q}\lb\frac{n_{q}(n_{q} - 1)}{n(n - 1)} - \frac{n_{q}^2}{n^2}\rb f(\w_{[q]})g(\w_{[q]}) + \sum_{q\not=r}\lb\frac{n_{q}n_{r}}{n(n - 1)} - \frac{n_{q}n_{r}}{n^2}\rb f(\w_{[q]})g(\w_{[r]})\\
    & = \sum_{q=1}^{Q}-\frac{n_{q}(n - n_{q})}{n^2(n - 1)}f(\w_{[q]})g(\w_{[q]}) + \sum_{q\not=r}\frac{n_{q}n_{r}}{n^2(n - 1)} f(\w_{[q]})g(\w_{[r]})\\
    & = -\frac{1}{n-1}\sum_{q=1}^{Q}\frac{n_{q}}{n}f(\w_{[q]})g(\w_{[q]}) + \frac{1}{n-1}\lb\sum_{q=1}^{Q}\frac{n_{q}}{n}f(\w_{[q]})\rb \lb\sum_{q=1}^{Q}\frac{n_{q}}{n}g(\w_{[r]})\rb\\
    & = -\frac{1}{n-1}\lb \E[f(\W_i)g(\W_i)] - \E[f(\W_i)]\E[g(\W_i)]\rb\\
    & = -\frac{1}{n-1}\Cov(f(\W_i), g(\W_i))
  \end{align*}
  By Cauchy-Schwarz inequality,
  \begin{align*}
      & |\Cov(f(\W_i), g(\W_j))|\le \frac{1}{n-1}|\Cov(f(\W_i), g(\W_i))|\\
      & \le \frac{1}{n-1}\sqrt{\Var[f(\W_i)]\Var[g(\W_i)]} = \frac{1}{n-1}\sqrt{\Var[f(\W_i)]\Var[g(\W_j)]}.
  \end{align*}
  This implies that
  \[\rho_{ij}\le \frac{1}{n-1}\Longrightarrow \rho_i \le 2.\]
  It is then clear that Assumption \ref{as:regularity_general} holds under the condition (d). Further, since $\hat{\bfpi}_i(\w_{[q]}) = \bfpi_i(\w_{[q]}) = n_q / n$, the condition (a) implies Assumption \ref{as:overlap_general} and that $\bar{\Delta}_{\pi}\bar{\Delta}_{y} = 0$ . On the other hand, Assumption \ref{as:latent_ign_app} holds because $\W_i$ is completely randomized. Therefore, it remains to check the condition (i) - (iii) in Theorem \ref{thm:design_based_generic_inference} with
  \[\hat{\sigma}^{*2} = \sum_{q=1}^{Q}\frac{n_{q}}{n}s_{q}^{*2}, \quad \text{where }s_{q}^{*2} = \frac{1}{n_{q} - 1}\sum_{i: \w_{i} = \w_{[q]}}(\V_i - \bar{\V}(q))^2, \,\, \bar{\V}(q) = \frac{1}{n_{q}}\sum_{i: \w_{i} = \w_{[q]}}\V_i.\]
  In this case, $\bA_{n}$ is a block-diagonal matrix with
  \[\bA_{n, \mathcal{I}_{q}, \mathcal{I}_{q}} = \frac{n_q}{n_q - 1}\lb I_{n_q} -  \frac{\one_{n_q}\one_{n_q}^\tran}{n_q}\rb,\]
  where $\mathcal{I}_{q} = \{i: \W_i = \w_{[q]}\}$. As a result,
  \[\|\bA_{n}\|_{\op} = \max_{q}\frac{n_q}{n_q - 1} = O(1).\]
  Thus, the condition (i) holds. The condition (ii) is implied by Proposition 3 in \cite{li2017general} and the condition (iii) is implied by the condition (c). The theorem is then implied by Theorem \ref{thm:design_based_generic_inference}.
\end{proof}

\subsection{Inference of RIPW estimators when $\bar{\delta}_{\pi}\bar{\delta}_{y}\neq o(1/\sqrt{n})$}\label{subapp:global_misspecification}

In this section, we study the asymptotic inference on the RIPW estimator when one of the models is globally misspecified. Doubly robust inference is hard even for cross-sectional data \citep[e.g.][]{benkeser2017doubly}. Here, we focus on a practically relevant case where the researcher fits parametric models for both assignments and outcomes. Specifically, we consider a parametric family $f_{\bfkappa}(\w, \X_i)$ for $\bfpi_i(\w)$, $g_{t, \bfphi_t}(\X_i)$ for $m_{it}$, and $h_{t, \bfpsi_t}(\X_i)$ for $\nu_{it}$, where $\bfkappa\in \R^d, \bfphi_t\in \R^{d_{\phi, t}}, \bfpsi_t\in \R^{d_{\psi, t}}$ are parameter vectors to estimate. We impose the standard regularity conditions on these parametric families. 
\begin{assumption}\label{ass:parametric_regularity}
\begin{enumerate}[(a)]
\item $1/f_{\bfkappa}(\w; \mathbf{x})$ is bounded away from zero uniformly over $(\bfkappa, \w, \mathbf{x})$;
\item $\|\nabla_{\bfkappa}^2 f_{\bfkappa}(\w; \mathbf{x})\|$ is uniformly bounded over $(\bfkappa, \w, \mathbf{x})$;
\item $\|\nabla_{\bfphi_t}^2 g_{t, \bfphi_t}( \mathbf{x})\| + \|\nabla_{\bfpsi_t}^2 h_{t, \bfpsi_t}(\mathbf{x})\|$ is uniformly bounded over $(\bfphi_t, \bfpsi_t, \mathbf{x}, t)$.
\end{enumerate}
\end{assumption}

To ease notation, we denote by $\bfphi$ (resp. $\bfpsi$) the concatenation of $\bfphi_1, \ldots, \bfphi_T$ (resp. $\bfpsi_1, \ldots, \bfpsi_T$) and by $\bftheta$ the concatenation of $\bfkappa, \bfphi, \bfpsi$. We assume that $\hat{\bftheta}$ has an asymptotically linear expansion:
 \begin{assumption}\label{ass:theta_asym_linear}
 For some pseudo parameter $\bftheta'$,
 \begin{equation*}
\hat{\bftheta} = \bftheta' + \frac{1}{n}\sum_{i=1}^{n}\bfC_i + o_\P(n^{-1/2}),
\end{equation*}
where $\bfC_i$ is a function of $(\Y_i, \W_i, \X_i)$ that has zero mean and bounded second moment. 
\end{assumption}
When $(\Y_i(1), \Y_i(0), \W_i)$ are independent or weakly dependent, Assumption \ref{ass:theta_asym_linear} holds under standard regularity conditions \citep{fan2003nonlinear, wooldridge2010econometric}
 with root-$n$ rate. In particular, $\|\hat{\bftheta} - \bftheta\| = O_\P(1/\sqrt{n})$.

Let $\bfkappa', \bfphi_t', \bfpsi_t'$ denote the corresponding elements of $\bftheta'$. Further let 
\begin{equation}\label{eq:oracle_parametric}
\bfpi_i'(\w) = f_{\bfkappa'}(\w, \X_i), \quad m_{it}' = g_{t, \bfphi_t'}(\X_i), \quad \nu_{it}' = h_{t, \bfpsi_t'}(\X_i).
\end{equation}
We say the assignment model is correctly specified if 
\begin{equation}\label{eq:correct_parametric_assignment}
\bfpi_i(\w) = \bfpi_i'(\w),
\end{equation}
and the outcome model is correctly specified if 
\begin{equation}\label{eq:correct_parametric_outcome}
\bfm_{i} = \bfm_{i}', \quad \nhbftau_{i} = \nhbftau_{i}'.
\end{equation}
When one of these two models is globally misspecified and the other one is correctly specified, 
\[\bar{\delta}_\pi \bar{\delta}_y = O(1/\sqrt{n}).\]
Thus, Theorem \ref{thm:dr_linear_expansion} and \ref{thm:dr_coverage} do not apply. Nevertheless, we prove that the RIPW estimator remains to be asymptotically linear though an additional term is added to $\V_i$ to account for the estimation uncertainty of $\hat{\bftheta}$ under the assumption that units are independent. We leave the general dependent case for future research. 

Since the result is very complicated, we first define several quantities. Let $\stheta', \sww', \swy', \sw', \sy'$, $\Theta_i', \td{Y}_{it}'$ be the counterparts of $\stheta, \sww, \swy, \sw$, $\sy, \Theta_i, \td{Y}_{it}$ with $(\hat{\bfpi}_i, \hat{\bfm}_i, \hat{\nhbftau}_i)$ replaced by $(\bfpi_i', \bfm_i', \nhbftau_i')$. Further let
\[\L_i = \nabla_{\bfkappa} \log f_{\bfkappa'}(\W_i, X_i), \quad \G_i = \begin{bmatrix}
\nabla_{\bfphi_1}g_{1, \bfphi_1'}(\X_i) & 0 & \cdots & 0\\
0 & \nabla_{\bfphi_2}g_{2, \bfphi_2'}(\X_i) & \cdots & 0\\
\vdots & \vdots & \vdots & \vdots\\
0 & 0 & \cdots & \nabla_{\bfphi_T}g_{T, \bfphi_T'}(\X_i)
\end{bmatrix},\]
and
\[\H_i = \begin{bmatrix}
\nabla_{\bfpsi_1}h_{1, \bfpsi_1'}(\X_i) & 0 & \cdots & 0\\
0 & \nabla_{\bfpsi_2}h_{2, \bfpsi_2'}(\X_i) & \cdots & 0\\
\vdots & \vdots & \vdots & \vdots\\
0 & 0 & \cdots & \nabla_{\bfpsi_T}h_{T, \bfpsi_T'}(\X_i)
\end{bmatrix}.\]
For any vector $\mathbf{v}$ that has the same dimension as $\bftheta$, let $\mathcal{P}_{\bfkappa}(\mathbf{v}), \mathcal{P}_{\bfphi}(\mathbf{v}), \mathcal{P}_{\bfpsi}(\mathbf{v})$ denote the subvectors corresponding to the positions of $\bfkappa, \bfphi, \bfpsi$ in $\bftheta$. We define $\bfA$ and $\bfB$ as two vectors such that 
\begin{align*}
\mathcal{P}_{\bfkappa}(\bfA) & = \frac{1}{n}\sum_{i=1}^{n}\Big\{-\E[\Theta_i'\L_i]\E[\swy'] - \E[\Theta_i'\L_i\W_i^\tran \J \td{\Y}_i]\E[\stheta'] +  \E[\Theta_i'\L_i\td{\Y}_i^\tran \J]\E[\sw']\\
& \qquad \qquad \quad+ \E[\Theta_i'\L_i\W_i^\tran \J]\E[\sy']\Big\}\\
\mathcal{P}_{\bfphi}(\bfA) & = \frac{1}{n}\sum_{i=1}^{n} \left\{\E[\Theta_i'\G_i\J]\E[\sw'] - \E[\Theta_i'\G_i\J \W_i]\E[\stheta']\right\}\\
\mathcal{P}_{\bfpsi}(\bfA) & = \frac{1}{n}\sum_{i=1}^{n} \left\{\E[\Theta_i'\H_i\diag(\W_i)\J]\E[\sw'] - \E[\Theta_i'\H_i\diag(\W_i)\J \W_i]\E[\stheta']\right\},
\end{align*}
and 
\begin{align*}
\mathcal{P}_{\bfkappa}(\bfB) & =\frac{1}{n}\sum_{i=1}^{n}\lb 2\E[\Theta_i'\L_i\W_i^\tran \J \W_i]\E[\sw] - \E[\Theta_i'\L_i]\E[\sww'] - \E[\Theta_i'\L_i\W_i^\tran \J \W_i]\E[\stheta']\rb\\
\mathcal{P}_{\bfphi}(\bfB) &= \mathcal{P}_{\bfpsi}(\bfB) = 0.
\end{align*}

\begin{theorem}\label{thm:parametric}
Assume that $(\Y_i(1), \Y_i(0), \W_i, \X_i)$ are independent. In the setting of Theorem \ref{thm:dr_linear_expansion}, under Assumptions \ref{ass:parametric_regularity} and \ref{ass:theta_asym_linear}, 
  \[\hatdenom\cdot \sqrt{n}(\hat{\tau}(\bfPi) - \tau^{*}) = \frac{1}{\sqrt{n}}\sum_{i=1}^{n}(\V'_{i} + \cU_i - \E[\V'_{i}]) + o_\P(1/\sqrt{n}),\]
  if either the assignment model or the outcome model is correctly specified. Above, $\V'_i$ is defined as in Theorem \ref{thm:dr_linear_expansion} with $(\hat{\bfpi}_i, \hat{\bfm}_i, \hbftau_i)$ replaced by $(\bfpi_i', \bfm_i', \nhbftau_i')$ and 
  \begin{align*}
  \cU_i &= \left\langle \bfC_i, \bfA - \bfB \tau^{*}\right\rangle, \quad \E[\cU_i] = 0,
  \end{align*}
  where $\bfA$ and $\bfB$ are defined above.
\end{theorem}

\begin{remark}
To make inference, we can replace $\bfA$, $\bfB$, and $\bfC_i$ by their plug-in estimates. It is straightforward, though tedious, to prove that the plug-in variance estimator is consistent. Importantly, this does not require the researcher to know which model is misspecified apriori.
\end{remark}

\begin{proof}
Write $\hat{\tau}$ for $\hat{\tau}(\bfPi)$. Further let 
\[\hat{\tau}' = \frac{\hatnumer'}{\hatdenom'}, \quad \hatnumer' = \swy'\stheta' - \sw^{'\tran}\sy', \quad \hatdenom' = \sww'\stheta' - \sw^{'\tran}\sw'.\]
We will show that 
\begin{equation}\label{eq:parametric_goal}
\hatnumer = \hatnumer'+ \left\langle\bfA, \hat{\bftheta} - \bftheta\right\rangle + o_\P\lb\frac{1}{\sqrt{n}}\rb, \quad \hatdenom = \hatdenom' + \left\langle \bfB, \hat{\bftheta} - \bftheta\right\rangle + o_\P\lb\frac{1}{\sqrt{n}}\rb.
\end{equation}
By definition and Assumption \eqref{as:regularity_general}, $\|\bfA\| + \|\bfB\| = O_\P(1)$. Under Assumption \ref{ass:theta_asym_linear}, $\|\hat{\bftheta} - \bftheta\| = O_\P(1)$. By Lemma \ref{lem:denom} and \eqref{eq:parametric_goal},
\[\hatdenom \cdot \sqrt{n}(\hat{\tau} - \hat{\tau}') = \left\langle\bfA  - \bfB \hat{\tau}', \sqrt{n}(\hat{\bftheta} - \bftheta)\right\rangle + o_\P\lb 1\rb.\]
It is easy to see that the assumptions of Theorem \ref{thm:doubly_robust_expansion} are implied by independence, Assumption \ref{as:bounded_outcomes_dr}, and Assumption \ref{ass:parametric_regularity}. By Theorem \ref{thm:doubly_robust_expansion}, 
\[\hat{\tau}' = \tau^{*} + o_\P(1).\]
Together with Assumption \ref{ass:theta_asym_linear}, it implies 
\begin{align*}
\hatdenom \cdot \sqrt{n}(\hat{\tau} - \hat{\tau}') &= \left\langle\bfA  - \bfB \tau^{*}, \sqrt{n}(\hat{\bftheta} - \bftheta)\right\rangle + o_\P\lb 1\rb = \frac{1}{\sqrt{n}}\sum_{i=1}^{n}\cU_i + o_\P(1).
\end{align*}
By Theorem \ref{thm:doubly_robust_expansion} again, 
\[\hatdenom\cdot \sqrt{n}(\hat{\tau}' - \tau^{*}) = \frac{1}{\sqrt{n}}\sum_{i=1}^{n}(\V'_i - \E[\V'_i]) + o_\P(1).\]
Combining the two pieces yields the desired result.

Now we turn to proving \eqref{eq:parametric_goal}. By Assumption \ref{ass:parametric_regularity}, 
\begin{equation}\label{eq:parametric_Theta_diff}
\Theta_i - \Theta_i' = \frac{\bfPi(\W_i)}{f_{\hat{\bfkappa}}(\W_i, \X_i)} - \frac{\bfPi(\W_i)}{f_{\bfkappa'}(\W_i, \X_i)} = -\Theta_i' \left\langle \L_i, \hat{\bfkappa} - \bfkappa'\right\rangle + O(1) \cdot \|\hat{\bftheta} - \bftheta\|^2.
\end{equation}
where the $O(1)$ terms are uniformly bounded across all units. By \eqref{eq:parametric_Theta_diff}, 
\begin{equation*}
\stheta - \stheta' = -\left\langle \frac{1}{n}\sum_{i=1}^{n}\Theta_i'\L_i, \hat{\bfkappa} - \bfkappa'\right\rangle + o_\P\lb \frac{1}{\sqrt{n}}\rb
\end{equation*}
Similar to Lemma \ref{lem:mean_var}, we can show 
\[\frac{1}{n}\sum_{i=1}^{n}\Theta_i'\L_i = \frac{1}{n}\sum_{i=1}^{n}\E[\Theta_i'\L_i] + O_\P\lb\frac{1}{\sqrt{n}}\rb.\]
Thus, 
\begin{equation}\label{eq:parametric_stheta_diff}
\stheta - \stheta' = -\left\langle \frac{1}{n}\sum_{i=1}^{n}\E[\Theta_i'\L_i], \hat{\bfkappa} - \bfkappa'\right\rangle + o_\P\lb \frac{1}{\sqrt{n}}\rb.
\end{equation}
Using the same argument, we can prove that 
\begin{equation}\label{eq:parametric_sww_diff}
\sww - \sww' = -\left\langle \frac{1}{n}\sum_{i=1}^{n}\E[\Theta_i'\L_i\W_i^\tran \J \W_i], \hat{\bfkappa} - \bfkappa'\right\rangle + o_\P\lb \frac{1}{\sqrt{n}}\rb,
\end{equation}
and 
\begin{equation}\label{eq:parametric_sw_diff}
\sw - \sw' = -\left\langle \frac{1}{n}\sum_{i=1}^{n}\E[\Theta_i'\L_i\W_i^\tran \J], \hat{\bfkappa} - \bfkappa'\right\rangle + o_\P\lb \frac{1}{\sqrt{n}}\rb,
\end{equation}
where we define $\langle A, b \rangle$ to be $A^\tran b$ for a matrix $A$ and vector $b$. In particular, 
\[|\stheta - \stheta'| + |\sww - \sww'| + \|\sw - \sw'\| = O_\P\lb\frac{1}{\sqrt{n}}\rb.\]
Putting \eqref{eq:parametric_stheta_diff} - \eqref{eq:parametric_sw_diff} and Lemma \ref{lem:mean_var} together, we have
\begin{align*}
&\hatdenom - \hatdenom' \\
&= \sww'(\stheta - \stheta') + \stheta'(\sww - \sww') - 2\sw^{'\tran}(\sw - \sw') + o_\P\lb \frac{1}{\sqrt{n}}\rb\\
& = \E[\sww'](\stheta - \stheta') + \E[\stheta'](\sww - \sww') - 2\E[\sw']^{\tran}(\sw - \sw') + o_\P\lb \frac{1}{\sqrt{n}}\rb\\
& = \left\langle \frac{1}{n}\sum_{i=1}^{n}\lb 2\E[\Theta_i'\L_i\W_i^\tran \J]\E[\sw] - \E[\Theta_i'\L_i]\E[\sww'] - \E[\Theta_i'\L_i\W_i^\tran \J \W_i]\E[\stheta']\rb, \hat{\bfkappa} - \bfkappa'\right\rangle\\
& \qquad + o_\P\lb \frac{1}{\sqrt{n}}\rb
\end{align*}
By definition of $\bfB$, 
\[\hatdenom - \hatdenom'  = \left\langle \bfB, \hat{\bftheta} - \bftheta\right\rangle + o_\P\lb \frac{1}{\sqrt{n}}\rb.\]
This proves the second part of \eqref{eq:parametric_goal}. 

To prove the first part of \eqref{eq:parametric_goal}, we first note that, similar to \eqref{eq:parametric_Theta_diff}, Assumption \ref{ass:parametric_regularity} implies 
\begin{equation}\label{eq:parametric_tdY_diff}
\td{\Y}_{i} - \td{\Y}_{i}' = -\left\langle \G_i, \hat{\bfphi} - \bfphi \right\rangle - \left\langle \H_i \diag(\W_i),  \hat{\bfpsi} - \bfpsi\right\rangle + O(1) \cdot \|\hat{\bftheta} - \bftheta\|^2,
\end{equation}
where the $O(1)$ terms are uniformly bounded across all units. 
Together with \eqref{eq:parametric_Theta_diff}, 
\begin{align*}
&\swy - \swy' \\
&= \frac{1}{n}\sum_{i=1}^{n}\Theta_i' \W_i^\tran \J (\td{\Y}_i - \td{\Y}_i') + \frac{1}{n}\sum_{i=1}^{n}(\Theta_i - \Theta_i') \W_i^\tran \J \td{\Y}_i' + o_\P\lb\frac{1}{\sqrt{n}}\rb\\
& = -\left\langle \frac{1}{n}\sum_{i=1}^{n} \Theta_i'\G_i\J\W_i, \hat{\bfphi} - \bfphi \right\rangle -  \left\langle  \frac{1}{n}\sum_{i=1}^{n} \Theta_i'\H_i \diag(\W_i)\J\W_i,  \hat{\bfpsi} - \bfpsi\right\rangle\\
& \qquad -\left\langle \frac{1}{n}\sum_{i=1}^{n}\Theta_i'\L_i\W_i^\tran \J \td{\Y}_i, \hat{\bfkappa} - \bfkappa\right\rangle + o_\P\lb\frac{1}{\sqrt{n}}\rb.
\end{align*}
Similar to \eqref{eq:parametric_stheta_diff} - \eqref{eq:parametric_sw_diff}, 
\begin{align}\label{eq:parametric_swy_diff}
\swy - \swy' &= -\left\langle \frac{1}{n}\sum_{i=1}^{n} \E[\Theta_i'\G_i\J\W_i], \hat{\bfphi} - \bfphi \right\rangle -  \left\langle  \frac{1}{n}\sum_{i=1}^{n} \E[\Theta_i'\H_i \diag(\W_i)\J\W_i],  \hat{\bfpsi} - \bfpsi\right\rangle\nonumber\\
& \qquad -\left\langle \frac{1}{n}\sum_{i=1}^{n}\E[\Theta_i'\L_i\W_i^\tran \J \td{\Y}_i], \hat{\bfkappa} - \bfkappa\right\rangle + o_\P\lb\frac{1}{\sqrt{n}}\rb.
\end{align}
Using the same argument, we can show 
\begin{align}\label{eq:parametric_sy_diff}
\sy - \sy' &= -\left\langle \frac{1}{n}\sum_{i=1}^{n} \E[\Theta_i'\G_i\J], \hat{\bfphi} - \bfphi \right\rangle -  \left\langle  \frac{1}{n}\sum_{i=1}^{n} \E[\Theta_i' \H_i \diag(\W_i)\J],  \hat{\bfpsi} - \bfpsi\right\rangle\nonumber\\
& \qquad -\left\langle \frac{1}{n}\sum_{i=1}^{n}\E[\Theta_i'\L_i\td{\Y}_i^\tran \J], \hat{\bfkappa} - \bfkappa\right\rangle + o_\P\lb\frac{1}{\sqrt{n}}\rb.
\end{align}
As a result,
\begin{align*}
&\hatnumer - \hatnumer' \\
& = \swy'(\stheta - \stheta') + \stheta'(\swy - \swy') - \sw^{'\tran} (\sy - \sy') - \sy^{'\tran} (\sw - \sw') + o_\P\lb\frac{1}{\sqrt{n}}\rb\\
& = \E[\swy'](\stheta - \stheta') + \E[\stheta'](\swy - \swy') - \E[\sw']^\tran (\sy - \sy') - \E[\sy']^\tran (\sw - \sw') + o_\P\lb\frac{1}{\sqrt{n}}\rb\\
& = \Bigg\langle \frac{1}{n}\sum_{i=1}^{n}\Big\{-\E[\Theta_i'\L_i]\E[\swy'] - \E[\Theta_i'\L_i\W_i^\tran \J \td{\Y}_i]\E[\stheta'] +  \E[\Theta_i'\L_i\td{\Y}_i^\tran \J]\E[\sw']\\
& \qquad \qquad \quad + \E[\Theta_i'\L_i\W_i^\tran \J]\E[\sy']\Big\}, \hat{\bfkappa} - \bfkappa\Bigg\rangle \\
& \qquad + \left\langle \frac{1}{n}\sum_{i=1}^{n} \left\{\E[\Theta_i'\G_i\J]\E[\sw'] - \E[\Theta_i'\G_i\J \W_i]\E[\stheta']\right\}, \hat{\bfphi} - \bfphi\right\rangle\\
& \qquad + \left\langle \frac{1}{n}\sum_{i=1}^{n} \left\{\E[\Theta_i'\H_i\diag(\W_i)\J]\E[\sw'] - \E[\Theta_i'\H_i\diag(\W_i)\J \W_i]\E[\stheta']\right\}, \hat{\bfpsi} - \bfpsi\right\rangle\\
& \qquad + o_\P\lb\frac{1}{\sqrt{n}}\rb.
\end{align*}
By definition of $\bfA$, 
\[\hatnumer - \hatnumer' = \left\langle \bfA, \hat{\bftheta} - \bftheta\right\rangle + o_\P\lb\frac{1}{\sqrt{n}}\rb.\]

\end{proof}

\subsection{Proof of Proposition \ref{prop:effective_xi}: induced weights are non-negative}\label{subapp:proof_effective_xi}
\begin{proof}
  Let $\bfomega = \EPi[\W]$. Then 
  \begin{align*}
      &\EPi[\diag(\W)\J (\W - \EPi[\W])]\\ 
      & = \EPi[\diag(\W)(\W - \EPi[\W])] - \frac{1}{T}\EPi[\diag(\W)\one_{T}\one_{T}^\tran (\W - \EPi[\W])]\\
      & = \EPi[\W] - \EPi[\diag(\W)]\EPi[\W] - \EPi\left[\W \lb\frac{\one_T^\tran \W}{T}\rb\right] + \EPi[\W]\frac{\one_{T}^\tran \EPi[\W]}{T}\\
      & = \bfomega - \diag(\bfomega)\bfomega - \EPi\left[\W \lb\frac{\one_T^\tran \W}{T}\rb\right] + \bfomega \lb\frac{\one_{T}^\tran \bfomega}{T}\rb.
  \end{align*}
  By \eqref{eq:effective_xi}, for any $t$,
  \begin{align}
    &\EPi\left[\left\|\td{\W} - \EPi[\td{\W}]\right\|_{2}^2\right]\xi_{t} = \bfomega_{t} - \bfomega_{t}^2 - \EPi\left[\W_t \lb\frac{\one_{T}^\tran \W}{T}\rb\right] + \bfomega_{t} \lb\frac{\one_{T}^\tran \bfomega}{T}\rb.\label{eq:xit_etat}
  \end{align}
  Now we consider two scenarios. 
  \begin{enumerate}
  \item If $\bfomega_t \le \one_{T}^\tran \bfomega / T$, 
  \[\EPi\left[\W_t \lb\frac{\one_{T}^\tran \W}{T}\rb\right]\le \EPi[\W_t] = \bfomega_{t}.\]
  Then the right-hand side of \eqref{eq:xit_etat} is lower bounded by 
  \[\bfomega_t - \bfomega_t^2 - \bfomega_t + \bfomega_{t} \lb\frac{\one_{T}^\tran \bfomega}{T}\rb = \bfomega_{t}\lb \frac{\one_{T}^\tran \bfomega}{T} - \bfomega_{t}\rb\ge 0.\]
  \item If $\bfomega_t > \one_{T}^\tran \bfomega / T$,
    \[\EPi\left[\W_t \lb\frac{\one_{T}^\tran \W}{T}\rb\right]\le \EPi\left[\frac{\one_{T}^\tran \W}{T}\right] = \frac{\one_{T}^\tran\bfomega}{T}.\]
    Then the right-hand side of \eqref{eq:xit_etat} is lower bounded by 
    \[\bfomega_t - \bfomega_t^2 - \frac{\one_{T}^\tran \bfomega}{T} + \bfomega_{t} \lb\frac{\one_{T}^\tran \bfomega}{T}\rb = \lb \bfomega_{t} - \frac{\one_{T}^\tran \bfomega}{T}\rb (1 - \bfomega_{t})\ge 0.\]
  \end{enumerate}
\end{proof}

\subsection{Extension to generalized DATE in Remark \ref{rem:generalized_DATE}}\label{subapp:generalized_DATE}
We prove that, under Assumptions \ref{as:latent_ign_app} - \ref{as:regularity_general}, $\hat{\tau}(\bfPi; \zeta) = \tau^{*}(\xi; \zeta) + o_\P(1)$ if, further,  $\hat{\bfpi}_i = \bfpi_i$ and $n\|\zeta\|_{\infty} = O(1)$. 

To prove consistency, we just need to modify the proofs in Appendix \ref{subsubapp:asymptotic_linear} and \ref{subsubapp:consistency} by redefining $\Theta_i$ as
\[\Theta_i = \frac{(n\zeta_i) \bfPi(\W_i)}{\bfpi_i(\W_i)},\]
and redefining $\stheta, \sw, \sww, \swy, \sy$ correspondingly. 
First, we can apply the same arguments to show that Lemma \ref{lem:mean_var}, Lemma \ref{lem:denom}, and Theorem \ref{thm:asymptotic_expansion_simple} continue to hold under the above assumptions. We are left to prove that 
\[\numer = 0.\]
As in the proof of Lemma \ref{lem:numer_weak_deterministic},  we assume without loss of generality that $\tau^{*}(\xi; \zeta) = 0$; otherwise, we replace $Y_{it}(1)$ by $Y_{it}(1) - \tau^{*}(\xi; \zeta)$ and the resulting $\hat{\tau}$ becomes $\hat{\tau} - \tau^{*}(\xi; \zeta)$. Note that this reduction relies on the fact that $\sum_{i=1}^{n}\zeta_i = 1$. Using the same argument, we can show that \eqref{eq:numer_bias} continues to hold with the new definition of $\Theta_i$ and other related quantities, i.e., 
\begin{align*}
    \numer
    & = \frac{1}{n}\sum_{i=1}^{n}\left\{\E[\Theta_i\J\W_i]\E[\stheta] - \E[\Theta_i]\E[\sw]\right\}^\tran\E[\td{\Y}_i(0)]\nonumber\\
    & \quad + \frac{1}{n}\sum_{i=1}^{n}\left\{\E[\Theta_i \W_i^\tran\J\diag(\W_i)]\E[\stheta] - \E[\sw]^\tran\E[\Theta_i\diag(\W_i)]\right\}\td{\bftau}_i.
    \end{align*}
    By \eqref{eq:change_of_measure}, 
  \[\E[\Theta_i \J\W_i] = n\zeta_i\EPi[\J\W], \quad \E[\Theta_i] = n\zeta_i,\]
  and 
  \[\E[\Theta_i \W_i^\tran\J\diag(\W_i)] = n\zeta_i\EPi[\W\J\diag(\W)], \quad \E[\Theta_i\diag(\W_i)] = n\zeta_i\EPi[\diag(\W)].\]
  Thus,
  \[\stheta = 1, \quad \sw = \EPi[\J\W].\]
  Then,
  \[\E[\Theta_i\J\W_i]\E[\stheta] - \E[\Theta_i]\E[\sw] = n\zeta_i\EPi[\J \W] - n\zeta_i\EPi[\J \W] = 0,\]
  and by \DATEeq,
  \begin{align*}
    &\E[\Theta_i \W_i^\tran\J\diag(\W_i)]\E[\stheta] - \E[\sw]^\tran\E[\Theta_i\diag(\W_i)]\\
    &= n\zeta_i\EPi[(\W - \EPi[\W])^\tran \J \diag(\W)]\\
    & = n\zeta_i\EPi[(\W - \EPi[\W])^\tran \J \W]\xi^\tran.
  \end{align*}
  Since we assume $\tau^{*} = 0$, 
  \begin{align*}
    \numer
    & = \sum_{i=1}^{n}\zeta_i\EPi[(\W - \EPi[\W])^\tran \J \W]\xi^\tran\td{\bftau}_i\\
    & = \EPi[(\W - \EPi[\W])^\tran \J \W]\lb\sum_{i=1}^{n}\zeta_i\xi^\tran\td{\bftau}_i\rb\\
    & = \EPi[(\W - \EPi[\W])^\tran \J \W]\tau^* = 0.
  \end{align*}

 

\subsection{Proof of Theorem \ref{thm:dynamic_consistency}}\label{subapp:dynamic}
Let $\W_{i, \ex} = (\W_{i, 0}, \W_{i, -1}, \ldots, \W_{i, -p})\in \R^{T\times (p+1)}$, where $\W_{i, 0} = \W_{i}$, and $\hat{\bftau}_{\ex} = (\hat{\tau}_0, \hat{\tau}_{-1}. \ldots, \hat{\tau}_{-p})$ where the entries are defined in \eqref{eq:RIPW_event}. To derive the non-stochastic formula of $\hat{\bftau}_{\ex}$, we can repeat the steps in the proof of Theorem \ref{theorem:estimator} by replacing $\W_i$ with $\W_{i, \ex}$ in $\sww, \sw$, and $\swy$. This results in
\[\hat{\bftau}_{\ex} = \left\{\swwex - \frac{\swex^\tran\swex}{\stheta}\right\}^{-1}\left\{\swyex - \frac{\swex^\tran \sy}{\stheta}\right\},\]
where 
\[\swwex = \frac{1}{n}\sum_{i=1}^{n}\Theta_i \W_{i, \ex}^\tran \J \W_{i, \ex}\in \R^{(p+1)\times (p+1)}, \quad \swex \triangleq \frac{1}{n}\sum_{i=1}^{n}\Theta_{i}\J\W_{i, \ex}\in \R^{T\times (p+1)},\]
and 
\[\swyex = \frac{1}{n}\sum_{i=1}^{n}\Theta_i \W_{i, \ex}^\tran \J \Y_{i}\in \R^{(p+1)\times T}, \quad \sy = \frac{1}{n}\sum_{i=1}^{n}\Theta_i \J \Y_{i}\in \R^{T}.\]
Note that $\Y_i = \td{\Y}_i$ in this case since no regression adjustment is applied. 

Following the same steps as in Appendix \ref{subsubapp:asymptotic_linear}, we can prove that 
\begin{equation}\label{eq:dynamic_limit}
\hat{\bftau}_{\ex}\stackrel{p}{\rightarrow} \left\{\E[\swwex] - \frac{\E[\swex]^\tran\E[\swex]}{\E[\stheta]}\right\}^{-1}\left\{\E[\swyex] - \frac{\E[\swex]^\tran \E[\sy]}{\E[\stheta]}\right\}.
\end{equation}
Since $\hat{\bfpi}_i = \bfpi_i$, by \eqref{eq:change_of_measure}, 
\[\E[\stheta] = 1, \quad \E[\swwex] = \EPi[\W_{ \ex}^\tran \J \W_{\ex}], \quad \E[\swex] = \EPi[\J \W_{\ex}].\]
Thus, 
\begin{equation}\label{eq:dynamic_denom}
\E[\swwex] - \frac{\E[\swex]^\tran\E[\swex]}{\E[\stheta]} = \EPi[(\W_{ \ex} - \EPi[\W_{\ex}])^\tran \J (\W_{ \ex} - \EPi[\W_{\ex}])]
\end{equation}
By assumption \eqref{as:dynamic_effect},
\[\Y_i = \Y_i(\zero_{p+1}) + \sum_{\ell = 0}^{p}\W_{i, -\ell}\tau_{i, -\ell} = \Y_i(\zero_{p+1}) + \W_{i, \ex}\bftau_{i, \ex},\]
where 
\[\bftau_{i, \ex} = (\tau_{i, 0}, \tau_{i, -1}, \ldots, \tau_{i, -p})\in \R^{p+1}.\]
Then
\begin{align*}
\E[\swyex] &= \frac{1}{n}\sum_{i=1}^{n}\E[\Theta_i \W_{i, \ex}^\tran \J \Y_i] = \frac{1}{n}\sum_{i=1}^{n}\left\{\E[\Theta_i \W_{i, \ex}^\tran \J \Y_i(\zero_{p+1})] + \E[\Theta_i \W_{i, \ex}^\tran \J \W_{i, \ex}]\bftau_{i, \ex}\right\}\\
& = \frac{1}{n}\sum_{i=1}^{n}\left\{\E[\Theta_i \W_{i, \ex}^\tran \J] \E[\Y_i(\zero_{p+1})] + \E[\Theta_i \W_{i, \ex}^\tran \J \W_{i, \ex}]\bftau_{i, \ex}\right\}\\
& = \EPi[\W_{\ex}^\tran \J]\left\{\frac{1}{n}\sum_{i=1}^{n}\E[\J \Y_i(\zero_{p+1})]\right\} + \EPi[\W_{\ex}^\tran \J \W_{\ex}]\left\{\frac{1}{n}\sum_{i=1}^{n}\bftau_{i, \ex}\right\},
\end{align*}
where the second last line uses the assumption that $\bfpi_{i}(\w) = \P(\W_i = \w\mid \Y_i(\zero_{p+1}))$ and the last line is a result of \eqref{eq:change_of_measure} and the fact that $\J^2 = \J$. Similarly, 
\[\E[\sy] = \frac{1}{n}\sum_{i=1}^{n}\left\{ \E[\J\Y_i(\zero_{p+1})] + \EPi[\J \W_{\ex}]\bftau_{i, \ex}\right\}.\]
Thus, 
\begin{equation}\label{eq:dynamic_numer}
\E[\swyex] - \frac{\E[\swex]^\tran \E[\sy]}{\E[\stheta]} = \EPi[(\W_{ \ex} - \EPi[\W_{\ex}])^\tran \J (\W_{ \ex} - \EPi[\W_{\ex}])]\left\{\frac{1}{n}\sum_{i=1}^{n}\bftau_{i, \ex}\right\}.
\end{equation}
Combining \eqref{eq:dynamic_limit}, \eqref{eq:dynamic_denom}, and \eqref{eq:dynamic_numer} together, we obtain that
\[\hat{\bftau}_{\ex} \stackrel{p}{\rightarrow} \frac{1}{n}\sum_{i=1}^{n}\bftau_{i, \ex}.\]

\subsection{Miscellaneous}
\begin{proposition}\label{prop:Berry_Esseen}[\cite{petrov1975independent}, p.~112, Theorem 5]
Let $X_1, X_2, \ldots, X_{n}$ be independent random variables such that $\E[X_j] = 0$, for all $j$. Assume also $\E[X_j^2 g(X_j)] < \infty$ for some function $g$ that is non-negative, even, and non-decreasing in the interval $x > 0$, with $x / g(x)$ being non-decreasing for $x > 0$. Write $B_{n} = \sum_{j}\Var[X_j]$. Then,
\[d_{K}\left(\mathcal{L}\left(\frac{1}{\sqrt{B_{n}}}\sum_{j=1}^{n}X_j\right), N(0, 1)\right)\le \frac{A}{B_n g(\sqrt{B_n})}\sum_{j=1}^{n}\E\left[X_j^2 g(X_j)\right],\]
where $A$ is a universal constant, $\mathcal{L}(\cdot)$ denotes the probability law, $d_{K}$ denotes the Kolmogorov-Smirnov distance (i.e., the $\ell_{\infty}$-norm of the difference of CDFs)
\end{proposition}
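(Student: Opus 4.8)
The plan is to treat this as a classical Berry--Esseen-type estimate whose only nonstandard feature is the general moment function $g$, and to reduce it to a Berry--Esseen inequality with a \emph{truncated} third absolute moment on the right, which can then be converted back using the monotonicity hypotheses on $g$. The key elementary observation is that, since $g$ is even, non-negative, non-decreasing on $(0,\infty)$ and $x/g(x)$ is non-decreasing there, for $0<u\le v$ one has both $g(u)\le g(v)$ and $u/v\le g(u)/g(v)$; combining these gives, for every real $x$ and every $b>0$ (with $g(b)>0$, else the asserted bound is vacuous),
\[
x^2\bigl(|x|\wedge b\bigr)\ \le\ \frac{b}{g(b)}\,x^2 g(x).
\]
Summing over $j$ with $b=\sqrt{B_n}$ yields
\[
\frac{1}{B_n^{3/2}}\sum_{j=1}^{n}\E\bigl[X_j^2(|X_j|\wedge\sqrt{B_n})\bigr]\ \le\ \frac{1}{B_n\,g(\sqrt{B_n})}\sum_{j=1}^{n}\E\bigl[X_j^2 g(X_j)\bigr],
\]
so it suffices to prove the Kolmogorov-distance bound with its right-hand side replaced by a universal constant times $B_n^{-3/2}\sum_j\E[X_j^2(|X_j|\wedge\sqrt{B_n})]=\sum_j\E[\min(|X_j|^3 B_n^{-3/2},\,X_j^2 B_n^{-1})]$, i.e.\ a Berry--Esseen inequality with the third moment truncated at level $\sqrt{B_n}$.

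For that truncated inequality I would run the classical characteristic-function argument. After rescaling, set $S=B_n^{-1/2}\sum_j X_j$, $\varphi(t)=\prod_j \E[\exp(itX_j/\sqrt{B_n})]$, $\sigma_j^2=\Var[X_j]$, and $L_n:=B_n^{-3/2}\sum_j\E[X_j^2(|X_j|\wedge\sqrt{B_n})]$. Esseen's smoothing inequality bounds $d_{K}(\mathcal{L}(S),N(0,1))$ by $\tfrac{1}{\pi}\int_{-T}^{T}|t|^{-1}|\varphi(t)-e^{-t^2/2}|\,dt + c/T$ for any $T>0$. A second-order Taylor estimate for $e^{iu}$ combined with truncating each $X_j$ at $\sqrt{B_n}$ gives, for each factor, $\bigl|\E[\exp(itX_j/\sqrt{B_n})]-(1-\tfrac{t^2\sigma_j^2}{2B_n})\bigr|\le c\,|t|^3 B_n^{-3/2}\E[X_j^2(|X_j|\wedge\sqrt{B_n})]$ plus lower-order-in-$t$ terms; the standard product-versus-exponential comparison, valid on $|t|\le T$ with $T$ a fixed multiple of $1/L_n$, then yields $|\varphi(t)-e^{-t^2/2}|\le c\,e^{-t^2/4}\,|t|^3 L_n$ on that range. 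Integrating in $t$ and choosing $T\asymp 1/L_n$ to balance the two terms of Esseen's inequality produces $d_{K}(\mathcal{L}(S),N(0,1))\le A\,L_n$, which together with the first paragraph completes the proof. An equivalent route is Lindeberg's telescoping swap: replace the $X_j$ one by one with matched-variance Gaussians, expand a mollified version of the step function $s\mapsto \mathbf{1}(s\le x)$ to third order, cancel the order-$0,1,2$ contributions using $\E[X_j]=0$ and the variance matching, bound the third-order remainder by $\E[X_j^2(|X_j|\wedge \text{scale})]$, and optimize the mollification width to pass to the Kolmogorov distance.

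The main obstacle is purely bookkeeping rather than conceptual: one must take the truncation level inside the characteristic-function (or Lindeberg) argument to be exactly $\sqrt{B_n}$ so that it matches the $b$ in the comparison lemma, and one must check that the universal constant $A$ stays bounded as $g$ degenerates toward a constant --- in the endpoint $g\equiv 1$ the bound reads $d_{K}\le A\,B_n^{-1}\sum_j\sigma_j^2=A$, which is vacuous, consistent with the fact that Lindeberg's condition alone gives no rate. Since this statement is verbatim Theorem~5 on p.~112 of \cite{petrov1975independent}, in the paper it is enough to invoke that reference; the argument above is how I would reconstruct it if a self-contained proof were wanted.
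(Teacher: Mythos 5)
The paper offers no proof of this proposition: it is quoted verbatim from Petrov (1975, Theorem 5, p.~112) and simply invoked, so there is no in-paper argument to compare against; your closing remark that citing the reference suffices matches what the authors do. Your reconstruction is nonetheless correct in outline. The comparison lemma is right: for $|x|\ge b$ the inequality $x^2(|x|\wedge b)\le \frac{b}{g(b)}x^2g(x)$ reduces to $g(b)\le g(|x|)$, which follows from $g$ being even and non-decreasing, while for $0<|x|<b$ it reduces to $|x|/g(|x|)\le b/g(b)$, which is the monotonicity of $x/g(x)$; summing with $b=\sqrt{B_n}$ gives the stated reduction to the Berry--Esseen bound with censored third moments $L_n=B_n^{-3/2}\sum_j\E[X_j^2(|X_j|\wedge\sqrt{B_n})]$, and that bound is itself classical and provable by the Esseen smoothing/characteristic-function route you outline (per-factor Taylor bound $|e^{iu}-1-iu+u^2/2|\le\min(|u|^3/6,u^2)$, then $\min(ab,cd)\le\max(a,c)\min(b,d)$ to pull out the censored moment) or by Lindeberg swapping. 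Two small bookkeeping points you should make explicit if writing this out: since $d_K\le 1$ and $L_n\le 1$ always, one may assume $L_n$ is below a fixed small constant (otherwise the claim is trivial by enlarging $A$), which is what legitimizes the product-versus-exponential comparison on $|t|\le T\asymp 1/L_n$; and the degenerate case $g(\sqrt{B_n})=0$ forces $g$ to vanish near the relevant range, making the right-hand side infinite or the claim vacuous, as you note.
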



\begin{proposition}[Theorem 2 of \cite{vonbahr65}]\label{prop:vonbahresseen}
 Let $\{Z_{i}\}_{i = 1, \ldots, n}$ be independent mean-zero random variables. Then for any $a \in [0, 1)$,
\[\E \bigg|\sum_{i=1}^{n}Z_{i}\bigg|^{1 + a}\le 2\sum_{i=1}^{n}\E |Z_{i}|^{1 + a}.\]
\end{proposition}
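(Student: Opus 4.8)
The plan is to establish the slightly more general bound $\E\bigl|\sum_{i=1}^{n}Z_i\bigr|^{p}\le 2\sum_{i=1}^{n}\E|Z_i|^{p}$ for every exponent $p\in[1,2]$ (the proposition being the case $p=1+a$) by induction on $n$, after isolating two self-contained facts: a deterministic pointwise inequality for the function $t\mapsto|t|^{p}$, and its probabilistic two-variable consequence in the presence of a mean-zero term. Throughout we may assume $\E|Z_i|^{p}<\infty$ for every $i$, since otherwise the right-hand side is infinite.

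\textbf{Step 1 (scalar inequality).} First I would prove that, for all $x,y\in\mathbb{R}$ and all $p\in[1,2]$,
\[
|x+y|^{p}\ \le\ |x|^{p}+p\,|x|^{p-1}\operatorname{sgn}(x)\,y+2\,|y|^{p}.
\]
Both sides are positively homogeneous of degree $p$ in $(x,y)$, so it is enough to check $x=0$ (where the claim reads $|y|^{p}\le 2|y|^{p}$) and $|x|=1$; by the symmetry $(x,y)\mapsto(-x,-y)$ the case $x=1$ covers $x=-1$. The inequality thus reduces to showing $h_{p}(y):=|1+y|^{p}-1-py-2|y|^{p}\le 0$ for all real $y$, which is a short one-variable estimate: $h_{p}(0)=0$; convexity of $t\mapsto|t|^{p}$ (tangent line at $t=1$) gives $|1+y|^{p}-1-py\ge0$, so $h_p$ can only be positive where this nonnegative quantity beats $2|y|^{p}$; and a case split on the signs of $1+y$ and of $y$ — using the domination of $|y|^{p}$ over the quadratic remainder near $y=0$ when $p<2$, and the fact that $h_p(y)\le 0$ for $|y|$ large — excludes that possibility. (The constant $2$ is not sharp for $p\in(1,2)$ but is a clean uniform choice, and is attained in the limit when $p=1$.)

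\textbf{Step 2 (conditioning, then induction on $n$).} If $X,Y$ are independent with $\E[Y]=0$ and finite $p$-th moments, applying the scalar inequality pointwise with $(x,y)=(X,Y)$ and taking conditional expectation given $X$ eliminates the middle term — since $\E[Y\mid X]=\E[Y]=0$ and $\E[|Y|^{p}\mid X]=\E|Y|^{p}$ — and yields $\E|X+Y|^{p}\le\E|X|^{p}+2\,\E|Y|^{p}$. The induction on $n$ is then immediate: $n=1$ is trivial, and with $S_{n}=S_{n-1}+Z_{n}$, $S_{n-1}\perp Z_{n}$, $\E[Z_{n}]=0$, Step 2 gives $\E|S_{n}|^{p}\le\E|S_{n-1}|^{p}+2\,\E|Z_{n}|^{p}$, which together with the inductive hypothesis $\E|S_{n-1}|^{p}\le 2\sum_{i=1}^{n-1}\E|Z_i|^{p}$ closes the argument; crucially the coefficient on $\E|X|^{p}$ in Step 2 is exactly $1$, so the constant $2$ does not blow up. Taking $p=1+a$ gives the proposition. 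I expect Step 1 — pinning down the scalar inequality with a single absolute constant valid for all $p\in[1,2]$ — to be the only real obstacle, the homogeneity reduction being routine but the nonpositivity of $h_p$ requiring the careful near-origin and large-$|y|$ analysis indicated above.
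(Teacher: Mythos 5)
The paper never proves this proposition: it is imported as Theorem 2 of \cite{vonbahr65} and used as a black box, so there is no internal argument to compare against. Your proposal is the classical route to the von Bahr--Esseen bound and it is sound: the scalar inequality $|x+y|^{p}\le |x|^{p}+p|x|^{p-1}\operatorname{sgn}(x)\,y+2|y|^{p}$ for $p\in[1,2]$, then conditioning to kill the linear term via independence and $\E[Z_n]=0$ (integrability of $|S_{n-1}|^{p-1}$ follows from $\E|S_{n-1}|^{p}<\infty$ by Jensen), then a one-term-at-a-time induction in which the coefficient of $\E|S_{n-1}|^{p}$ is exactly $1$, so the constant stays at $2$; this yields the $p=1+a$ case the paper needs and would make the appendix self-contained where the paper merely cites. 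The one place your write-up falls short of a proof is Step 1: the nonpositivity of $h_p$ is true, but ``a case split \dots excludes that possibility'' is a promissory note, and the near-origin/large-$|y|$ heuristics do not by themselves rule out an interior maximum. You can close this cleanly, and skip the case analysis entirely, by writing $|x+y|^{p}-|x|^{p}-p\operatorname{sgn}(x)|x|^{p-1}y=\int_{0}^{1}\{f'(x+uy)-f'(x)\}\,y\,du$ with $f(t)=|t|^{p}$, and using the elementary estimate $|f'(t)-f'(s)|\le 2p|t-s|^{p-1}$ (when $t,s$ share a sign this follows from subadditivity of $u\mapsto u^{p-1}$ on $[0,\infty)$ with constant $p$; when they have opposite signs, $|t|^{p-1}+|s|^{p-1}\le 2\max(|t|,|s|)^{p-1}\le 2|t-s|^{p-1}$), which integrates to exactly $2|y|^{p}$ since $\int_{0}^{1}u^{p-1}du=1/p$. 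With that substitution in Step 1, your argument is complete; the remark about sharpness of the constant is harmless but inessential.
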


\section{Inference with cross-fitted model estimates}\label{app:cross-fitting}

\subsection{Cross-fitted RIPW estimator and main result}
We split the data into $K$ almost equal-sized folds with $\I_{k}$ denoting the index sets of the $k$-th fold and $|\I_{k}| \in \{\lfloor n / K\rfloor, \lceil n / K\rceil\}$. For each $i\in \I_k$, we estimate $(\hat{\bfpi}_i, \hat{\bfm}_i, \hbftau_i)$ using $\{(\Y_i(1), \Y_i(0), \W_i): i\not\in \I_k\}$. When $\{(\Y_i(1), \Y_i(0), \W_i): i \in [n]\}$ are independent, it is obvious that
\begin{equation*}
  \{(\hat{\bfpi}_i, \hat{\bfm}_i, \hbftau_i): i\in \I_k\}\indep \{(\Y_i(1), \Y_i(0), \W_i): i\in \I_k\}.
\end{equation*}
We assume that 
\[\frac{1}{T}\sum_{t=1}^{T}\hat{m}_{it} = \frac{1}{|\mathcal{I}_k|}\sum_{i\in \mathcal{I}_k}\hat{m}_{it} = 0, \quad \forall i \in \mathcal{I}_k,\quad t = 1,\ldots, T,\]
and 
\begin{equation}\label{eq:avg_hattau}
\frac{1}{|\mathcal{I}_k|}\sum_{i\in \mathcal{I}_k}\xi'\hbftau_i = 0.
\end{equation}
Otherwise, we apply the transformation \eqref{eq:hatm} and \eqref{eq:hattau} in each fold to enforce the above.

For valid inference, we need an additional assumption on the stability of the estimates. 
\begin{assumption}\label{as:asymptotic_deterministic}
There exist functions $\{\bfpi'_i: i \in [n]\}$ which satisfy Assumption \ref{as:overlap_general}  and vectors $\{(\bfm'_i, \nhbftau'_i): i \in [n]\}$ which satisfy Assumption \ref{as:regularity_general}, such that
\begin{equation}\label{eq:as_asymptotic_deterministic}
  \frac{1}{n}\sum_{i=1}^{n}\left\{\E[(\hat{\bfpi}_i(\W_i) - \bfpi'_i(\W_i))^2] + \E[\|\hat{\bfm}_i - \bfm'_i\|_{2}^2] + \E[\|\hbftau_i - \nhbftau'_i\|_{2}^2]\right\} = O(n^{-\r})
  \end{equation}
  for some $\r > 0$. Furthermore, 
  \begin{equation}\label{eq:as_limit}
  \bfpi'_i = \bfpi_i \text{ for all }i, \quad\text{or}\quad (\bfm'_i, \nhbftau'_i) = (\bfm_i, \nhbftau_i) \text{ for all }i.
  \end{equation}
\end{assumption}
The condition \eqref{eq:as_asymptotic_deterministic} states that the estimates need to be asymptotically deterministic given the covariates. This is a very mild assumption. For example, when $\hat{\bfpi}_i$ is estimated from a parametric model $\{f(\X_i; \bftheta): \bftheta\in \R^{d}\}$ as $f(\X_i; \hat{\bftheta})$, under standard regularity conditions, $\hat{\bftheta}$ converges to a limit $\bftheta_0$ even if the model is misspecified. As a result, $\hat{\bfpi}_i$ converges to $\bfpi'_i = f(\X_i; \bftheta_0)$. Under certain smoothness assumption, the estimates converge in the standard parametric rate and thus \eqref{eq:as_asymptotic_deterministic} holds with $\r = 1$. On the other hand, in the settings of Section \ref{sec:design_based}, \eqref{eq:as_asymptotic_deterministic} is always satisfied with $\bfpi'_i = \bfpi_i$ and $\bfm_i' = \nhbftau_i = \zero_T$. More generally, if $\bar{\delta}_\pi^2 + \bar{\delta}_{y}^2 = O(n^{-\r})$, it is also satisfied with $\bfpi'_i = \bfpi_i$ and $(\bfm_i', \nhbftau'_i) = (\bfm_i, \nhbftau_i)$. A similar assumption was considered for cross-sectional data by \cite{chernozhukov2020adversarial}.

The condition \eqref{eq:as_limit} allows one of the treatment and outcome models to be inconsistently estimated. This covers the settings in Section \ref{sec:design_based} where the outcome model does not need to be consistently estimated. It also covers the classical model-based inference in which case the assignment model can be arbitrarily misspecified. 

\begin{theorem}\label{thm:dr_cf_coverage}
  Assume that $\{(\Y_i(1), \Y_i(0), \W_i): i\in [n]\}$ are independent. Let $\{(\hat{\bfpi}_i, \hat{\bfm}_i, \hbftau_i): i \in [n]\}$ be estimates obtained from $K$-fold cross-fitting where $K = O(1)$. Under Assumptions \ref{as:latent_ign_app}, \ref{as:overlap_general}, \ref{as:independent_general}, and \ref{as:asymptotic_deterministic},
  \begin{enumerate}[(i)]
  \item $\hat{\tau}(\bfPi) - \tau^{*}(\xi) = o_\P(1)$ if $\bar{\delta}_{\pi}\bar{\delta}_{y} = o(1)$;
  \item Let $\hat{C}_{1-\alpha}$ be the same confidence interval as in Theorem \ref{thm:dr_coverage}. Then \[\liminf_{n\rightarrow \infty}\P\lb \tau^{*}(\xi)\in \hat{C}_{1 - \alpha}\rb\ge 1 - \alpha\] 
  if (a) $\bar{\delta}_{\pi}\bar{\delta}_{y} = o(1/\sqrt{n})$, (b) Assumption \ref{as:asymptotic_deterministic} holds with $\r > 1 / 2$, and (c) \eqref{eq:variance_lower} holds if 
  $(\Theta_i, \Y_i)$ are replaced by $(\bfPi(\W_i) / \bfpi_i'(\W_i), \Y_i - \bfm'_{i} - \diag(\W_i)\nhbftau'_i)$
  in the definition of $\V_i$.
  \end{enumerate}  
\end{theorem}

The proof of Theorem \ref{thm:dr_cf_coverage} is quite involved because our cross-fitted estimator is non-standard. The standard cross-fitting \citep{chernozhukov2017double} would compute $\hat{\tau}_{k}(\bfPi)$ on $\I_k$ with $\{(\hat{\bfpi}_i, \hat{\bfm}_i, \hbftau_i): i\in \I_k\}$ and then take $\hat{\tau}(\bfPi)$ as the average of $\{\hat{\tau}_{k}(\bfPi): k\in [K]\}$. Under the assumptions of Theorem \ref{thm:dr_linear_expansion}, it is straightforward to show each $\hat{\tau}_{k}(\bfPi)$ is asymptotically linear and hence their average $\hat{\tau}(\bfPi)$. In contrast, our estimator only cross-fitted the nuisance parameters $\{(\hat{\bfpi}_i, \hat{\bfm}_i, \hbftau_i): i\in [n]\}$ but compute $\hat{\tau}(\bfPi)$ using the whole dataset. 
While it is theoretically convenient to deal with the standard cross-fitting estimator, the standard version would fit weighted TWFE regressions on merely $n/K$ units which would cause instability when $n$ is moderate as in many economic applications. For this reason, we opt for our version to max out the sample size for computing $\hat{\tau}(\bfPi)$, even though the technical proofs are lengthier. 

\subsection{De-randomization}\label{subapp:derandomization}
Cross-fitting involves random data splits which introduce operational variation into the final estimate. We propose a de-randomization procedure that mitigates this source of unnecessary uncertainty by averaging over multiple splits. In particular, we consider $B$ independent splits and add a superscript $(b)$ to denote the quantities involved in the $b$-th split.  

In the proof presented in the next subsection, we will show in \eqref{eq:cf_asymptotic_linear} that, for each given data split, 
\[\mathcal{D}^{(b)}\cdot \sqrt{n}(\hat{\tau}^{(b)} - \tau^{*}) = \frac{1}{\sqrt{n}}\sum_{i=1}^{n}(\V'_i - \E[\V'_i]) + o_\P(1),\]
Note that $\V'_i$ does not depend on $b$. We define the de-randomized cross-fitted RIPW estimate as 
\begin{equation}\label{eq:derandomized_cf}
\hat{\tau} =\frac{\sum_{b=1}^{B}\mathcal{D}^{(b)}\hat{\tau}^{(b)}}{\sum_{b=1}^{B}\mathcal{D}^{(b)}}.
\end{equation}
Then, when $B = O(1)$, 
\[\lb\frac{1}{B}\sum_{b=1}^{B}\mathcal{D}^{(b)}\rb\cdot \sqrt{n}(\hat{\tau} - \tau^{*}) = \frac{1}{\sqrt{n}}\sum_{i=1}^{n}(\V'_i - \E[\V'_i]) + o_\P(1).\]
Furthermore, in Section \ref{subsubapp:variance} we show that 
\[\frac{1}{n}\sum_{i=1}^{n}(\V'_i - \hat{\V}_i^{(b)})^2 = o_\P(1).\]
Denote by $\bar{\hat{\V}}_i$ the average influence function:
\[\bar{\hat{\V}}_i = \frac{1}{B}\sum_{b=1}^{B}\hat{\V}_i^{(b)}.\]
Then, 
\[\frac{1}{n}\sum_{i=1}^{n}(\V'_i - \bar{\hat{\V}}_i)^2 = o_\P(1).\]
Therefore, we can estimate the variance of $\hat{\tau}$ by the sample variance of $\bar{\hat{\V}}_1, \ldots, \bar{\hat{\V}}_n$. This justifies the confidence interval stated in Algorithm \ref{algo:RIPW_derandomized_cf}.

\subsection{Proof of Theorem \ref{thm:dr_cf_coverage}}
For convenience, we assume that $m = n / K$ is an integer. All proofs in this subsection can be easily extended to the general case. Without loss of generality, we can assume that
  \begin{equation}
    \label{eq:piimi'_WLOG}
    \E[\bar{\Delta}_{\pi}^2]= \bar{\delta}_{\pi}^2 = \Omega(n^{-\r}), \quad \E[\bar{\Delta}_{y}^2]=\bar{\delta}_{y}^2 = \Omega(n^{-\r}), 
  \end{equation}
where $a_{n} = \Omega(b_{n})$ iff $b_{n} = O(a_n)$. 
Otherwise, we can replace $(\bfpi'_i, \bfm'_i, \nhbftau_i')$ by $(\bfpi_i, \bfm_i, \nhbftau_i)$ without decreasing $\r$.

We use a superscript $(k)$ to denote the corresponding quantity in fold $k$, i.e.,
\begin{equation*}
  \stheta^{(k)}\triangleq \frac{1}{m}\sum_{i\in \I_k}\Theta_{i}, \quad \sww^{(k)} \triangleq \frac{1}{m}\sum_{i\in \I_k}\Theta_{i}\W_{i}^{\tran}\J\W_{i}, \quad \swy^{(k)} \triangleq \frac{1}{m}\sum_{i\in \I_k}\Theta_{i}\W_{i}^{\tran}\J\td{\Y}_i,
\end{equation*}
\begin{equation*}
\sw^{(k)} \triangleq \frac{1}{m}\sum_{i\in \I_k}\Theta_{i}\J\W_{i}, \quad \sy^{(k)} \triangleq \frac{1}{m}\sum_{i\in \I_k}\Theta_{i}\J\td{\Y}_i.
\end{equation*}
  As in the proof of Theorem \ref{thm:numer_strong_deterministic}, we assume $\tau^{*} = 0$ without loss of generality. Let $\lb \swy', \stheta', \sw', \sy'\rb$ and $(\Theta'_i, \td{\Y}'_i, \td{\bftau}'_i)$ be the counterpart of $(\swy, \stheta, \sw, \sy)$ and $(\Theta_i, \td{\Y}_i, \td{\bftau}_i)$ with $(\hat{\bfpi}_i, \hat{\bfm}_i, \hbftau_i)$ replaced by $(\bfpi'_i, \bfm'_i, \nhbftau'_i)$. We first claim that 
  \begin{equation}
    \label{eq:cf_goal1}
    \swy\stheta - \sw^\tran\sy - \left\{\swy'\stheta' - \sw^{'\tran}\sy'\right\} = O_\P\lb n^{-\min\{\r, (\r' + 1) / 2\}} + \sqrt{\E[\bar{\Delta}_{\pi}^2]}\cdot \sqrt{\E[\bar{\Delta}_{y}^2]}\rb,
  \end{equation}
  where $\r' = \r \omega / (2 + \omega)$. The proof of \eqref{eq:cf_goal1} is relegated to the end. Here we prove the rest of the theorem under \eqref{eq:cf_goal1}.

  ~\\
  Note that $\swy'\stheta' - \sw'^\tran\sy'$ is the numerator of $\hat{\tau}$ when $\{(\bfpi'_i, \bfm'_i, \nhbftau'_i): i = 1, \ldots, n\}$ are used as the estimates. Let
  \begin{equation}
    \label{eq:deltai'}
    \delta'_{\pi i} = \sqrt{\E[(\bfpi_i'(\W_i) - \bfpi_i(\W_i))^2]}, \quad \delta'_{y i} = \sqrt{\E[\|\bfm'_i - \bfm_i\|_{2}^{2}] + \E[\|\nhbftau'_i - \nhbftau_i\|_{2}^2]},
  \end{equation}
  and
  \begin{equation}
    \label{eq:bardelta'}
    \bar{\delta}'_{\pi} = \sqrt{\frac{1}{n}\sum_{i=1}^{n}\delta_{\pi i}^{'2}}, \quad \bar{\delta}'_{y} = \sqrt{\frac{1}{n}\sum_{i=1}^{n}\delta_{y i}^{'2}}.
  \end{equation}
  By Assumption \ref{as:asymptotic_deterministic} and \eqref{eq:piimi'_WLOG},
    \begin{align}
      \bar{\delta}_{\pi}^{'2} &= \frac{1}{n}\sum_{i=1}^{n}\delta_{\pi i}^{'2} \nonumber \\
      & \le \frac{2}{n}\sum_{i=1}^{n}\left\{ \E[\Delta_{\pi i}^2] + \E[(\hat{\bfpi}_i(\W_i) - \bfpi'_i(\W_i))^2]\right\}\nonumber\\
      & = O\lb\E[\bar{\Delta}_{\pi}^2] + n^{-r}\rb = O\lb\E[\bar{\Delta}_{\pi}^2]\rb. \label{eq:E_bardelta_pi'}
    \end{align}
    Similarly,
    \begin{equation}
      \bar{\delta}_{y}^{'2} = \frac{1}{n}\sum_{i=1}^{n}\delta_{y i}^{'2} = O\lb\E[\bar{\Delta}_{y}^2]\rb. \label{eq:E_bardelta_y'}
    \end{equation}
    As a result,
    \[\bar{\delta}'_{\pi}\bar{\delta}'_{y} = O\lb\sqrt{\E[\bar{\Delta}_{\pi}^2]}\cdot \sqrt{\E[\bar{\Delta}_{y}^2]}\rb.\]
    Note that Assumption \ref{as:independent_general} implies Assumption \ref{as:regularity_general} with $q = 1$. By Theorem \ref{thm:asymptotic_expansion_simple} and Theorem \ref{thm:numer_strong_deterministic},
    \begin{align}
      \swy'\stheta' - \sw^{'\tran}\sy' &= \frac{1}{n}\sum_{i=1}^{n}(\V'_i - \E[\V'_i]) + O_\P\lb\sqrt{\E[\bar{\Delta}_{\pi}^2]}\cdot \sqrt{\E[\bar{\Delta}_{y}^2]}\rb + o_\P(1/\sqrt{n})\label{eq:cf_expansion}\\
      & = O_\P\lb\sqrt{\E[\bar{\Delta}_{\pi}^2]}\cdot \sqrt{\E[\bar{\Delta}_{y}^2]}\rb + o_\P(1)\label{eq:cf_consistency},
    \end{align}
    where
    \begin{align*}
      \V'_i &= \Theta'_i\bigg\{\E[\swy'] - \E[\sy']^\tran\J \W_i + \E[\stheta']\W_i^\tran \J \td{\Y}'_i - \E[\sw']^\tran\J\td{\Y}'_i \bigg\}.
    \end{align*}
    On the other hand, by \eqref{eq:cf_goal1},
    \begin{equation}
      \label{eq:cf_approx_expansion}
      \hatdenom(\hat{\tau} - \tau^{*}) = \swy\stheta - \sw^\tran\sy = \swy'\stheta' - \sw^{'\tran}\sy' + O_\P\lb n^{-\min\{\r, (\r' + 1) / 2\}} + \sqrt{\E[\bar{\Delta}_{\pi}^2]}\cdot \sqrt{\E[\bar{\Delta}_{y}^2]}\rb.
    \end{equation}
    When $\sqrt{\E[\bar{\Delta}_{\pi}^2]}\cdot \sqrt{\E[\bar{\Delta}_{y}^2]} = o(1)$, \eqref{eq:cf_consistency} and \eqref{eq:cf_approx_expansion} imply that
    \[\hatdenom(\hat{\tau} - \tau^{*}) = o_\P(1).\]
    The consistency then follows from Lemma \ref{lem:denom}.
    
    ~\\
    When $\sqrt{\E[\bar{\Delta}_{\pi}^2]}\cdot \sqrt{\E[\bar{\Delta}_{y}^2]} = o(1 / \sqrt{n})$ and $\r > 1 / 2$, \eqref{eq:cf_consistency} and \eqref{eq:cf_approx_expansion} imply that
    \begin{equation}\label{eq:cf_asymptotic_linear}
    \hatdenom\cdot \sqrt{n}(\hat{\tau} - \tau^{*}) = \frac{1}{\sqrt{n}}\sum_{i=1}^{n}(\V'_i - \E[\V'_i]) + o_\P(1).
    \end{equation}
    Let $\hV'_i$ denote the plug-in estimate of $\V'_i$ assuming that $(\bfpi'_i, \bfm'_i, \nhbftau'_i)$ is known, i.e.,
  \begin{align}\label{eq:hVi'}
    \hV'_i &= \Theta'_i\bigg\{\swy' - \sy^{'\tran}\J \W_i  + \stheta'\W_i^\tran \J \td{\Y}'_i - \sw^{'\tran}\J\td{\Y}'_i \bigg\}.
  \end{align}
  By Lemma \ref{lem:CLT}, under Assumption \ref{as:var_low} (with $(\hat{\bfpi}_i, \hat{\bfm}_i, \hbftau_i) = (\bfpi'_i, \bfm'_i, \nhbftau'_i)$),
  \[\frac{\hatdenom\cdot \sqrt{n}(\hat{\tau} - \tau^{*})}{\sigma'} \dcv N(0, 1) \text{ in Kolmogorov-Smirnov distance},\]
  where 
  \[\sigma^{'2} = \frac{1}{n}\sum_{i=1}^{n}\Var(\V'_i)\ge v_0.\]
  Similar to \eqref{eq:sigma2+}, define
  \[\sigma_{+}^{'2} = \frac{1}{n}\sum_{i=1}^{n}\E\lb \V'_i - \frac{1}{n}\sum_{i=1}^{n}\E[\V'_i]\rb^2 = \frac{1}{n}\sum_{i=1}^{n}\E[\V_i^{'2}] - \lb\frac{1}{n}\sum_{i=1}^{n}\E[\V'_i]\rb^2.\]
  Obviously, $\sigma_{+}^{'2}\ge \sigma^{'2}$. Furthermore, define an oracle variance estimate $\hat{\sigma}^{'2}$ as
  \[\hat{\sigma}^{'2} = \frac{1}{n-1}\sum_{i=1}^{n}\lb \hV'_i - \frac{1}{n}\sum_{i=1}^{n}\hV'_i\rb^2 = \frac{n}{n-1}\left\{\frac{1}{n}\sum_{i=1}^{n}\hV_i^{'2} - \lb\frac{1}{n}\sum_{i=1}^{n}\hV'_i\rb^2\right\}.\]
  Recalling \eqref{eq:hatsigma2} that
  \[\hat{\sigma}^2 = \frac{1}{n-1}\sum_{i=1}^{n}\lb \hV_i - \frac{1}{n}\sum_{i=1}^{n}\hV_i\rb^2 = \frac{n}{n-1}\left\{\frac{1}{n}\sum_{i=1}^{n}\hV_i^2 - \lb\frac{1}{n}\sum_{i=1}^{n}\hV_i\rb^2\right\}.
\]
  Similar to \eqref{eq:goal_sigma2} in Theorem \ref{thm:doubly_robust_inference_deterministic}, it remains to prove that
  \[|\hat{\sigma}^2 - \sigma_{+}^{'2}| = o_\P(1).\]
  Using the same arguments as in Theorem \ref{thm:doubly_robust_inference_deterministic}, we can prove that
  \[|\hat{\sigma}^{'2} - \sigma_{+}^{'2}| = o_\P(1).\]
  Therefore, the proof will be completed if 
  \begin{equation}
    \label{eq:cf_goal2}
    |\hat{\sigma}^2 - \hat{\sigma}^{'2}| = o_\P(1).
  \end{equation}
  We present the proof of \eqref{eq:cf_goal2} in the end.
    
  \subsubsection{Proof of \eqref{eq:cf_goal1}}
  Let $\lb\swy^{'(k)}, \stheta^{'(k)}, \sw^{'(k)}, \sy^{'(k)}\rb$ be the counterpart of $(\swy^{(k)}, \stheta^{(k)}, \sw^{(k)}, \sy^{(k)})$ with $(\hat{\bfpi}_i, \hat{\bfm}_i, \hbftau_i)$ replaced by $(\bfpi'_i, \bfm'_i, \nhbftau'_i)$. Since the proof is lengthy, we decompose it into seven steps. 
  \begin{enumerate}[\textbf{Step} 1]
  \item By triangle inequality and Cauchy-Schwarz inequality,
    \begin{align*}
      &|\swy - \swy'|\\
      &\le \frac{1}{n}\sum_{i=1}^{n}|\Theta_i \W_i^\tran\J \td{\Y}_i - \Theta'_i \W_i^\tran\J \td{\Y}'_i|\\
      & \le \frac{1}{n}\sum_{i=1}^{n}\left|\Theta_i\W_i^\tran\J(\hat{\bfm}_i - \bfm'_i)\right| + \frac{1}{n}\sum_{i=1}^{n}\left|\Theta_i\W_i^\tran\J\diag(\W_i)(\hbftau_i - \nhbftau'_i)\right|\\
      & \quad + \frac{1}{n}\sum_{i=1}^{n}\left|(\Theta_i - \Theta'_i)\W_i^\tran\J \td{\Y}'_i\right|\\
      & \le \sqrt{\lb\frac{1}{n}\sum_{i=1}^{n}\|\Theta_i\W_i^\tran\J\|_{2}^2\rb\lb\frac{1}{n}\sum_{i=1}^{n}\|\hat{\bfm}_i - \bfm'_i\|_{2}^2\rb}\\
      & \quad + \sqrt{\lb\frac{1}{n}\sum_{i=1}^{n}\|\Theta_i\W_i^\tran\J\diag(\W_i)\|_{2}^2\rb\lb\frac{1}{n}\sum_{i=1}^{n}\|\hbftau_i - \nhbftau'_i\|_{2}^2\rb}\\
      & \quad + \sqrt{\lb\frac{1}{n}\sum_{i=1}^{n}\|(\Theta_i - \Theta'_i)\W_i^\tran\J\|_{2}^2\rb\lb\frac{1}{n}\sum_{i=1}^{n}\|\td{\Y}'_i\|_{2}^2\rb}.
    \end{align*}
    By Assumption \ref{as:var_low} and H\"{o}lder's inequality,
    \[\frac{1}{n}\sum_{i=1}^{n}\E[\|\td{\Y}'_i\|_{2}^2]\le \lb\frac{1}{n}\sum_{i=1}^{n}\E[\|\td{\Y}'_i\|_{2}^{2 + \omega}]\rb^{2 / (2 + \omega)} = O(1).\]
    By Markov's inequality,
    \begin{equation}
      \label{eq:Yi_2norm}
      \frac{1}{n}\sum_{i=1}^{n}\|\td{\Y}'_i\|_{2}^2 = O_\P(1).
    \end{equation}
    By Assumption \ref{as:overlap_general} and the boundedness of $\|\W_i\J\|_{2}$ and $\|\W_i \J \diag(\W_i)\|_{2}$,
    \[\frac{1}{n}\sum_{i=1}^{n}\|\Theta_i\W_i^\tran\J\|_{2}^2 = O(1), \quad \frac{1}{n}\sum_{i=1}^{n}\|\Theta_i\W_i^\tran\J\diag(\W_i)\|_{2}^2 = O(1),\]
    and, further, by Markov's inequality,
    \[\frac{1}{n}\sum_{i=1}^{n}\|(\Theta_i - \Theta'_i)\W_i^\tran\J\|_{2}^2 = O_\P\lb\frac{1}{n}\sum_{i=1}^{n}\E[(\hat{\bfpi}_i(\W_i) - \bfpi'_i(\W_i))^2]\rb.\]
    Putting pieces together and using Assumption \ref{as:asymptotic_deterministic}, we arrive at
    \[|\swy - \swy'| = O_\P(n^{-\r / 2}).\]
    Similarly, we can prove that
    \begin{equation}
      \label{eq:cf_goal1_step1_1}
      |\swy - \swy'| + |\stheta - \stheta'| + \|\sw - \sw'\|_{2} + \|\sy - \sy'\|_{2} = O_\P(n^{-\r / 2}).
    \end{equation}
    As a consequence,
    \begin{equation}
      \label{eq:cf_goal1_1}
      \left|(\swy - \swy')(\stheta - \stheta') - (\sw - \sw')^\tran (\sy - \sy')\right| = O_\P(n^{-\r}).
    \end{equation}
  \item Note that Assumption \ref{as:independent_general} implies Assumption \ref{as:regularity_general} with $\q = 1$. By Lemma \ref{lem:mean_var},
    \[\big|\stheta' - \E[\stheta']\big| + \big|\swy' - \E[\swy']\big| + \big\|\sw' - \E[\sw']\big\|_{2} + \big\|\sy' - \E[\sy']\big\|_{2} = O_{\P}\lb n^{-1/2}\rb.\]
    By \eqref{eq:cf_goal1_step1_1}, we have
    \begin{align}
      &\bigg|(\swy - \swy')(\stheta' - \E[\stheta']) + (\swy' - \E[\swy'])(\stheta - \stheta')\nonumber\\
      & - (\sw - \sw')^\tran (\sy' - \E[\sy']) - (\sw' - \E[\sw'])^\tran (\sy - \sy')\bigg| = O_\P(n^{-(\r + 1) / 2}).       \label{eq:cf_goal1_2}
    \end{align}
  \item Note that
    \[\swy - \swy' = \frac{1}{K}\sum_{k=1}^{K}\lb\swy^{(k)} -  \swy^{'(k)}\rb.\]
    For each $k$,
    \[\swy^{(k)} -  \swy^{'(k)} = \frac{1}{m}\sum_{i\in \I_k}(\Theta_i \W_i^\tran \J \td{\Y}_i - \Theta'_i \W_i^\tran \J \td{\Y}'_i).\]
    Under Assumption \ref{as:independent_general}, the summands are independent conditional on $\cD_{-[k]}\triangleq \{(\Y_i(1), \Y_i(0), \X_i): i\not\in \I_k\}$. Let $\E^{(k)}$ and $\Var^{(k)}$ denote the expectation and variance conditional on $\cD_{-[k]}$. By Chebyshev's inequality,
    \begin{align}
      &\lb\swy^{(k)} -  \swy^{'(k)} - \E^{(k)}[\swy^{(k)} -  \swy^{'(k)}]\rb^2\nonumber\\
      &= O_\P\lb\frac{1}{m^2}\sum_{i\in \I_k}\Var^{(k)}\lb\Theta_i \W_i^\tran \J \td{\Y}_i - \Theta'_i \W_i^\tran \J \td{\Y}'_i\rb\rb\nonumber\\
      & \stackrel{(i)}{=} O_\P\lb\frac{1}{n^2}\sum_{i=1}^{n}\E^{(k)}\lb\Theta_i \W_i^\tran \J \td{\Y}_i - \Theta'_i \W_i^\tran \J \td{\Y}'_i\rb^2\rb\nonumber\\
      & \stackrel{(ii)}{=} O_\P\lb\frac{1}{n^2}\sum_{i=1}^{n}\E\lb\Theta_i \W_i^\tran \J \td{\Y}_i - \Theta'_i \W_i^\tran \J \td{\Y}'_i\rb^2\rb,\label{eq:cf_goal1_step3_1}
    \end{align}
    where (i) follows from $K = O(1)$ and (ii) applies Markov's inequality.
    By Jensen's inequality and Cauchy-Schwarz inequality,
    \begin{align}
      &\E\lb\Theta_i \W_i^\tran \J \td{\Y}_i - \Theta'_i \W_i^\tran \J \td{\Y}'_i\rb^2\nonumber\\
      & \le 3\Big\{\E\lb\Theta_i\W_i^\tran\J(\hat{\bfm}_i - \bfm'_i)\rb^2 + \E\lb\Theta_i\W_i^\tran\J\diag(\W_i)(\hbftau_i - \nhbftau'_i)\rb^2 \nonumber\\
      & \qquad + \E\lb(\Theta_i - \Theta'_i)\W_i^\tran\J \td{\Y}'_i\rb^2\Big\}\nonumber\\
      & \le 3\Big\{\E\left[\|\Theta_i\W_i^\tran\J\|_{2}^2\cdot \|\hat{\bfm}_i - \bfm'_i\|_2^2\right] + \E\left[\|\Theta_i\W_i^\tran\J\diag(\W_i)\|_2^2\cdot \|\hbftau_i - \nhbftau'_i\|_2^2\right] \nonumber\\
      &\quad + \E\left[(\Theta_i - \Theta'_i)^2(\W_i^\tran\J \td{\Y}'_i)^2\right]\Big\}\nonumber\\
      & \le C\left\{\E\left[(\hat{\bfm}_i - \bfm'_i)^2\right] + \E\left[(\hbftau_i - \nhbftau'_i)^2\right] + \E\left[(\hat{\bfpi}_i(\W_i) - \bfpi'_i(\W_i))^2(\td{\Y}'_i)^2\right]\right\},\label{eq:cf_goal1_step3_2}
    \end{align}
    where $C$ is a constant that only depends on $c_{\pi}$ and $T$. 
    The second term can be bounded by 
    \begin{align*}
      &\frac{1}{n}\sum_{i=1}^{n}\E\left[(\hat{\bfpi}_i(\W_i) - \bfpi'_i(\W_i))^2(\td{\Y}'_i)^2\right]\\
      & = \E\left[\frac{1}{n}\sum_{i=1}^{n}(\hat{\bfpi}_i(\W_i) - \bfpi'_i(\W_i))^2(\td{\Y}'_i)^2\right]\\
      & \stackrel{(i)}{\le} \E\left[\lb\frac{1}{n}\sum_{i=1}^{n}(\hat{\bfpi}_i(\W_i) - \bfpi'_i(\W_i))^{2(1 + 2 / \omega)}\rb^{\omega / (2 + \omega)}\lb\frac{1}{n}\sum_{i=1}^{n}(\td{\Y}'_i)^{2 + \omega}\rb^{2 / (2 + \omega)}\right]\\
      & \stackrel{(ii)}{\le}\lb\frac{1}{n}\sum_{i=1}^{n}\E\left[(\hat{\bfpi}_i(\W_i) - \bfpi'_i(\W_i))^{2(1 + 2 / \omega)}\right]\rb^{\omega / (2 + \omega)}\lb\frac{1}{n}\sum_{i=1}^{n}\E\left[(\td{\Y}'_i)^{2 + \omega}\right]\rb^{2 / (2 + \omega)}\\
      & \stackrel{(iii)}{\le} \lb\frac{1}{n}\sum_{i=1}^{n}\E(\hat{\bfpi}_i(\W_i) - \bfpi'_i(\W_i))^2\rb^{\omega / (2 + \omega)}\lb\frac{1}{n}\sum_{i=1}^{n}\E\left[(\td{\Y}'_i)^{2 + \omega}\right]\rb^{2 / (2 + \omega)},
    \end{align*}
    where (i) applies the H\"{o}lder's inequality for sums, (ii) applies the H\"{o}lder's inequality that $\E[XY]\le \E[X^{(2 + \omega) / \omega}]^{\omega / (2 + \omega)}\E[Y^{(2 + \omega) / 2}]^{2 / (2 + \omega)}$, and (iii) uses the fact that $|\hat{\bfpi}_i(\W_i) - \bfpi'_i(\W_i)|\le 1$. By Assumptions \ref{as:var_low} and \ref{as:asymptotic_deterministic},
    \begin{equation}
      \label{eq:cf_goal1_step3_3}
      \frac{1}{n}\sum_{i=1}^{n}\E\left[(\hat{\bfpi}_i(\W_i) - \bfpi'_i(\W_i))^2(\td{\Y}'_i)^2\right] = O\lb n^{-\r\omega / (2 + \omega)}\rb = O\lb n^{-\r'}\rb
    \end{equation}
    \eqref{eq:cf_goal1_step3_2} and \eqref{eq:cf_goal1_step3_3} together imply that
    \begin{equation}
      \label{eq:sum_square_swy}
      \frac{1}{n}\sum_{i=1}^{n}\E\lb\Theta_i \W_i^\tran \J \td{\Y}_i - \Theta'_i \W_i^\tran \J \td{\Y}'_i\rb^2 = O\lb n^{-\r'}\rb.
    \end{equation}
    By \eqref{eq:cf_goal1_step3_1}, for each $k$,
    \[\swy^{(k)} -  \swy^{'(k)} - \E^{(k)}[\swy^{(k)} -  \swy^{'(k)}] = O_\P\lb n^{-(\r' + 1) / 2}\rb.\]
    Since $K = O(1)$, it implies that
    \[\left|\swy -  \swy^{'} - \frac{1}{K}\sum_{k=1}^{K}\E^{(k)}[\swy^{(k)} -  \swy^{'(k)}]\right| = O_\P\lb n^{-(\r' + 1) / 2}\rb.\]
    Similarly, we have
    \begin{align*}
      &\left|\stheta -  \stheta^{'} - \frac{1}{K}\sum_{k=1}^{K}\E^{(k)}[\stheta^{(k)} -  \stheta^{'(k)}]\right| + \left\|\sw -  \sw^{'} - \frac{1}{K}\sum_{k=1}^{K}\E^{(k)}[\sw^{(k)} -  \sw^{'(k)}]\right\|_{2}\\
      & + \left\|\sy -  \sy^{'} - \frac{1}{K}\sum_{k=1}^{K}\E^{(k)}[\sy^{(k)} -  \sy^{'(k)}]\right\|_{2} = O_\P\lb n^{-(\r' + 1) / 2}\rb.
    \end{align*}
    By Lemma \ref{lem:mean_var},
    \[|\E[\stheta']|  +|\E[\swy']| + \|\E[\sw']\|_{2} + \|\E[\sy']\|_{2} = O(1).\]
    Therefore,
    \begin{align}
      &\bigg|\lb\swy - \swy' - \frac{1}{K}\sum_{k=1}^{K}\E^{(k)}[\swy^{(k)} - \swy^{'(k)}]\rb\E[\stheta']\nonumber\\
      & + \E[\swy']\lb\stheta - \stheta' - \frac{1}{K}\sum_{k=1}^{K}\E^{(k)}[\stheta^{(k)} - \stheta^{'(k)}]\rb\nonumber\\
      & - \lb\sw - \sw' - \frac{1}{K}\sum_{k=1}^{K}\E^{(k)}[\sw^{(k)} - \sw^{'(k)}]\rb^\tran \E[\sy']\nonumber\\
      & - \E[\sw']^\tran \lb\sy - \sy' - \frac{1}{K}\sum_{k=1}^{K}\E^{(k)}[\sy^{(k)} - \sy^{'(k)}]\rb\bigg|\nonumber\\
      & = O_\P(n^{-(\r' + 1) / 2}).\label{eq:cf_goal1_3}
    \end{align}
  \item Note that $\E[\swy']\E[\stheta'] - \E[\sw']^\tran\E[\sy']$ is the limit of $\hatdenom\cdot \sqrt{n}(\hat{\tau} - \tau^{*}) $ when $\{(\bfpi'_i, \bfm'_i, \nhbftau'_i): i = 1, \ldots, n\}$ are plugged in as the estimates. Under Assumption \ref{as:asymptotic_deterministic}, either $\bfpi'_i = \bfpi_i$ for all $i\in [n]$ or $(\bfm'_i, \nhbftau'_i) = (\bfm_i, \nhbftau_i)$ for all $i\in [n]$. Then, by Lemma \ref{lem:numer_weak_deterministic}, 
    \begin{equation}
      \label{eq:cf_goal1_4}
      \E[\swy']\E[\stheta'] - \E[\sw']^\tran\E[\sy'] = 0.
    \end{equation}
  \item We shall prove that
    \begin{align}
      \label{eq:cf_goal1_5}
      &\left|\frac{1}{K}\sum_{k=1}^{K}\E^{(k)}[\swy^{(k)}]\E[\stheta'] - \E[\sw']^\tran\E^{(k)}[\sy^{(k)}]\right|= O\lb\sqrt{\E[\bar{\Delta}_{\pi}^2]}\cdot\sqrt{\E[\bar{\Delta}_{y}^2]}\rb.
    \end{align}
By definition, we can write
    \[\Delta_{\pi i} = \sqrt{\E^{(k)}[(\hat{\bfpi}_i(\W_i) - \bfpi_i(\W_i))^2]}, \quad \Delta_{y i} = \sqrt{\E^{(k)}[\|\hat{\bfm}_i - \bfm_i\|_{2}^{2}] + \E^{(k)}[\|\hbftau_i - \nhbftau_i\|_{2}^{2}]}, \quad \forall i\in \I_k.\]
    By Assumption \ref{as:latent_ign_app} and \ref{as:independent_general},
    \begin{align*}
      \E^{(k)}[\swy^{(k)}]& = \frac{1}{m}\sum_{i\in \I_{k}}\E^{(k)}[\Theta_i\W_i^\tran \J \td{\Y}_i]\\
                          & = \frac{1}{m}\sum_{i\in \I_{k}}\E^{(k)}[\Theta_i\W_i^\tran \J \td{\Y}_i(0)] + \frac{1}{m}\sum_{i\in \I_{k}}\E^{(k)}[\Theta_i\W_i^\tran \J \diag(\W_i)\td{\bftau}_i]\\
                          & = \frac{1}{m}\sum_{i\in \I_{k}}\E^{(k)}[\Theta_i\W_i^\tran \J]\E^{(k)}[\td{\Y}_i(0)] + \frac{1}{m}\sum_{i\in \I_{k}}\E^{(k)}[\Theta_i\W_i^\tran \J \diag(\W_i)]\E^{(k)}[\td{\bftau}_i].
    \end{align*}
    Similarly,
    \begin{align*}
      \E^{(k)}[\sy^{(k)}] = \frac{1}{m}\sum_{i\in \I_{k}}\E^{(k)}[\Theta_i\J]\E^{(k)}[\td{\Y}_i(0)] + \frac{1}{m}\sum_{i\in \I_{k}}\E^{(k)}[\Theta_i\J\diag(\W_i)]\E^{(k)}[\td{\bftau}_i].
    \end{align*}
    Putting the pieces together and using the fact that $\E[\sw']^\tran \J = \E[\sw']^\tran, \E^{(k)}[\sw]^\tran \J = \E^{(k)}[\sw]^\tran$, 
    \begin{align}
      &\E^{(k)}[\swy^{(k)}]\E[\stheta'] - \E[\sw^{'}]^\tran\E^{(k)}[\sy^{(k)}]\nonumber\\
      & = \frac{1}{m}\sum_{i\in \I_k}\bigg\{\E^{(k)}[\Theta_i\W_i^\tran \J\diag(\W_i)]\E[\stheta'] - \E[\sw']^\tran\E^{(k)}[\Theta_i\J\diag(\W_i)]\bigg\}\E^{(k)}[\td{\bftau}_i].\nonumber\\
      & \,\, + \frac{1}{m}\sum_{i\in \I_k}\bigg\{\E^{(k)}[\Theta_i\W_i^\tran \J]\E[\stheta'] - \E[\sw']^\tran \E^{(k)}[\Theta_i]\bigg\}\E^{(k)}[\td{\Y}_i(0)]\nonumber\\
      & \triangleq \frac{1}{m}\sum_{i\in \I_k}\bfa_{i1}^\tran\E^{(k)}[\td{\bftau}_i] + \frac{1}{m}\sum_{i\in \I_k}\bfa_{i2}^\tran\E^{(k)}[\td{\Y}_i(0)] \label{eq:cf_goal1_step5_1}
    \end{align}
    As in the proof of Theorem \ref{thm:numer_strong_deterministic}. Let
    \[\Theta_i^* = \frac{\bfPi(\W_i)}{\bfpi_i(\W_i)}, \quad \td{\Y}_i^{*} = \Y_i - \bfm_i - \diag(\W_i)\nhbftau_i,\]
    and $(\stheta^*, \sw^*)$ be the counterpart of $(\stheta, \sw)$ with $(\Theta_i, \td{\Y}_i)$ replaced by $(\Theta_i^{*}, \td{\Y}_i^{*})$. Recalling \eqref{eq:Thetai-Thetai*} on page \pageref{eq:Thetai-Thetai*}, there exists a constant $C_1$ that only depends on $c_{\pi}$ and $T$ such that
    \begin{align}
      \label{eq:cf_goal1_step5_2}
      &\left|\E^{(k)}[(\Theta_i - \Theta_i^{*})\W_i^\tran \J\diag(\W_i)]\right| + \left\|\E^{(k)}[(\Theta_i - \Theta_i^{*})\J\W_i]\right\|_{2} + \left|\E^{(k)}[\Theta_i - \Theta_i^{*}]\right|\nonumber\\
      & \,\, + \left\|\E^{(k)}[(\Theta_i - \Theta_i^{*})\J\diag(\W_i)]\right\|_{\op} \le C_1\Delta_{\pi i}.
    \end{align}
    where $\delta'_{\pi i}$ and $\bar{\delta}'_{\pi}$ are defined in \eqref{eq:deltai'} and \eqref{eq:bardelta'}, respectively. Then,
    \begin{align*}
      &\bigg|\E^{(k)}[\Theta_i\W_i^\tran \J\diag(\W_i)]\E[\stheta'] - \lb \E[\Theta_i^{*}\W_i^\tran \J\diag(\W_i)]\E[\stheta^{*}]\rb\bigg|\\
      &\le\left|\E^{(k)}[\Theta_i\W_i^\tran \J\diag(\W_i)] - \E[\Theta_i^{*}\W_i^\tran \J\diag(\W_i)] \right|\cdot\E[\stheta']\\
      & \quad + \E[\Theta_i^{*}\W_i^\tran \J\diag(\W_i)] \cdot |(\E[\stheta'] - \E[\stheta^{*}])|\\
      & \le C_1(\E[\stheta']\cdot \Delta_{\pi i} + \E[\Theta_i^{*}\W_i^\tran \J\diag(\W_i)]\cdot \bar{\delta}_{\pi}').
    \end{align*}
    Note that $\E[\Theta_i^{*}\W_i^\tran \J\diag(\W_i)] = \EPi[\W^\tran \J\diag(\W)]\le T$ is a constant. By Assumption \ref{as:overlap_general}, $\E[\stheta]\le 1 / c_{\pi}$. Thus,
    \begin{align}
      &\left\|\bfa_{i1} - \lb\E[\Theta_i^*\W_i^\tran \J\diag(\W_i)]\E[\stheta^*] - \E[\sw^*]^\tran\E[\Theta_i^*\J\diag(\W_i)]\rb\right\|_{2}\nonumber\\
      & \le C_2(\Delta_{\pi i} + \bar{\delta}'_{\pi}), \label{eq:ai1}
    \end{align}
    for some constant $C_2$ that only depends on $c_{\pi}$ and $T$. Let
    \[\bfa_{i1}^{*} = \E[\Theta_i^*\W_i^\tran \J\diag(\W_i)]\E[\stheta^*] - \E[\sw^*]^\tran\E[\Theta_i^*\J\diag(\W_i)].\]
    Since we assume $\tau^{*} = 0$, $\|\E^{(k)}[\td{\bftau}_i]\|_{2} = \|\E^{(k)}[\hbftau_i - \nhbftau_i]\|_{2}  \le \Delta_{y i}$,
    \begin{align}
        \label{eq:cf_goal1_step5_6}
      \left|\frac{1}{m}\sum_{i\in \I_k}\bfa_{i1}^\tran\E^{(k)}[\td{\bftau}_i] - \frac{1}{m}\sum_{i\in \I_k}\bfa_{i1}^{*\tran}\E^{(k)}[\td{\bftau}_i]\right|\le \frac{C_3}{m}\sum_{i\in \I_k}(\Delta_{\pi i} + \bar{\Delta}'_{\pi})\Delta_{y i}.
    \end{align}
    On the other hand, by definition of $\Theta_i^{*}$,
    \[\E[\Theta_i^*\W_i^\tran \J\diag(\W_i)] = \EPi[\W^\tran \J\diag(\W)], \quad \E[\Theta_i^*\J\diag(\W_i)] = \EPi[\J\diag(\W)],\]
    and
    \[\E[\stheta^{*}] = \frac{1}{n}\sum_{i=1}^{n}\E[\Theta_i^{*}] = 1, \quad \E[\sw^*] = \frac{1}{n}\sum_{i=1}^{n}\E[\Theta_i^{*}\J \W_i] = \EPi[\J \W].\]
    Thus,
    \begin{align*}
      \bfa_{i1}^{*\tran} &= \EPi[\W^\tran \J\diag(\W)] - \EPi[\J \W]^\tran\EPi[\J\diag(\W_i)]\\
      & = \EPi[(\W - \EPi[\W])^\tran \J\diag(\W)].
    \end{align*}
    By the \DATEeq,
    \[\bfa_{i1}^{*\tran} = \EPi[(\W - \EPi[\W])^\tran \J\W]\xi^\tran.\]
    As a consequence,
    \begin{equation}
      \label{eq:cf_goal1_step5_7}
      \frac{1}{m}\sum_{i\in \I_k}\bfa_{i1}^{*\tran}\E^{(k)}[\td{\bftau}_i] = \EPi[(\W - \EPi[\W])^\tran \J\W]\cdot \lb\frac{1}{m}\sum_{i\in \I_k}\xi^\tran\E^{(k)}[\td{\bftau}_i]\rb.
    \end{equation}
    Now we turn to the second and third terms of \eqref{eq:cf_goal1_step5_1}. Similar to \eqref{eq:ai1}, we can show that 
    \[\|\bfa_{i2}\|_{2} = \|\bfa_{i2} - (\E[\sw^{*}]^\tran\E[\stheta^{*}] - \E[\sw^{*}]^\tran\E[\stheta^{*}])\|_{2}\le C_3(\Delta_{\pi i} + \bar{\delta}'_{\pi}),\]
    for some constant $C_3$ that only depends on $c_{\pi}$ and $T$. By \eqref{eq:Yi0_WLOG}, $\E^{(k)}[\|\td{\Y}_i(0)\|_{2}]\le \Delta_{yi}$. Therefore,
    \begin{equation}
      \label{eq:cf_goal1_step5_8}
      \left|\frac{1}{m}\sum_{i\in \I_k}\bfa_{i2}^\tran\E^{(k)}[\td{\Y}_i(0)]\right|\le \frac{C_3}{m}\sum_{i\in \I_k}(\Delta_{\pi i}  + \bar{\Delta}'_{\pi})\Delta_{y i}.
    \end{equation}
    Putting \eqref{eq:cf_goal1_step5_1}, \eqref{eq:cf_goal1_step5_6}, \eqref{eq:cf_goal1_step5_7}, and \eqref{eq:cf_goal1_step5_8} together, we arrive at
    \begin{align*}
      &\left|\E^{(k)}[\swy^{(k)}]\E[\stheta'] - \E[\sw^{'}]^\tran\E[\sy^{(k)}] - \EPi[(\W - \EPi[\W])^\tran \J\W]\cdot \lb\frac{1}{m}\sum_{i\in \I_k}\xi^\tran\E^{(k)}[\td{\bftau}_i]\rb\right|\\
      & \le \frac{C_4}{m}\sum_{i\in \I_k}(\Delta_{\pi i} + \bar{\delta}'_{\pi})\Delta_{y i},
    \end{align*}
    for some constant $C_4$ that only depends on $c_{\pi}$ and $T$. Since $\tau^{*} = 0$,
    \begin{align*}
    &\frac{1}{K}\sum_{k=1}^{K}\lb\frac{1}{m}\sum_{i\in \I_k}\xi^\tran\E^{(k)}[\td{\bftau}_i]\rb = \frac{1}{K}\sum_{k=1}^{K}\lb\frac{1}{m}\sum_{i\in \I_k}\xi^\tran (\bftau_i - \E^{(k)}[\hbftau_i])\rb\\
    & = \tau^{*} - \frac{1}{K}\sum_{k=1}^{K}\lb\frac{1}{m}\sum_{i\in \I_k}\xi^\tran \E^{(k)}[\hbftau_i]\rb = 0,
    \end{align*}
    where the last step uses \eqref{eq:avg_hattau}.
    Therefore, averaging over $k$ and marginalizing over $\cD_{-k}$ yields that
    \begin{align*}
      &\left|\frac{1}{K}\sum_{k=1}^{K}\E^{(k)}[\swy]\E[\stheta'] - \E[\sw']^\tran\E^{(k)}[\sy^{(k)}]\right| = O_\P\lb\frac{1}{n}\sum_{i=1}^{n}\E\left[(\Delta_{\pi i} + \bar{\delta}'_{\pi})\Delta_{y i}\right]\rb.
    \end{align*}
    By Cauchy-Schwarz inequality,
    \begin{align*}
      &\left|\frac{1}{K}\sum_{k=1}^{K}\E^{(k)}[\swy]\E[\stheta'] - \E[\sw']^\tran\E^{(k)}[\sy^{(k)}]\right|\\
      & = O_\P\lb\frac{1}{n}\sum_{i=1}^{n}\sqrt{\E\left[(\Delta_{\pi i} + \bar{\delta}'_{\pi})^2\right]}\sqrt{\E\left[\Delta_{y i}^2\right]}\rb\\
      & = O_\P\lb\sqrt{\frac{1}{n}\sum_{i=1}^{n}\E\left[(\Delta_{\pi i} + \bar{\delta}'_{\pi})^2\right]}\sqrt{\frac{1}{n}\sum_{i=1}^{n}\E\left[\Delta_{y i}^2\right]}\rb\\
      & = O_\P\lb\sqrt{\frac{1}{n}\sum_{i=1}^{n}\lb\E[\Delta_{\pi i}^2] + \bar{\delta}_{\pi}^{'2}\rb}\sqrt{\frac{1}{n}\sum_{i=1}^{n}\E[\Delta_{y i}^2]}\rb\\
      & = O_\P\lb\sqrt{\E[\bar{\Delta}_{\pi}^2] + \bar{\delta}_{\pi}^{'2}}\cdot\sqrt{\E[\bar{\Delta}_{y}^2]}\rb.
    \end{align*}
    Therefore, \eqref{eq:cf_goal1_5} is proved by \eqref{eq:E_bardelta_pi'} and \eqref{eq:E_bardelta_y'} on page \pageref{eq:E_bardelta_pi'}.
  \item Next, we shall prove that
    \begin{align}
      \label{eq:cf_goal1_6}
      \left|\frac{1}{K}\sum_{k=1}^{K}\E[\swy']\E^{(k)}[\stheta^{(k)} - \stheta^{'(k)}] - \E^{(k)}[\sw^{(k)} - \sw^{'(k)}]^\tran\E[\sy']\right| = O\lb\sqrt{\E[\bar{\Delta}_{\pi}^2]}\cdot\sqrt{\E[\bar{\Delta}_{y}^2]}\rb.
    \end{align}
    Using the same argument as \eqref{eq:cf_goal1_step5_2}, we can show that
    \begin{align*}
      &\left\|\E[(\Theta'_i - \Theta_i)\J\W_i]\right\|_{2} + \left|\E[\Theta'_i - \Theta_i]\right|\le C_1(\delta'_{\pi i} + \Delta_{\pi i}).
    \end{align*}
    Averaging over $i\in \I_{k}$, we obtain that
    \begin{equation}
      \label{eq:cf_goal1_step6_1}
      |\E^{(k)}[\stheta^{(k)}] - \E^{(k)}[\stheta^{'(k)}]| + \|\E^{(k)}[\sw^{(k)}] - \E^{(k)}[\sw^{'(k)}]\|_{2}\le C_1(\bar{\delta}'_{\pi} + \bar{\Delta}_{\pi}) = O\lb\bar{\Delta}_{\pi}\rb,
    \end{equation}
    where the last step uses \eqref{eq:piimi'_WLOG}. On the other hand,
    \begin{align*}
      \E[\swy']& = \frac{1}{n}\sum_{i=1}^{n}\E[\Theta_i'\W_i^\tran \J \td{\Y}'_i]\\
                          & = \frac{1}{n}\sum_{i=1}^{n}\E[\Theta_i'\W_i^\tran \J \td{\Y}'_i(0)] + \frac{1}{n}\sum_{i=1 }^{n}\E[\Theta'_i\W_i^\tran \J \diag(\W_i)\td{\bftau}'_i]\\
                          & = \frac{1}{n}\sum_{i=1}^{n}\E[\Theta'_i\W_i^\tran \J]\E[\td{\Y}'_i(0)] + \frac{1}{n}\sum_{i=1}^{n}\E[\Theta'_i\W_i^\tran \J \diag(\W_i)]\td{\bftau}'_i.
    \end{align*}
    Note that
    \[\|\E[\Theta'_i\W_i^\tran \J]\|_{2}\le \frac{\sqrt{T}}{c_{\pi}}, \quad \|\E[\Theta'_i\W_i^\tran \J \diag(\W_i)]\|_{2}\le \frac{T}{c_{\pi}}, \quad \|\E[\td{\Y}'_i(0)]\|_{2} + \|\td{\bftau}'_i\|_{2}\le \delta'_{y i}.\]
    As a result, there exists a constant $C_5$ that only depends on $c_{\pi}$ and $T$ such that
    \[|\E[\swy']|\le \frac{C_5}{n}\sum_{i=1}^{n}\delta'_{y i} = O\lb\E[\bar{\Delta}_{y}^2]\rb.\]
    Similarly, 
    \[\|\E[\sy']\|_{2} = O\lb\E[\bar{\Delta}_{y}^2]\rb.\]
    Together with \eqref{eq:cf_goal1_step6_1}, we prove \eqref{eq:cf_goal1_6}.
  \item Consider the following decompositions:
  \begin{align*}
    &\swy\stheta - \swy'\stheta'\\
    & = (\swy - \swy')(\stheta - \stheta') \\
    & + (\swy - \swy')(\stheta' - \E[\stheta']) + (\swy' - \E[\swy'])(\stheta - \stheta')\\
    & + \lb\swy - \swy' - \frac{1}{K}\sum_{k=1}^{K}\E^{(k)}[\swy^{(k)} - \swy^{'(k)}]\rb\E[\stheta']\\
    & + \E[\swy']\lb\stheta - \stheta' - \frac{1}{K}\sum_{k=1}^{K}\E^{(k)}[\stheta^{(k)} - \stheta^{'(k)}]\rb\\
    & - \frac{1}{K}\lb\sum_{k=1}^{K}\E^{(k)}[\swy^{'(k)}]\rb\cdot \E[\stheta']\\
    & + \frac{1}{K}\lb\sum_{k=1}^{K}\E^{(k)}[\swy^{(k)}]\rb\cdot \E[\stheta']\\
    & + \E[\swy']\cdot  \frac{1}{K}\lb\sum_{k=1}^{K}\E^{(k)}[\stheta^{(k)} - \stheta^{'(k)}]\rb,
  \end{align*}
  and
  \begin{align*}
    &\sw^\tran\sy - \sw^{'\tran}\sy'\\
    & = (\sw - \sw')^\tran (\sy - \sy')\\
    & + (\sw - \sw')^\tran (\sy' - \E[\sy']) + (\sw' - \E[\sw'])^\tran (\sy - \sy')\\
    & + \lb\sw - \sw' - \frac{1}{K}\sum_{k=1}^{K}\E^{(k)}[\sw^{(k)} - \sw^{'(k)}]\rb^\tran \E[\sy'] \\
    & + \E[\sw']^\tran \lb \sy - \sy' - \frac{1}{K}\sum_{k=1}^{K}\E^{(k)}[\sy^{(k)} - \sy^{'(k)}]\rb\\
    & - \E[\sw']^\tran \frac{1}{K}\lb\sum_{k=1}^{K}\E^{(k)}[\sy^{'(k)}]\rb\\
    & + \E[\sw']^\tran  \frac{1}{K}\lb\sum_{k=1}^{K}\E^{(k)}[\sy^{(k)}]\rb\\
    & + \frac{1}{K}\lb\sum_{k=1}^{K}\E^{(k)}[\sw^{(k)} - \sw^{'(k)}]\rb^\tran \E[\sy'].
  \end{align*}
  Since $(\bfpi'_i, \bfm'_i, \bftau'_i)$ are deterministic,
  \begin{align*}
    & \frac{1}{K}\lb\sum_{k=1}^{K}\E^{(k)}[\swy^{'(k)}]\rb\cdot \E[\stheta'] = \frac{1}{K}\lb\sum_{k=1}^{K}\E[\swy^{'(k)}]\rb\cdot \E[\stheta'] =\E[\swy']\E[\stheta'],
  \end{align*}
  and 
  \begin{align*}
    & \E[\sw']^\tran \frac{1}{K}\lb\sum_{k=1}^{K}\E^{(k)}[\sy^{'(k)}]\rb = \E[\sw']^\tran \frac{1}{K}\lb\sum_{k=1}^{K}\E[\sy^{'(k)}]\rb= \E[\sw']^\tran \E[\sy'].
  \end{align*}
  By \eqref{eq:cf_goal1_1}, \eqref{eq:cf_goal1_2}, \eqref{eq:cf_goal1_3}, \eqref{eq:cf_goal1_4}, \eqref{eq:cf_goal1_5}, \eqref{eq:cf_goal1_6}, and triangle inequality,
  \[\swy\stheta - \sw^\tran\sy - \left\{\swy'\stheta' - \sw^{'\tran}\sy'\right\} = O_\P\lb n^{-\r} + n^{-(\r + 1) / 2} + n^{-(\r' + 1) / 2} + \sqrt{\E[\bar{\Delta}_{\pi}^2]}\cdot\sqrt{\E[\bar{\Delta}_{y}^2]}\rb.\]
  The proof of \eqref{eq:cf_goal1} is then completed.
  \end{enumerate}

  \subsubsection{Proof of \eqref{eq:cf_goal2}} \label{subsubapp:variance}
  Let
  \begin{align}\label{eq:hVi''}
    \hV''_i &= \Theta'_i\bigg\{\swy - \sy^{\tran}\J \W_i  + \stheta\W_i^\tran \J \td{\Y}'_i - \sw^{\tran}\J\td{\Y}'_i \bigg\}.
  \end{align}
  Recalling the definition of $\hV'_i$ in \eqref{eq:hVi'} on page \pageref{eq:hVi'},
  \begin{align*}
    |\hV'_i - \hV''_i|
    &\le |\swy - \swy'|\cdot \Theta'_i + \|\sy - \sy'\|_{2}\cdot \|\Theta_i' \J \W_i\|_{2}\\
    & \quad + |\stheta - \stheta'|\cdot |\Theta'_i\W_i^\tran \J \td{\Y}'_i| + \|\sw - \sw'\|_{2}\cdot \|\Theta_i' \J\td{\Y}'_i \|_{2}\\
    & \le \left\{|\swy - \swy'| + \|\sy - \sy'\|_{2} + |\stheta - \stheta'| + \|\sw - \sw'\|_{2}\right\}\\
    &\quad \cdot \left\{\Theta'_i + \|\Theta_i' \J \W_i\|_{2} + |\Theta'_i\W_i^\tran \J \td{\Y}'_i| + \|\Theta_i' \J\td{\Y}'_i\|_{2}\right\}
  \end{align*}
  By Jensen's inequality and Cauchy-Schwarz inequality,
  \begin{align*}
    &\frac{1}{n}\sum_{i=1}^{n}(\hV'_i - \hV''_i)^2\\
    & \le 4\left\{|\swy - \swy'| + \|\sy - \sy'\|_{2} + |\stheta - \stheta'| + \|\sw - \sw'\|_{2}\right\}^2\\
    & \quad \cdot \frac{1}{n}\sum_{i=1}^{n}\left\{\Theta_i^{'2} + \|\Theta_i' \J \W_i\|_{2}^2 + |\Theta'_i\W_i^\tran \J \td{\Y}'_i|^2 + \|\Theta_i' \J\td{\Y}'_i\|_{2}^2\right\}\\
    & \le \frac{8T}{c_{\pi}^2}\left\{|\swy - \swy'|^2 + \|\sy - \sy'\|_{2}^2 + |\stheta - \stheta'|^2 + \|\sw - \sw'\|_{2}^2\right\}\cdot \frac{1}{n}\sum_{i=1}^{n}\left\{1 + \|\td{\Y}'_i\|_{2}^2\right\},
  \end{align*}
  where the last inequality uses Assumption \ref{as:overlap_general}. By \eqref{eq:Yi_2norm} and \eqref{eq:cf_goal1_step1_1} on page \pageref{eq:Yi_2norm}, 
  \begin{equation}
    \label{eq:hV'hV''}
    \frac{1}{n}\sum_{i=1}^{n}(\hV'_i - \hV''_i)^2= O_\P(n^{-\r}) = o_\P(1).
  \end{equation}
  On the other hand, recalling the definition of $\hV_i$ in \eqref{eq:hVi} on page \pageref{eq:hVi},
  \begin{align*}
    |\hV''_i - \hV_i|
    & \le |\swy|\cdot |\Theta'_i - \Theta_i| + \|\sy\|_{2} \cdot \|(\Theta'_i - \Theta_i)\J \W_i\|_{2} \\
    & \quad + |\stheta|\cdot |\Theta'_i \W_i^\tran\J \td{\Y}'_i - \Theta_i \W_i^\tran\J \td{\Y}_i| + \|\sw\|_{2} \cdot \|\Theta_i' \J\td{\Y}'_i - \Theta_i \J\td{\Y}_{i}\|_{2}\\
    & \quad + \Theta_i \hat{\tau}\left\{|\sww| + |\sw^\tran \J\W_i| + |\stheta \W_i^\tran \J \W_i| + |\sw^\tran \J \W_i|\right\}\\
    & \le \left\{ |\swy| + \|\sy\|_{2} + |\stheta| + \|\sw\|_{2}\right\} \cdot \bigg\{|\Theta'_i - \Theta_i| + \|(\Theta'_i - \Theta_i)\J \W_i\|_{2}\\
    & \quad + |\Theta_i' \W_i^\tran\J\td{\Y}'_i - \Theta_i \W_i^\tran\J\td{\Y}_{i}| + \|\Theta_i' \J\td{\Y}'_i - \Theta_i \J\td{\Y}_{i}\|_{2}\bigg\}\\
    & \quad + \Theta_i |\hat{\tau}|\left\{|\sww| + |\sw^\tran \J\W_i| + |\stheta \W_i^\tran \J \W_i| + |\sw^\tran \J \W_i|\right\}.
  \end{align*}
  Since $\|\J \W_i\|_{2}\le \sqrt{T}$, 
  \[\|(\Theta'_i - \Theta_i)\J \W_i\|_{2}\le \sqrt{T}|\Theta'_i - \Theta_i|.\]
  By triangle inequality,
  \begin{align*}
    &|\Theta_i' \W_i^\tran\J\td{\Y}'_i - \Theta_i \W_i^\tran\J\td{\Y}_{i}|\\
    & \le |\Theta_i \W_i^\tran\J\td{\Y}'_i - \Theta_i \W_i^\tran\J\td{\Y}_{i}| + |\Theta_i' \W_i^\tran\J\td{\Y}'_i - \Theta_i \W_i^\tran\J\td{\Y}'_i|\\
    & \le \frac{\sqrt{T}}{c_{\pi}}\|\td{\Y}‘_i - \td{\Y}_i\|_{2} + \sqrt{T}\|\td{\Y}_i’\|_{2}\cdot |\Theta'_i - \Theta_i|\\
    & = \frac{\sqrt{T}}{c_{\pi}}\|\hat{\bfm}_{i} - \bfm'_{i}\|_{2} + \sqrt{T}\|\td{\Y}'_i\|_{2}\cdot |\Theta'_i - \Theta_i|.
  \end{align*}
  Similarly,
  \begin{align*}
    &\|\Theta_i' \J\td{\Y}'_i - \Theta_i \J\td{\Y}_{i}\|_{2}\\
    & \le \|\Theta_i \J\td{\Y}'_i - \Theta_i \J\td{\Y}_{i}\|_{2} + \|\Theta_i' \J\td{\Y}'_i - \Theta_i \J\td{\Y}'_i\|_{2}\\
    & \le \frac{1}{c_{\pi}}\|\td{\Y}'_i - \td{\Y}_{i}\|_{2} + \|\td{\Y}'_i\|_{2}\cdot |\Theta'_i - \Theta_i|\\
    & = \frac{1}{c_{\pi}}\|\hat{\bfm}_{i} - \bfm'_{i}\|_{2} + \|\td{\Y}'_i\|_{2}\cdot |\Theta'_i - \Theta_i|
  \end{align*}
Putting pieces together, we have that
  \begin{align}
    &(\hV''_i - \hV_i)^2\nonumber\\
    & \le C\left\{ |\swy| + \|\sy\|_{2} + |\stheta| + \|\sw\|_{2}\right\}^2\bigg\{|\Theta'_i - \Theta_i|^2 \cdot (1 + \|\td{\Y}'_i\|_{2}^2) + \|\hat{\bfm}_{i} - \bfm'_{i}\|_{2}^2\bigg\}\nonumber\\
    & \quad + C|\hat{\tau}|\left\{|\sww| + |\sw^\tran \J\W_i| + |\stheta \W_i^\tran \J \W_i| + |\sw^\tran \J \W_i|\right\}\nonumber,
  \end{align}
  for some constant $C$ that only depends on $c_{\pi}$ and $T$. By Lemma \ref{lem:mean_var} and Markov's inequality,
  \[|\swy| + \|\sy\|_{2} + |\stheta| + \|\sw\|_{2} = O_\P(1), \quad |\sww| + |\sw^\tran \J\W_i| + |\stheta \W_i^\tran \J \W_i| + |\sw^\tran \J \W_i| = O(1).\]
  By the first part of the theorem,
  \[|\hat{\tau}| = o_\P(1).\]
  Therefore,
  \begin{equation}
    \label{eq:hV''hV_1}
    \frac{1}{n}\sum_{i=1}^{n}(\hV''_i - \hV_i)^2 = O_\P\lb\frac{1}{n}\sum_{i=1}^{n}\bigg\{|\Theta'_i - \Theta_i|^2 \cdot (1 + \|\td{\Y}'_i\|_{2}^2) + \|\hat{\bfm}_{i} - \bfm'_{i}\|_{2}^2\bigg\} \rb + o_\P(1).
  \end{equation}
  By Assumption \ref{as:overlap_general} and \ref{as:asymptotic_deterministic},
  \begin{align*}
    \frac{1}{n}\sum_{i=1}^{n}\E[|\Theta'_i - \Theta_i|^2] &= \frac{1}{n}\sum_{i=1}^{n}\E\left[\frac{\bfPi(\W_i)^2}{\hat{\bfpi}_i(\W_i)^2\bfpi'_i(\W_i)^2} |\hat{\bfpi}_i(\W_i) - \bfpi'_i(\W_i)|^2\right]\\
    &\le\frac{1}{c_{\pi}^2}\sum_{i=1}^{n}\E[(\hat{\bfpi}_i(\W_i) - \bfpi'_i(\W_i))^2] = O(n^{-\r}) = o(1).
  \end{align*}
  By Assumption \ref{as:asymptotic_deterministic},
  \[\frac{1}{n}\sum_{i=1}^{n}\E\left[\|\hat{\bfm}_i - \bfm'_i\|_{2}^{2}\right] = O(n^{-\r}) = o(1).\]
  By Markov's inequality, we obtain that
  \begin{equation}
    \label{eq:hV''hV_2}
    \frac{1}{n}\sum_{i=1}^{n}|\Theta'_i - \Theta_i|^2 + \frac{1}{n}\sum_{i=1}^{n}\|\hat{\bfm}_i - \bfm'_i\|_{2}^{2} = o_\P(1).
  \end{equation}
  By H\"{o}lder's inequality,
  \begin{align*}
    \frac{1}{n}\sum_{i=1}^{n}|\Theta'_i - \Theta_i|^2 \cdot \|\td{\Y}'_i\|_{2}^2 &\le \lb\frac{1}{n}\sum_{i=1}^{n}|\Theta'_i - \Theta_i|^{2(1 + 2 / \omega)}\rb^{\omega / (2 + \omega)}\lb\frac{1}{n}\sum_{i=1}^{n}\|\td{\Y}'_i\|_{2}^{2 + \omega}\rb^{2 / (2 + \omega)}.
  \end{align*}
  By Markov's inequality and Assumption \ref{as:independent_general},
  \[\frac{1}{n}\sum_{i=1}^{n}\|\td{\Y}'_i\|_{2}^{2 + \omega} = O_\P\lb \frac{1}{n}\sum_{i=1}^{n}\E[\|\td{\Y}'_i\|_{2}^{2 + \omega}]\rb = O_\P(1).\]
  By Assumption \ref{as:overlap_general},
  \begin{align*}
    &\frac{1}{n}\sum_{i=1}^{n}\E\left[|\Theta'_i - \Theta_i|^{2(1 + 2 / \omega)}\right]\\
    &= \frac{1}{n}\sum_{i=1}^{n}\E\left[\frac{\bfPi(\W_i)^{2(1 + 2 / \omega)}}{\hat{\bfpi}_i(\W_i)^{2(1 + 2/\omega)}\bfpi'_i(\W_i)^{2(1 + 2/\omega)}} |\hat{\bfpi}_i(\W_i) - \bfpi'_i(\W_i)|^{2(1 + 2/\omega)}\right]\\
    & \le \frac{1}{c_{\pi}^{4(1 + 2/\omega)}}\frac{1}{n}\sum_{i=1}^{n}\E\left[(\hat{\bfpi}_i(\W_i) - \bfpi'_i(\W_i))^{2(1 + 2/\omega)}\right]\\
    & \stackrel{(i)}{\le} \frac{1}{c_{\pi}^{4(1 + 2/\omega)}}\frac{1}{n}\sum_{i=1}^{n}\E\left[(\hat{\bfpi}_i(\W_i) - \bfpi'_i(\W_i))^{2}\right]\\
    & = O\lb n^{-\r}\rb = o(1),
  \end{align*}
  where (i) uses the fact that $|\hat{\bfpi}_i(\W_i) - \bfpi'_i(\W_i)|\le 1$. Thus, by Markov's inequality,
  \begin{equation}
    \label{eq:hV''hV_3}
    \frac{1}{n}\sum_{i=1}^{n}|\Theta'_i - \Theta_i|^2 \cdot \|\td{\Y}'_i\|_{2}^2 = o_\P(1).
  \end{equation}
  Putting \eqref{eq:hV''hV_1}, \eqref{eq:hV''hV_2}, and \eqref{eq:hV''hV_3} together, we conclude that
  \begin{equation}
    \label{eq:hV''hV}
    \frac{1}{n}\sum_{i=1}^{n}(\hV''_i - \hV_i)^2 = o_\P(1).
  \end{equation}
  By Jensen's inequality, \eqref{eq:hV'hV''}, and \eqref{eq:hV''hV},
  \begin{equation}
    \label{eq:hV'hV}
    \frac{1}{n}\sum_{i=1}^{n}(\hV'_i - \hV_i)^2 \le \frac{2}{n}\sum_{i=1}^{n}(\hV'_i - \hV''_i)^2 + \frac{2}{n}\sum_{i=1}^{n}(\hV''_i - \hV_i)^2 = o_\P(1).
  \end{equation}
  By Lemma \ref{lem:mean_var}, it is easy to see that
  \begin{equation}
    \label{eq:hVi^2}
    \left|\frac{1}{n}\sum_{i=1}^{n}\hV_i\right|^2\le \frac{1}{n}\sum_{i=1}^{n}\hV_i^2 = O_\P(1).
  \end{equation}
  As a result,
  \[\left|\frac{1}{n}\sum_{i=1}^{n}\hV'_i - \frac{1}{n}\sum_{i=1}^{n}\hV_i\right|\le \frac{1}{n}\sum_{i=1}^{n}|\hV'_i - \hV_i| \le \sqrt{\frac{1}{n}\sum_{i=1}^{n}(\hV'_i - \hV_i)^2} = o_\P(1).\]
  Together with \eqref{eq:hVi^2}, it implies that
  \[\left|\lb\frac{1}{n}\sum_{i=1}^{n}\hV'_i\rb^2 - \lb\frac{1}{n}\sum_{i=1}^{n}\hV_i\rb^2\right| = o_\P(1).\]
  On the other hand, by triangle inequality, Cauchy-Schwarz inequality, and \eqref{eq:hVi^2},
  \begin{align*}
    &\left|\frac{1}{n}\sum_{i=1}^{n}\hV_i^{'2} - \frac{1}{n}\sum_{i=1}^{n}\hV_i^2\right|\le \frac{2}{n}\sum_{i=1}^{n}\hV_i(\hV'_i - \hV_i) + \frac{1}{n}\sum_{i=1}^{n}(\hV'_i - \hV_i)^2\\
    & \le 2\sqrt{\frac{1}{n}\sum_{i=1}^{n}\hV_i^2}\sqrt{\frac{1}{n}\sum_{i=1}^{n}(\hV'_i - \hV_i)^2} + \frac{1}{n}\sum_{i=1}^{n}(\hV'_i - \hV_i)^2 = o_\P(1).
  \end{align*}
  Therefore,
  \[|\hat{\sigma}^2 - \hat{\sigma}^{'2}|\le \left|\frac{1}{n}\sum_{i=1}^{n}\hV_i^2 - \frac{1}{n}\sum_{i=1}^{n}\hV_i^{'2}\right| + \left|\lb\frac{1}{n}\sum_{i=1}^{n}\hV_i\rb^2 - \lb\frac{1}{n}\sum_{i=1}^{n}\hV'_i\rb^2\right| = o_\P(1).\]

\section{Solutions of the DATE equation }\label{app:DATE_equation}
\subsection{The case of two periods}
When there are two periods, the \DATEeq ~only involves four  variables $\bfPi(0, 0), \bfPi(0, 1)$, $\bfPi(1, 0), \bfPi(1, 1)$. Through some tedious algebra presented in Appendix \ref{subapp:proof_T=2}, we can show that the \DATEeq ~can be simplified into the following equation:
\begin{equation}
  \label{eq:DATE_equation_T=2}
  \{\bfPi(1, 1) - \bfPi(0, 0)\}\{\bfPi(1, 0) - \bfPi(0, 1)\} = (\xi_{1} - \xi_{2})\left\{(\bfPi(1, 0) - \bfPi(0, 1))^{2} - (\bfPi(1, 0) + \bfPi(0, 1))\right\}.
\end{equation}

\subsubsection{Difference-in-difference designs}
In the setting of difference-in-difference (DiD), $(0, 0)$ and $(0, 1)$ are the only two possible treatment assignments. As a result, we should set the support of the reshaped distribution to be $\S^{*}= \{(0, 0), (0, 1)\}$. Then \eqref{eq:DATE_equation_T=2} reduces to
\[\bfPi(0, 0)\bfPi(0, 1) = (\xi_1 - \xi_2)(\bfPi(0, 1)^2 - \bfPi(0, 1)) = (\xi_2 - \xi_1)\bfPi(0, 0)\bfPi(0, 1).\]
It has a solution only when $\xi_2 - \xi_1 = 1$, i.e. $(\xi_1, \xi_2) = (0, 1)$ and hence $\tau^{*}(\xi) = \tau_2$, in which case any reshaped distribution $\bfPi$ with $\bfPi(0, 0), \bfPi(0, 1) > 0$ is a solution. This is not surprising because for DiD, no unit is treated in the first period and thus $\tau_1$ is unidentifiable. Nonetheless, $\tau_2$ is an informative causal estimand in the literature of DiD. This implies that the RIPW estimator with any $\bfPi$ with $\bfPi(0, 0), \bfPi(0, 1) > 0$ and $\bfPi(0, 0) + \bfPi(0, 1) = 1$ yields a doubly robust DiD estimator.

\subsubsection{Cross-over designs}
For a two-period cross-over design, $(0, 1)$ and $(1, 0)$ are the only two possible treatment assignments. Since the support of $\bfPi$ must contain at least two elements, it has to be $\S^{*} = \{(1, 0), (0, 1)\}$. Then \DATEeq ~reduces to
\[0 = (\xi_{1} - \xi_{2})\left\{(\bfPi(1, 0) - \bfPi(0, 1))^{2} - (\bfPi(1, 0) + \bfPi(0, 1))\right\}.\]
When $\xi_1 \not = \xi_2$, it implies that
\[0 = (\bfPi(1, 0) - \bfPi(0, 1))^{2} - (\bfPi(1, 0) + \bfPi(0, 1)) = (\bfPi(1, 0) - \bfPi(0, 1))^{2} - 1.\]
It never holds since $\bfPi(1, 0), \bfPi(0, 1) > 0$. By contrast, when $\xi_1 = \xi_2 = 1/2$, any $\bfPi$ with support $(1, 0)$ and $(0, 1)$ is a solution. 

\subsubsection{Estimating equally-weighted DATE for general designs}
When $\xi_1 = \xi_2 = 1/2$, the \DATEeq ~reduces to
\[\{\bfPi(1, 1) - \bfPi(0, 0)\}\{\bfPi(1, 0) - \bfPi(0, 1)\} = 0\Longleftrightarrow \bfPi(1, 1) = \bfPi(0, 0)\text{ or }\bfPi(1, 0) = \bfPi(0, 1).\]
If $\S^{*} = \{(1, 1), (0, 0), (1, 0), (0, 1)\}$ in Assumption \ref{as:overlap_dr}, that is, when all combinations of treatments are possible, the solutions are 
\begin{align*}
  &(\bfPi(1, 1), \bfPi(0, 0), \bfPi(0, 1), \bfPi(1, 0)) = (a, a, b, 1-2a-b), \quad a > 0, 2a + b < 1\\
  \mbox{or }
  &(\bfPi(1, 1), \bfPi(0, 0), \bfPi(0, 1), \bfPi(1, 0)) = (a, 1-a-2b, b, b), \quad b > 0, a + 2b < 1.
\end{align*}
The uniform distribution on $\S^{*}$ is a solution, implying that the IPW weights deliver the average effect in this case. If $\S^{*} = \{(1, 1), (0, 0), (0, 1)\}$ (staggered adoption), we cannot make $\bfPi(1, 0)$ and $\bfPi(0, 1)$ equal since the former must be zero while the latter must be positive. Therefore, the solutions can be characterized as
\begin{equation}
  \label{eq:two_periods_staggered_adoption}
  (\bfPi(1, 1), \bfPi(0, 0), \bfPi(0, 1)) = (a, a, 1 - 2a), \quad a\in (0, 1/2).
\end{equation}
Again, the uniform distribution on $\S^{*}$ is a solution. However, we will show in the next section that the uniform distribution is not a solution for staggered adoption designs with $T \ge 3$.

\subsection{Staggered adoption with multiple periods}
For staggered adoption designs, $\bfpi_{i}$ is supported on
\[\cW_{T}^{\mathrm{sta}} \triangleq \{\w: \w_{1} = \ldots = \w_{i} = 0, \w_{i+1} = \ldots = \w_{T} = 1 \mbox{ for some }i = 0, 1, \ldots, T\}.\]
For notational convenience, we denote by $\w_{(j)}$ the vector in $\cW_{T}^{\mathrm{sta}}$ with $j$ entries equal to $1$ for $j = 0, 1, \ldots, T$. Thus, the support $\S^{*}$ of $\bfPi$ must be a subset of $\cW_{T}^{\mathrm{sta}}$. For general weights, the \DATEeq ~is a quadratic system with complicated structures. Nonetheless, when $\xi_{1} = \ldots = \xi_{T} = 1 / T$, the solution set is an union of segments on the $T$-dimensional simplex with closed-form expressions. We focus on the equally-weighted DATE in this section.

\begin{theorem}\label{thm:staggered_adoption_DATEeq}
Let $\S^{*}=\{\w_{(0)}, \w_{(j_1)}, \ldots, \w_{(j_r)}, \w_{(T)}\}$ with $1\le j_1 < \ldots < j_r\le T-1$. Then the set of solutions of the \DATEeq ~with support $\S^{*}$ is characterized by the following linear system:
    \begin{align}
      \label{eq:solution_staggered_adoption}
      \left\{
      \begin{array}{ll}
        \bfPi(\w_{(T)}) = \frac{T - j_{r}}{T} - \bfPi(\w_{(j_{r})}) + \frac{1}{T}\sum_{k=1}^{r}j_{k}\bfPi(\w_{(j_{k})})\\
        \bfPi(\w_{(j_{k+1})}) + \bfPi(\w_{(j_{k})}) = \frac{j_{k+1} - j_{k}}{T}, \quad k = 1, \ldots, r-1\\
        \bfPi(\w_{(0)}) = 1 - \bfPi(\w_{(T)}) - \sum_{k=1}^{r}\bfPi(\w_{(j_k)})\\
        \bfPi(\w) > 0 \text{ iff } \w\in \S^{*}
      \end{array}
\right.
    \end{align}
Furthermore, the solution set of \eqref{eq:solution_staggered_adoption} is either an empty set or a $1$-dimensional segment in the form of $\{\lambda \bfPi^{(1)} + (1 - \lambda)\bfPi^{(2)}: \lambda\in (0, 1)\}$ for some distributions $\bfPi^{(1)}$ and $\bfPi^{(2)}$.
\end{theorem}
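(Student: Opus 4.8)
The plan is to specialize \eqref{eq:DATE_equation} to $\xi = \one_T / T$ and to the staggered support, reduce it by a ``differencing in $t$'' argument to the linear system \eqref{eq:solution_staggered_adoption}, and then read off the geometry of the solution set. Throughout write $W_t$ for the $t$-th coordinate of $\W$, $m_t = \EPi[W_t]$, $\bar m = \tfrac1T\sum_t m_t$, $A_t = \sum_{s \le t} m_s$, and $p_j = \bfPi(\w_{(j)})$; note $r \ge 1$ because $\S^{*}$ contains a point outside $\{\zero_T, \one_T\}$. \textbf{Step 1 (reduction to a scalar system).} For $\xi = \one_T/T$, \eqref{eq:DATE_equation} reads $g \triangleq \EPi[\diag(\W)\J(\W - \m)] = \tfrac1T\one_T\,\EPi[\W^{\tran}\J(\W - \m)]$ with $\m = \EPi[\W]$. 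Since $\one_T^{\tran}\diag(\W) = \W^{\tran}$, left-multiplying $g$ by $\one_T^{\tran}$ always returns $\EPi[\W^{\tran}\J(\W-\m)]$; hence the DATE equation holds if and only if $g$ is a scalar multiple of $\one_T$, i.e.\ all coordinates $\phi_t$ of $g$ coincide. A direct computation (using $W_t^2 = W_t$) gives $\phi_t = m_t(1 - m_t) + m_t \bar m - \tfrac1T\sum_{s=1}^T \EPi[W_t W_s]$; and for a staggered path the coordinates of $\W$ are monotone nondecreasing in $t$, so $\EPi[W_t W_s] = m_{\min(t,s)}$, whence $\sum_s \EPi[W_t W_s] = A_t + (T-t)m_t$ and each $\phi_t$ becomes an explicit quadratic in $m_t, A_t, A_T$.

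\textbf{Step 2 (differencing).} Next I would form $\phi_{t+1} - \phi_t$ and, using $A_{t+1} - A_t = m_{t+1}$, factor it as $(m_{t+1} - m_t)\bigl(\tfrac{A_T + t}{T} - (m_t + m_{t+1})\bigr)$, so that $\phi_{t+1} = \phi_t$ is equivalent to $m_t = m_{t+1}$ or $m_t + m_{t+1} = \tfrac{A_T + t}{T}$. Since $m_{t+1} - m_t = p_{T-t}$, the first alternative holds exactly when $\w_{(T-t)} \notin \S^{*}$, so those difference equations are automatic from the support constraint; the residual content of the DATE equation is then the $r$ identities $m_{T-j_k} + m_{T-j_k+1} = \tfrac{A_T + T - j_k}{T}$ for $k = 1, \dots, r$. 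Expressing each $m$-value as a partial sum of the $p_j$ over active indices and using $A_T = \sum_k j_k p_{j_k} + T p_T$, successive differences in $k$ collapse these to $p_{j_k} + p_{j_{k+1}} = \tfrac{j_{k+1} - j_k}{T}$ (the second line of \eqref{eq:solution_staggered_adoption}), while the surviving $k = r$ identity becomes $2 p_T + p_{j_r} = p_T + \tfrac1T\sum_k j_k p_{j_k} + \tfrac{T - j_r}{T}$, i.e.\ the first line; the normalization $\sum_{\w}\bfPi(\w) = 1$ is the third line, and the support condition is the fourth. Both directions of the equivalence must be checked; the only delicate point is that the first two lines together regenerate all $r$ identities above, via the base case $k = r$ plus telescoping of the second line.

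\textbf{Step 3 (geometry).} It remains to describe the solution set of \eqref{eq:solution_staggered_adoption}. Its affine part (lines 1--3) has a triangular structure in the $r + 2$ unknowns $p_0, p_{j_1}, \dots, p_{j_r}, p_T$: the variable $p_0$ occurs only in line 3 and $p_T$ only in line 1, while line 2 is the unsigned vertex--edge incidence system of a path on $p_{j_1}, \dots, p_{j_r}$, which is connected and bipartite and hence has full rank $r - 1$; so the system has rank $r + 1$ and its solution set is a line. Intersecting this line with the open region $\{p_j > 0 : \w_{(j)} \in \S^{*}\}$ of line 4, and using that the line lies inside the bounded probability simplex, the intersection is either empty or a relatively open, bounded, genuinely one-dimensional segment. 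Its two endpoints lie on the boundary of the nonnegative orthant but still satisfy lines 1--3 --- so they are probability vectors $\bfPi^{(1)}, \bfPi^{(2)}$ --- and the intersection equals $\{\lambda \bfPi^{(1)} + (1 - \lambda)\bfPi^{(2)} : \lambda \in (0, 1)\}$, which is the claim.

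\textbf{The main obstacle} is Step 2: obtaining the clean factorization of $\phi_{t+1} - \phi_t$ --- which relies on the staggered identity $\EPi[W_t W_s] = m_{\min(t,s)}$ --- and then verifying the two-sided equivalence with \eqref{eq:solution_staggered_adoption}, in particular that the first two lines of that system force every identity $m_{T-j_k} + m_{T-j_k+1} = \tfrac{A_T + T - j_k}{T}$ and not merely their consecutive differences. Step 3 is then routine linear algebra and convexity once the path structure of line 2 is identified, and Step 1 is a short computation.
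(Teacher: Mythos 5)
Your proposal is correct and follows essentially the same route as the paper's proof: with equal weights the DATE equation reduces to equality of the coordinates of $\EPi[\diag(\W)\J(\W-\EPi[\W])]$, the staggered monotonicity (your $\EPi[W_tW_s]=m_{\min(t,s)}$, the paper's $W_{j+1}-W_j=I(\W=\w_{(T-j)})$) yields the same factorization of consecutive differences, and the support constraint turns the surviving identities into the linear system \eqref{eq:solution_staggered_adoption}, whose solution set is a line intersected with the open positivity region. Your flagged delicate point (recovering all $r$ identities from the $k=r$ base case plus telescoping the second line) is exactly how the paper closes the equivalence, and your rank argument in Step 3 is just a more explicit version of the paper's observation that $\bfPi(\w_{(j_1)})$ determines all remaining probabilities.
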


The proof of Theorem \ref{thm:staggered_adoption_DATEeq} is presented in Appendix \ref{subapp:proof_staggered}. In the following corollary, we show that the solution set with $\S^{*} = \cW_{T}^{\mathrm{sta}}$ is always non-empty with nice explicit expressions.
\begin{corollary}\label{cor:staggered_adoption_DATEeq}
  When $\S^{*} = \cW_{T}^{\mathrm{sta}}$, the solution set of \eqref{eq:solution_staggered_adoption} is $\{\lambda \bfPi^{(1)} + (1 - \lambda)\bfPi^{(2)}: \lambda\in (0, 1)\}$ where 
  \begin{itemize}
  \item if $T$ is odd,
    \[\bfPi^{(1)}(\w_{(T)}) = \frac{(T + 1)^{2}}{4T^{2}}, \,\,\, \bfPi^{(1)}(\w_{(0)}) = \frac{T^{2} - 1}{4T^{2}}, \,\,\, \bfPi^{(1)}(\w_{j}) = \frac{I(j\text{ is odd})}{T}, \,\, j = 1,\ldots, T-1,\]
    \[\text{and }\,\,\bfPi^{(2)}(\w_{(j)}) = \bfPi^{(1)}(\w_{(T - j)}), \,\, j = 0, \ldots, T;\]
  \item if $T$ is even,
    \[\bfPi^{(1)}(\w_{(T)}) = \bfPi^{(1)}(\w_{(0)}) = \frac{1}{4}, \,\,\, \bfPi^{(1)}(\w_{j}) = \frac{I(j\text{ is odd})}{T}, \,\, j = 1,\ldots, T-1,\]
    \[\text{and }\,\,\bfPi^{(2)}(\w_{(T)}) = \bfPi^{(2)}(\w_{(0)}) = \frac{T+2}{4T}, \,\,\, \bfPi^{(2)}(\w_{j}) = \frac{I(j\text{ is even})}{T}, \,\, j = 1,\ldots, T-1.\]
  \end{itemize}
\end{corollary}
In particular, when $T = 3$ and $\S^{*} = \cW_{T}^{\mathrm{sta}}$, the solution set is
\begin{equation}
  \label{eq:staggered_adoption_T=3}
  \left\{(\bfPi(\w_{(0)}), \bfPi(\w_{(1)}), \bfPi(\w_{(2)}), \bfPi(\w_{(3)}) = \lambda \lb\frac{2}{9}, \frac{1}{3}, 0, \frac{4}{9}\rb + (1 - \lambda)\lb\frac{4}{9}, 0, \frac{1}{3}, \frac{2}{9}\rb: \lambda \in (0, 1) \right\}.
\end{equation}
Clearly, the uniform distribution on $\S^{*}$ is excluded. Thus, although the RIPW estimator with a uniform reshaped distribution is inconsistent, the non-uniform distribution $(1 / 3, 1 / 6, 1 / 6, 1 / 3)$, namely the midpoint of the solution set, induces a consistent RIPW estimator. For general $T$, it is easy to see that the midpoint is
\begin{equation}
  \label{eq:midpoint}
  \bfPi(\w_{(T)}) = \bfPi(\w_{(0)}) = \frac{T + 1}{4T}, \,\,\, \bfPi(\w_{(j)}) = 
 \frac{1}{2T},\,\, j = 1, \ldots, T - 1.
\end{equation}
This distribution uniformly assigns probabilities on the subset $\{\w_{(1)}, \ldots, \w_{(T-1)}\}$ while puts a large mass on $\{\w_{(0)}, \w_{(T)}\}$. Intuitively, the asymmetry is driven by the special roles of $\w_{(0)}$ and $\w_{(T)}$: the former provides the only control group for period $T$ while the latter provides the only treated group for period $1$.

Corollary \ref{cor:staggered_adoption_DATEeq} offers a unified recipe for the reshaped distribution when the positivity Assumption \ref{as:overlap_dr} holds for all possible assignments. In some applications, certain assignment never or rarely occurs and we are forced to restrict the support of $\bfPi$ into a smaller subset $\S^{*}$. To start with, we provide a detailed account of the case $T = 3$. When $j_1 = 1, j_2 = 2$, \eqref{eq:staggered_adoption_T=3} shows that $\bfPi(\w_{(0)}), \bfPi(\w_{(3)}) > 0$, and thus $\S^{*}$ must be $\cW_3^{\mathrm{sta}}$ and cannot be $\{\w_{(1)}, \w_{(2)}\}$, $\{\w_{(0)}, \w_{(1)}, \w_{(2)}\}$, or $\{\w_{(1)}, \w_{(2)}, \w_{(3)}\}$. When $j_1 = 1, r = 1$, via some tedious algebra, the solution set of \eqref{eq:solution_staggered_adoption} is
\begin{equation}
  \label{eq:staggered_adoption_T=3_1}
  \left\{(\bfPi(\w_{(0)}), \bfPi(\w_{(1)}), \bfPi(\w_{(2)}), \bfPi(\w_{(3)}) = \lambda \lb 0, 1, 0, 0\rb + (1 - \lambda)\lb\frac{1}{3}, 0, 0, \frac{2}{3}\rb: \lambda \in (0, 1) \right\}.
\end{equation}
Thus, $\{\w_{(0)}, \w_{(1)}, \w_{(3)}\}$ is the only support with $j_1 = 1, r = 1$ that induces a non-empty solution set of \eqref{eq:solution_staggered_adoption}. Similarly, we can show that the only support with $j_2 = 1, r = 1$ that induces a non-empty solution set as
\begin{equation}
  \label{eq:staggered_adoption_T=3_2}
  \left\{(\bfPi(\w_{(0)}), \bfPi(\w_{(1)}), \bfPi(\w_{(2)}), \bfPi(\w_{(3)}) = \lambda \lb 0, 0, 1, 0\rb + (1 - \lambda)\lb\frac{2}{3}, 0, 0, \frac{1}{3}\rb: \lambda \in (0, 1) \right\}.
\end{equation}
In sum, $\cW_{T}^{\mathrm{sta}}, \cW_{T}^{\mathrm{sta}} \setminus \{\w_{(1)}\}, \cW_{T}^{\mathrm{sta}}\setminus \{\w_{(2)}\}$ are the only three supports with non-empty solution sets, characterized by \eqref{eq:staggered_adoption_T=3}, \eqref{eq:staggered_adoption_T=3_1}, and \eqref{eq:staggered_adoption_T=3_2}, respectively.

For $T = 3$, $\{j_1, \ldots, j_r\}$ can be any non-empty subset of $\{1, 2\}$. Via some tedious algebra, we can show that this continues to be true for $T = 4$. However, this no longer holds for $T \ge 5$. For instance, if $\{j_1, \ldots, j_r\} = \{1, 2, 4, 5\}$, the second equation of \eqref{eq:solution_staggered_adoption} implies that
\[\bfPi(\w_{(1)}) + \bfPi(\w_{(2)}) = \bfPi(\w_{(4)}) + \bfPi(\w_{(5)}) = \frac{1}{T}, \quad \bfPi(\w_{(2)}) + \bfPi(\w_{(4)}) = \frac{2}{T}.\]
Under the support constraint, the first two equations imply that $\bfPi(\w_{(2)}), \bfPi(\w_{(4)}) < 1 / T$, contradicting with the third equation. Nonetheless, the contradiction can be resolved if any of these four elements is discarded. If this is the case in practice, we can discard the element that is believed to be the least likely assignment.
 
\subsection{Other designs}
In many applications, the treatment can be switched on and off at different periods for a single unit. 
In general, a design is characterized by a collection of possible assignments $\S_{\mathrm{design}}$. If any subset $\S^{*}\subset \S_{\mathrm{design}}$ yields a non-empty solution set of the \DATEeq, we can derive a doubly robust estimator of the DATE. In this section, we consider several designs with more than two periods which are not staggered adoption designs.

First we consider transient designs with zero or one period being treated and with each period being treated with a non-zero chance, i.e., 
\begin{equation*}
 \cW_{T,1}^{\mathrm{tra}} = \left\{\w\in \{0, 1\}^T: \sum_{t=1}^{T}\w_t \le 1\right\}.
\end{equation*}
For notational convenience, we denote by $\td{\w}_{(0)}$ the never-treated assignment and $\td{\w}_{(j)}$ the assignment with only $j$-th period treated. The above design can be encountered, for example, when the treatment is a natural disaster. The following theorem characterizes all solutions of the \DATEeq ~for any $\xi$. 

\begin{theorem}\label{thm:design_tra}
When $\S^{*} = \cW_{T, 1}^{\mathrm{tra}}$, $\bfPi$ is a solution of the \DATEeq ~iff there exists $b > 0$ such that
\[\bfPi(\td{\w}_{(t)})\left\{1 - \bfPi(\td{\w}_{(t)}) - \frac{\bfPi(\td{\w}_{(0)})}{T}\right\} = \xi_t b, \quad \forall t\in [T].\]
\end{theorem}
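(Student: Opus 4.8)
The plan is to evaluate the DATE equation \eqref{eq:DATE_equation} directly for a reshaped distribution $\bfPi$ supported on $\cW_{T,1}^{\mathrm{tra}} = \{\td{\w}_{(0)}\}\cup\{\td{\w}_{(t)}: t\in[T]\}$, where $\td{\w}_{(0)} = \zero_{T}$ and $\td{\w}_{(t)}$ is the $t$-th standard basis vector $e_{t}$. Writing $p_{0} = \bfPi(\td{\w}_{(0)})$ and $p_{t} = \bfPi(\td{\w}_{(t)})$, these are all strictly positive and sum to one. First I would record the elementary facts about $\bfPi$: $\mu \triangleq \EPi[\W] = (p_{1},\ldots,p_{T})^{\tran}$, the component average of $e_{s} - \mu$ equals $p_{0}/T$, so $\J(e_{s} - \mu) = (e_{s} - \mu) - (p_{0}/T)\one_{T}$, and hence $e_{s}^{\tran}\J(e_{s} - \mu) = (1 - p_{s}) - p_{0}/T$.

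Next, compute the summand $(\diag(\W) - \xi\W^{\tran})\J(\W - \mu)$ atom by atom. At $\W = \zero_{T}$ it vanishes. At $\W = e_{s}$, since $\diag(e_{s}) = e_{s}e_{s}^{\tran}$ and $\xi\W^{\tran} = \xi e_{s}^{\tran}$, the matrix $e_{s}e_{s}^{\tran} - \xi e_{s}^{\tran}$ acts on $\J(e_{s} - \mu)$ through the scalar $e_{s}^{\tran}\J(e_{s} - \mu)$, giving $\big((1 - p_{s}) - p_{0}/T\big)(e_{s} - \xi)$. Taking the $\bfPi$-expectation, the DATE equation becomes
\[\sum_{s=1}^{T} c_{s}(e_{s} - \xi) = 0, \qquad c_{s} \triangleq p_{s}\Big(1 - p_{s} - \frac{p_{0}}{T}\Big),\]
equivalently $(c_{1},\ldots,c_{T})^{\tran} = \big(\sum_{s}c_{s}\big)\xi$, using $\sum_{s}\xi_{s} = 1$. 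Setting $b = \sum_{s}c_{s}$ yields exactly the stated system $c_{t} = \xi_{t}b$; conversely, if $c_{t} = \xi_{t}b$ for some real $b$, summing over $t$ recovers $b = \sum_{s}c_{s}$ and the displayed identity holds. So the DATE equation is equivalent to the stated system, modulo the sign of $b$.

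The one non-mechanical point is that $b$ is automatically positive. With $q = \sum_{s=1}^{T}p_{s} = 1 - p_{0}$, we have $b = q - \sum_{s}p_{s}^{2} - (p_{0}/T)q$, and since $\sum_{s}p_{s}^{2}\le q^{2}$ (expand $(\sum_{s}p_{s})^{2}$ and drop the nonnegative cross terms), $b \ge q\big(1 - q - p_{0}/T\big) = q p_{0}(T-1)/T > 0$ because $p_{0},q>0$ and $T\ge 2$. This shows the $b$ produced in the forward direction is positive, so the equivalence is with "there exists $b>0$" as claimed. The only obstacle is keeping the per-atom linear-algebra bookkeeping straight — in particular the reduction of the action of $e_{s}e_{s}^{\tran} - \xi e_{s}^{\tran}$ to a scalar multiple of $(e_{s} - \xi)$ — and I expect everything else to be immediate.
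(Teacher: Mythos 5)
Your proof is correct and takes essentially the same route as the paper's: evaluate the DATE equation atom by atom on $\cW_{T,1}^{\mathrm{tra}}$, note the $\td{\w}_{(0)}$ term vanishes, reduce the $\W=\td{\w}_{(s)}$ term to the scalar $\e_{s}^{\tran}\J(\e_{s}-\EPi[\W])$ times $(\e_{s}-\xi)$, and recover $b=\sum_{s}c_{s}$ by multiplying through by $\one_{T}^{\tran}$. The only difference is that you verify $b>0$ explicitly (via $\sum_{s}p_{s}^{2}\le(1-p_{0})^{2}$), a point the paper's proof leaves implicit rather than a genuinely different method.
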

In particular, when $\xi_t = 1 / T$ for every $t$, Theorem \ref{thm:design_tra} implies that $\bfPi\sim \Unif(\cW_{T, 1}^{\mathrm{tra}})$ is a solution.
In fact, for any given $\bfPi(\td{\w}_0)\in (0, 1)$, $\bfPi$ is a solution if 
\[\bfPi(\cdot \mid \W\neq \td{\w}_{(0)})\sim \Unif(\{\td{\w}_{(1)}, \ldots, \td{\w}_{(T)}\}),\]
where $\W$ denotes a generic random vector drawn from $\bfPi$. The above decomposition can be used to construct solutions for more general transient designs:
\begin{equation*}
 \cW_{T,k}^{\mathrm{tra}} = \left\{\w\in \{0, 1\}^T: \sum_{t=1}^{T}\w_t \le k\right\}.
\end{equation*}
This design is common in marketing experiments where, for example, $k$ is the maximal number of coupons given to a user and each user can receive coupons in any combination of up to $k$ time periods. 
\begin{theorem}\label{thm:design_tra_general}
  When $\S^{*} = \cW_{T}^{\mathrm{tra}}$, $\bfPi$ is a solution of the \DATEeq ~with $\xi_t  = 1/T\,\, (t = 1, \ldots, T)$, if
  \[\bfPi\left(\cdot \mid \sum_{t=1}^{T}\W_t = k'\right)\sim \Unif(\cW_{T, k'}^{\mathrm{tra}}\setminus \cW_{T, k'-1}^{\mathrm{tra}}), \quad k' = 1, \ldots, k,\]
\end{theorem}

\subsection{Proofs}
  For notational convenience, denote by $h(\bfPi) = (h_{1}(\bfPi), \ldots, h_{T}(\bfPi))$ the left-hand side of the \DATEeq. We start by a simple but useful observation that, for any $\bfPi$,
\begin{align}
  \one_{T}^{\tran}h(\bfPi) &= \EPi\left[(\one_{T}^{\tran}\diag(\W) - \one_{T}^{\tran}\xi \W^{\tran})\J(\W - \EPi[\W])\right]\nonumber\\
  & = \EPi\left[(\W^{\tran} - \W^{\tran})\J(\W - \EPi[\W])\right] = 0.  \label{eq:oneT_h}
\end{align}
Thus, there is at least one redundant equation and for any matrix $V\in \R^{T\times (T-1)}$ with $V^{\tran}\one_{T} = 0$, 
\begin{equation}
  \label{eq:h_diff}
  h(\bfPi) = 0\Longleftrightarrow V^{\tran}h(\bfPi) = 0.
\end{equation}

\subsubsection{Proof of equation \eqref{eq:DATE_equation_T=2}}\label{subapp:proof_T=2}
Set $V = (1, -1)^\tran$ in \eqref{eq:h_diff}. Then
\[V^{\tran}h(\bfPi) = 0 \Longleftrightarrow h_{1}(\bfPi) - h_{2}(\bfPi) = 0.\]
As a result,
\begin{align*}
  0 &= \EPi \left[\lb (W_{1}, -W_{2}) - (\xi_{1} - \xi_{2})(W_{1}, W_{2})\rb
    \begin{bmatrix}
      1 & -1\\
      -1 & 1
    \end{bmatrix}
    \begin{bmatrix}
      W_{1} - \EPi[W_{1}]\\
      W_{2} - \EPi[W_{2}]
    \end{bmatrix}
  \right]\\
    & = \EPi [(W_{1} + W_{2} - (\xi_{1} - \xi_{2})(W_{1} - W_{2}))(W_{1} - W_{2} - \EPi(W_{1} - W_{2}))]\\
    & = \EPi [W_{1}^{2} - W_{2}^{2} - (\xi_{1} - \xi_{2})(W_{1} - W_{2})^{2}]\\
    & \quad - \EPi [W_{1} + W_{2} - (\xi_{1} - \xi_{2})(W_{1} - W_{2})]\EPi(W_{1} - W_{2})\\
    & = \EPi [W_{1} - W_{2} - (\xi_{1} - \xi_{2})(W_{1} - W_{2})^{2}]\\
    & \quad - \EPi [W_{1} + W_{2} - (\xi_{1} - \xi_{2})(W_{1} - W_{2})]\EPi(W_{1} - W_{2})\\
    & = (\bfPi(1, 0) - \bfPi(0, 1)) - (\xi_{1} - \xi_{2})(\bfPi(1, 0) + \bfPi(0, 1))\\
    & \quad - \left\{\bfPi(1, 0) + \bfPi(0, 1) + 2\bfPi(1, 1) - (\xi_{1} - \xi_{2})(\bfPi(1, 0) - \bfPi(0, 1))\right\}\left\{\bfPi(1, 0) - \bfPi(0, 1)\right\}\\
   & = (\bfPi(1, 0) - \bfPi(0, 1)) - (\xi_{1} - \xi_{2})(\bfPi(1, 0) + \bfPi(0, 1))\\
    & \quad - \left\{1 + \bfPi(1, 1) - \bfPi(0, 0) - (\xi_{1} - \xi_{2})(\bfPi(1, 0) - \bfPi(0, 1))\right\}\left\{\bfPi(1, 0) - \bfPi(0, 1)\right\}.
\end{align*}
Rearranging the terms yields
\begin{equation}
  \label{eq:Pi_cond_T=2}
  \{\bfPi(1, 1) - \bfPi(0, 0)\}\{\bfPi(1, 0) - \bfPi(0, 1)\} = (\xi_{1} - \xi_{2})\left\{(\bfPi(1, 0) - \bfPi(0, 1))^{2} - (\bfPi(1, 0) + \bfPi(0, 1))\right\}.
\end{equation}

\subsubsection{Proof of Theorem \ref{thm:staggered_adoption_DATEeq}}\label{subapp:proof_staggered}
Let $\e_{j}$ denote the $j$-th canonical basis in $\R^{T}$. Then
\[h_{j}(\bfPi) = \e_{j}^\tran \EPi\left[(\diag(\W) - \xi\W^{\tran})\J(\W - \EPi[\W])\right].\]
We can decompose $h_{j}(\bfPi)$ into $h_{j1}(\bfPi)- \xi_{j}h_{2}(\bfPi)$ where
\[h_{j1}(\bfPi) = \e_{j}^\tran \EPi\left[\diag(\W)\J(\W - \EPi[\W])\right], \quad h_{2}(\bfPi) = \EPi\left[\W^{\tran}\J(\W - \EPi[\W])\right].\]
Then
\begin{align*}
  h_{j1}(\bfPi)
  &= \EPi\left[W_{j}\e_{j}^\tran\J (\W - \EPi[\W])\right]\\
  & = \EPi\left[W_{j}\e_{j}^\tran\J\W \right] - \EPi\left[W_{j}\e_{j}^\tran\J\right]\EPi[\W]\\
  & = \EPi\left[W_{j}\lb W_{j} - \frac{\one_{T}^\tran \W}{T}\rb \right] - \EPi\left[W_{j}\right]\e_{j}^\tran\J\EPi[\W]\\
  & = \EPi\left[W_{j}\lb W_{j} - \frac{\one_{T}^\tran \W}{T}\rb \right] - \EPi\left[W_{j}\right]\EPi\left[W_{j} - \frac{\one_{T}^\tran \W}{T}\right]\\
  & = \EPi[W_{j}] - (\EPi[W_{j}])^{2} + \frac{\EPi [W_{j}]\EPi[\one_{T}^{\tran}\W]}{T} - \frac{\EPi[W_{j}(\one_{T}^{\tran}\W)]}{T},
\end{align*}
where the last equality follows from the fact that $W_{j}^{2} = W_{j}$. 
By \eqref{eq:h_diff}, it is equivalent to find $\bfPi$ satisfying
\begin{equation*}
  \Delta h_{j}(\bfPi) = h_{j+1}(\bfPi) - h_{j}(\bfPi) = 0, \quad j = 1, 2, \ldots, T-1.
\end{equation*}
In this case, $\xi_{j+1} = \xi_{j}$ for any $j$, and thus,
\begin{equation}
  \label{eq:staggered_adoption_Deltah}
  h_{(j+1)1}(\bfPi) - h_{j1}(\bfPi) = 0, \quad j = 1, 2, \ldots, T-1.
\end{equation}
By definition,
\begin{equation}
  \label{eq:Wj+1-Wj}
  W_{j+1} - W_{j} = I(\W = \w_{(T - j)}).
\end{equation}
As a consequence, we have
\[\EPi[W_{j+1}] - \EPi[W_{j}] = \bfPi(\w_{(T - j)}), \]
\[(\EPi[W_{j+1}])^{2} - (\EPi[W_{j}])^{2} = \bfPi(\w_{(T - j)})^{2} + 2\bfPi(\w_{(T - j)})\EPi[W_{j}],\]
and
\begin{align*}
  &\EPi[W_{j+1}(\one_{T}^{\tran}\W)] - \EPi[W_{j}(\one_{T}^{\tran}\W)]\\
  & = \EPi[I(\W = \w_{(T - j)})(\one_{T}^{\tran}\w_{(T-j)})] = (T - j)\bfPi(\w_{(T - j)}).
\end{align*}
As a result,
\begin{align}
  &h_{(j+1)1}(\bfPi) - h_{j1}(\bfPi)\nonumber\\
  &= \bfPi(\w_{(T - j)})\left\{1 - \bfPi(\w_{(T - j)}) - 2\EPi[W_{j}] + \frac{\EPi[\one_{T}^{\tran}\W]}{T} - \frac{T - j}{T}\right\}\nonumber\\
  &= \bfPi(\w_{(T - j)})\left\{\frac{j}{T} - \bfPi(\w_{(T - j)}) - 2\EPi[W_{j}] + \frac{\EPi[\one_{T}^{\tran}\W]}{T}\right\}.\label{eq:Deltah}
\end{align}
Let
\begin{equation}
  \label{eq:gj}
  g_{j}(\bfPi) = \frac{T - j}{T} - \bfPi(\w_{(j)}) - 2\EPi[W_{T-j}] + \frac{\EPi[\one_{T}^{\tran}\W]}{T}.
\end{equation}
Thus, \eqref{eq:staggered_adoption_Deltah} can be reformulated as
\begin{equation}
  \label{eq:staggered_adoption_g}
  \bfPi(\w_{(j)}) = 0 \,\, \mbox{ or }\,\, g_{j}(\bfPi) = 0, \quad j = 1, 2, \ldots, T - 1.
\end{equation}
Since $\S^{*} = \{\w_{(0)}, \w_{(j_1)}, \ldots, \w_{(j_r)}, \w_{(T)}\}$, $\bfPi(\w_{(j_k)}) > 0$ for each $k = 1, \ldots, r$. As a result,
\eqref{eq:staggered_adoption_g} is equivalent to
\begin{equation}
  \label{eq:staggered_adoption_g_simplified}
  g_{j_{r}}(\bfPi) = 0, \quad g_{j_{k}}(\bfPi) - g_{j_{k+1}}(\bfPi) = 0, \quad k = 1, \ldots, r-1.
\end{equation}
Note that 
\[W_{T-j_k} = 1\Longleftrightarrow \W\in \{\w_{(j_k + 1)}, \ldots, \w_{(T)}\}.\] The first equation is equivalent to
\begin{align}
  &\frac{T - j_{r}}{T} - \bfPi(\w_{(j_{r})}) - 2\bfPi(\w_{(T)}) + \frac{1}{T}\lb\sum_{k=1}^{r}j_{k}\bfPi(\w_{(j_{k})}) + T\bfPi(\w_{(T)})\rb = 0\nonumber\\
  &\Longleftrightarrow \bfPi(\w_{(T)}) = \frac{T - j_{r}}{T} - \bfPi(\w_{(j_{r})}) + \frac{1}{T}\sum_{k=1}^{r}j_{k}\bfPi(\w_{(j_{k})}).   \label{eq:staggered_adoption_g1}
\end{align}
By \eqref{eq:Wj+1-Wj},
\begin{align*}
  \EPi[W_{T-j_{k}}] - \EPi[W_{T - j_{k+1}}] &= \P_{\W\sim \bfPi}\lb \W\in \{\w_{(j_{k}+1)}, \w_{(j_{k}+2)}, \ldots, \w_{(j_{k+1})}\}\rb\\
  & = \P_{\W\sim \bfPi}\lb\W = \w_{(j_{k+1})}\rb = \bfPi(\w_{(j_{k+1})}).
\end{align*}
Therefore, the second equation of \eqref{eq:staggered_adoption_g} can be simplified to
\begin{equation}
  \label{eq:staggered_adoption_Deltag}
  \bfPi(\w_{(j_{k+1})}) + \bfPi(\w_{(j_{k})}) = \frac{j_{k+1} - j_{k}}{T}, \quad k = 1, \ldots, r-1.
\end{equation}
Finally the simplex constraint determines $\bfPi(\td{\w}_{(0)})$ as
\begin{equation}
  \label{eq:staggered_adoption_gPi0}
  \bfPi(\w_{(0)}) = 1 - \bfPi(\w_{(T)}) - \sum_{k=1}^{r}\bfPi(\w_{(j_{k})}).
\end{equation}
Clearly, $\bfPi(\w_{(j_{1})})$ determines all other $\bfPi(\W_{(j_{k})})$'s. Therefore, the solution set of \eqref{eq:staggered_adoption_g1} - \eqref{eq:staggered_adoption_gPi0} is a one-dimensional linear subspace. The solution set of the \DATEeq ~is empty if it has no intersection with the set $\{\bfPi: \bfPi(\w_{(j_{k})}) > 0, r = 1, \ldots, r\}$; otherwise, it must be a segment which can be characterized as $\{\lambda \bfPi^{(1)} + (1 - \lambda)\bfPi^{(2)}: \lambda\in (0, 1)\}$.

\subsubsection{Proof of Theorem \ref{thm:design_tra}}
Let $\bfeta = (\bfPi(\td{\w}_{(1)}), \ldots, \bfPi(\td{\w}_{(T)}))\in \R^{T}$. Then the \DATEeq ~can be equivalently formulated as
\[\sum_{j=1}^{T}\lb \diag(\td{\w}_{(j)}) - \xi \td{\w}_{(j)}^\tran\rb \J(\td{\w}_{(j)} - \bfeta)\eta_{j} = 0.\]
Since $\td{\w}_{(j)} = \e_{j}$, $\diag(\td{\w}_{(j)}) = \e_{j}\e_{j}^\tran$ and we can reformulate the above equation as
\[\sum_{j=1}^{T}\lb \e_{j} - \xi \rb \e_{j}^\tran \J(\e_{j} - \bfeta)\eta_{j} = 0\Longleftrightarrow \sum_{j=1}^{T}f_{j}(\bfeta)\e_{j} = \left\{\sum_{j=1}^{T}f_{j}(\bfeta)\right\}\xi.\]
where $f_{j}(\bfeta) = \e_{j}^\tran \J(\e_{j} - \bfeta)\eta_{j}$. It can be equivalently formulated as an equation on $\bfeta$ and a scalar $b$: 
\begin{equation}
  \label{eq:design_tra_1}
  \sum_{j=1}^{T}f_{j}(\bfeta)\e_{j} = b\xi.
\end{equation}
This is because for any $\bfeta$ that satisfies \eqref{eq:design_tra_1}, multiplying $\one_{T}^\tran$ on both sides implies that
\[b = b(\xi^\tran \one_{T}) = \sum_{j=1}^{T}f_{j}(\bfeta).\]
Taking the $j$-th entry of both sides, \eqref{eq:design_tra_1} yields that
\begin{equation}
  \label{eq:fj}
  f_{j}(\bfeta) = \xi_{j}b.
\end{equation}
By definition,
\[f_{j}(\bfeta) = \eta_{j}\lb \e_{j}^\tran \J \e_{j} - \e_{j}^\tran \J \bfeta\rb = \eta_{j}\lb 1 - \frac{1}{T} - \eta_{j} + \frac{1}{T}\sum_{j=1}^{T}\eta_{j}\rb.\]
Since $\bfPi$ should be supported on $\{\td{\w}_{(0)}, \td{\w}_{(1)}, \ldots, \td{\w}_{(T)}\}$, 
\[\sum_{j=1}^{T}\eta_{j} = \sum_{j=1}^{T}\bfPi(\td{\w}_{(j)}) = 1 - \bfPi(\td{\w}_{(0)}).\]
Therefore, \eqref{eq:fj} is equivalent to
\[\bfPi(\td{\w}_{(j)})\lb 1 - \bfPi(\td{\w}_{(j)}) - \frac{\bfPi(\td{\w}_{(0)})}{T}\rb = \xi_{j}b.\]

\subsubsection{Proof of Theorem \ref{thm:design_tra_general}}
Let $\|\w\|_1$ be the $L_1$ norm of $\w$, i.e., $\|\w\|_{1} = \sum_{i=1}^{n}w_{i}$. For given $\bfPi$ such that
\[\bfPi\left(\cdot \mid \|\w\|_1 = k'\right)\sim \Unif(\cW_{T, k'}^{\mathrm{tra}}\setminus \cW_{T, k'-1}^{\mathrm{tra}}), \quad k' = 1, \ldots, k,\]
By symmetry, 
\[\EPi[\W \mid \|\W\|_1] = \frac{\|\W\|_1}{T}\one_{T}.\]
By the iterated law of expectation, 
\[\EPi[\W] = \E_{\|\W\|_1}[\EPi[\W \mid \|\W\|_1]] = \frac{\EPi[\|\W\|_1]}{T}\one_{T}.\]
Since $\J \one_{T} = 0$, the \DATEeq ~with $\xi = \one_{T} / T$ reduces to
\[\EPi\left[\lb\diag(\W) - \frac{\one_{T}}{T}\W^{\tran}\rb\J\W \right] = 0.\]
We will prove the following stronger claim:
\[\EPi\left[\lb\diag(\W) - \frac{\one_{T}}{T}\W^{\tran}\rb\J\W \mid \|\W\|_1 = k'\right] = 0, \quad \forall k' = 1, \ldots, k.\]
Conditional on $\|\W\|_1 = k'$,
\[\J\W = \W - \frac{k'}{T}\one_{T}, \quad \diag(\W)\W = \W, \quad \W^\tran\W = \W^\tran \one_{T} = k'\]
Thus,
\begin{align*}
  &\EPi\left[\lb\diag(\W) - \frac{\one_{T}}{T}\W^{\tran}\rb\J\W \mid \|\W\|_1 = k'\right]\\
  & = \EPi\left[\lb\diag(\W) - \frac{\one_{T}}{T}\W^{\tran}\rb\lb \W - \frac{k'}{T}\one_{T}\rb \mid \|\W\|_1 = k'\right]\\
  & = \EPi\left[\W - \frac{k'}{T}\W - \frac{k'\one_{T}}{T} + \frac{k^{'2}\one_{T}}{T^2} \mid \|\W\|_1 = k'\right]\\
  & = 0.
\end{align*}

\subsection{A general solver via nonlinear programming}\label{subapp:solver}
For a general design $\S_{\mathrm{design}} = \{\check{\w}_{(1)}, \ldots, \check{\w}_{(K)}\}$, the \DATEeq ~can be formulated as a quadratic system. The $j$-th equation of \DATEeq ~is 
\begin{equation}\label{eq:DATEeq_j}
\EPi\left[(\\e_{j}\W_{j} - \W\xi_{j})^{T}\J(\W - \EPi[\W])\right] = 0,
\end{equation}
Let $\bfp = (\bfPi(\check{\w}_{(1)}), \ldots, \bfPi(\check{\w}_{(K)}))\in \R^{T}, A = (\check{\w}_{(1)}, \ldots, \check{\w}_{(K)})\in \R^{T\times K}$, $B^{(j)} = (B_{1}^{(j)}, \ldots, B_{K}^{(j)})\in \R^{T\times K}$, and $\bfb^{(j)} = (b_1^{(j)}, \ldots, b_K^{(j)})^\tran\in \R^{K}$, where
\[B_{k}^{(j)} = J(\e_j\check{\w}_{(k), j} - \check{\w}_{(k)}\xi_j)\in \R^{T}, \quad b_{k}^{(j)} = \check{\w}_{(k)}^\tran B_{k}^{(j)}\in \R.\]
It is easy to see that $B^{(j)} = \J(\e_j\e_j^\tran - \xi_j I)A$ and $\bfb^{(j)} = \diag(A^\tran B^{(j)})$. Then \eqref{eq:DATEeq_j} can be reformulated as 
\[\bfp^\tran \bfb^{(j)} - \bfp^\tran (A^\tran B^{(j)})\bfp = 0.\]
As a result, the \DATEeq ~has a solution iff the minimal value of the following optimization problem is $0$: 
\begin{equation}\label{eq:DATE_optim}
\min \sum_{j=1}^{T}\{\bfp^\tran \bfb^{(j)} - \bfp^\tran (A^\tran B^{(j)})\bfp\}^2, \quad \text{s.t., }\bfp^\tran \one = 1, \bfp\ge 0.
\end{equation}
We can optimize \eqref{eq:DATE_optim} via the standard BFGS algorithm, with the uniform distribution being the initial value. When the minimal value with a given initial value is bounded away from zero, we will try other randomly generated initial values to ensure a thorough search. If none of the initial values yields a zero objective,  we claim that the \DATEeq ~has no solution. Note that \eqref{eq:DATE_optim} is a nonconvex problem, the BFGS algorithm is not guaranteed to find the global minimum. Therefore, it should be viewed as an attempt to find a solution of the \DATEeq ~instead of a trustable solver. 

On the other hand, when the \DATEeq ~has multiple solutions, it is unclear which solution can be found. In principle, we can add different constraints or regularizers to \eqref{eq:DATE_optim} in order to obtain a "well-behaved" solution. For instance, it is reasonable to find the most dispersed reshaped function to maximize the sample efficiency. For this purpose, we can find the solution that maximizes $\min_{k}\bfPi(\check{\w}_{(k)})$. This can be achieved by replacing the constraint $\bfp \ge 0$ in \eqref{eq:DATE_optim} by $\bfp\ge c\one$ and find the largest $c$ for which the minimal value is zero by a binary search. 

\section{Aggregated AIPW estimator is not doubly robust\\ in the presence of fixed effects}\label{app:AIPW}
We are not aware of other doubly robust estimators for DATE when the treatment and outcome models are defined as in our paper. In the absence of dynamic treatment effects, it is tempting to treat each period as a cross-sectional data, estimate the time-specific ATE $\tau_t$ by an aggregated AIPW estimator, and aggregate these estimates. To the best of our knowledge, this estimator has not been proposed in the literature. However, perhaps surprisingly, we show in this section that the aggregated AIPW estimator is not doubly robust because of the fixed effect terms in the outcome model. 

Specifically, for time period $t$, the AIPW estimator for $\tau_t$ is defined as
\[\hat{\tau}_t = \frac{1}{n}\sum_{i=1}^{n}\lb \frac{(Y_{it} - \hat{\E}[Y_{it}(1)\mid \X_i])W_{it}}{\hat{\P}(W_{it} = 1\mid \X_i)} - \frac{(Y_{it} - \hat{\E}[Y_{it}(0)\mid \X_i])(1 - W_{it})}{\hat{\P}(W_{it} = 0\mid \X_i)} + \hat{\E}[Y_{it}(1)\mid \X_i] - \hat{\E}[Y_{it}(0)\mid \X_i]\rb.\]
Then the aggregated AIPW estimator is defined as
\[\hat{\tau}_{\mathrm{AIPW}} = \frac{1}{T}\sum_{t=1}^{T}\hat{\tau}_{t}.\]
It is known that $\hat{\tau}_t$ is doubly robust in the sense that $\hat{\tau}_t$ is consistent if either $\hat{\P}(W_{it} = 1)$ or $(\hat{\E}[Y_{it}(1)\mid X_i], \hat{\E}[Y_{it}(0)\mid X_i])$ is consistent for all $i$ and $t$. Importantly, the requirement on the outcome model for the AIPW estimator is strictly stronger than that for the RIPW estimator; the former requires both $m_{it}$ and the fixed effects to be consistently estimated while the latter only requires $m_{it}$ to be consistent. It turns out that the extra requirement leads to tricky problems of the AIPW estimator.

To demonstrate the failure of the AIPW estimator, we only consider the case with sample size $n = 1000$ and a constant treatment effect to highlight that the failure is not driven by small samples or effect heterogeneity. In particular, we consider a standard TWFE model
\[Y_{it}(0) = \alpha_i + \lambda_t + m_{it} + \epsilon_{it}, \quad m_{it} = X_i\beta_{t}, \quad \tau_{it} = \tau,\]
where $\sum_{i=1}^{n}\alpha_i = \sum_{t=1}^{T}\lambda_t = 0$.
The other details are the same as Section \ref{subsec:synthetic}.

Both the RIPW and the aggregated AIPW estimators require estimates of the treatment and outcome models. First, we consider a wrong and a correct treatment model:
\begin{itemize}
\item (Wrong treatment model): set $\hat{\bfpi}_i(\w) = |\{j: \W_j = \w\}| / n$, i.e., the empirical distribution of $\W_i$'s that ignores the covariate;
\item (Correct treatment model): set $\hat{\bfpi}_i(\w) = |\{j: \W_j = \w, X_j = X_i\}| / |\{j: X_j = X_i\}|$, i.e., the empirical distribution of $\W_i$'s stratified by the covariate.
\end{itemize}
With a large sample, $\hat{\bfpi}_i$ in the second setting is a consistent estimator of $\bfpi_i$. For the aggregated AIPW estimator, we use the marginal distributions of $\hat{\bfpi}_i$ as the estimates of marginal propensity scores. Similarly, we consider a wrong and a correct outcome model:
\begin{itemize}
\item (Wrong outcome model): $\hat{m}_{it} = 0$ for every $i$ and $t$;
\item (Correct outcome model): run unweighted TWFE regression adjusting for interaction between $X_i$ and time fixed effects, i.e., $X_{i}I(t = t')$ for each $t' = 1, \ldots, t$, and set $\hat{m}_{it} = X_{i}\hat{\beta}_t$. 
\end{itemize}
With a large sample, the standard theory implies the consistency of $\hat{\beta}_t$, and hence $\hat{m}_{it} \approx m_{it}$. Unlike the RIPW estimator, the aggregated AIPW estimator requires the estimate of full conditional expectations of potential outcomes, instead of merely $\hat{m}_{it}$. In this case, a reasonble estimate of the outcome model can be formulated as
\[\hat{\E}[Y_{it}(0)\mid X_i] = \hat{\alpha}_i + \hat{\lambda}_t + X_i\hat{\beta}_t, \quad \hat{\E}[Y_{it}(1)\mid X_i] = \hat{\E}[Y_{it}(0)\mid X_i] + \hat{\tau}.\]
For short panels with $T = O(1)$, the time fixed effects $\lambda_t$'s can be estimated via the standard TWFE regression, which are known to be consistent. However, there is no way to consistently estimate the unit fixed effect $\alpha_i$ since only $T$ samples $Y_{i1}, \ldots, Y_{iT}$ can be used for estimation. The central question is how to estimate $\alpha_i$ for the aggregated AIPW estimator. Here we consider three strategies:
\begin{enumerate}[(1)]
\item using the plug-in estimate of $\alpha_i$'s, even if they are inconsistent;
\item pretending that $\alpha_i$ does not exist and setting $\hat{\alpha}_i = 0$;
\end{enumerate}
Note that the first strategy cannot be used with cross-fitting because it is impossible to estimate $\alpha_i$ without using the $i$-th sample.

We then consider all four combinations of outcome and treatment modelling. Figure \ref{fig:simul_aipw} presents the boxplots of $\hat{\tau} - \tau$ for the three versions of AIPW, RIPW, and unweighted TWFE estimator.

First, we can see that all estimators are unbiased when both models are correct and biased when both models are wrong. As expected, the RIPW estimator is also unbiased when one of the model is correct, and the unweighted estimator is unbiased when the outcome model is correct. However, none of AIPW estimators are doubly robust: the AIPW estimator with estimated fixed effects is biased when the treatment model is correct, and the AIPW estimator that zeros out fixed effects with or without cross-fitting are biased when the outcome model is correct. 

\begin{figure}
  \centering
  \includegraphics[width = 1\textwidth]{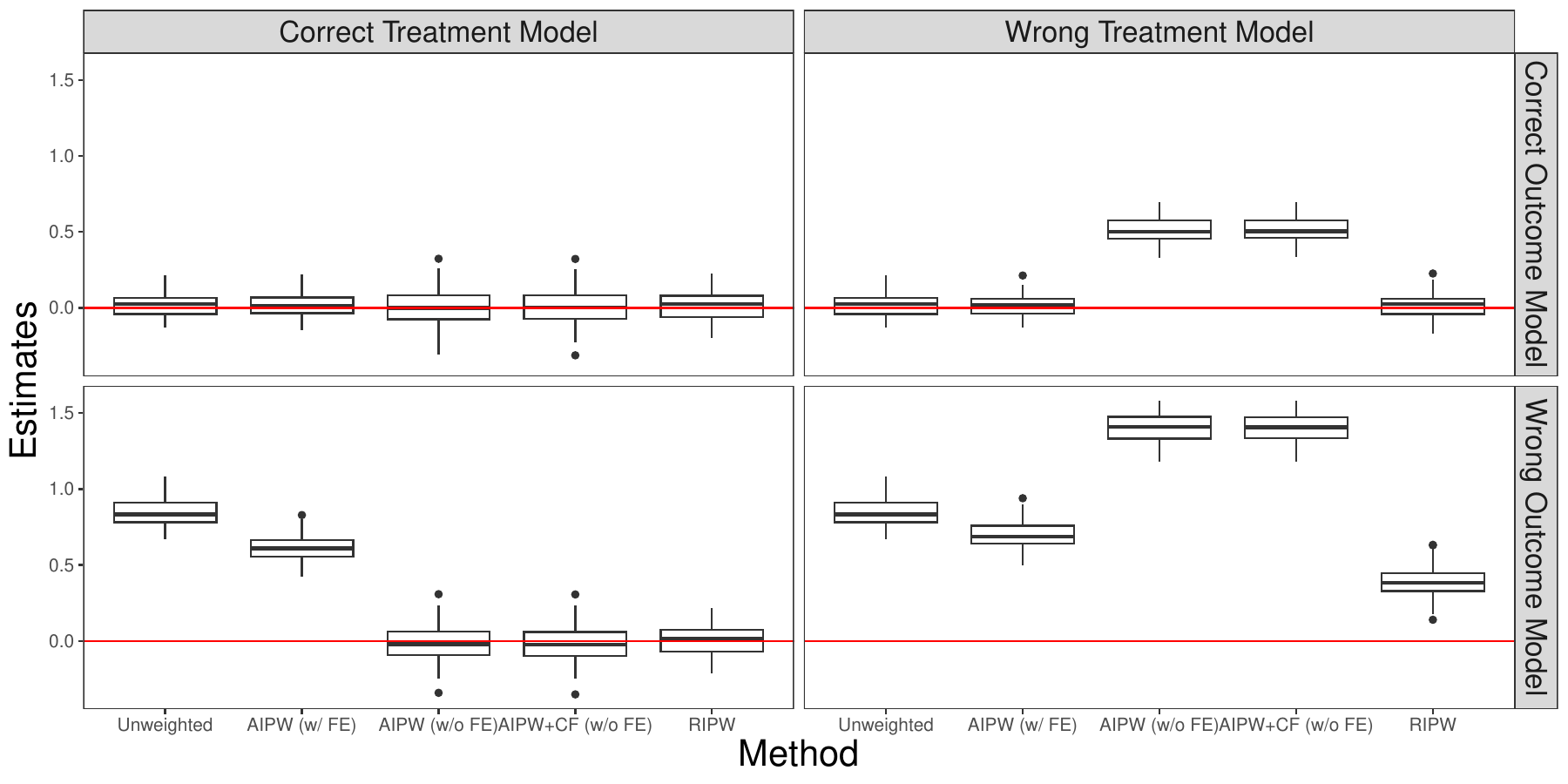}
  \caption{Boxplots of $\hat{\tau} - \tau$ for the RIPW estimator, unweighted TWFE estimator, and the three versions of AIPW: "AIPW (w/ FE)" for the one fit on the entire data with estimated fixed effects, "AIPW (w/o FE)" for the one fit on the entire data with fixed effects zeroed out, and "AIPW+CF (w/o FE)" for the cross-fitted one with fixed effects zeroed out.}\label{fig:simul_aipw}
\end{figure}

The bias of AIPW estimator that zeros out the fixed effects can be attributed to biased estimates of the outcome model despite including the covariates. The bias of the in-sample AIPW estimator can be attributed to the dependence between the outcome model estimates and the treatment assignment. In fact, when $T$ is small, this dependence is nonvanishing no matter how fixed effects are estimated. On the other hand, the AIPW estimator is valid under a correct treatment model but a wrong outcome model only when the outcome model estimate is asymptotically independent of the assignments. In sum, there is no simple way to estimate fixed effects to make the resulting aggregated AIPW estimator doubly robust.

\section{More details of the OpenTable dataset}\label{app:experiment}

We collect the variables from different sources. 
\begin{itemize}
\item Daily state-level year-over-year percentage change in seated diners provided by OpenTable (outcome variable): \url{https://www.opentable.com/state-of-industry}. \item Indicator of whether the state of emergency has been declared (treatment variable):  \url{https://www.businessinsider.com/}

\url{california-washington-state-of-emergency-coronavirus-what-it-means-2020-3}. 
\item Daily state-level accumulated confirmed cases 
(covariate): \url{https://coronavirus.jhu.edu/}.
\item Vote share of Democrats based on the 2016 presidential election data (covariate): \url{https://dataverse.harvard.edu/dataset.xhtml?persistentId=doi:10.7910/DVN/VOQCHQ}.
\item Number of hospital beds: \url{https://github.com/rbracco/covidcompare}.
\end{itemize}

The summary statistics are reported in Table \ref{tab:summary}. We also plot the daily average of the reservation difference and confirmed cases as well as the histograms of the other two variables in Figure \ref{fig:summary_statistics}.

\begin{table}
\centering
\begin{tabular}{lcccc}
\toprule
\toprule
Variable & Mean & SD & 1st quartile & 3rd quartile \\
\midrule
Reservation Diff. (in \%) & -8.89 & 14.94 & -17.00 & -1.00\\
Confirmed Cases & 14.55 & 50.10 & 0.00 & 7.00\\
Vote Share & 49.40 & 9.72 & 40.11 & 57.18\\
log(\#Hospital Beds) & 9.83 & 0.81 & 9.37 & 10.24\\
\bottomrule
\bottomrule
\end{tabular}
\caption{Summary statistics.}\label{tab:summary}
\end{table}

\begin{figure}
\includegraphics[width = 0.47\textwidth]
{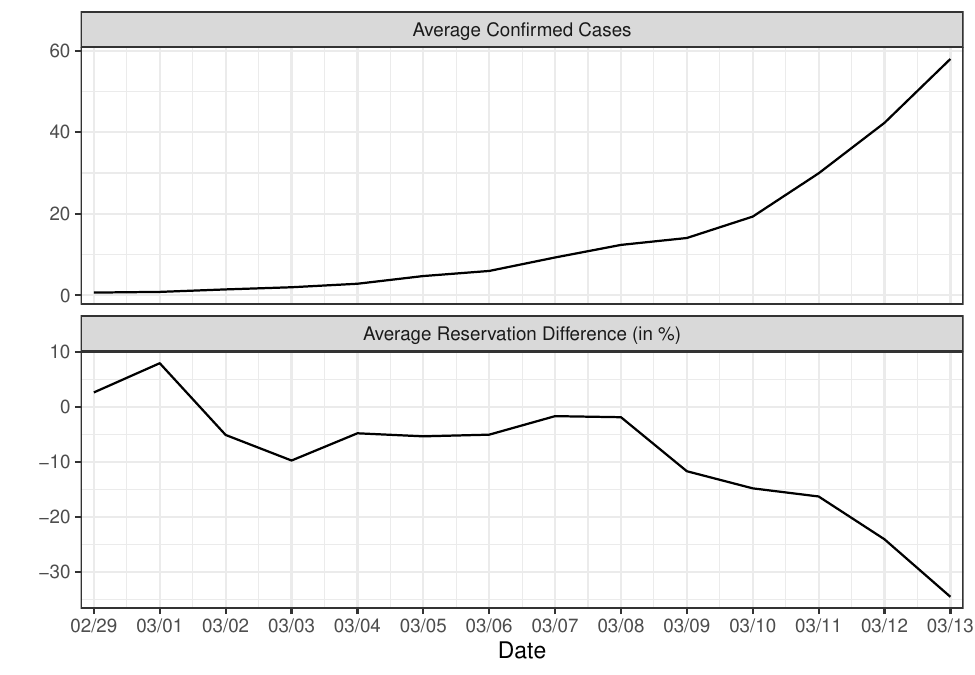}
\hfill
\includegraphics[width = 0.47\textwidth]{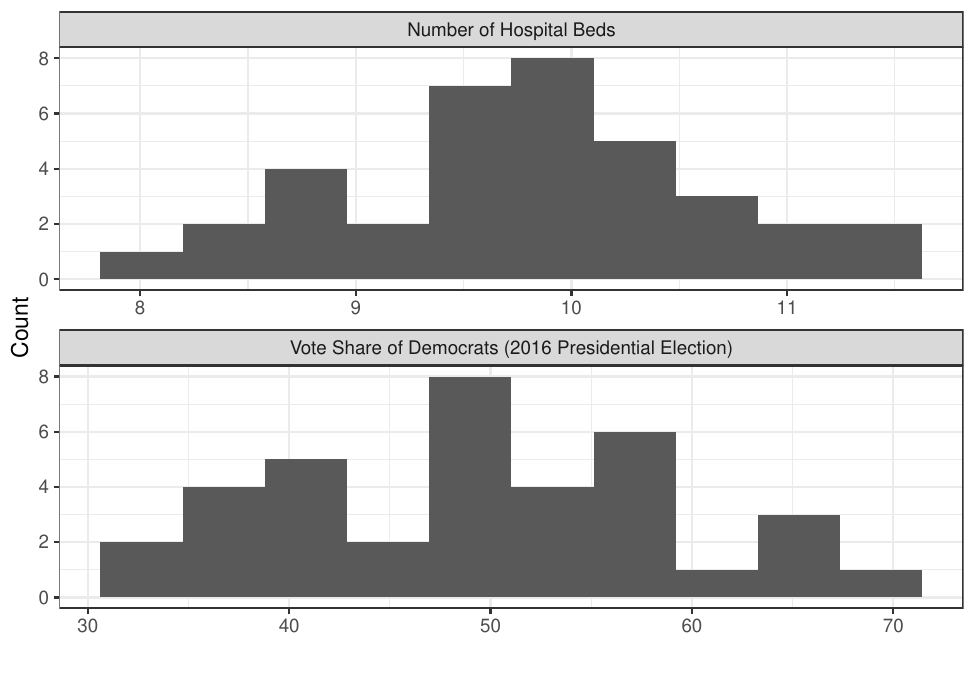}
\caption{(Left) daily average of the reservation difference and confirmed cases. (Right) histograms of number of hospital beds and vote share.}\label{fig:summary_statistics}
\end{figure}

\end{document}